\documentclass{article}
\usepackage[hidelinks]{hyperref}
\usepackage{amssymb,ComplexSystems}
\usepackage{graphicx,booktabs,mathtools,longtable}

\newtheorem{algo}{Algorithm}

\newtheorem{remark}{Remark}

\newcommand{\FF}{\mathbb{F}_2}
\newcommand{\ZZ}{\mathbb{Z}}
\newcommand{\NN}{\mathbb{N}}

\newcommand{\Inc}{\operatorname{Inc}}
\newcommand{\conv}{\ast}
\newcommand{\symd}{\bigtriangleup}
\newcommand{\Xor}{\bigoplus}
\newcommand{\xor}{\oplus}
\newcommand{\pc}{\operatorname{pc}}
\newcommand{\Bl}[2]{B(#1,#2)}
\newcommand{\bin}[2]{\binom{#1}{#2}}
\newcommand{\Zet}{\zeta}

\providecommand{\mathscr}[1]{\mathcal{#1}}

\newcommand{\src}[1]{}
\newcommand{\roadmap}[1]{}
\newcommand{\paperhead}[2]{}

\makeatletter
\renewcommand{\verbatim@font}{\normalfont\ttfamily\scriptsize}
\providecommand*{\toclevel@title}{0}
\makeatother

\renewcommand{\mycopyright}{\hbox{\textcopyright}}

\newcommand{\Up}[1]{U(#1)}
\newcommand{\keepstatement}{\par\begingroup
\dimen0=\pagegoal\advance\dimen0 by -\pagetotal
\ifdim\dimen0<9\baselineskip\newpage\fi\endgroup}
\newcommand{\keepsection}{\par\begingroup
\dimen0=\pagegoal\advance\dimen0 by -\pagetotal
\ifdim\dimen0<14\baselineskip\newpage\fi\endgroup}

\newcommand{\EmbeddedFigure}[2]{%
  \resizebox{#1}{!}{%
    \hbox to \csname EFWidth#2\endcsname bp{%
      \vrule width0pt height\csname EFHeight#2\endcsname bp depth0pt
      \pdfliteral{q /EF#2 Do Q}\hss}}}
\begingroup
\catcode`\#=12 \catcode`\%=12 \catcode`\_=12
\catcode`\^=12 \catcode`\~=12 \catcode`\&=12 \catcode`\$=12
\pdfobj reserveobjnum
\expandafter\xdef\csname EF1O1\endcsname{\the\pdflastobj}
\pdfobj reserveobjnum
\expandafter\xdef\csname EF1O2\endcsname{\the\pdflastobj}
\pdfobj reserveobjnum
\expandafter\xdef\csname EF1O3\endcsname{\the\pdflastobj}
\pdfobj reserveobjnum
\expandafter\xdef\csname EF1O4\endcsname{\the\pdflastobj}
\pdfobj reserveobjnum
\expandafter\xdef\csname EF1O5\endcsname{\the\pdflastobj}
\pdfobj reserveobjnum
\expandafter\xdef\csname EF1O6\endcsname{\the\pdflastobj}
\pdfobj reserveobjnum
\expandafter\xdef\csname EF1O7\endcsname{\the\pdflastobj}
\pdfobj reserveobjnum
\expandafter\xdef\csname EF1O8\endcsname{\the\pdflastobj}
\pdfobj reserveobjnum
\expandafter\xdef\csname EF1O9\endcsname{\the\pdflastobj}
\pdfobj reserveobjnum
\expandafter\xdef\csname EF1O10\endcsname{\the\pdflastobj}
\pdfobj reserveobjnum
\expandafter\xdef\csname EF1O11\endcsname{\the\pdflastobj}
\pdfobj reserveobjnum
\expandafter\xdef\csname EF1O12\endcsname{\the\pdflastobj}
\pdfobj reserveobjnum
\expandafter\xdef\csname EF1O13\endcsname{\the\pdflastobj}
\pdfobj reserveobjnum
\expandafter\xdef\csname EF1O14\endcsname{\the\pdflastobj}
\pdfobj reserveobjnum
\expandafter\xdef\csname EF1O15\endcsname{\the\pdflastobj}
\pdfobj reserveobjnum
\expandafter\xdef\csname EF1O16\endcsname{\the\pdflastobj}
\pdfobj reserveobjnum
\expandafter\xdef\csname EF1O17\endcsname{\the\pdflastobj}
\pdfobj reserveobjnum
\expandafter\xdef\csname EF1O18\endcsname{\the\pdflastobj}
\pdfobj reserveobjnum
\expandafter\xdef\csname EF1O19\endcsname{\the\pdflastobj}
\pdfobj reserveobjnum
\expandafter\xdef\csname EF1O20\endcsname{\the\pdflastobj}
\pdfobj reserveobjnum
\expandafter\xdef\csname EF1O21\endcsname{\the\pdflastobj}
\pdfobj reserveobjnum
\expandafter\xdef\csname EF1O22\endcsname{\the\pdflastobj}
\pdfobj reserveobjnum
\expandafter\xdef\csname EF1O23\endcsname{\the\pdflastobj}
\immediate\pdfobj useobjnum \csname EF1O1\endcsname {<< /F2 \csname EF1O2\endcsname\space 0 R /F1 \csname EF1O9\endcsname\space 0 R >>}
\immediate\pdfobj useobjnum \csname EF1O2\endcsname {<< /Type /Font /Subtype /Type0 /BaseFont /GCWXDV+DejaVuSans-Oblique /Encoding /Identity-H /DescendantFonts [ \csname EF1O3\endcsname\space 0 R ] /ToUnicode \csname EF1O8\endcsname\space 0 R >>}
\immediate\pdfobj useobjnum \csname EF1O3\endcsname {<< /Type /Font /Subtype /CIDFontType2 /BaseFont /GCWXDV+DejaVuSans-Oblique /CIDSystemInfo << /Registry <41646f6265> /Ordering <4964656e74697479> /Supplement 0 >> /FontDescriptor \csname EF1O4\endcsname\space 0 R /W \csname EF1O6\endcsname\space 0 R /CIDToGIDMap \csname EF1O7\endcsname\space 0 R >>}
\immediate\pdfobj useobjnum \csname EF1O4\endcsname {<< /Type /FontDescriptor /FontName /GCWXDV+DejaVuSans-Oblique /Flags 96 /FontBBox [ -1016 -351 1660 1068 ] /Ascent 929 /Descent -236 /CapHeight 0 /XHeight 0 /ItalicAngle 0 /StemV 0 /FontFile2 \csname EF1O5\endcsname\space 0 R /MaxWidth 974 >>}
\immediate\pdfobj useobjnum \csname EF1O5\endcsname stream attr{/Length1 3636 /Filter [/ASCIIHexDecode /FlateDecode]}{
789cb5567b6c53d719ffcefdeeb1f3b0af1f384f27c1e018488309754a3ad2084cc8ab059a40020b1497dcc44e42881d6387479286d2258c52d61128
720be391b68c51c668c6104ac7d43dda0aaa0a755bcbb66a9ada6e95b64954eaa43eb6481cefbbd7694b51abad7ff41c9f73bfdf77bed7f9cecbc000
c0469d0cd6fa9ada3a980106005640dcecfaa6c6e69aed2bb711be8bb0abbe796df5dc534b5f22dc4238d1d85ce6eba98ebc48f806e1751d61350a4e
c90120d5106eedd8deef3a7e72620ee16324d3db19ed0affbde75f2f93b3341adfdfa5c6a360a40af261c2a6aede81ce83f16b7f253c4e3aa7ba436a
d0b8ee377600a366afa29b18a6e7f82b841f225cdc1deedf99fd367c4c98e421b7b7af43656be075c2938495b0ba338a5b0ce984af6af147d4706841
c5f218e17f907d6bb42fde9ffc2d9c0348ff378d2f88c642d14307a5d90019f328061368b93141aa488410b8ced35a06144011b09aba952d9046d9a3
924c4ecba6c60fc16170e8e3836a4c6d875135168ec0687b4cdd0ca31d6a244e7d772846fd40ac1746bb427d4477c5425b60b45b8d904c77a89d385b
d4880aa3bd6a9f4bebfbc94258edef86d1c8168dd3d7a5866134b62d4292fd9d912eeabb35fbb745341d97acb0319a03f0727e042a5891f64dd6e21f
a153a22c4b990644943325797a0e9f95a6ceda2071c210313884831d3586d9dfbe2083d3ad409f39401521a663195ca0ed2e59c7113da27032997c72
5a3b95df825b3217d6e4741d606e560693f01ab55fc359b8225d8171b89bea1a7891ed93bc34721af6c25b5c828b708c2d660eb69846df30380c037c
0f3f43e31b49b781acbc054f904dcdd2243c2c0d494dd00957f835384ab54fe77f003f6723701d9e82d7a406da5523d80207a81e85b80cfc3acb0021
95c229cd135580766a85e8e5d7f5fa013c0c43d002a70c9306077b438ffa347b894ec63f29e63770236e258d637086ed91ddf219b9010ea4e2c53638
208db0a3729b5e87681d76c031b98d9d3538e0152d56e23451a49df00b6a3be01abb87edc17d14d9901601bf0ed78cf7c965a9a88cc3b888e603d4ce
c305f06282f4f5b9183ae198d449be3ea648ae610d9490b538001dd704645d3270192506f35dd609c9736f70c2bfbad57575fd2ceffcdba0cb6a744d
40d38479c035994c36b5ca4ebe7e82174ca0276d42f6b8dffdaac177bdf35734b5ba267e565b336db5b6ad8678cdad446a88d8c4afadf1a64e5b1a48
b457b45dd440b35a4a79296559feb86386dd66e559166e329b146e369bd28c14394f47ed644a9c31c8e3f9cefcec1c292b973bf3671649ae42fa1670
a733bf32d76242b9d00dc8b2d257ce710d3a77e599595e7e41aedda270a71925b7014a98c95de834cf75b3b9065ee2cc2f9e6ffdf0c2c97c1678f962
d4d9e694026ffacdc99271aa63257d258d2586c08d1b6f5e7cdedde69602376c8ba9dab51fb51c0277425955d57b1f56ddf0d97216e72c5e48d3355a
f9fb46fe3ee1e95ec3d3ed1648e4fad917167a190b5cf27bdbbc512f06fc3392deb7bdbff48e531df3f6799bbc8ddef40043f78c256cd15d73dcb30d
c6dbe872e31256eecbced1fb2c87c1583632f9485a5a70c3777f3f73687284a8d64735eacf93ee15afc6078f5b5bfe3234763c4d3a7573a374a27449
f686e0ab4fdf3c209df02ccd7b20a49172dbf9f6aeddb1ed0f9d7e6a5623a45647de4bab9307e7fdab528bf1f95a98b962513232a57413b72833ec92
c3465f2bb758944a13da80a5af740c5a769999d9a2408ed5644089db2c6c6e8ee50ea5385fcbbbc202efbd797153ced21c3dbb39a9fc6a597def43c2
54fe574ab5345e5ae85cea1c73220b5c6873b200bb7052ebbf3c55942479dbc80b23e9193d9b0eff61d6d0e45b939e557fda71f8d9743d29c7cb1ab2
5a1e7ce5b99b4fc86d136acff8c1e235a95b4b5a7f24def94ccd264bd54730334dbfcc7eb74b79fbd3ef27efdc2c37d7a5edd777367c7edb19c3a210
c0bce79377fef382b90e82f4fede5a0cb276ee35f367a9ed27f9d7a081dbb5ac7f565ea7372f158301ef8752e8a63b55022bf8b5971515c93cfd3e19
6183769e647a0fd942fdbed66806d98452b4040aab9ba611e6d02b9fa2e55b680eb96c709a3640313b04cbe9ae89c200c460337491f77ebaf1e74107
dd2f2ef0c142aae544b593840baa49a69f6e9d7e920e814a77fd7ce2de4bf77d072c206a19f45275d1fdfea9adb88e42f40d91ce76ea832499f17f78
adf8cc6b0b79da4ebe7a482742d25a1c2ae97c3d8f3544f590de3ad846121d24abead642ba86aacfc8455622d44749a69dec6e263917e9f79177551f
bbdd4eb36e250e8dd3f25b891bfa0a19d76d52ebf408e384fb74af3e8ab31c167d41fb535def6dba12fd3bf888da2eda155f560cfa9e9268a59b6035
c0a4b4db9ffca9c0090f3eefc3f309fc8982e7e20a3fe7c31f0b3cebc1e7143ce3c11f25f0f414fe700a4f097cb6129f11f8b40fc74f36f3f1049e5c
b58c9f6cc6133e3ceec06309fc41061e1578c48e4f0de393973121f030491c1ec627041e3a58cf0f0de3c17a1c3be0e463020f38f1fb021f17f83d81
fb053eb6af883f26705f113eeac3bd0247b37144e077043e2270b7c08705ee1238bcc2c38783f890c0211b0e0e5ce68302077606f8c0651cd82defdc
e1e13b03b8d32feff0e07681db12d81fc4b882b1ad1e1e0be2d6a89d6ff560d48e7d1456df1446fc49816181bd02b76463cfe64ade13c4cde4637325
76df9fc9bb73b1ab53e15d3eec543014c420a90513d821b05d35f17681aa09db36e5f1b6206e7ad0ca37e5e183560c64e0c607cc7ca3c007ccb88134
3624707dabc2d7cfc35605bf3d85ebd65ee6eb04ae6d09f0b59771ed6eb9a5d9c35b02d8e2979b3db846e0eaa6057cb5c0a605d84841343af0fe4c5c
4551ad5a862be9b352e08afb6c7c8507efb3e1bd021bea6dbc4160bd0deb04d60aac11b8bc7a982f17583d8ccb04faa770e9142e99c2aa8a6a5e25f0
9eab58495465332e16fe287e6b18ef2658217b7945352e127897c0f24af44de142139609f40a9c2fb094864befc43bac5882565ee2c6794538778ec2
e706718e821e96c13d3e2c36e5f2e26174f34aee16389bd0eccb388be46739d1353393bb2c48ff227ee53f2acfccc4a2742cf2cb85562c20f182043a
13989fe7e1f941cccbb5f33c0fe6da3127dbc3739661b607b3043a04ce9842bb2d8fdb05dac8aa2d0fad022d0215b2a024d04c0ecdc368ca3471532e
669a3043601a0da525d040e206819c66c12b5126247b11adf4ff28834bb9c83290f965284036c9827b1e67a5df6c816fd8fed72d85ff0562c30a81
>}
\immediate\pdfobj useobjnum \csname EF1O6\endcsname {[ 109 [ 974 634 ] ]}
\immediate\pdfobj useobjnum \csname EF1O7\endcsname stream attr{ /Filter [/ASCIIHexDecode /FlateDecode]}{
789c636018b6808581150000ef000a
>}
\immediate\pdfobj useobjnum \csname EF1O8\endcsname stream attr{ /Filter [/ASCIIHexDecode /FlateDecode]}{
789c5d503d6fc32010ddf91537a6434492a1938554a58b877ea84ea6aa0386c3428a019df1e07fdf83a4ae9293e0f478ef9d1e278fed6b1b7c06f949
d17498c1f96009a7389341e871f041ec0f60bdc937546f33ea24249bbb65ca38b6c145d13420bf989c322db079b1b1c7270100f2832c920f036ccec7
eefad4cd295d70c490612794028b8ec7bde9f4ae470459cddbd632eff3b265dbbfe2b4248443c5fb6b24132d4e491b241d0614cd8e4b41e3b894c060
1ff89bab77abfcd9b29c1b2af8be833fc5fe272c93cab7d7986626e2847537355a09e503aeeb4b31155739bf88527588
>}
\immediate\pdfobj useobjnum \csname EF1O9\endcsname {<< /Type /Font /Subtype /Type0 /BaseFont /BMQQDV+DejaVuSans /Encoding /Identity-H /DescendantFonts [ \csname EF1O10\endcsname\space 0 R ] /ToUnicode \csname EF1O15\endcsname\space 0 R >>}
\immediate\pdfobj useobjnum \csname EF1O10\endcsname {<< /Type /Font /Subtype /CIDFontType2 /BaseFont /BMQQDV+DejaVuSans /CIDSystemInfo << /Registry <41646f6265> /Ordering <4964656e74697479> /Supplement 0 >> /FontDescriptor \csname EF1O11\endcsname\space 0 R /W \csname EF1O13\endcsname\space 0 R /CIDToGIDMap \csname EF1O14\endcsname\space 0 R >>}
\immediate\pdfobj useobjnum \csname EF1O11\endcsname {<< /Type /FontDescriptor /FontName /BMQQDV+DejaVuSans /Flags 32 /FontBBox [ -1021 -463 1794 1233 ] /Ascent 929 /Descent -236 /CapHeight 0 /XHeight 0 /ItalicAngle 0 /StemV 0 /FontFile2 \csname EF1O12\endcsname\space 0 R /MaxWidth 974 >>}
\immediate\pdfobj useobjnum \csname EF1O12\endcsname stream attr{/Length1 9168 /Filter [/ASCIIHexDecode /FlateDecode]}{
789cd579795c1457b6f0b975aaaaf7a6bbe96691a6bb591ac40d02a2a2281de26e1654c213131d5a017105c5258a0645057119b7808931da316a1c34
0e318e01354623891ae24ce6a97993f7cccb989898cc10e3cc337106f1f63b55a0d179cb2fdf1fdfeffb7d557daaeeb9ebd9efb9d5c000c0460f113c
23860e1b0ea1100dc07a516dd8889ca7c67bea7bb4133e9460d588f14f6727eec93a03209ca3f6fa271ecd1da9cb4a5b05800584fff8d4f8e4d4e97f
2f5d41934d203c6fea6c7f19e68558086fa4f1cf4e5d38df03d3a333006423e1bcb86cdaecb97d17ce00d0120e07a6f9cbcb404337689b08374e9bb5
b87888437382f0162272644991bf503be9bd6a8028a57fbf12aa30edd2ac259ce883f892d9f39fab596ade4e7809e1f367954ef50f48c8fc2be15708
cf9ced7fae4c5c2bcf05704610ee99e39f5d94f8dd609adf3980e8b954565a3e5fe885d4e652e67fb66c5e51d920cd5fa8e8a2f9a4125064a5b42897
40188253ad53400f7d201384a1c31fcf05f32cfffc39104132a52b1804b85f527bcf2c9a3707b45de318b509ea5b4bef2fd49eb758184921047ec615
9c19bc48cf3df7f13d0f6157f81b005ca93bade07c0f7f839fff39f3fe5fba18f16522ce2c64671110a9ca2f04ccf414d476bcdf5304497dcbd466a2
fe8aac046a57ea659a450b3a7aeac1401a31a933283ad8022f805dd5c112ff3cff1458e59f377b0eac9a32cf3f1d564df5cf29a76749d13c7a2e9e37
0b564d2b2aa5f2b47945336155897f0ef529299a423533fd73fcb06a96bfd4a33c4997ab66fbe797c0aa3933959ad269fed9b06ade8239d4737ef19c
69f42c51e67f40df0f72acd0f547f8067220d3089acf944aa18a182c209bbadcc56cc1c362ba87b392aef2d33f43b4669a77e1cfd0c1c4cebecafb7e
f981391eaa9ff8003eefa7b240fe7267239533ef0fed4e5ceaef734c4fd1cc36293a94d2a497087575bef15fa058b0d10c0619512b0a82787f44d795
533cac107ce08185b29ddbd976cd6cf6e5437db00b9c00ea8a7b08632a2ec21a7abbc85a902cc203e9900143e1091807b9e087229806d3a114ca61a1
aa230fa440ff07da0bd5f659d4be20180c7e19bc14bc183c1f3c1d7c277822d8183c147c237830d8107c3db8ef617aff9bab33463475616675ad4e50
e84e214887ce98d09f20a38b8fa15d602278a20b94b1e3ba408907b95d40319528ee042b41214111014996b8e8043bc1f42e7010cc222825082328ef
826e040b0814bb8962e944732bdda7a10176b07d841553fd5caa0908876135f56c8233ac95d50abda96e1fdc848bd4b3065ab14104361ad2a816e053
49a008960b47688e0c6667191a5904f149f188384e6c12af8b17a0bf582e5e100bc4729686bba53c691f4106be4f76711edcd0c43e27da8ee1b79886
27c4a1a2193ec70bd8005fd12a8acc5a6123e9bc8268b1b352a8142a8471547356ba00dbe92ea5f60b6c27bb48d41d632be132bc88a2301276b2cbc4
572bfc082b3157a82455a409c544ff599aeb028ddf0ee5145a2e333d70a127d511f5b4d614f5198dbda5cbea7d132a69e55cd82337c9764d1cada248
6c1f3bc3dae4ad10808bf82ccec57f63abc53871bf381236764a000b6023cdbd5d192317b3c5c4bb725728b30b8bc402d600df8a059a2934f7fb0a47
b4e611611c71540c270816c916e269105b8db544a9d21a0d1734a3c5641a4f336896a97a2dc5749841a50a380487a137d6c3469a49e557ee2ffd4823
77885789e78d6c83f0235cc0a19004c5e20dc50ec854ea01ded6c8928802835e1e4ba3e01d55d8e81b3bc1732e3fa677af7f403d168da711721a4d8b
3d4dc160ce04314aca6f949c8de8d5368adeb8abff53e3d5debdc6e44cf034de1d36b46bd6610543a96efc042a2a185553fdb0a16a9bb268a3e4a5df
a88246cfd412cf5acbdab8816b2d45037b2b7e25283b26793252a926f8a5b891b463807088f385ca011b048c9b6deb2374ce10173a1d5111968eb65b
6d8f80e5daad36cb8d14162b582db6b4549bd52224a682d50271b1ca5358b7e39557e8f7ca2b77988edfbe7387df663a29875fe01f115c606974f765
69015eceab790d2f671bd862b6846d50e2d35572e98914d1f5e0f339b231200a01698506023aad5b7622b899c172694c6348ee8466eaec1b90dfd6d2
410425b7a5de6abbd4964232c88f65474230441426f58fb14ae9de346b8c2386b3d1fc2556f4211bddb1a7412c1fd934b2fd72034d40fa124713c74e
d8e94b8cec1685114eab24825592c46ccbabd6174c01fb6691fc162c7a81e99de11694a32d1d631a1db9631ac3729f19d368cf7d8628c1e0a901f92d
97da4e9db2da32baa8b9a552a3b148df69a4ef58a3d3620dcf20da7ca94f8b79529e6689b8445a185513a921af8e14bb917a9df329582fe8561e35df
5905d59155ddaaa2aa9cfb617f9475124cf21213e9fda0ff1096de37212e56d6a40f6169a9a2c32e6b686faf154e773c4e624cf33ff17af52f2e3eb7
e4d2846f987dd83391fc564343c322b679e0ec6da316d5673ff6d123a9dfbcf7ecdeb268fe67e27e07e9bb9cb8ef0e65be3ee008d557ebdcd59ed080
c314d06d959d01cfd6b8cdf27ac76b4961ce50407ba433c16371a2ddad9393142184e5dee35fa7f24f02b8d5465c92042c6dd76e5d6bb37c7dc3a2de
249514e6d315bafc6ebfa730468449ccc51c763126362131dd458cf423ae7ab2f4cec243ec61d6e6d7f8c7fc9bc96767e49e9b7df26cf3de4347eb76
bef6e2f893f3cacfe77fcd8cbf44afbb65d3677ff57acf3c925abf7155ddbe4565e515f109473c9edf1f5e7a40b1708aebe21eb22981768515be6866
4213209ab2010d9a80c470858e19f5e094b5a2d16cb932a6d1408c9954c68c0a6397325bda52ad8a5eaf5dca6c4b255e54c58ae749b9e71595f63040
0f1809f9b4492c82b5a009633d2181f5c47eec49f694f129531e2b660bd8125ccd4ca44a1d8bc1346b9a23ce1a678d4947990b8ca7f3cb97cfdf9d2c
793bbec40b1d69fb7980159c210ded240d1512e5d130d9172776d358ab2dd1dd021a7bc0526b1202b0c2b45eb3c715ee647a7482de22bb2c1dec41bd
5814f2bbbcc5a2780ba9c8d272437160c583493dbca5533ba1645f5645e6e0b0c3436a51b4f11946de0df49ad0ab9dc5f34bfcfbc9674a269e9af9c6
871fbe31f6d55ce97203df1212c26ffce92ffc078fa7f59194a33b761c8d4f20eaf793dc4b887a19a6fa2224ab80025a45f22c8928470999c840d658
3a3e6ab12a3693fc80c7109060155626bc4309aa4f4d7935c48db5ff807c9f6d82c064ec26654823a569d8088db286e44a2cb03816b31f4fddfde222
e377d3a4cb79ed2ba49e4abeb08ee4b84e8d6c71900c8ff9bc11460824ca0157ef806db36b7de26b2911c6f81e4e47bc334447718e825d484c548aa5
a3a5ed564b9b6ac1f7ac5ac532c89c6365873dac5344de3ee495f169a9618a3baa861d171b9fdeb75fe8bd0e244361dda6bd7b376ddab797efadda0c
c17fff9c6f5eb1e5357efbf66d7e7bcfc8cd2babb66ead5ab959787f7b4dcdf697ab6bb6e7790e2f7febe38fdf5a7ed813fbc1c64fbff9e6d38d1f30
fffcaaaaf90464c92b88a31ae22802e261822f4ee38e64d51019d0ef1503501be60e583687adf76a9cce985017c4c63a4d515e8add44febde8fd35ff
a1cb317d612d91ef753b1575ca792afa3d578b5bd3603b61fbd68693d8a4feaa15d842cd8c027b7a5f48f328e19dcc83dd638b6470f5f11d63ce5d0c
197878d61ff91d66f98221b3f237f9578fef6043d6ecdab586c0dd149fc04ccc96f72c0bf9f3d72c4c0dfcbbf8332e61dbb1ddaf1e3ffeeaee638a9f
d2994a6a247bd15236d6c7e7803add0a5667d10a163d4891a65470ea449bba039136548b5182ece18250c626316b176dde18f59dc4d8d65b2c9db9f9
55decab3d92e7698d5f3129ec3fd52f29d452c82f561bd58f83ebe8d2fe7cff37a32305a5d76d3ea4678d7d717ad1aad46b03241abbc50d0e975ccaa
d7ebb2f51a01b508bfd61a249d96b679492f3bc5217a4a9b4d445a8712ff883035628467fc64cfda2e50ecfa70999928f6e5a126441ba213f40ec1ae
09d52708091a8f2641efd1f7d5a4eba70b4b850acd62fd72a14a53a5df248489cc80a12c0ae3582f4cd476d7f5659998a7cdd7156967e8166a1793a4
36601d7b19edaa27508ca1e842ee608d3bcf7ab365ac92f57e9f57b6f2ca16e9728716ffd6de53727750f2dc7eb54beee4c7749a74c29bbe749d5683
7ad98a224a5651c46c4a3c1d283aea74f63ad30a8328c968d58133cc2ce92323456b965def348ab42bb6a592bb10d3d64edd642a1bb22d43b91f746b
75173cec732912c85e12ca24909824c8a8111de06076210cc3452f78995748c044394193a04dd0795cfd583f61381b2e94480bc405d2a2d035f21acd
8bf28b1af724752b090f8dc33eac275322aa47f14232874e0bc50d8f560cb9f0e9bba3d73d77e543768e41c7cabbb57c4b5ddd16e144d8a6e77909ab
ac9f72b756bafcc91f361c139eba7ba366e5cad5948704db29767d4b32d1c0689f5916ea6085c87c14657d92d672e95ac735355ea5a6508cd22b314a
abc6282d68efc5a850d0b9c1c22c825b63d1f97465ba5d3add2454a22cd1288bdfdfbdd17af706c5cef6cb4a8412a082fcb937e5b27af0c209ca45dc
86709d197e152e379bad9e6af73167735c93757db811c231c2a4d31adca8b50f4b20a97f74a92d35b53380b65cbbd541aefd811aafac8ae07d7352a2
535c29ee144f4a4c4a6c56a22fdae7f2b97d1e5f8c2f36273ac795e3cef1e4c4e4c4e6249625ae8eae71d5b86b3c3531ab63372506126f26baee0dbd
37e8de80025781bbc0531053e62a739779ca6296bb96bb977b96c74450ccb81f1807b3fed6b874257424502c4c8b7970670f134e7e7e7045e94bcd4d
4d5927d61c6cbd7b8709af6f2b389a5b7472e27fdc14d28a2ba6947f7a24e9f1bb2b1a8afda777bf73ca56b9ae4f9f86c4c40e455673495613653b6d
694e18e08becd60c667bb3a45d6f6e62dbc87c402b8cb0da0cc3a2d558919aaa048a6b4a1c6fb99172b4c0b5dc1570a11a31bac8a324168824f640b4
c6dd4d4d03df5cda1a8460ebd237ef9e7d7dcb96fdfbb76c791d8f0a93ffdeb6bfd0cf86322ddd43fddcd17afd7a2b41175d95a4433b445146154ff6
acabd6ae911cbf6252b3911d8f68b63519d73ba31c82d6a18531822d64985325b145cd1a95b07c4dcd976e75c6e5a4ace8b2e840f4c7d137a3a52cc8
62594296232b4aeaa549d626eb7ae94ba194950aa58ed228dda4b9c48f2346ddb2fb3b88273506d266aee9c314998b951d878d17de9e7176cad48f67
f25bfc2c4beaf882699a84bd6bb6379b85c9134f9eeddbf7508f5e6c00d3b350f618ffac65db91433b5559f33c7122f164a02c71b42f2ed218adb355
878635876073425c53e2095d73c83bdda21322416b1c21db6c9e6149eaaed929f6966b9d82e797158e3248fa3d96f708f450a47f6ff7208ac32dc24f
19c760d6a5123a578485a7d3d1766fdd0b7bf7be50b7b789f376ffc1b163778efbcd918cc34b7fdbd1f1dba587339a84c1e7ae5c3977f6ca953ff32f
f8b7d1aeb77af578e7dd67a64e61039992610c9c32b541e1e33439f562b219248fee491e7d527c134e0812d38a305c6be9a0bc4ec9ef3ada527c06c5
5f737405e4b3129b144a56a2246ba79be8120bee0464fbb7ca39e5a7f962df866d02d3c270d1d2791449f1992c924fca910aa432e9a624774e4213c8
f6bfb7296369b3d344934c6361a22f41b6e92242408ed6388c35d11e6c8a3a1169d1803544ab9573acda901c6784b6dbf03835bc76d0f6a29e2f3233
afdd52b7181bed31bed094f89cf8b2f84df101badf8dff3c3e18af2309ab7ee620ea3b45fd6021cda1368a49c34e55fdfa64f3bc051bf735cf5bb461
5f737356e3e22507b076e9c21fbeb8fbacb0f3d51d27f7dcad1176ee7ef9ddd7eed6880587a64d59dac58158481c84423f5f24ea00cd4cae315b9b8c
27f4b465c2934a841a6e5788568e8bc9998a15a8c41e2970fcce2128def7df9053d8b47469ddc1e6e6ecb7169cfe40d8a310b06ba742002d5c54f87d
97872d50ad319cac31546eb641b3b14939a1da42c6a2cd31ec1f4ea8beb8acc80aa8902b3595da4a5da5bed25061ac34559a2b432a2d95d60a5b20f2
66a4f5819845deffd041b6fc858307eab61e3cb8f526b3f11b37ffc2bf6756fcfcfaf9f3d7bf3977f6db1dfc1c6fe3df913b6590d7d8d900924c69f0
4b3c4b1426c2755fa6c928980de3dd2ead4ed0e8c7bbddae6cbdc1e5a69dae9ad58af66a476d44b3556cf65260efeed21bdc511a1817a5356bb4f6d8
61dd1537bad4764df19b8caee8605172b61f6ed8ee2516666537d598bbf6544854f6d4d94ebdd3e034f6a100d1cbd0cb384837483fc830c868f08087
c50bddf5dd0d3d4293edc98e1e61dd5ddddd499ea498f8c46a7db5a1da586d52be7b314190f5b2018d68423386a00523b11b46a1538cd62526276525
fd22a9326979d2a6a440d2cda408da7ee732871a6448806ef55427c73d787c48664a72dc8fd48beb9edc3fb1b676ca0b592d7b6fff61e29959c51ff8
abd6171df01d78f18fbf2d3e22661deade3d37d7372ac6dce3a5da1d47e3e24ea6a7e78f1d93e30d89afabda79d0a578f121d2ff04d5eeec30c8e7fc
c9f2d6ebd9097b9391ecce6e78922c70b84331848c4eafbe967adffc4a1da714f30ba5e0dfa9f0fbbb40023ba498df1b4d4d8fbdb9e0f439f63b764c
d877d7bf6bd7c93d42c59dc0c1e2a93771bf62fb83c9f62bc5023ad5dcf125aa071a254f9494170a3250ee418964b680f0ae244b4a86288246f966a1
574e61a09cc294e399f20d4139a2817a740eeffc6af03f248cccf7cb91c20ca142a814aa85e5c266618fa05516d2a14ecd9aba613731814e9c499824
7ab4e990ce06e24031453b1c86b351384a1c2e8d947dda3cc863f9982fe6688ba1984dc7e9e234a9442ed02e80f9ac022b28af5a22af86d5ac166bc5
5aa95aae877ab64dd88e2f8a2f4adbe4fdd2eb72a3f694f6736d503b44c931d3742c8dc50d3ec326b3c967f8b3ed6241472e1ebc13e8fc8f21ffa555
7f5ab4fe1721993f805bab7ec0fdfdf3e66bf7deb73fe978dc9caf53fe4fd2deffde4be334b379348099dffea47dac39ffbf7c218e162fa8df524150
42fa3aca5ac3a186e02a413dc10e8242829d04fb09d611ac9093e0bc7c1dce4b1be1bcf865b05dbc0e15921de68a6d3057790b19705a014d2b1c936c
704cfc8ada7a4229950f8927146ddfbfb26131fc8dcd62e785eec26ce113f461257688d5e249ca9e03e43573e47af988c6a119abd9ad1dabddaa8b57
3988c65ce8092574ae1028637949e158740861f456bec36a60a2f2e55dd4910052d46fdb4a994118619d6501b46c7857191fa8171f284b10c19eec2a
cb6067c5f018944219d13b0fa6c3345a7d3e7868079f0a49f44e8514bad3a834857a7888afe9d45e4e300f8ac00fb3a117d58e8239d4bf0f951e8559
747b60dcfdb9ca55ac88de453466213d0ba9a7fe67acdaeffeaacabf050b692de56beb1ceaadd0e1a731ff672b0ea5d20c1a97070ba8c754eaeb5767
2b5247f8558e3c34cb1c7a96519f2934ef74eae7a1f1a5b4ba5f6dfbc779c6abb39413459465c14caa55562d57fff998a3f2d287e497fed0a87b633a
ff8383e0f3eaff3cfff58a566d5df9174e89620e08a3bd2c02ba51cee8a45d23856655fe4f19012361343c0e4f410e8c25cef36002e43709cb7dc13b
1cdbedf8772ffe2d156fd7e38f66fc81e32d8effe1c5bf9af12ff578d38bdfaf7d54fa9ee38d7afcae1edbdaf1cfedf8278edf0ec46fb2f13ac7af53
f1ab6be3a5afeaf11a75bc361ebffc2259fab21dbf48c6ab1cffc8f1f354fc773b7e568f5738fe9b0dff75197e7a1cffc0f113eafec932bc7c698474
79195e1a8117ff394abac8f19fa3f0f71c3fe6f83b8ebfe578a11e3f6a75491f716c75e187a9789ee307abadd2074e7c3f0c5b389ee1f81ec7d31c4f
717c97e3498eef703cc1f138c763566caef64acd1c9bde3e2e35717cfbe824e9ede3f8f672f1e86fbcd2d149be201ef589bff1e2118e6fd5e3618e6f
726ce4f86b8e870af10d331e3ce0950e16e281069b74c08b0d36fc1511fdab76dccff1758efb38eeb5e11e8eafed364bafa5e26e33be5a8801ea12a8
c75d1c77be629476727cc5883b5e8e947614e2cbdb2dd2cb91b8dd822fe9f1458edbea4dd2368ef526aca34175f5f8c256b3f44277dc6ac62dedb879
d3716933c74d1b27499b8ee3a6e5e2c65f7aa58d9370a34ffca51737705cbfae8fb49ee3ba3eb896d85cfb28d6ae3148b5765c63c01aaaa829c46a92
54b517575b7115c79555566925c72a2baee0b89c6325475ff0f965cba4e7392e5b864b0bb122d72155787109c7c51c9f33e322232ed4e3028ef3dbb1
bc1de7b5e3dc762ce358ca710ec759313893e30c6bb634633c4ee758b20ca71152ccb1886321c7a91ca770f40fc482769c6cc4491c9fe1389163fe04
bd94df8e13f4f84f6191d23fa5621ec7a769e5a7b331d781e399451a1f81e3ec387674a83496638e019fe2f8e41316e9498e4f58f0718e63a8650cc7
d1a32cd2e8501c156d92465970a40947701c5e8fc3ea7128c7c784ded263ed987d1c1f1d833e8e591c870cb64943ec383833441a6cc3cc412629d317
0cc141261cc83183e380fe7669403bf6ef6791fadbb15fba41ea67c17403f675619a09531f3148a91c1f31604ab2414a3161b201fbf4d6497d2cd85b
87bd52b1670fafd4b3107b24d9a41e5e4cb261f744afd4fd514cf46282d7202584a0d780f11ce338c686600cf11963434f21badbd1452cb80a31da84
4e92a09363543b76cbc64842223946146238492a9c63180d0a8b4407473bc7508e36ea60e368255eadd9685986218568e668328649268e46ea6d0c43
0347bd05751cb5d44dcb516347b910456a14c9021c48b5c829b9b048426f6416048eac8915aedec07afeff70c1ff6b02fed72bfa3f013502fd15
>}
\immediate\pdfobj useobjnum \csname EF1O13\endcsname {[ 32 [ 318 ] 48 [ 636 636 636 636 636 636 ] 55 [ 636 636 636 ] 68 [ 770 ] 77 [ 863 ] 82 [ 695 ] 84 [ 611 ] 97 [ 613 ] 100 [ 635 615 ] 103 [ 635 ] 105 [ 278 ] 108 [ 278 974 634 612 ] 115 [ 521 ] 117 [ 634 592 ] ]}
\immediate\pdfobj useobjnum \csname EF1O14\endcsname stream attr{ /Filter [/ASCIIHexDecode /FlateDecode]}{
789c6360a0103013906761606560636067e060e004f2b818b81978b0aae3c510e183b3f8815800a70d826052884118488a00b128982fc620ce20c120
09552305c4d20c32004e4b0194
>}
\immediate\pdfobj useobjnum \csname EF1O15\endcsname stream attr{ /Filter [/ASCIIHexDecode /FlateDecode]}{
789c5d923f6f833010c5773ec58de9101148208d8490aa7461e81f95764219887d4448c558860c7cfbda3c43a45a82a7fbdd3d73c6179e8bd742b523
859fa617258fd4b44a1a1efabb114c57beb52a886292ad187d34bf4557eb20b4e6721a46ee0ad5f4419651f86593c36826dabcc8feca4f0111851f46
b269d58d363fe712a8bc6bfdcb1dab9176419e93e4c66ef756ebf7ba630a67f3b69036df8ed3d6da1e15df93668ae738424ba2973ce85ab0a9d58d83
6c67574e5963571eb092fff2d101b66bb3d6c7ae1e52412f0eef81f709f012469018b2871c9652388f084fdee9c3e785ce4507789c545060092c3d96
c009bee7a482026393c46f92f84d5234e9a48202a33af5277a84c8a24d271514f804ec0f93faf65301dc78ec43749ff2929c6b8ff84b4e2a2870029c
7abc8617776fcb05b92b74f3b6ce87b81b6347631eca7926dc34b48ad7b9d5bd762ef7fc01001bc43b
>}
\immediate\pdfobj useobjnum \csname EF1O16\endcsname {<< /I1 \csname EF1O17\endcsname\space 0 R /I2 \csname EF1O18\endcsname\space 0 R /I3 \csname EF1O19\endcsname\space 0 R >>}
\immediate\pdfobj useobjnum \csname EF1O17\endcsname stream attr{/Type /XObject /Subtype /Image /Width 97 /Height 165 /ColorSpace [ /Indexed /DeviceRGB 1 <ffffff23313d> ] /BitsPerComponent 1 /Filter [/ASCIIHexDecode /FlateDecode] /DecodeParms [null << /Predictor 10 /Colors 1 /Columns 97 /BitsPerComponent 1 >>]}{
789c6d95594c5b571ac70f4bbc04f0d207965163e399d6260f051382970076a8c3e2504c3591268d34429091d28c941613857dbb2e98d88d0d417d68
a21143a2d14cfa508d58321013c6b509d866b3611e5a3b9db1713332e4a1c606e20583ef7c372b8de6befda5738eefeffcbfdf35c27feee456e6dc17
09ad45410c69076eb8a67ea8722c88de33231452eed6cbbfb916a4397e9460483d52a49a3c7f9db593ff2e0fa1bd8bb3dda2cbd18de2dc751142fade
a49bc2418188cfc7b2110aee27362c046dd95213eec190eaab03cd646e8b670eff3513a1f8952f5b2af9726e07f33f7e0c69ce35dd6a3135bb74ac7a
2a42e16fcfd6c9bf5373948cda1086fa0afe813557eeb931f65fb2108acd9eea514f15f27a7cc5e918d2753d1a6a2f5f749e71599f22145074b2f486
b4ec0e4ef2f718c294151b8df6a04727891d476f257cc4d0af1a55715316de1300ed4f82b6eed371d76ee8472bd096d8149a9e024e51d6bb1184d4f6
96d1f0d0b264d6bbeec0d05e5a5f9ed8afe4252548f381769bbc66d662a2c4f85c2e8682c3d1bf7737a3ec24c6c355a0f50a3e2ebb8d7bf6d91fb281
56626b98ae3fc1e433ffe903dadfd0358f542b7e879f4306da7f53c389dd8d542ad50dbfd75747cb49baa90885427fc8c0500ca308f7d9135922eb17
9b408be8960ae036450ee8408ba806c375e07670a0310cd1148256030dcf77e7021fbe332aaa290d68cd997c84b443c579a646635e73c28609f83eb7
aea1d2b23575fc8214dad4277f8e1ba75924dedf6cc07735a63d21976d4444ce30f00d9fa4ad4cce38071dec4c68d3bb1418b356646f53b2d61152a9
524fdb49064f49d807a7c4bb033d147229d792a3865334d5d2a170a4db75e4877678cff0f81c4bfeb5c6bb972983f7ecfb80bc3179399c50b8de0993
15fb57b45f745e1c5f94fe09264bf799a0cd7459c04bb3fd17262bf05b9ba267d0263a4a2f72427f9fd147cf50a61ccf82b3702fbf4cf8c0f6e98eb0
3cff546e17dc9236d0fb9dae7cd2fc68b55c08b4799ddc144365bc73eb410eb4b976c3b5cb996254504c42a065a5d617b9c55b8630cac190fe49a076
9621c82fcd44b02ff845cdbda42d5bae711dc13ed5c1f82789f9f4d57984605f9c537d27c9bccd26e1b805218d64ccb3df5ded8bd86e8b31145e5130
f9f231b25078c90e6d5e9df03b262511b1450d3dc4863ffa78bc623ec32c6e871e74ef8c36543f98de34da87e10603c76a348d0e19bd8cef851bc4de
196f518dcd50da4d2a7817fce7ea5bf1d28ab05eda6d21dc1c5bd7741b3283b63231e1a65dda72d3bcae124edb093729dc3e76028a5bdee703dfc5b0
abc317c70bc48f4dc0d72bbeabbb50c05ce61c93126e0a6a53ee2eb39412af0df8beb27dba2b51326a162474e0bb32d55fb580b1c769f341e03b278f
de6f46cceac0344c79f8db664ed56ddc3fc696ad126eaaddf7d3cfacda7d335b849b24c6d2d3877914570505f8ba22ecd4df9dcc0f730c61e0539c66
06feb86416bbb939849b3dfe9a816b7133cf29446f257c44ab1d5756318cce7b04ed4fb4e69291fbecb26c17415bb273db72a9c837ed6111b4f68fd2
5b2fcc8a65dc27046dda68d6e0dd24fb8cb8388f7053eabd5abb1fb0daad6b849b73b5bdf7f82603df0076a8bce47b9d22a9b4d4645e25dc8c3a2b1c
735c2332b2093705198631b2ab0c2ff3116e8a32cde628679a3b0dddf6d509d78d090277a14b06ddc63031361fb76d2ca6cf048016097a483c3aa9e9
69059c1240a2ca234e6ab45f60805330249cdacb08715a8f9792615af1058b6c53ed1efcde1801be211a79e67c3b633bab2c8370732762fd46b655e2
9dde8469d517d7240b67d62c92f7e9849bd6b19885ef6fe53d0e126e26e77e487650079d7f65019f37b6da15190f51b31f6f106e9e6cf8327b252be4
394622dc5cd23cf31cf58a7cde28b4599d4a3fc57c9660724938d0e6788052e83f1befe1ccbb615a3f908617f94d8c33927606e1e65c7993e9ecd6c3
85e12dc24db2a11f35516669bfca27dc8cda8fe0fde1a4c0313361a380bac73d929998e795a0b7123e60a3c95c7beb496b9205c2cda940677d214a64
cd0b0837e52337548bf8c106c90a6dae4d5e4a4d4863e6932211c24dd1854042d06f8ff2e05ef44f4cf5d2b8823f91eeda24dcec51cdc9274ccb592c
b05175a08d774c16e329de2760639cd3ccd31db73277ddbf8f429b12b5332068f357f1ee4063e195f60ca975807f4de4e1136eea374f902879d71d59
26c2cdc6c1956878ad2d1f0a033755c1a31c71ee8019fed4c0cdeedc676ef3ea8e0416829b1a7e3fc3c82e866f14e1668b29c22ef3599d44d20ef4f5
087dd3e2e475228594b1a185eb327b8c08e0e64956736b6740f73ced5d5cda507f7d83ad7c9ef4bda99fec5dde6562cf53703fe58e7eb0c8ff3c809b
bbd2c6e02cff458a5fb96e53e476995e24cdb9b6a989d5f21701dc2cb514373c7899fa0a8c64abe6e54270b32002dfb9978fae6bb9a6f4d6ab145028
1b8d75af12a6acf97301f6ff133e32fee9f2ab006e96f42b5fa75089a5edf54270b355f13a809b83a36f927e9b6a7a9382c3a13701dc141d4a71c9a1
85e0e6a1006e1e4e7d7587530c3b9c748703b879f8f9c5c2ff018f70b04c
>}
\immediate\pdfobj useobjnum \csname EF1O18\endcsname stream attr{/Type /XObject /Subtype /Image /Width 97 /Height 165 /ColorSpace [ /Indexed /DeviceRGB 2 <ffffff23313d388ead> ] /BitsPerComponent 2 /Filter [/ASCIIHexDecode /FlateDecode] /DecodeParms [null << /Predictor 10 /Colors 1 /Columns 97 /BitsPerComponent 2 >>]}{
789c8d97c16eabc816450b049661442283628f48144798af282363c51e610b1066e4442102be02ac10258c48042866841158c057beaadcd7fd6e77df
db7a4cb7393ea7d8b56b1580de862c798798a4f288bb1bb67cbe3bdc4faa1602275c9b80930d76a68289b2059c52f1cbb2d6013032d5d49a3d6417b2
5ee675069e2ce882ae8300f645db7053c53eaabaa63044e69d6601a808001ca13cf98e9ff255d135b946188fd461bca565080cda3907c6be33224ba6
cd46ee239056753f0700525c0be65b829e2fe7454b0541a12be43b0d5135381e68c4bd6facbf72f1ba9b4d875d9a271440bdad1ead66fd19c0fd681e
bd18a3784729d21409003e463b6e78eb658e4ec063ef89d52e0d643d43d5a239d52cdf0f72dc4187250f6bb29a4df75bdc5b9181f88b4f830f3bf32d
8b4c727a14bfd6019ab4e440b2ec675342ed83e5295d4b9a207c0d840038638701eed0e0b2b90b5c4267970de95eb773548d8d35b0dbf50e7517660f
e74e95a8f45a3bc9018043a199561519eb17b07fd9f8c57207ee9a337ae7f78ab36ca9988c24b67bce82e33928bf2afd8ddaa06a86120cf85b59b1d9
b791c30e41b41b76d7de77b5a729e943d12df78c2e5b7a56545be2856f73d4f5fbbc16d371e8bc268aca74844dd6f9650f5ab4a209a1473a3b922f56
299d5486bdd1e7b5a0eba8eb66ddc99da5a8e7a4834d3de2b56d3e68bb0ead283764d5c79354d05b62ca757adf6c4c33a808d41b60ac227a5644bfce
e19812588a20dbf1d047df07b0279b93d2487827337667dafad9641eb701aab6b208d679d07d9797b80f8ab73bb27d8c360a5ad105a3b1f18b10ec7c
c5213ac3664db0f1cf29aae6d98d38bc74c1424865adeacbfd26d3c456c0be56fbf5ee69b63eba820a6a136ccf72e39c5ad45b36b712def393c5a82d
b64cebd49b208e291dfb3adb01ee26688e9e20d61530decf82f0d17568dd84bed29a278f5bf0e47a40cf7abe9db777049ac7a0f7f584ba3c3847df4c
be3499f581cf28868baa5163c0cf9edcf8e6d04eae19777fd0a5c7bc0fb1af79c6e0bd508dee67f65ad3a7b7032532c911ea6de56b90bb19d1f38743
f9c1743ea4e773c5f4b0af0fcdda79eae0f38b3b252a825f19f0396f0f68d2e89e62e3f711308f61acd5b99f68de9bc448a85af1b953d545b76e6f46
fba62383667b188d2a05552bafaaa2388e58e65e897c2275388bf4bc618a7dfdb0b1451eb0f667ea1f8c2e767673fe713b4393b22f04bbf2f544a895
20ef09953320774ba27d0a8797869888cad47c4f1dba2464c7501c6f33c3eeada968fb9ac62ddfc54623ab9ce6c6210bd13755de75bf96043ee00ebc
d1cf65671baaeb0dca11f874d305ef4b17a620ed7b3257398b903f5888dd7bcf4e78259c0e52851d5a3c1c9f72551b6668d284b1d43e1fc56697269f
a73e63a9b9dc0c64f02f0a6c164c319414de0cf5f5ad4dcad636577dcb445d7303ad7cd8a57d3be992f79254997a2e1f4e24f635bd051adff1814068
8f03b3b0995cbd2752f4b501456a639fe8c7aedf4464cb5b465b7c1ad83b2bbd9e3c1e8c807fe0638e04dc88d2d5cdb7771603c68eeefb89ffd2f380
d41bbdeb8af3b7773c5353e70fae2a1c793aad057a40f8a2f4ed9d5069900766857be31b1de065d370b74199a315cdee29e4015f0c6ea59ed57b9536
bc7ada20c7c3fe73873c104ca79ec78b5bb2300c7f10b300fbba6693dd3d90e7f6c15fd5a94a51922b6cf08ad29e35a93ef5e0b94cc5a403bede79c1
058bd78de277283784c91be81c50e96247c9e3e70ddec190c3b9d1aacbd4f0b7b41e3deedcc5058b76a3b1720cfbdc09c5d7ac0fee3585bb65c3d9f3
06f7f618f7a5c4bafcc926279fcdbcf12c2234cfb8b7482da3601822a749eac680713830327388f682511403ce598e6c8d0ae5739f25cf562fb7ba8c
f35ab0d6f19752363057dde168fd76722da673b1afdd9338bc96a61c688b80414967878c4d08d8d7c179bd5b2b70bc0525603bb52b73bb245a9cca63
32e1d5f9d4aeb3282d0f053b90ca290bb0af2b736acc65a87a04ec1889176901c89b0cfb3a32e1fc793e2dc23c0b2bc55f19e4f670c6e9ff24b7d9fa
2d872561ca137ace2fa8f4ea8ae6b1afd593a12ea5ec3533d59222faa3ae3c2c6dec9084b2fbe24b875f725b808111545d7af755228fc2069aac3812
b2933563b7563f2109fded7a8b56d4e1c02679ed4c995842b666787673eeae350bef6000eca9c4b6b6462bf6401b2593a1bc6e763e762f30e1f53e64
9bb954ba0d682c3dd8cb153a019dd5b6cd26afa3c4582b604764f10964cebef6827f518c0509fa3d0d9afe43722aa3e75f9a3e7e650e38af374db02d
74e3705662b227fd4b9614be0c9c3be19a9dd4c3aebfa2e76a4e9ac19569bad7848e7dadd2ece093d82f8bbc90ac168c6833b83310a1200eb16db3ce
23496cf91123689d618e951e110ae210b36c07121d4cc3deb3dd8637ccc79c24705e3b260868c148b99c0d698febfb36324de39b4336592abbfdcc69
f7447f700412146ddb43cc216779a67a3cef032d2bd16f2c6fc80c701eac1e87f6423ef8fe419fc883d93ed0e59d6db9df1c32288f6a2a1ef2ce5649
a8397ab0189ef0c914cddb295bcc9c5bfa501666a6c9ddecb8635d9cd7d929b6443986943b1db6a3895ac1aada0838afb933bf5bb96c3a0be3ddc0b3
0b5a21238d0798435ab8b809138832beb20ea560a779d1e033988d8173bc273470db454f8c3435cb8eb6a81ce7b5a0cb379f86965e1250af14a80c88
e2d491dfb4d1ed37f77d33a8f34cb752f0d4fae21d814e5a441b55443cb8b2f94ecadd49cfbc413075edd9376dd4be7617064a42aa8f9420872d9003
13653ca20d5d6c5c629d375211019ecf4e9990d20ace6b024cb9eb3c516245a41aa2972978012912a7f25a8f9d6769927f48da91c85915781a187ca7
f2b013e337bdbc582b0d6bb4fb42979b95d97ed30615a9cb6ef2fc21c9560f34550fe264e37dd3c64e2ebee4fddb1de21d7ad614dd4c1c9ff1a9b9b2
2ad5ba0eb40713f10e75a02d164eed21fea60b66589cb44cbb6b05a30169712a3dbf64305779f656bc6bfac665cc9e6a3a96d81e0eaf9a8fddab6ea2
37caa44737ade811bead59e4d505137cd306cb3d78adefddcf9cf01ca8404f4f6b2dc3a99cd140bb0176f870882743e06f3bf022ee08f82f0ae2105b
1bdf7b25a1a57b75a08bb5319bad2a03f3f5de6ceccf03205640a3daceb9ece14d528fb0afc7ad5c5e915ab6c8b4e380e06a33bb9a7688ec10879c82
d7533a918f04a82cdfb96c4727a3ea305ffb9423117a69dd18d3ab13ef3fe9dda4a7d1ba210ed9c5cb4cd746573d1cd97ea00f8852a470c246f7158f
765fa333fcda2bcd49671aafafbb10f3f5274ddf0a84dfd97e723368d5c7b6bf70abd1378768f4bbeb07a1ea4eee375e119ddc7550e31d3c7e606072
cd67d95cd87f9e0f22658722e846df1c5265cdb32fcb19196d4c7ea20f89d7d400df1c42f7d405cf0bbdc94d5ac8eba7fc026a48411c828e44a2f7dd
61bbde9f66866e4beb14df171087d062a70576c008c9f40c2134950ff87d9378bab1d76c93d963db9dcccd6c9dd2735c0b600e29f77b592e17c3b0cc
4c398114f12d60da60a2b1ca47b39d0cfad69ca61dfca120daa8b8aaf0b97011800398c733ea878069830691154c26c7d9fa6a97f3fc7f054c1b4456
9c94bd6ac3fdb29220f7878268e32c8b77e9b428bd482295ec0f01d3c6066d4c0196539f0bc814fea920da384fe63b337b8d83c978a3ff2960da18aa
7755cbb982b05f10ff13306d6c0b250a9be0a28d66da4f0aa28d5abd9547464a78f2e22701d306534051e8679ac1c39f15441b15ba7d98ec02fc6f94
1fd5686fa841deb48fdbec2f02a68d4193a2dbc04d0dffaa20daa0e399de973f8ff2a3daca21d410061101ff6f05718801b3d54cd6fe26600e31d672
22ab7f173087f4c9a1f9eb903faa950239bde546ff10308758c665f34f017388ced5dc2f14c4219df6f7217f545ba25bf73fff1e57532256fb958039
64f34b0173c8af5fc11cf26b0173c86f14c421bf798cdf09bf1e053fff01402b986e
>}
\immediate\pdfobj useobjnum \csname EF1O19\endcsname stream attr{/Type /XObject /Subtype /Image /Width 97 /Height 165 /ColorSpace [ /Indexed /DeviceRGB 4 <ffffff943d6523313dddaa55388ead> ] /BitsPerComponent 4 /Filter [/ASCIIHexDecode /FlateDecode] /DecodeParms [null << /Predictor 10 /Colors 1 /Columns 97 /BitsPerComponent 4 >>]}{
789c9d99099235a7b185991600140b605a00902c8021f7bfa677eadefe9f255bb6860a2b426e355d90c3774e1622e52c7dc83339a187d7fd3a6ea636
95b83ed4a29a648d8bbc9fa7971c4f5aa2383e39f173c9ccb5936d39f9f2f89ae838e7977c9c7ac4ac5b4a7b322d234475b4e412aebb12651ba5b8b5
9d34cbba49cee87149671e4fe313a368be3491337e2cb356e44ef2616f31932d732b8ac7151a269b40ebc41c4e15c99c26a2c06e864b95fabc83cfce
cfedfb69db9de3b60b35ab266ef2e7994294b9a248393da3b0799c7d7a18a9517676dcde72d159571eaef63cac91f22994d5e7e4ebe4211ee1d4a84f
58c9313637efe6d58847dc621517748f566ec797164e3e843aa2e184ba4b3d0dc2d6b2aa87598b27609794734af63695f3435891b310aa50730e3f2e
341f72dd1f61244d97cd0e94a3127269e76b265675e1e45861325951ac32a52102d8744e91abe5a1bb63be2e22bfe1e29de9693879d524d929b7557c
9847d6a57153928e27bc16ef2aacbb4dba61ef033b2aee3a17f30c42d67578075ece37a190c1f6bed6ba875db24a33874c23459cbc0b791074a62851
0dda8d1d7c3d5eac946fd3856775eb36fc314d346ac4c9f5b694bba74cfd39298940b3f52033d57cb7dbf41c37cc4dd49373e10a9c5c5b758534b9e4
1ca3ee59dd51695a2e0f175514a3c05677abeec5e7e96fcee35c9a6d26e4c7f864b9b9b0dd638493c47706bcd98d9c3d9f25b6b4eb1fac28ee3433b3
3f74a31927a92c8531a74c27a4231c5748895cb4988ba16aee1bab8906e3165c3fb5854579fa32735cdc238965a68b3637f456d8253fd158c44a258b
1a505471e26744d1d104d43cd5c87d60a597ec356d3d1c2bf9bcb1521a3b50c1dd5c1ea164b6b9945cf1ff8d28c913377666b2adc7d21af93d39ca8f
c4d6e8b52d73e9ac471ee14ecaccd36c47bee5e721cf179db5240a018df6b83c1dd8817a1243b8c05237e447d21375459eb393ce086d28fbf8f6472b
63f3a372abcb4419b2356b8c320598b37cba7c251bea4002fec8d2efc9259a8067603d77f111bb3a7187fde48da6cad911a9ccaa389f99c49d2f1964
43e01f89262ecedc8b16160c3e31da6d054631d58c4c52606cde3d56bc4cb4d2c988628ea389830e100b47726e90d9209f5d6f261d7a5a6ca1e38789
d1a2e0f9595aa34639faec915baaed4dfe455250f0f8dd315dbbf3cb445d481551418fa5b43b8133ceeceecea9f44e36d04936a580ec81721f2636f4
9c214908783b6e8ea89207571bf56dad03e1f6d21da4cd288b2f13b7cf47b12964549a74ba2fd674613d5057b688d735d4f4dc222ba0f7c3c416c5ae
bcb72ba38e1c807f953577c371bbc67e7cfe7599ce19dbf8301104e221d070e8cae5c0923252229bddd9d6d38a799736a3dab4ea6a1f26bef18c6ed7
db75c301b6bfcb60ffc24d9c87fc720d221405e3d4b57d9888e71ab64b6dfc86b023a1ca91b24872c92c680b0f11dabe71e76d3f4cc47354ae7c6b0a
d129c43240279e0d16da9d71943b722cc04dc5820f13a7cfa0d7337958dd7268eafd778084fcf653b0763d3002adb8bec97e99f8687a561ee9091adb
44722ecd8ae274066592951328e7a212c4047bfe3051b338cb85538eb99e7b2ea2c99b93ce6d99e52a3823fd0deba101f5fd30d191081d02e0e6d3d1
aa61a33891f0c49e66a4d9d945719e8156616e1f268a8618e4a7ee474848178e1b25128efe461e186d397050152b3b9a5fc26d4f8af29516083e144e
4467915349e45162863fe866eba56706e8db8770aabb3a85be3d356c4d6309359dfba25186e72090f3092e2a3189da8770f7525cc4d807149210c33a
77623b190d2d099457663108002135f143381886b37242ad409b1ce187aa58c286ec88d645c538ddd10840e0bd3e84038843e7233d3673890aa50a38
a1927c80f2089551d6d359841350f8106e04aec47907e7537700d87abacf6d3a68adcc1590404aa9414ff34707612504380143f3a86e2d2ab172998d
4c75057f9e77cc17e4aef02260cb9770ede6dbdd7843d0152ab1eefd78070ab71765e96448b46796020d1dffc18a97897eb86e05cd73733e0eedafad
343ca7f3f30c2151be0f05842db3f4ebc344ad73aea2a293f826b134091f1f020a14df45f7881c1183e37124b13e4c740938c61fb0169a4a3e9a6e10
e489304aa41a81e5fe744d0122f451037eede2d1c8d1c4de2eb77950c3ce3d17157a3a2366d49dad171df75503d845cdf9e27f23d7d7e43c139a2a85
76d9370db083a250545587d8af72be4cec09247528822b51b44942e7657428ddee13c00ee51955569619ef5e1f265ec5bc584040e1081cc44ea3be04
23d5a362c521964fad70b9b079ebc3c45ef9692f55fb89083ae04b00391d42c4454e939f7ae51d228770e387890919dc62df79ce1b74e7bb03c87580
895b5358d8ba3b6dd39bc5bbab9789b23342394b0bcdc0340df830905f471a60a140a06d936aa9ec72fce8a010d2827e705e5c3b40d5c3ba16e941ab
04b0309b9257b231854be994af4f9490029d6f540a4a4d5dcfb2eda6142912fc4d2d6e5805a77ec8be5ef465e2e8141d4f4d78a743aeb0473f994af6
cea1a6387715a11f93e3295f9f18ac838590c925f0c3299595a699e143a490507f18de03e0200245b7f561a2e17cbc8bd2c2331bbe017e15a7ba9df0
fa20606e624456bdc9f1359d2f1361e25517d72af46c11d061984d492826bc5e08f43df8d5244c7fb560c9cb44f9e46e5f2f879ead829741416f1c48
1c302fa3ad514f1b71c166f69789173bc51f804caa741d41f6355800a9d8f0ef395b0103d0e67a0d1adb0f1345a3f1b02002c035b465c04f39a77cb6
11dac8d1617d33c0b44f8ff8301153801a02a1442ad0d273f5076f5030354d93a3c20bba620ec86de597894f7e19190c9b89283f0f34ccc0e0c3d454
b4cd70d44ef4290c46377f5d1f461368b88685e841e3ade11a0c65e818d858ca05db09076314cabb90f830b1e977843ba7a2e3aebbd893f488c810bd
c16e37a8c6483abce50d31fd30118db70ed2a45c3d97d89c8140f33b57f2cddc05c9b01a22421756fc43b87e0843caf2d52ed91dfe7b78c612d00d7e
00749b5cecde9fa885bb9fa948623620c4e042f623e6c2bc1f29031c3ff82401748189456117984050135fd7870e989a616e21fa15052ca3d54a406c
40089b3d62d63dc0f5d03b327d08f7301d38fb2565538a15f50386882f21289f2bd40e30a2685a54ce8770e5d5e5bc47459da97cc6ebc761d608f938
c3cd82d7ea538672f235242fe194151db23222ec4a7118564d467d6583b122e0fc03d311021b509ef9c7f591cf1aa2b23d5cc516b00690ff2e7679a7
3b8d72005d5e49543007d77e08076f464cf8f5326f156724c8bfcd30be98d7013614994281f78ad669f11fac78994830ce4849998ad7030b8609244e
1142deaf5138287cf82d6717d55fb373815c628cd4bde2283831ef0997c415e183518803b302b542dba17fc4878970c85e6352d03385ea3024f5a781
d0c3351503aca18bcef93b9368df58819744012e1f1d76c703b63aab5f420791ea5af0d6e2769ce41d549b8c1f266a4111f60bffb8803459340e8a2a
6936cf90a2011ae82fb3134d78fc0f13db65cdb44f09669103cc8caf81141c0a5c87c3e405ad850363913f1f1b00258f5108584868c3785ef7f42073
181d608d34863234b2d0205ce3a0f29789ea007c0c4703e7f8fe1ea88dee647ed13468c07482d8805b8623f8323134e49043e901eacba58c00bef033
e1430ec4a77300b17d81e96cd27e98a88d0868bd6d8d8f170c0c023a8209487815393f16225475c2b856e0703e4ce49d15c2b07d3237a39999dfb1b3
ba2c54a2b770a1366e99416e3f5f26ca460bd62085a36a1650d5270ac8fde618d63bd6e7ad5b312be28fba2f130fd1e6dc606cc2855bce491e7803ba
d983c3c235b27539105e60f4fe3211076b5041189b852494b010334848675977cfda6038e704c267d8f3f8616217ce1b6168c891c0eb6794661481a2
13a2ca13b338a07012f6fe8e5c2f137bce30b40563705893d548685e789ece4f991de9e3882cbc9f8430bd7e9968372f34504078ef41972e695ecf93
08139f818abb430d6f86691bb97d988833be4329f4042e356bd898671e1625535420d3bb53d441815c8577320213f1fe84700d5862d119a227dc033d
b902fa911d4c9a68664ac4ac3e243e4cc4147659480c20db83a8e1c05c6906fce39d10680cfee9c1e48031943e6a906500189e0ce4efa64b8e3da010
30fb6b64c09060c97da981a9919c73df49b8c20f89d727828602d30a2c3aa66c3066fa925f87bc470a2a1d422f7e5d9fe878594e514293327555e954
383878870e67d79095b01675c8299af7c3c489e85d0c32119a244885cbe12ac75760985b984fe94e84454bcc35e3cbc473b06381134393bcab0fb2d6
f16e86780b14174e89b730580ff8890fe16616af37807d01d5b93a0b3a2c2e924c87dd86ba28bc9ef0feb8edd7f5814d0a7d01fb724aa263d5b40a96
c9ba524f07db5dc5eb975ed4acfa12ce04c2decb25543e3cd785f7bbb0acb0eed09a8c450b9306c67de697552fe1cc761adb4633f7c7bd1fc49eb767
78d3537aa36031f21e0802545acfef246ccc6b77cd4990b1c2f7e9f24153a686c1db7acceb0e9e15710584d1fb1fc2f522002378521810e2c2077626
e12d18f213fe98136d6ee0bd73ce3f93b03d80940b5163b3cd6d1a23604a813d223d9a1b3bfba76d4d9d40a37fb2e265a24032bce807520fbdc22416
8412ef40da4e96c00144c74f783a37de4f522f13dfb13da3a55442368544ec7269906691e61af40a3421d952517aa7d49789b0da4809fb9a16a68403
dfc92bc12ac16f8c0295730f86776923dbd73cbe4c44589316cffbdd0510c4cc66c07c912d9b2581b903e53732cac35e8fefd741f4c331682f09905b
8c05187e2b46558c373e16315fe54fe8d59901f3efecbc403b8ca9fc825c672e63c2aa0814dddafeba67a1c33b76f54019edf7eb207e8c203d73953b
c6ebf3a29524d21109ee4e688f8a3a98cb66eeef9789978998820d66d9287747606bc6d01c0f76e10035a864ca43c5e89ee71dafbf3e910ea418a851
f9f521224afcec9d1c3145bcba0ac5c91555155f687e99e8a630689185ba3233cb9378b2777da2f03b7ebe49f23e3d30bdaeffc3c48958b04349b32f
6f4c6e7c5d6289acdd3bb934013737311fbbcf37a29789c8d8e18c4c801d0793563f4b47b70fabb93195167a3fd2a0fabf1f955e268a21622664226b
2887371cb0d23574266a4bd4927b478ccdcf47a59f1b13a8d9a429df6f4b6644f8c70dd904f7de5d7cb2e0c6cf825f37261bea0dd3219f2df0eeb1b4
cd982ba13cfded73e17f3e73fdf28998c3e0234cd48d97cc083a76a5de0f6dda2cf8c6233caaffff9f9f1b93560966d73743193413987c73532e15cc
32aa5088ff5af0ebc6c4b338dbaa85e105f38f878384340b0b730398d26f17fcba3169707bb044981304864ea95ba2ad1970aeecb2f8ddf373638251
0a405ff06a7015af45c2e47498d03aa47ebfe0d78d499d02cc85201927555d7c2b266834df34fff606f1ebc604c89fa5471fa1c5584918699375c043
fcf705bf6e4c2accc5a8911a5a1e43ac8320236b20da7f2cf87563b231ef23abcd3cb05d1ebbc3d8fbbb2cfcf6e49f1b936c08be04f3b9c84b423921
6914fee3083f27ffdc9874c88a350f26793234cd9b05fd073bfa9efc7363f2ca8a2f1cf578d5e79385ffb2e0d78d0941cdc2a6328a234d273d7f7884
9f937fee3f56ccea1dfe9461852e36e7bf2ef875ffb1fc7bf0859116c645fe41167e7bf2cffd47c4a0aa40fd81d2fdc32cfcf6e49ffb8f33c40c8c5f
8d2eaaffb9e01facf8b931a141b13fcb6772ffe3d09fe7e7c6446b314f311716ec4f16fcba31892ee3d418e3fef78edee7e7c68413ab939f3f39c2e7
f9b931a989207a7f7684cff3736372908767fc9505bf6e4ce69fe5ed5fcfcf8d498fbfbed5ffe9f37363d2ffd2113ecfcf8dc95f3bc2e7f9b931f9eb
0b7edd98fc8de7e7c6e46f3c3f37267fe3c97fe7089f27fedd05ff07fa5daa7a
>}
\immediate\pdfobj useobjnum \csname EF1O20\endcsname {<< /A1 << /Type /ExtGState /CA 0 /ca 1 >> /A2 << /Type /ExtGState /CA 1 /ca 1 >> >>}
\immediate\pdfobj useobjnum \csname EF1O21\endcsname {<<  >>}
\immediate\pdfobj useobjnum \csname EF1O22\endcsname {<<  >>}
\immediate\pdfobj useobjnum \csname EF1O23\endcsname stream attr{/Type /XObject /Subtype /Form /FormType 1 /BBox [0 0 306 194.4] /Resources << /Font \csname EF1O1\endcsname\space 0 R /XObject \csname EF1O16\endcsname\space 0 R /ExtGState \csname EF1O20\endcsname\space 0 R /Pattern \csname EF1O21\endcsname\space 0 R /Shading \csname EF1O22\endcsname\space 0 R /ProcSet [ /PDF /Text /ImageB /ImageC /ImageI ] >> /Filter [/ASCIIHexDecode /FlateDecode]}{
78daed9a4d73243512867dae5f51473850a3d4b78eb02cc412b1076022f640702060f80a0fc440007f7f9fb7aadd2dc91e7f0d47c6e3b0fbedae9452
ca4ce553e537abad8e2f5b3f70faffedebe5c5c7affefce9db575f7cfad1faaf2ffb57dffebed8fa33df3f70c1cf7cffc5659ff2fdc32213af97e032
3faff79fd6e216f9dd9d7ffb7159be5f5e7cc8c77fe7539f2ea16da5ae296fa1642e36d7b6e0534ce946bbee352b716bd28ecbce2f77ab6fd6c1586e
1bd71c5f5c69754b35aebfbd5affb7feb2bef8d06b0678c2f75f9ac97af171c1c737ba9ecfefcbc2b57534feedebf5c57f6cfdf8d7f5f3e5f3f5cd8d
3dc7aac8a6e3f387559425bacd7b67d92e7e7692dfeaeee6f2118bfad7f266d15e7ca0cd087573c58a0fd9c5b2c6b8b91c9236e7a397cb8b4f6ccd5b
595f7ebf2ffbcbef96afd6f7aeecfdf5ebf5e567cbbf5f2e9f2ffb24169628255f7cb90cde496f1fbc30c3946af2fb071f3378bc72b78737c7da61a5
857e932fdadb27608e69d61aadc6e81e378372d5fa1974c65238b9bc7b13fc96cd8572630d135baebcc48e2c97cdcbf27b571fbfbfbefc7949ccc3e7
68990d3b067befeaa7fd1dbf398bbe7aeff3f99d6ff67788f6e84b6cce8778f3ce0fa777b2f32eb952cddfbcf3ebcd3521e55c5b4ce1e69d5f4eefa4
165b71c5057b789cebb7ce6d3dbd135c4c3945dfcaf1ce8b4ffcc5e9d7fb87ba3d3c477e8939b14ae478de7cafdcb57fbe6eb106f364325bd71c297c
efe6dd113ca72cb7bcb950fb913be9cea1e366357b657dd842f50f0d5def195cc9c28ef86ef08b7467dc52376a537059b1ad8560e581a4bd4a73dcee
a558164d31b17be2c266b93e14b02f6f82c25772a7e6e21f0ed863bf0be9e54329ed1c14ef5dbdbab146018acd5bad4f0aa45f2e81a41af9acea7cec
c557ab57a1372dffd78cf41d7a7178e803658417c1b5ea0253c4abcd7c8b66255334bfe8ebbaceaeb1189f033ada66ce11af6b2a846d29fbe6a262b6
72900d2a352b9b29012ab66adda33006ac594ca348b27be33aa9e4af3b0c505028012d8e6a56a846a32ab3506676586035b32fa18d6add2c72c0124e
6e8b690f70d4b65106024e4c2ac9c76aad19275b88bb85c402d6e87062506dcb992224112def06a8d1294685d2a006822c369cc854d392ca6120b215
29b00d839ac8c79471227395e576582087736a7262541b43c809b238557758281b2538cb895ead4401d755a9b5357f5820b4aab6aa17b5229a0f4ea4
2dc7bdf8a3da86e7554ef4aad70e2a96b2f6c4d26181047435c989516d4a349c28049ecf870546cecde444af2642a42539813b99ddded54c8cb4cab1
34a8654be68c8f56ce0b6aff2ed62d347a013faa6df31c0d7282462966db0d707e5b225ef2a81a7593d0a93adf1dab544e6a53c2d8a8fa8d6ae07002
b5fa7a14bc12d8571f716250cf2d05b6526e47652e89ea458968a3caf2590878c108c1dc9155a5280c335e0caa5aa0a87397715db223ad505bc2721d
550a2f8eca0b72b18523ad2a816eb1c98b5e653e2d0579c1cc433cd2aafa8d3cc8f2a257394392cedf157f589a23ad6adc68ccbcbce85522dd086abc
20ba2d1f6985da1af19047356f2a29f2824a94ca9157b5e8f7944791ed0ea5ca090a51ab475a5502ddaa97139ddaa8868dee0a27a844a11d69d5a89d
189013bdea0992c63925b5553bd2aab1a8017fe3ac36e31533a31271821c168874e74c4ef42ad9545dcc6b55da8523ad5a2606687bc2a83275aa354e
b0279ee3e43040a47b4eaa32aa44bae3f86d52e9928eb4a27013b49e56f996dc38d270837dcd351f894552d1d7fb821fa34cb4fb4053b15655b6d24e
46087717429ed4a8de864b57c280ca65271b690b5411f932c8b40d31063903af30d2c9884a43ccf266900979a536deb08ac9a2bf914f3d732f2f5ff6
f8e1d7cf3af4180fbe09b42680c2ce23488c4f390d71a29c839c7e18faecac1d5284ab1658d51674cd4f9d5a9fff5efd7af5ddd51f57d7fc5caf7cdf
049d18b1e74c77e2ccc54222ed19a1830a82448d75073ad7bd76f6b4bb76c447ade3f94ddf1b7f22494eb47c9b24ef1e65474aff48a494096224d2f4
5f76acd3eec1aa40398b9c8b24a07f2e549a5cea89eef4fa9e6139fc52d871e0d92ca9ed340a22adebb0ed37da3da353396ae60432f5d0ef869246f3
85afff30e49318724cd973df6d81837896efdc41cfe94d256ee5f8f07391b24fff9e2bbb893c009706e0717a1d3746c23b21e6548bce9cd94fe601d8
341a8e9c8f7a1b9e4d9cef5efbfe764e1b2b5c17309a575183d4339905424f58502719500a350e50669c748e9ed1875b3265d11d0877c6320318a1e2
90cb24834ac04b1dc18c6c61bd6888fc24ab7ba08f88039a59142e01417992c54bae9511ce2c0a982cdc125945fa9c81ce2c0a988c2a30c92226f1c8
c067f4c72a4fc4d92827f5699ea239101a58c10912542c4659d844a79c0746b3246e0ab5d45b72a30f1b20cd92b8897598658153ac72a6c3344b22a7
64726690412797621d418d917991aa9c19647a4a98218ca8469d86b0994a9e64f889dfe54c076b929bfacf51a4a3ac25fa91d68c35cb58886992c9f3
589d9ce9784dc7260538c89941c60540cd466233cdb55217c324d352e61604151db349e644c8a54ef27e96f7c466451045b9f5932c8a72398ecc6645
18e5303cc9e228f3b8d2539b0a9c0faca04db2488a301fb94df21e166992c5523ec9958edcac08a6283ca32898c29791dcac88a6429233bd5c8553a1
ca998edd4c5852a2c999411650c524673a7cb32aa28218c22db9914d69043803070a7c2b6706392935eb007006449271266706b9e8be618c23c2191c
c952170ef3516e1494e2e44c077106493a0255ce4c320150c2c871066112a70e6746992cc0d132921c1b4be8d459a41329b4217e04396832246a589c
e42c8739fc46900328899b56eb240bab687d6c24b926aca21b0c93dcd46db434929c0729e9e17c29934c1ee848eb39ce0394d1e1be9fe4b0a9a19233
1dc7756decf3f0ee2e24ba0b75b0f8587a7a98f6741210a671bfd9fd34d60b8f643d907ce3d84c31759355375f3bcc3b5e9f7decaeb98d7897377ba3
7f37e2dd3dca8e78e191882713641f0d7be779a7bd9d7668693818f96a8eb4792ee3798e1ed5ee214c3aed9ef1393ca92d37bdfaf3798f7d8da155dd
fceef7fe46bb6706940c5208c6ad2d8577043e72f570fa688cff01bf2781df98c0e73e1ec3ba5b3bca776e25b8c56ae7e28e0f3f17fcfaa2d0835f37
91879e2a1ae772743a9a8f4f3f1ffca60a7506bf7e32f7839f370e5912c1bbd3a79f0f7eef5611ff76f01bebde256082802b39acf584c7fad3daf165
932c32a15318198f38dfa80dada549a65c941cac0e8ce7614d474ec55baa1e54e411f13c045a697d059b834c2d6ef41223e2d1b86c697fd83aca512f
e8586c803c0f6c0605669864888b9eae8d90e7814d472368e596dc14bc03e3a9b8553a31bc19652117943c329e8f422eea4f9e642197abf2a6633c1f
855c66f26690855c106a1828cf2721176d789a642197371b294f32752bc6491573f9226f3ac8f349cc0530c549167385e846c8032ee80443f17592c5
5cd1a511f3800e988b16284cb2982be636629e642ab49337834cffc02cc388793a488b22d04f32fda6362d0e98c7894fdf9df57757a3ac1bbed81831
8f338904607e36c9e713be473dc9b4eb097746997a954bb311f55831a287dc6f93acbea426dce951cfabd4b85ae54e2f332d8e6feff2807a9ef1356d
5f2719f28295c6077a5e0603ab1d6ec974e451de74acc751884157e5cd200bbc28f723eb1165a097c56c932cf40232e3c07abe08bda8da7992855e24
f2c87abe0abdb47fb3aa879665843d5f455ec44e9a649157285606d8f355e445c2b449167bc18c23ecb15bb057ccf26690c55ea4fcf884cf57b117cb
566fc994dc3c3ee2a3f86e8aa8e227997108919606daf350670e395b9964a2913284373ded79a893f693449964e08bb5ac23ed711a519dd8f93cc949
ed8097373ded2137607c424368b4009d65c23d6854f9a1879083cc714a32c99b9e022fcdedf328f04e5aba8386b0f818b07a9800752094940a8eed98
f934084c6f7bc4e28b6afafe774f5a8738b435a75eb3b3fac5fbab6eb0b2b8fbbf8af4eaeaf7ab9ff6a15e31d09f57df30e41ffa83a6cb8067dcdcf2
cd83c5ac6dd97b6b3daddab7f6b5fe56a0490c17f1ba134d7f5074522fd7f7ea8feadfce9077c2bbce5dd2430563ff53b99b119ed0dd92e39c641c86
eaac38f613f10ec73b9582e355587feb5ded2e59eebb64d99f680302594171591035837152af7bb577be33f19435d99ff7d0fce92c78d4a24c040b36
e86f0da2904777c8e45f4c7a9177550b30afcae59ae5be6b96fde67f2cac5318964577af27f5ba578765b99878d2b264ed11973d3656fc041ff9f44f
9b9dfb17e1fc6f5e96cb35cb7dd72cfbedd6c64112c668a1448749bdeed561592e269eb42c545f226cbf11f5a86509e3b2a4834de86bc61cd0bddf53
48c479592ed72cf75d03d0c08d34ef61ac2ad44b5a55d4215a3ab5f7bf33f194651105594ce1d195250e6cb82c9fff1f826e9b58
>}
\expandafter\gdef\csname EFWidth1\endcsname{306}
\expandafter\gdef\csname EFHeight1\endcsname{194.4}
\pdfobj reserveobjnum
\expandafter\xdef\csname EF2O1\endcsname{\the\pdflastobj}
\pdfobj reserveobjnum
\expandafter\xdef\csname EF2O2\endcsname{\the\pdflastobj}
\pdfobj reserveobjnum
\expandafter\xdef\csname EF2O3\endcsname{\the\pdflastobj}
\pdfobj reserveobjnum
\expandafter\xdef\csname EF2O4\endcsname{\the\pdflastobj}
\pdfobj reserveobjnum
\expandafter\xdef\csname EF2O5\endcsname{\the\pdflastobj}
\pdfobj reserveobjnum
\expandafter\xdef\csname EF2O6\endcsname{\the\pdflastobj}
\pdfobj reserveobjnum
\expandafter\xdef\csname EF2O7\endcsname{\the\pdflastobj}
\pdfobj reserveobjnum
\expandafter\xdef\csname EF2O8\endcsname{\the\pdflastobj}
\pdfobj reserveobjnum
\expandafter\xdef\csname EF2O9\endcsname{\the\pdflastobj}
\pdfobj reserveobjnum
\expandafter\xdef\csname EF2O10\endcsname{\the\pdflastobj}
\pdfobj reserveobjnum
\expandafter\xdef\csname EF2O11\endcsname{\the\pdflastobj}
\pdfobj reserveobjnum
\expandafter\xdef\csname EF2O12\endcsname{\the\pdflastobj}
\pdfobj reserveobjnum
\expandafter\xdef\csname EF2O13\endcsname{\the\pdflastobj}
\pdfobj reserveobjnum
\expandafter\xdef\csname EF2O14\endcsname{\the\pdflastobj}
\pdfobj reserveobjnum
\expandafter\xdef\csname EF2O15\endcsname{\the\pdflastobj}
\pdfobj reserveobjnum
\expandafter\xdef\csname EF2O16\endcsname{\the\pdflastobj}
\pdfobj reserveobjnum
\expandafter\xdef\csname EF2O17\endcsname{\the\pdflastobj}
\pdfobj reserveobjnum
\expandafter\xdef\csname EF2O18\endcsname{\the\pdflastobj}
\pdfobj reserveobjnum
\expandafter\xdef\csname EF2O19\endcsname{\the\pdflastobj}
\pdfobj reserveobjnum
\expandafter\xdef\csname EF2O20\endcsname{\the\pdflastobj}
\immediate\pdfobj useobjnum \csname EF2O1\endcsname {<< /F2 \csname EF2O2\endcsname\space 0 R /F1 \csname EF2O9\endcsname\space 0 R >>}
\immediate\pdfobj useobjnum \csname EF2O2\endcsname {<< /Type /Font /Subtype /Type0 /BaseFont /GCWXDV+DejaVuSans-Oblique /Encoding /Identity-H /DescendantFonts [ \csname EF2O3\endcsname\space 0 R ] /ToUnicode \csname EF2O8\endcsname\space 0 R >>}
\immediate\pdfobj useobjnum \csname EF2O3\endcsname {<< /Type /Font /Subtype /CIDFontType2 /BaseFont /GCWXDV+DejaVuSans-Oblique /CIDSystemInfo << /Registry <41646f6265> /Ordering <4964656e74697479> /Supplement 0 >> /FontDescriptor \csname EF2O4\endcsname\space 0 R /W \csname EF2O6\endcsname\space 0 R /CIDToGIDMap \csname EF2O7\endcsname\space 0 R >>}
\immediate\pdfobj useobjnum \csname EF2O4\endcsname {<< /Type /FontDescriptor /FontName /GCWXDV+DejaVuSans-Oblique /Flags 96 /FontBBox [ -1016 -351 1660 1068 ] /Ascent 929 /Descent -236 /CapHeight 0 /XHeight 0 /ItalicAngle 0 /StemV 0 /FontFile2 \csname EF2O5\endcsname\space 0 R /MaxWidth 974 >>}
\immediate\pdfobj useobjnum \csname EF2O5\endcsname stream attr{/Length1 3904 /Filter [/ASCIIHexDecode /FlateDecode]}{
789cb5567d5054d7153ff79d777717d85d175810585816d605292c10502884d155f94afc0003124037b2b0cb87c2b22e440582928235c4a61ae36c22
f52b0db5c6a609b5694a6b93a64d5293b14ed3a8ed64fa61d238d3e90cc92433249932e3dd9ef7406b9c66a6f923f7be73eff9dd7bbeeebde7ddf780
014034353298aaca2b2a2116b4002c9946e3ab6a6bea964c650f125e46385055b76975e6e4cad7084f120ed5d4e5156cb3f85f0690ca0837b4f57802
3080ef123e4278a46d67bfedf8c9a90c0034934e777ba0a3e79fdb3e799d9c8dd0fc810e4f5f80bc913fae27acefe81e68dff3c87815e104d259d6e9
f378b50dbf8d01d0ada3f9a24e1ad03fcb7f47788cf092ce9efedd9621f884f019c2c9ddbd6d1eb8c8fe4a987c80a9c7b33b80dd9a08c25708dbfc9e
1e5f6ed19a20e1598ad114e8edeb0fbf045701a2649acf0f047d81c38f4be98469bdb21e94bd5122538a440841a38e291409b950065279e5ba7a3076
7bfafd90008a150887016e71aaf4765fd00fba053d467392daeba89fd7b15195d45145ebe73427d138a75e21b6e0ef301c01b3ea6fd013f4b4c29827
d8e387b1d6a0a70bc6da3cfe3e6a3b7d416a0782dd30d6e1eb25be23e8db0e639d1e3fc974fa5a6964bbc7ef81b16e4faf4d6929eeb11e4f7f278cf9
b72b23bd1d9e1e180b3ee827c9fe767f07b59d8afddbd6f6dfc264233b4431022fe447a18859953e5c817f8676894e4d8ad220a21c25c90b6bbf556a
db2bbc3452093d1ab330b3096d0ffbe00b32b840c9eaca01d611622a96219f7a9dba7b4016eaa0478d6a810bbf123e173e7a9ba5f9f35b769bddca05
52a2aa5b205a03f42c908602b1b33c98868b44bf81b37041ba00a7a098ea7df00a1b979c34731af6c3bb5c8217e1182b61665642b3973566cd00dfc7
cfd0fc16d2ad262befc2136453b1340d7ba521a916dae102bf0413547bd5f18fe1976c94b2f029b82855c367308af57090ea04f451125c659120a46c
98543c510568254a4127bfaad68f612f0c413d4c6aa6356676598dfa347b8dcdc0bf28e6cbb8057790c63138c3f6c976f98c5c0d07e7e3c51638288d
b209b945ad43745ebbe098dcc2ce6accf086122b8dd452a4edf02ba25d7089ddcdf6e1384536a444c0afc225edbd72de7c54da615c4eeb01a2e7e11c
383144faea5a34ed704c6a275f9f512497b01cb2c85a1f005d0921887b49c3659418e4d84c5392e31eef946b63a3edcda63467ce1dd066d2daa6a076
ca30609b0e876b1b650b6f9ae2c953e8d04dc90efbfb5f36f9be33676d6da36deaa715e50b562b5aca69acae915805d1308d57943be7df72ca2bca27
25db8a2945ced12a23a0dee5d4f10888d07288908990d225a254d32c4bcd5029f7e974115a7519caed204558224db3d77f1fbdb824a6e42ec89b999d
317d944feeb426fea176fe21ae299da569d350213b6bc6c6b41b3b1aa4dab41b5307dee257cf8aeab3378a95f7411ffe40be4c679c09475cc57a6688
e2d6546b04d345f2d4546b696494355566713e73534233dde14d72b3a3217ae3526b646a9445cbc0926cd426a72f355d99b932f3eaabd1312554ee02
d3f5d999d9d74d1f7df61185381f98f1436217baa6f4739608066e972d0a34fa24166bce822ccaee1256185b682e8c2b8cd79b23c6e3c6e3d1cddc4c
9bcbece99a38737c2ab3b23833a4a56764165b596141d1f26519792c972d5f066905f1f2e54cdf05f7cffe54333970e994382efebeebe381acde6bbd
2f5fdf1c1af8c393ec7e16bbfd1d3ef94651f189d1eaf539d6e5cf3cf10f713d2fffed8aca33fb9bdd19a9458fedbdc652326847aa29d75652b666b3
38579f393626dac4e31671bd416fe406835ea71c04570e884e823306893cc99214bf588a4be096a454ab644ba13e995b2c49a5098bf428a7d801595c
c4ba0cdba0654fa28125262527c42c32728b0125bb8656aeb7a7580c997696a9e15996a42539a6d973279398fbf5170396168be4bee23284b34e513d
94d59b5593a571cfcc5c79f1057b8b5d72cf4497508d511ea2c504281dcacaaecf96cd14d046dfdc7c352594fd9f6f4df3f9a1d06d70fe60f29d8cb9
5f72395b9c0127ba5db161e735e7af9da7a81e72f63a6b9d35ce0837437bec0adaf60c3a17ed1d7ca176051d4dfc62b58d336bb479a3d30feb74dee6
6fbf933a343d4a5ce3230af79769fbdab7fa068f9beaff3674e8b84e9abcb1453a91bd22bed9fbd6d3370e4a271c2b1337fb14566e79beb56324b8f3
a1d34fa5d5ccdfdf52d3d1b7db366fd8baa8ec5348d5a997ee1ff718afddec3f7fef46a1a15277407dd36e7d47e84c7b440a8061dfe7effdfb17864a
f0d27fc8ed452b2bf79062fe2cd1017a3327a0989b41afbda864c4adb29abeff136a1c5adc00d9d049f7bf042670297f1968940ce455b9f7b5d0acbc
e332fd1bb07cf5bba0f00ce209cdf3121859e5028f90c1ea1778f9369e43021b5ce035b0841d863574ffd13f1004a10b3ac87b3f7dd997421bdd7936
28a02f573e1412d74a12368ab58be6fb8882e0030f7d7b7268f41ef0937c2e71aba09baa8dbe39376df5a9c847bd8f747652eb25c9c8ffc36bd12daf
f5e46927f9da463a7e9256e2f090ce57f3584edc36d26b800749a28d643daa359faae1515764232b7e6a0324d34a76bb48ce46fabde4dda3cedd69a7
4eb5d207350bf23b68d4f72532b63ba41ad408fb08f7aa5e0b28ce4258fe05ed9bbace3b74e94f22fc29d11eca8aff55b46a8e4a74d259b493b5d3d2
882bfc1381530e7ca1009f0fe18f8df85c9f913f57803f1278d681cf1af18c037f18c2d373f883399c14f84c297e5fe0d30578ea641d3f15c293eb57
f1937578a2008f9bf15808bf178913028fc6e053c3f8e4790c093c42124786f10981871fafe28787f1f12a3c74d0c20f093c68c1ef0a7c4ce077041e
10f8e8b8953f2a70dc8a8f14e07e8163f1382af05b021f16382270afc03d0287d73af8b0171f1238148d8303e7f9a0c081dd6e3e701e0746e4ddbb1c
7cb71b77bbe45d0edc29f0c110f67bb1cf88c11d0e1ef4e28e400cdfe1c0400cf65258bd73e8778505f608ec16b83d1eb77595f26d5eec221f5da5d8
b9218a77266047bb91771460bb117d5ef4929a37846d025b3d7ade2ad0a3c796ad89bcc58b5b1f30f1ad89f88009dd91b865b3816f11b8d980cda4d1
1cc2a646236f5a8a8d46bc7f0e1b369de70d0237d5bbf9a6f3b86944aeaf73f07a37d6bbe43a07de2770636d2edf28b036176b28881a336e88c2f514
d5fa55b88eba7502d7de1bcdd73af0de68bc4760755534af1658158d95022b04960b5cb37a98af11b87a18570974cde1ca395c31876545ab7999c0bb
dfc452e24aebb044b802f8cd612c2658243b79d16a5c2e7099c0c2522c98c37c3de609740acc11984dd3d977e1374c9885269e65c7a556cccc30f24c
2f6618d1c122b9a30097e813f89261b4f3526e17984e28fd3ca6917c9a056da951dcb608e9cfe655d7849c1a85d608b4bae4141326937872082d214c
4a74f0242f2626c4f0440726c4e0e278075fbc0ae31d1827d02c30760e63a213798cc068b21a9d8826818b041ac982318406726818467d949eeb1330
4a8f91027534a50ba186c4350239ad8297a24c4876229ae89f2d924b09c82291b964484636cdbcfb1e63d95f6f81afd9fe572d29ff0192b9464e
>}
\immediate\pdfobj useobjnum \csname EF2O6\endcsname {[ 70 [ 575 ] 83 [ 635 ] 109 [ 974 ] ]}
\immediate\pdfobj useobjnum \csname EF2O7\endcsname stream attr{ /Filter [/ASCIIHexDecode /FlateDecode]}{
789c636018448005a70c2b19a6b1010003270010
>}
\immediate\pdfobj useobjnum \csname EF2O8\endcsname stream attr{ /Filter [/ASCIIHexDecode /FlateDecode]}{
789c5d503d6fc32010ddf91537a64344be9ac9b254a58b873651dd4e56060c87855403c278f0bfcf01ad2bf52478ba77f7ee8b5f9ad7c69a08fc169c
6c31823656059cdc1c24428f83b16c7f006564fcf1f22f47e1192771bb4c11c7c66ac7aa0af80705a71816d8bc28d7e31303007e0d0a83b1036cbe2e
6da1dad9fb6f1cd146d8b1ba06859acabd09ff2e46049ec5db4651dcc4654bb2bf8ccfc5231cb2bf2f2349a770f2426210764056edc86aa83459cdd0
aa7ff16351f57a4d3f9d29bd4057f09ee8e763a6137405337d56994ed015bca72ebff552c3749d751b3987408be413e60dd2ecc6e27a65ef7c52a5f7
0036657d8a
>}
\immediate\pdfobj useobjnum \csname EF2O9\endcsname {<< /Type /Font /Subtype /Type0 /BaseFont /BMQQDV+DejaVuSans /Encoding /Identity-H /DescendantFonts [ \csname EF2O10\endcsname\space 0 R ] /ToUnicode \csname EF2O15\endcsname\space 0 R >>}
\immediate\pdfobj useobjnum \csname EF2O10\endcsname {<< /Type /Font /Subtype /CIDFontType2 /BaseFont /BMQQDV+DejaVuSans /CIDSystemInfo << /Registry <41646f6265> /Ordering <4964656e74697479> /Supplement 0 >> /FontDescriptor \csname EF2O11\endcsname\space 0 R /W \csname EF2O13\endcsname\space 0 R /CIDToGIDMap \csname EF2O14\endcsname\space 0 R >>}
\immediate\pdfobj useobjnum \csname EF2O11\endcsname {<< /Type /FontDescriptor /FontName /BMQQDV+DejaVuSans /Flags 32 /FontBBox [ -1021 -463 1794 1233 ] /Ascent 929 /Descent -236 /CapHeight 0 /XHeight 0 /ItalicAngle 0 /StemV 0 /FontFile2 \csname EF2O12\endcsname\space 0 R /MaxWidth 974 >>}
\immediate\pdfobj useobjnum \csname EF2O12\endcsname stream attr{/Length1 10184 /Filter [/ASCIIHexDecode /FlateDecode]}{
789cd57909781455d6e8b975aaba7a4977ba3bdd593b49272189ecb19b1002282dfb36182422e0806948c24e42c21e30809080c2b007458406012120
0686c10e9b20516000c719606678e28f228a4b40c6c12d24b7dfa9ea0e44e77ffffcef7bdffbdef7aaead4ddcebdf7ecf7de2a600060a59708ce7ebd
fbf48508c80060eda836b25fce93c3521f693f91cabd095eec37ece99ee93b7a9c06103ea4f6aadf3c91dbbfeb4fb3ef01e0046af73f39aca36b52ed
da4800c949edc3c74df516a73e73c24de539d4fef2b859339c30313e1b4036539917168f9f3abdd3ac49003a2ac3def1ded26290e906dd312a878d9f
32b770fe9ef9349eee2c802d674281375f3bfadd0a8084686aef3c812a8c5be5f554cea172ab095367ccd937d841f325cca0f2fe2945e3bccbb216ff
0520b11595c74ef5ce29165fd34ca7f2322a3ba779a716a4df7e8ce64adc4df45c2e2e2a9d717ffff16d0049f3a9fd467149417137f91f944dce231e
268022ab30085e0295103aaa750ae8a1037407a177dfc1b9609ae29d310da249a6740502000f722af6e4829269a00df563d426a8a996d21b2ae63d16
4b520887ff4b57e0daff41df530f47095c0a8ed42277ee3fc57c90232e65b08015ec10070e488024aab182599504a87268be509584a4e635d46ea19b
1186522f518d8646d2d25b47920730905e8c6082665dac85f560537531cf5be21d0b4bbc2553a7c192b125de89b0649c775a29bd271494d07b6ec914
5832bea088f2e34b0a26c39209de698433a1602cd54cf64ef3c29229de22a7f2269d2e99ea9d3101964c9bacd4148df74e85252533a711e68cc269e3
e93d4119bf85dea105e70a5d9fc0979003ddc340fe5865783199461eb1792588a4e45b5ecd653621947ffadf6b8891148459ff1e0f46057195f441be
c518bfa81fd5a25cf2302f7401b8bf8af2dd1f747d046c90acea2d0d86c02cd2c341e25d52b5960cad2055958418c246c2a73eba99f44ea7fc23ba95
d49ea6d4eaaa7e81d7f6015e9b0778adff054fa039084f712eca51ab36e861cdaded1f8cd24eedbbf217ad1dd5d6427a77505b9fa756269ad86a851f
c92dbd4283250453fc1b140a56ea64d0206a4541504651383bd92c889cc23ef9e001678a4163e336b6499eca3e0be134f3140407806ac3c7a8c4d4b2
083e4a5b819372149789cf76900959900dbdc10b0530118aa10466c21c989f6c4d31a896e6248adb12070a562fc2ca87f13085b06684b0f48140e033
f2d48f0257037f0ffc2df041e060a026f056e0cdc0bec0dec09ec0aeffb07c1cdb82baffea0ac6c0e6286251670f02aa1600447150276d55492b1e0c
aa743341550951190425c665874059037a11f40e8de90d4104413e4101818d603c01ad4940eb027108240b8018829210241028f15fd17222c19c1024
11cc0f015949b2350890420fc93fc5a0c89c65821fced37d0aaa6133db4525c51ea6538d4f38084b69543f9c66e7d972a13dd5ed82bb7089302be13c
568bc006829b6a01ae4a0245f15c38446364331bcb9635a4ca21e221f129d12fde122f4296582a5e14f3c452e6c6edd27069174136be4736758ea8f6
b3eb500a47f02b74e331b1b76882eb7811abe1739a4591eb7958053ba08c68b1b1222817ca84a7a8e68c741136d15d44ed17d9167689a83bc25e802b
f0328a427fd8c2ae105fe7e107780173857252975b2824facfd05817a9ff26282537bdc2f4c085b65447d4d35c63d5773cb697aea8f75d28a7997361
87c6afb1c929348b22b15dec34abd7ac23cbbd84bfc5e9f8115b2aa688bbc5feb02a2801cc835534f626a58fa690cd25de95bb4c195d982de6b16af8
4acc93c7d2d8ef291cd19c8784a788a342f28c4298ad31134fddd8525c4e942aadf170511e2876a4fe3482bc80b80628c24c9844b932d84f71a73d56
c12a1a49e5579325fd403d378b9f12cfabd84ae107b888bdc9520bc53b246bc5b42892bc2d6b24110506ed9ce61a2175407e8d67e808e7d99149eddb
fdaae834cbce1ac8a931ce75fa03819c11629c34b24672d460aab6464c4df9f47fd5f869fb76837246386b9afaf40e8dda27af37d50d1b4159a544d5
54dfa7b7daa64c5a23a5d23320afc6396e82f345f38b295d5f341774554219c52cda3590ff2b91a3867f279469acb40e6679c2352fc346935106b46a
20426f325f1b5413913ba216f481935d460eaa0957f3e0e932f2a6abde929dfd28981beb339846b0dbac5129694266276b965056b1f885a5beaa0deb
376aac5ff0c76fdde2dd3eff86bdffc97556574ff3eda0f98ad4f9123de1b2329fccc0601523b440f375bff770dc0877a4d56e13e494ced6cc4ec20e
1a7243956fe90b2f68acf5bcfbf54f78d76f3e67efddbac5dea551af52a4fd481269454ff15840c736cb28097611a2f41abb36cc7cadb17b63f7fa47
a1e3654aeb329825c99e6449b12465265970bfd0bee9d2eea64b427b496cba54ad64aac94b196c0958d969e014c5633c61b8055ed0a0c862205a43c4
5dbe10a431cb6dc79488bb97762c7a8aefe32799478985819ec2415a9b11da7b6c10cb0426c422604f612b2c126903881deb5416efd56790fa64b374
5b8191c9cccd52046d75d34fd5d2959fa72a5aaa0c7c26ae22cf314014f115a1f159c117b6c6ba225ae7084f40873d2e9a68b8477c996fdeab37dfc9
60c982c56c75bbac16b390ee028b19529295b7f0d2e6d75ea3e7b5d7ee331dfff1fe7dfe23d34939fc22bf4070912676b34eccede3a5bc8257f252b6
92cd65f3d84a65ddf99442f228e286d4eeb1f7449f28f8a44532f874da448d03219119cc974396c114cba8af6b54045defba577f596590183b148ee1
a2303a2bc92265a6ba15e1733690bfc20afec80636eea8164bfbfbfb375ca9a601c897c481c4b103b678d26362e330da6121b55a2449ec69de66596f
f4d9d6d03a2780592f30bd23ca8c9a7873e3a01a7beea09ac8dc6707d5d8729f254a50b1d7bacbf5274f5aacd9216a5a885bbacd6a1c664b5436d1e6
713d2d0e9786cbf3c479d2acb8ca1899226e8c184baee79801b33433634be36638164345cce2d8c5718b1dbb61779c65348c4e2526323b43d6e32cb3
535a4ab246ce7c9cb95da2dda69169dfb95c38d53898c4e8f6fee68d8ae72ecd997779c497ccd6e7d9187eafbaba7a365bd375eac601b3ab7af6baf0
a8ebcb777fbbb3389e7f43dc6f267d9712f78f40b1a703d823f415bac40a6784cf6ef4e9d6691c3ee7ba94359a15f6d75b473a22006d318e34a7d981
b6449da6b52284c8dc66fe752aff240072a728d5d8ea6fdebb596ffee28e59bd492a19cca3cb4ff0267a9df949228c6609cc6e139392d3d233138891
cec4555b9619ccfc823decb1e675fe21ff72cc9949b967a79e3853bb73ffe10d5b5e7f79d88992d27323bf6061bfc3d4c4bad51f7f979a7afa5157d5
aa251b76cd2e2e2d6b9576c8e9fcf3c1f97b15bba67559dc413625501458e48967463402a2b127a041f6490c17e958981e1c1aad18a6c62003316654
190b5318bbdcbdaede6551f47af372f77a17f1a22a563c47ca3da7a8b48d81b67bfd61242df8b3e1459023595b48636db1331bc29e0c7bd2389c15b2
996c1e2e654652a58e25a1dbe2b6a7a81101355c603c935fb972ae698c94daf8195e6c74efe63e96775a8d099f89f944793c8cf1a488b1b2a5c21c1f
eb936d3ef372a3e08345c615f28e842807d3a303f4664d82b991b5d48bb9451c352bde422a32d7dd511c58f160520faf0b6a274289508accc16e835f
a845d1c6c718d3e46b37a25d036bc52ff36fc79c9e30eae4e437fff8c737876ecb95ae54f3b5e1e1fcced7ffe0df3b9de71fcd38bc79f3e15669443d
9d57c50944bd06c679a2258b80025a44f22c8928470999c840239b1b2fd4a931b8e3bf0428859511c7e908e5518f64327163c9ea32d2631d21300dc6
4ad9527f693cd6408d4626b9120b2c8525edc6934d372e31dee496ae0c6f5824b5552c6011c9b1922c3d9af6ab233c2972620cab80189f7ea7e883e5
91893ef39ac815a9b2c391149100c9c90e635c2a453a125173acfb827f1f32634f645dccbbb127e34e3a4ec6bf9b509728575b8f59bfb2e268363a4b
959935c2c4280c667602b7530986244cd62c48bb2df2d3c19b079dbd14def5e0944ff87d66bec19059f801fef9e0cdecf1655bb72e2348f4b74a6346
661dfe5b16fecd172c520d935bf9b309c2c623dbb71d3dba6dfb11255296063e93d289a718e8ec89356e33edd76fb0b06db05fdc10b5c6b222568e31
4286cd1c4b8cb8426c2841fb873b1987c2e312e304369ad93bb094208daece5976d38342a4945e786b7100f85d6666b0f856e1a4db4bf89b7c1eab60
c32a6e4b63af3c37869fe17fe757f99931cf5deadf9f6d65e3d904b6b51f499aced6520de95c4bbbe30e1e3b6cd02d621bcc5ac1ac0729c6e802874e
b4aaab08e95cd5ba12280fe64530a2c71292586a929ab6666cdd3d96c912f9a7fc3cef49b31c64557c02cfe15ea9e3fdd92c9a7560ed58d42ebe912f
e4cff32a9abd8cf4dc9e766b7a3ac11da3889e6888d299604f94a6d6647156241e71d4a6f82d2ba2c2200aa38d3aad2111b5b63e6944cf85cbf52e57
d00ceb6ede6b2449bdaf283cdb92ade87c5a467c464246628633232923b947ba27de93e049f4383d499ee49cf89c849cc41c674e524e724e7a71fad2
f8ca84cac44a6765d2d2e4d5e9bef4bbe909cd5d9b3b3577c84bc84bcc73e62515271427163b8b9316262c4c5ce85c98144db6c45463210f7c8c6559
523215934acbecd4d99dd4323e460a27aeef5b54f44aaddfdfe3d8b27de79bee33e18d8d7987730b4e8cfae75dc15d5836b6f4eaa1d6839b1655177a
4f6d3f7ed25afe52870ed5e9e98d8a4f4c27598dd2d8283038a08b2726b6164cb65a49bbc2e4671b314a04add0cf6235f48957b5e57229aaba79af4e
891f1987f3121626f81250d559883cda0a0091c454aa8354e276bfbfeb81f9e70310383fff40d39937d6aeddbd7bedda37f0b030e6e7faddf95ed69b
69e9eeede5f6f3b76e9d2708d1554e3ab4411cad4badc0ce7415da65927d0f936ac3d8d1e85aab3f6c8523ce2e68ed5a182458c3fb385412ebd4b557
b1f39beaaa732fe8afad7bc417c7fbe23f8cbf1b2ff5801eac87d0c3de234e6a2777d476d4b5d31741112b128aec4571bad1d3159f485203df4377a0
9028939f90ccc5f2c6836117df9e7466ecb80f27f37bfc0c6bdd7883c97e61e7b24db52661cca813673a75dadfa61debc2f42c82f5e21fd76d3cb47f
8b2a6b3e5c1c453c1968ad1de84989098bd7592b22226bc3b1362dc59f7e4c571b7e3c363e2d06b461fd3456abb30f2db075cd62afbb19143cbfa270
944dd26fb3b08daf8d22fde6a842144799858771fb31165209edce22a332e9f0b673c3fa9d3bd76fd8e9e7bcc1bb6fe8d02d4ffde150f6c1f91f3436
7e30ff60b65f78ececb56b67cf5cbbf60dbfc1bf8a4ff87dbb36c7df7976dc58d6952971baebd871d54acca1c023e693cd4440274f0cea004d4c5369
b2f8c336ea99a085218a4ff5b5292147d92676ec5e4f745bac51d9e4e176d5c3532c419229e356d79b4831df3f7ffe867db5b53d7f3ff3d4fbc28ea6
df0a5bb66e39b1a3a952636bda5290ffad22bf5334f95c9a17691968eb31694e8807e0982031ad087db566da6b2b72ba491b648fc1acf3e8727479ba
629dc44647b89589ec29a7fc748979f77d1adb570a1f0fc74b7e1b360a4c0b7d4573702399e1319a258f9423e549c5d25d49131c8406d0d87eae0fc9
408e275d26c3284f9ac6aa8b0e074dbc6c0fab8c77a23fee588c59064bb856abc9b168c3731cd1dad8be298a401a1b1beb83bbc3eedd6fde53b7148a
603c1119ad725a15b75addca47f73badaeb70ab4d291a454d9d85bcaeba1e0ec41c1b5ee7372f15b276a4b66aeda555b327be5aedada1e3573e7edc5
e5f3677d7f4311e3b6cd8a18852ddb5f7de7f5a64a316ffff8b1f31f68913888a095e3175a3cf69f6bf166b3160fe5d9ff64177ead47fbbfd1234dac
a831e8d933552f88222f88d0d45aa136ccaf9c2face143d16aeff3abf38527a5474c199469cae5726db9ae5c5f6e280b2b37969bcac3cbcde59632ab
2fe66e8ca545aca4a8f38b6348e9fa7d7b37acdbb76fdd5d66e577eefe837fcb2c78fdd6b973b7be3c7be6abcdfc2cafe7b7c98db3c95b6dac0b5178
84fc740751a8c4c4c73d71cd31d16f5ac18ee3b1788a87fdd4c8d857898a2e5790d69bcd61d1a30bc6c54f1244363af58168881681c2774b6765a5b5
b55d0f945d8040e042d901a10b45c63714d8ddb45fa3afcef7f263fc27ba8f79d937cd8131a8371c48d45920c363d318c8ce0c5869f2eb8ec97a8d16
b47dad4ad0503d81a2e1e50b4af83b9413b13542d1981ac25aa82b0a07260e68b7f90da2e3c8d2880e0e3c64b59c3fd174909455384e9268b62c72bc
efa42d6425b43a98246d38ee010b3ba6add41bb43a81a460b69a142be95e478f2bb49607b7cae4ef6f05fd5dd956da22bb31bbb27ed14ae1b6b0d9ac
8c2f1d547afcf895ed9595d216feeeaa26dff2219bb6fe45c85bc51e57bc7d3fd9c908d53e6dd0cde37868a12bf4ec98cd1f46f669330c214bed6b57
0c263bc8f34dd703332db29f54cc3482624cd0301eac52696cbf62a66ffafdbd0ecc3c7596fd891d11763579b76e3db14328bbefdb5738ee2e2abf59
a01bf9c85c310f0cac97a7af64a16396684151561249a433365a04811994affd7a8b4ecf94c4a097b5b2cea2d5ca3df5b2c8442dbc2309a19ca0d584
2907577daeb21da783a3597959d45dba46d9992b5b76251f3c49d6d1314a8d9faea8ec879b626d08a4dbbf2e2ba79097f5a2a88f15edfa34fd63e2a3
faa7c567e411fa42fd2c364f9c25cfd0af1417eb5f11b78a1be5b5fad5fa5d6c8ff896b8537e5defd33bf4284a924e6f8845bb64d7c51a5a639a94aa
6b63701abbb26ccc923ac99d75d9860ce300ec2bf5d10d34788c2361381b298cc467a4e19a91f270ed70dd48438eb1c83887951b5f65ebe5bd6c875c
63fc93f1ba3160eca89c7984141da3c7ad63623e9fccaaaff223fcc855f67b5e7295b566adc5bca6eb4da7989ff717060a917c3a5b453a60228f67fb
35f114a5ad6f333820822488e6cb8acf355ec8506210db5fad89ff29f497eaaad05efa48f81b615b0ec366810e17845c47c897ebd5e3cd554968e0c2
dfd4754cf90a3582e2bf04a01889db5ee3bfcbbf93937f9eba8a5acfa8df8cd456964426eb168af888bb7efedd2acd3ab5775323e6c86b298e014bc9
34475823dc808285959ef61dd9e5e38da3ca9b1abfc68dec8690c1b0e99fbca2e94ee3ed603f562a2f57be3f2b6b4acde9d3f2f21f4a83f4d868c4e5
ea887665c0143a2c59325969f9baf3be0f57fbe4e55f37eee4cf701b2f607dd81d210d5d5fff9a924c330acc6d8db05aecc2caf251bcd1b7eb884f21
2442b0b2b9828937357dc8131a277e0d4c58a9f65bae7cfd56e49881392a29ea9f8696b4b8ed3428a628830a2b4ffb567fe83bbfae5c21e662d347dc
ca8fb257593ddb83cfc27f8b3b459e6a4460a5ece3a63c79f98f434e2bbf399ae999a7f465d497e63a2dcffb61f183bef3d4be6ee50b1ab535e5b18f
a979d369f54f8030f29539b5f191cf8577ff1e12d57f2af0e7e74d379bd31fffda38d83452772df44d3f78513f792a8f0730f11fffda30d434f25ffe
29a48b17d56fea2090c1082f51e44d841a821d9a4364692fc116e1209d74a2a092e053822a82cd04f9045b0876132c92aaa054d31ace89b7a04cb2c1
74b11ea62ba964852342369c52403e0f4794b2f8b9da76040712a44016d5ed178f4137e12013953943f39f2195d5046f61658bb4e5d505c6c14ad803
1fd0f1ad1f7b5fe82a6c147ec06eb813ef8a1dc5f7248b54265dd7c4697a69bc9a77355c7e52de225fd15ab5b9ba485d7b5d2fdd53bab1bae7752fe9
ded09dd6fd59775bd7a037e8a3f58f04e50619980b6d61028451a436c32b8a5445bb1049a9f2cd5f8651ca775e5147c819eabf1625cf20924ac1bc00
5ad63794c716f5628bbc04d16c4828af011b2b845e5004c530174a60228ca7d967a8ff92c6416b4a5d9041b79b726309c3093d096706941294400178
612ab4a3da01308df03b50ee099842b7139e7a3056a95a2aa0b480facca2773e61eaff1bb3767e306b2ecd348be652beec4f236c850e2ff5f9df9bb1
37e52651bfe1309330c611ae571dad40ede1553972d228d3e85d4c386369dc8984e7a4fe4534bb576dfbf538c3d4514a89223aefc064aa55662d25dc
22752417cded86cc5ff46aee13fadf1d785efd9ffaaf57866a17ca3f6fe5ffb609c2c1aaaedc7688a47d5e34c4d25e2a0dd269ec7ed01f06c220180c
4f420e0c25fe87c1d3f00c8c8067e139f8565955e9ac21310d93e954a8a35394818531a33c73da44973be38950da33986637a7bd4269ef50da2794f6
0da64f64845257287587d2cc50da19c02f2cf404ee736cb0e1cfa9f8930b7facc21f4cf83dc77b1cff998adf99f01f55783715bf7df109e95b8e77aa
f07615d637e0370df835c7afbae2973df116c72f5cf8f9cd61d2e7557893106f0ec3cf6e74943e6bc01b1df1538e9f70bceec2ffb0e1c755788de347
56fc1f0bf0ea51fc3bc7bf12fa5f17e095cbfda42b0bf0723fbcf49738e912c7bfc4e19f397ec8f14f1c3fe078b10a2f9c4f902e703c9f807f74e139
8eef2fb548ef3bf0bd48ace3789ae3bb1c4f713cc9f11d8e27381ee7788ce3518e472c585b912ad572f4bf7d54f2737cfbf068e9eda3f8f642f1f01f
52a5c3a33d013cec11ff908a8738febe0a0f723cc0b186e35b1cf7e7e39b26dcb73755da978f7babadd2de54acb6e21e227a4f03eee6f806c75d1c77
5a7107c7d7b79ba4d75db8dd84dbf2d14728be2adcca71cb6b61b4d7c3d7c270f3ab31d2e67c7c7593597a35063799f1153dbecc71639551dac8b1ca
881ba8d3862a5cbfce24ad7f04d799706d03ae597d545ac371f5aad1d2eaa3b87aa1b8ea77a9d2aad1b8ca23fe2e1557725cf152076905c7973ae08b
c4e68b4fe0f2650669b90d97d116992a2af3b1822455918a4b2db884e30b8b2dd20b1c175b7011c7851ccb397a02cf2f58203dcf71c1029c9f8f65b9
76a92c15e7719ccb718e096787e12c3dcee438a3014b1bb0a401a7376031c7228ed3384e49c2c91c27597a4a9386e1448e1316e0782a14722ce098cf
711cc7b11cbd5d31af01c784e1688ecf721cc571e408bd34b20147e8f199c818e919170ee7f834cdfc744fccb5e33066968645e353361c3a30421aca
31c7804f721cf21bb33484e36fcc3898e3206a19c471e000b334300207c41ba50166ec6fc47e1cfb56619f2aeccdb1176d8d7a3560cfa3f8c420f470
ecc1f1f1c7acd2e3367cac7bb8f49815bb77334add3d8170ec66c4ae1cb33976c9b2495d1a30abb359cab261e74c83d4d98c9906ec94806e23ba1e35
482e8e8f1a30a3a341ca306247037668af933a98b1bd0edbb9b06d9b54a96d3eb6696d95daa4626b2b3e929e2a3df204a6a7625aaa414a0bc75403b6
e298c231391c9388cf242b3af331b101138885847c8c37a28324e8e018d780b13d31860a311ca3f3318a2415c531923a45c6a09da38d6304472b2158
39edcbdb4b969e685e80e1f968e2680c8b948c1cc3083b2c120d1cf566d471d4129a96a36c434d3e8ad4289205d8916a91d33eca2c09ed9199113832
3fcb5fba92b5fdffe182ffd704fc9757fcff04ad082f53
>}
\immediate\pdfobj useobjnum \csname EF2O13\endcsname {[ 32 [ 318 ] 40 [ 390 390 ] 43 [ 838 318 ] 47 [ 337 636 636 636 636 636 636 ] 55 [ 636 ] 57 [ 636 ] 67 [ 698 770 ] 97 [ 613 ] 100 [ 635 615 ] 103 [ 635 634 278 ] 108 [ 278 974 634 612 635 ] 114 [ 411 ] 116 [ 392 634 ] 120 [ 592 ] 124 [ 337 ] 8722 [ 838 ] 8970 [ 390 390 ] ]}
\immediate\pdfobj useobjnum \csname EF2O14\endcsname stream attr{ /Filter [/ASCIIHexDecode /FlateDecode]}{
789cedced96a02301005d00b2e75abfb5e5bf7ffff4543f049faa4a02fe7c00cb9814c267951e32137d32abd9daf9a3ae9a6977e06f9ceb0e451a9f1
3f5326993ebdc1acf67916a52fb3cabae64db6f9c92ebfe5fc576a9f43bd3fde5f9d9efe0f00000000000000000000000000000000deebfce90578ab
4bae37f46402bd
>}
\immediate\pdfobj useobjnum \csname EF2O15\endcsname stream attr{ /Filter [/ASCIIHexDecode /FlateDecode]}{
789c5d53bb6e843010ecf90a974971e219b8931052746928f250482a7405d8cb09291864b882bf8fed315c144b309a995def62d6feb97c2965bf30ff
438dbca28575bd148ae6f1a638b196aebdf4c288899e2f8ed9371f9ac9f37572b5ce0b0da5ec462fcf99ffa9cd79512b7b7816634b8f1e63cc7f5782
542fafece1fb5c41aa6ed3f44303c985055e5130419ddeeeb599de9a81986f930fa5d07ebfac079d768ff85a276291e5215ae2a3a0796a38a9465ec9
cb03bd0a96777a151e49f1cf0f33a4b5dd1e1f9978400dbc58f908f9e4e49dc26d41b973770ab7b3347e72aea328148780081003922dc36e1067a019
36d008f904d9f514bb6612ec912490efd4ba29ea19a881905131753dde295c3490ba4a1b3d6e2a82b8a5993b394753012040b7c5d8940c5f6da00642
46f5cc3573a77051d6400d84eccabb7f90d9c38fa2d09400d44023c74163e43868adfc875ecc986cf36026c68cf73e8efca6949e447b07ec089ae1eb
25edd7641a2793659e5f81f4dc93
>}
\immediate\pdfobj useobjnum \csname EF2O16\endcsname {<<  >>}
\immediate\pdfobj useobjnum \csname EF2O17\endcsname {<< /A1 << /Type /ExtGState /CA 0 /ca 1 >> /A2 << /Type /ExtGState /CA 0.18 /ca 1 >> /A3 << /Type /ExtGState /CA 1 /ca 1 >> /A4 << /Type /ExtGState /CA 0.8 /ca 0.8 >> >>}
\immediate\pdfobj useobjnum \csname EF2O18\endcsname {<<  >>}
\immediate\pdfobj useobjnum \csname EF2O19\endcsname {<<  >>}
\immediate\pdfobj useobjnum \csname EF2O20\endcsname stream attr{/Type /XObject /Subtype /Form /FormType 1 /BBox [0 0 306 201.6] /Resources << /Font \csname EF2O1\endcsname\space 0 R /XObject \csname EF2O16\endcsname\space 0 R /ExtGState \csname EF2O17\endcsname\space 0 R /Pattern \csname EF2O18\endcsname\space 0 R /Shading \csname EF2O19\endcsname\space 0 R /ProcSet [ /PDF /Text /ImageB /ImageC /ImageI ] >> /Filter [/ASCIIHexDecode /FlateDecode]}{
78dadd9acf931db511c7f73cb9e52f980a175309b3ea6efd681d3104aab819bb2a07cc21b5368ba9b58981040ef9e3f9b666de937aece7ddc53fb612
e3c56f7b66a4d657dd1fb534efe54c73c07f347f12ecefc5f3e9fcf3a7ff7976f1f4eb2fefcf9f3d1c7fbbf879a2f907fc5ce2811ff0f32b1efb123f
979335f17c9290f1ef55fb97032d199fc3f1d3f7d3f4dd74fe296eff19777d394d224b4981e3cc752939694ead0d5e6aad250fd6abc14a5597d43e5d
f5064663ebe6e5fc9ac639d785634d3c53d64545b4a6f9a7a7f33fe617f3f9a76c7ef1fc15c683b12d3aff3a8525d7403587a21103f552c4b070cdd1
39fa7cb08e3e4d0fa707f34b7422ebe02f7b27268559ae6d8ff3c20739a6fb98855fa797934dde27367ba20b450da2a485662a4b4c22b817f379ffd1
74fe05cd19bd3dfaaecdd4a327d337f3bdb3f0f1fcedfce8abe9ef8fa607cdbff7ae592d0b73e0eca7bc5b6fabd975edbd59b38a98621b111704eccd
34a33b508d52c4e4c690c40d7330df56b7eb5b7cb37214eba2b150299ae24dc38def403a265a42c61572031dccb795eefa16df2c1d87bc94126ace51
ca4da34eee42ba5c805cca3e44baf5d6c25dd3de35b26581b3b1248c996e1a71d1cb3606b0c4454a9ad3128289b8b6824741041034590bd6665992b5
79efecf38fe7473f4c09539728e6a831aedddc3b7bd6aef012544a0c2cfdca3fdb950895123cc7e0f2e1cae576a540fab876b85df9f1f08c6a660931
d7c39517876792442a35ab5edfcfd549dfe6ed8a680c221c840e579e9cf4ede9762553e45494f878e55f279ff965eba746648728ebb19fef4f8ee794
6fe75f709f90e7eda60f970ec70ec048a921819163b1d2ad37cb8613cd2dbc33be9ebe876769212a44a992a621864b941ec25b5a60e9da04e71a534a
e0cf712ac27645325600209d506d2cd9e6a43763933bb67307caabc277921a9df2dd7a4be57d739bf2aeb5d3ca2bc081a04d5923d0d595cf6b5abc2f
e9e5aea427e6852ab3f8127d30df52fc5d839bfabebdd3f213e982982fb594022d3f54e4a73b933fe525540970c4c9dfcdb795df377890dfb5f706f9
132f0aa7454be00fa67eb933f5b54270a954bcfadd7c5bf57d8307f55d7b6f505f51417288d089ab7e38f8d4aeffcba99d1858437581785688c1ab84
d283c7926c5bad8792ecb3b31fcffe7df6e2ec97b3199f7ec2ff9fe1b727565b505ef2e14fc1adbfbd75c92be38c539b6f6204394a2bab5096d8bb43
f006d45c200fd7f9eb7d284c312d5555539a0595670c24b65e63fdcd901c9fbbf56a4a05a52524656745fd1a32ea6528ca4b8a5c48cdaa18128a3367
2c11c52d0aa93aa3db0457a5c2aa703e0ab17a6bc2a85114a385b264d4c4c13aab6477c05f6f3dee8bb17d63b57986952ca00956d99975c90172f39c
105a12886d4d24b2404231493b3362576b51ab494a8ee61a6211f240ef3a1a85a00e3c2a3320521075da7a835421676c04bc19b2945419939ac4ea4e
8470332b14a25ca237f7edeb5840c1016884f1b13767935115bb8984d51c757fb1c4433d0595b0c0d49d39432541ec98b908630b676665530965f7ce
6c65840a63387911a68aa483b90a323c20d276e60aa1248041302b86a5361c0ec9f62856603873df69b6488394b199b395df365da3150b6a4c8cb833
b32650439ad98e8b629bafd18c0c43304ac2686c00b92d042c15425584b4b3220ad9844a662d549aa820196442cbe2ad48ed00aac0b98a9235a5d89c
cb8ce62a27da990f5b4258d11289850217cc343604d0d79b2b468b9548cdac89edec11662bd0d02132db992bf23c2386d39c914e98b6cc1dde6fcd97
994098583508c2261954428aeb02a3ef192ace7a84ca008a3f06959eb81d2963e876a4b864e948c968001bcf441e29cedc916201c1ba56df1d2998aa
840e2b7ba8b89b3b56d034f25443f458c9d9f6a0c095c70ae6172e15f2542988e22c2dc006a814733ad09ab31d2a05c510340a3ba8205e298458c943
057747a4d08aa00e15d776870aee0646b2eca0525a8aafc13b4045214260d4cf1e2a0a4f206f610f15c5ea1c0314f450412f5453688b40870a3a21cc
28250f159805d553510f1598238aa4143d54ecee2452aaa70acc98dfdc941ab0526d3a50d224cf15334760a37ab050001f62a4409e2c14700ba44b1e
2d168b580962f168c118978a845fe1d9d942844168a81a3d5c6c2382e9138a9e2e16bc369b9cdf162f87d2e59b198fcff626e55b54534fc0162abcd1
c56a3f46f4c33f84bdbb7217a5cc507274ea749274e68c4bfdc09c6145ead419d3ba5307f35cd155da1532986691505ba08dd4c1ec4b2d2b0306ea38
f3913aa375a0ce681ea8a3607db2be1d750c024991199e3ac87b82da75471de477d8765f23769cb9630766e1d0e27dc4ce78f7801d6d0c90bcc3ced8
c8801d98b1669178eca865bb48d51d76b020a4523673c70e06863059f9d2b1531133150c208f1d8340d6b4c78e8d8634c80e3b98d25c112e3bec9819
0fd30e3b662ea905c9881d98edf04c3c760c24a5d8c2e2b9b3dad1a2eec183cc05255876e0c1d814b569140f1e9004bed6a07bf22082aaae4c77e401
4994cc034f1e863ba15a487af23002074d3624bf23f29071271dc1a3adbcc12e4a8c35503312422ab55fdade0af1fccec0e310d3c13330e6089e8e98
013b43553e8067a8203b78b2ad9335e90e3ca8bdb18cb7cce9e04185205883843d788a4d2e122479f060bd4e62fb140f1eacd7a186f6c6a373472386
16522e9e3bb62c63e243f2dc819f189f6af1dca908e752360a74ee20e4952dc53d786c0048ecbcda87d78ac85af36f2d3f3a7a2840bf883a237af6b4
d5b3927d74f021e473928895dfd387181572c100b3c78fc5762678cc9e3fc465417697b64d18004462dbf748ab0a9d40f69687b1c96f7e0e0832a12c
0557220cef022374c55aaed541086a20c3a5793340a80988946eb33a5088108788a2b29539470c617fb32414a061c721133644d37f0722b004b9bac6
c708224424827823c20022ab6511e9adfc1b4184a064a455f21cc23480899b970387d4be718028293b0e01feb823afcd0c1c421929903bef2ba06aef
d25086f02ba763b663ea2763270edddc77425ef73d0fb4798b2f90e0eef34fe37a3667fbb5cbb65f332fd62fb04c8296a9f561a99d6a3b19450505d5
bdd5220c150ad48644dd6cae8133dbad583a43156744a292dd78310dd66ad17168b55b23eaa20ceacd6357c869dabe0273f4ea68bb180670345e4d08
f41c107c325861a3b4eb67b01d3cba98baf1e8fcd5683c0e73e8e5a8c76bf4bcb0efe6dc9f76fbe63f7a268704d2552e0042b7098bd952c019b19860
37f28a1101619110da9799d6f3d9e1b035e941611437b134effb39abbee6947b7d1f09ce616b47ed4de7fef52c21c5410ea9c7f3d7dfda15c358cc0c
1e61eddfae3cbeb7bd0345b915c1163b9add2e9d3ecefddbd65c04102af2bd1edef67ef4f8c589e6ec95eab54370e33c3fbc5266b00f80a1e30b6a3e
e5d8477f3a3998c71ff743e535206e7180d22300f56ac0cad75e26f510e8d6310646eb4d83a06019e6f6cc1005edac7b2fcf7fb7a1a20872f238a11f
1e945244fbfaeafb13b2ef6ff5f7273c1cbc3f3fbc2947a18547500aed6e7ea5ffa6e7bbda31b63d2eed33cd4a13a17d5691f042f4da5cbb9cbf3938
f2aad4b622a7f47f9370efd4b90f153a632ebea39a7f089d62dfd87925748e56173a83f556a183340dffe3a1e326fb8bed26ac8176c864375d3fd956
8da358ce3974c88faff1febaf9a4d8ea14d476b6d07a6f13b52f62a1dc3c1d2d7ff9f3a6612d01dea1ba90eb073d46d883697af03b38c003cd
>}
\expandafter\gdef\csname EFWidth2\endcsname{306}
\expandafter\gdef\csname EFHeight2\endcsname{201.6}
\pdfobj reserveobjnum
\expandafter\xdef\csname EF3O1\endcsname{\the\pdflastobj}
\pdfobj reserveobjnum
\expandafter\xdef\csname EF3O2\endcsname{\the\pdflastobj}
\pdfobj reserveobjnum
\expandafter\xdef\csname EF3O3\endcsname{\the\pdflastobj}
\pdfobj reserveobjnum
\expandafter\xdef\csname EF3O4\endcsname{\the\pdflastobj}
\pdfobj reserveobjnum
\expandafter\xdef\csname EF3O5\endcsname{\the\pdflastobj}
\pdfobj reserveobjnum
\expandafter\xdef\csname EF3O6\endcsname{\the\pdflastobj}
\pdfobj reserveobjnum
\expandafter\xdef\csname EF3O7\endcsname{\the\pdflastobj}
\pdfobj reserveobjnum
\expandafter\xdef\csname EF3O8\endcsname{\the\pdflastobj}
\pdfobj reserveobjnum
\expandafter\xdef\csname EF3O9\endcsname{\the\pdflastobj}
\pdfobj reserveobjnum
\expandafter\xdef\csname EF3O10\endcsname{\the\pdflastobj}
\pdfobj reserveobjnum
\expandafter\xdef\csname EF3O11\endcsname{\the\pdflastobj}
\pdfobj reserveobjnum
\expandafter\xdef\csname EF3O12\endcsname{\the\pdflastobj}
\pdfobj reserveobjnum
\expandafter\xdef\csname EF3O13\endcsname{\the\pdflastobj}
\pdfobj reserveobjnum
\expandafter\xdef\csname EF3O14\endcsname{\the\pdflastobj}
\pdfobj reserveobjnum
\expandafter\xdef\csname EF3O15\endcsname{\the\pdflastobj}
\pdfobj reserveobjnum
\expandafter\xdef\csname EF3O16\endcsname{\the\pdflastobj}
\pdfobj reserveobjnum
\expandafter\xdef\csname EF3O17\endcsname{\the\pdflastobj}
\pdfobj reserveobjnum
\expandafter\xdef\csname EF3O18\endcsname{\the\pdflastobj}
\pdfobj reserveobjnum
\expandafter\xdef\csname EF3O19\endcsname{\the\pdflastobj}
\pdfobj reserveobjnum
\expandafter\xdef\csname EF3O20\endcsname{\the\pdflastobj}
\immediate\pdfobj useobjnum \csname EF3O1\endcsname {<< /F1 \csname EF3O2\endcsname\space 0 R /F2 \csname EF3O9\endcsname\space 0 R >>}
\immediate\pdfobj useobjnum \csname EF3O2\endcsname {<< /Type /Font /Subtype /Type0 /BaseFont /GCWXDV+DejaVuSans-Oblique /Encoding /Identity-H /DescendantFonts [ \csname EF3O3\endcsname\space 0 R ] /ToUnicode \csname EF3O8\endcsname\space 0 R >>}
\immediate\pdfobj useobjnum \csname EF3O3\endcsname {<< /Type /Font /Subtype /CIDFontType2 /BaseFont /GCWXDV+DejaVuSans-Oblique /CIDSystemInfo << /Registry <41646f6265> /Ordering <4964656e74697479> /Supplement 0 >> /FontDescriptor \csname EF3O4\endcsname\space 0 R /W \csname EF3O6\endcsname\space 0 R /CIDToGIDMap \csname EF3O7\endcsname\space 0 R >>}
\immediate\pdfobj useobjnum \csname EF3O4\endcsname {<< /Type /FontDescriptor /FontName /GCWXDV+DejaVuSans-Oblique /Flags 96 /FontBBox [ -1016 -351 1660 1068 ] /Ascent 929 /Descent -236 /CapHeight 0 /XHeight 0 /ItalicAngle 0 /StemV 0 /FontFile2 \csname EF3O5\endcsname\space 0 R /MaxWidth 686 >>}
\immediate\pdfobj useobjnum \csname EF3O5\endcsname stream attr{/Length1 3500 /Filter [/ASCIIHexDecode /FlateDecode]}{
789cb5557b7054d519ffcefdddb3bbd924379bb8840d097143b22886002e124b64ec26e405a844133089447293ddbc4836cb26420244b0842aa2e565
b742a3604b2da2b551fba0c5e9b45aab1dcb8c16331dffe828d3769ccec48e767cd4cc70b6dfbd09a04c9da97f78ce9e73bfdff7feced9730e0922ca
e449274f4d6555355d450e2291c7dcec9abab5f545cf15bfc1f806c64335f5eb2aae39f1cd97193fc338b1b67e71b0273b7a8848ab64bcbebdcf8c51
ae38c0788c7163fb9641ff63c7c6e733fe886d7a3b629d7deff57cf87b0ed6caf27d9de6408c9cdc499f649cd6d93bdc31ef93b5ab19ff87f55fea8a
9861e7fa97b2881c2fb2bcb48b19694fc95718b33f2aeaea1b1cf27e464b889c3ec6bedefe7693ce539c31f3c8e8338762d8e448611c62ec8f9a7d91
45a52b2d39c7d73cb1fe81c1e409da45941263f9a2583c123b74509bc7f838e79046d6daa4d174d31881a4cdb3869b16d10a2ebcfa9606327acdc128
f9780db925934497285b7b53241e25d78c9d6099667f5dfc9db6f1d3b54c6b74b109a6753b9635846d33fdb5e21ea247c86bc7dd66c6cd361a35e37d
511a6d8b9bdd34da6e460778ee8ac4791e8ef7d26867a49fe9ce7864138d769951d6e98ab4316793193569b4d7ecf75b33e73fda670e76d1687493c5
e9ef34fb68347e4f9435073ba29d3c7759fe3f57e3e5267483775d12c9a5f208958a7ceb9bacc25fa843e3ddd3521d00f4544d9f59834badaea32acc
9c72aa707895571c75f689bf7d41073323cfae9c78c5614bf3380fbfbdceba8d2bec8cca93c9e4cb33d6d3fb96776945596ae94def8d28148be934bd
cee377748a5ed55ea5e37423f73be83762af56c29227e97e7a5b6af4331a13cb85572c67e93987d7312cf7c8932cdfc0b6b5ece56d3acc3e2d4fa769
a7b65daba30e7a559ea5a3dcfb6dfe07f46bb19b26e8517a5daba54f68371a683ff7a334c09b3c21dca4b4623a614522ebdcb4f1988b123961f70f68
276da7063ae138edf08a7376d64f8a97c524fd93733e870dd8cc16637452ecd10bf5937a2ded9fce17adb45fdb2d8eeaad76dfcefbb095c6f45671ca
e1a557ac5c9953c79976d08b3cb6d2597193d883bd9cd9762b033941679dabf5c5d3593947b08ceb211ecfd2f3548204dbdbb5383a684cebe0589f70
266751490bd8db0091976f089af50b87d4a1095ae8f78c6b8155e1f1d0ed8dfed79a0a4a165e01fd1ea77f9ceac6d387fda793c9ba463d57368dcbbc
71045ce37aa0f0fc9709cf972c5c53d7e81f7fa1aa72c66b556b25f3ea1b99b410b3995f5559327d8af9d4f17f053c1fb6ee1daed2c9ffaca7432197
4c71094db8a526741e909a2652245f4f2ee9e6c3eb7448724a0b97b9486f4ea16a87e64e7159e50957aae64cf5bcf5a7ccd9cbafa7c52bfefed664d6
f2259c81d323df77f2cfe37adf2667e034a769def34e97102da185197a86cc48c970bb5c7b5d0e17b95c2ef71c9aedb88e6a6995a8d56b1c77ca2eea
945be503ae8c166a114b45210a5098220ab47fff43cc3a77f385d6731338a95ab4223931552c76aafba47121c73a3937f299e4ff1819349bcf4b6f68
458e24c3c727d5f048838c0c8b2a736634e7f89aa9dad39c33909ee39b9d9921bd1e4338d3e7f82579f3dc73f3f8ca32720b3c1f71656f4d729559cb
b3ec5227273f9af4fceb72ad978b6c9a270a82d9b3bc0e67be98e5a50542a02058baec86f985a9628610cd9afee7375e38b863877a41bd73e143d1d0
23aa7ef987dd07cf368a92277ebeeddb7262ecb1036f66e73d3e7a5e4d697fbcf0d79347b6f46a1deae68776984d174fb5d674e4bbabdecbdf98b1e2
63bada651ff637ef35deb9f8fdf4dd0b4bd3ab5dfbec9dbf7cbf92b34fcd254adff3e9bb9ffd2abd9ac2fcee7dbe3974eb5c58ee4ff1d8c7ab389c3c
2c47acf5bcd4e2fcd64ce7e0c06d544c5d7ce768e4a190f5a2c1d0d239a275e339a9d9bab5747e87c412fb3eb36841d98ca6698d0c513d4383e68b86
195aff1c2dc927b6cdd00e2a128768259fc5180d731eddd4c9d107ed77a49dcf9f9f82fc2a2ea1a54cb5b1869fefbd6e960ff08853844ceaa385cc5d
4551d65fc45439f572f7f3fd77d1d7808d22fc8db0cd169ec3ace9fe3fa2965e8adac091b670ac1eb689b2b69587c9365f2d6225533d6cb79eee618d
76d6356d6f11dbc2b42bf2b39728cf31d66963bfddace767fb7e8e6edab22bfdd4db5e0668ed8cfe66e646be44c77f85d67a3bc301c6fd76d420e7b9
94967dc1faa26dc915b6fcca273fe6712fff2bfe5773ccbcffa0220a109dd6768592cf298c07f0d3209e4de027069e1930e433413cad702a80a70c9c
0ce0c7093c39851f4de184c20fcbf003852782387eac5e1e4fe0d8ade5f2583d1e0fe2312fc612f8be1b47158e64e1d1117cef0c120a8fb0c6232338
ac70e8608d3c3482833538b03f571e50d89f8bef283cacf090c23e8507f7e6cb0715f6e6e38120ee5718cdc66e856f29dca7b04b61a7c2bd0a236b02
72248c1d0adb33b16df88cdca6303cd42287cf6078973eb43520875a3014d2b706b045e19e0406c3183010df1c90f13036c7b2e4e6006259e8e7b4fa
a7100d2515fa147a153665a3a7bb4cf684d1cd31bacbd0755baaecf2a1b3c3909d41741888841166b37002ed0a6d669a6c5330d3d0ba3147b686b1f1
6e8fdc9883bb3d687163c35de97283c25de968668be6049a1a0dd9742d1a0ddc3985f5ebcec8f50aeb1a5ae4ba3358b74b6fa80fc886163484f4fa00
ee50b8bd6e91bc5da16e11d672126bbdb82d15b77256b796e316fedca2b06675a65c13c0ea4cac52a8adc994b50a3599a856a852a85458593122572a
548ca05c2134856f4ee1e629ac28ad902b146e7a0d654c95d563b90ac5f08d11dcc8b0542f91a51558a67083c2d23204a7b0240d8b154a14162a14b3
b8f87a5ce7c10278e482425c9b8f6be61bf29a30e61b08f0e31608a228cd278b465028cb64a1c23c46f3cef07be29105b9f05f9d2afd19e057f6b7a1
a3fad5a9c84f417e489feb411eabe725909bc09c9c809c13468e2f4be604e0cbc2ecec809c5d8eec0066297815ae9a4256668ecc52c864af9939f028
642818ecc148209d03a68f202d354da6f9909a06b7828b45ae041cacee50905c852c83ce482f013cb05f651f841b22a4531ec46911def3b028fe7a1b
7dcdfebf6a9bfb5f9be9dd93
>}
\immediate\pdfobj useobjnum \csname EF3O6\endcsname {[ 65 [ 684 686 ] ]}
\immediate\pdfobj useobjnum \csname EF3O7\endcsname stream attr{ /Filter [/ASCIIHexDecode /FlateDecode]}{
789c63601870c0c2c00a000097000a
>}
\immediate\pdfobj useobjnum \csname EF3O8\endcsname stream attr{ /Filter [/ASCIIHexDecode /FlateDecode]}{
789c5d50bb6ec42010ecf98a2d2fc589b395d2428a2e8d8b3c1427559402c362219d01ad71e1bfcfc25d1c252bc16a9899d5b0f2dc3ff6c16790af14
cd80199c0f9670892b198411271f44d382f526df50bdcdac93906c1eb625e3dc071745d7817c6372c9b4c1e1c1c611ef0400c817b2483e4c70f8380f
d7a7614de98233860c27a11458743cee49a7673d23c86a3ef696799fb723db7e15ef5b42682b6eae914cb4b8246d9074985074272e059de3520283fd
c7df5ca3dbe5f70dcbb9b50a3effc0af62ff119649e5db7b4cb31271c2ba9b1aad84f201f7f5a5988aab9c6f403e74b4
>}
\immediate\pdfobj useobjnum \csname EF3O9\endcsname {<< /Type /Font /Subtype /Type0 /BaseFont /BMQQDV+DejaVuSans /Encoding /Identity-H /DescendantFonts [ \csname EF3O10\endcsname\space 0 R ] /ToUnicode \csname EF3O15\endcsname\space 0 R >>}
\immediate\pdfobj useobjnum \csname EF3O10\endcsname {<< /Type /Font /Subtype /CIDFontType2 /BaseFont /BMQQDV+DejaVuSans /CIDSystemInfo << /Registry <41646f6265> /Ordering <4964656e74697479> /Supplement 0 >> /FontDescriptor \csname EF3O11\endcsname\space 0 R /W \csname EF3O13\endcsname\space 0 R /CIDToGIDMap \csname EF3O14\endcsname\space 0 R >>}
\immediate\pdfobj useobjnum \csname EF3O11\endcsname {<< /Type /FontDescriptor /FontName /BMQQDV+DejaVuSans /Flags 32 /FontBBox [ -1021 -463 1794 1233 ] /Ascent 929 /Descent -236 /CapHeight 0 /XHeight 0 /ItalicAngle 0 /StemV 0 /FontFile2 \csname EF3O12\endcsname\space 0 R /MaxWidth 974 >>}
\immediate\pdfobj useobjnum \csname EF3O12\endcsname stream attr{/Length1 11756 /Filter [/ASCIIHexDecode /FlateDecode]}{
789cd57a797c1455d6e8b975aaaaf74e77a73bfbd249a71342586242088120cd1276310a22e044134802c8924040060313d684c812100882802d9b80
8c13918124208312410ccc380338e3133f54a2a046649c281a939bef547587c59979f3bdf7cffbbdaa3e75b773ef3ddb3de7deaa060600367a88e01c
36246b2804426f00d68d6a8386653f3cd6bdb3fb03541e42b073d8d8c70625ec19701a4068a2f6ea87068e1bfe50f7ed2f03e0226a6f79786ccf94e9
3f15ad02908aa97dfc945979c5586fb94ae577a97dc59467e639617a6406806634957961f1d459737a3df334808ecaf0ead4bc9262d0d00dba1fa96c
9c3a736161c9c5ac0200bd0810da36ad202f5f9bf37639401c3543ef6954617a49b383ca7da81c376dd6bc5fff35d8f509657f45e5f3338ba6e4d5ff
e5c8a700ee2e549e3d2befd7c5e26e790e955fa0b27376deac82846ffa9fa0722dd173a9b8a864de1f337f0c00485843ed3f14cf2d28eea7f93b65bb
ac201ea681222b23f82e814a08696a9d027ae80199200c193a7a1c9867e6cd9b0d212453ba3a3a00eee454ec1905736783d6df8f519ba0a65a4a6f29
984c60f124050bfcc7abe3ca7fc6f99f8ed271ec4ef9d8bfc7fdbf1999d28b1d5714f0d575e688670dd99b03c220129c10032eaa7140802a17df25a8
4f549f2248f494a9d54a58ccdf8aaa6c25aad7900475a068420b06554f2630d3689dfa791e36815dd5cfb37973f326c38abcb9b366c38ac973f3a6c3
8a2979b34be839ad602e3d17ce9d092ba61614517eeadc8219b0625ade6cc2995630996a66e4cdce831533f38a9cca93f4bc6256debc69b062f60ca5
a6686ade2c583177fe6cc29c57387b2a3da729e3df630b772f1f5d9fc00dc8864c23683e56595e46ace6124b977d484afedeabb3cca6f9f38ffd673d
30338dfbcc7fc683493e5c25bd93bf678cfbea27dd539e7b372fd05afcb98af29977ba76013be957247ee3610c3c0387e1a66af78a4613a895d16df3
4908148d26297d74f3e9d9955abbe8d6527b22e540570d700f5e8f3b78ddefe075fb273c091e50f1141a5354bc17082f59cd3d775f5d4fb5eff2fbfa
a6fd8bbebdfe45dfd47fea2b104fd45752acd4a6b44a8acf60a299ad572c594a95b692e4a27c29fe150a05c2170c32a25614046594bbeb80aeecc2ac
7cf0901c17cb766e67db34b3d8b5fb70d00f11a05815c0112a31b52c82e2ef62551d08f4ec4a52ea037da13f0c8111f03014c054980ec5300f4a61b1
6aa34ea2b71bc9e82e4e9e8a3303e6c2b30a4ec7355ac71f75fcade3af1d973bfed4d1d871a6e3f71d473adee838dcf17ac7a18e57efa7fddf5c3e5f
fabebf6451e7f501aa56a3e8df27cf6e7ea0d840540128fe5ef1a07dfd6025e8ef0745ea43fc104830c20f768287fde020c823a0d802410453fd104e
30dd0f8a1c671028b12c9240b18179042e8267094a09e20916fba10b4b835a68a4fb2d3808dbd93e2a1552fd1caaf10a876125cca79ad3ac91550add
a96e1f79fb8b8459018d789016c748b2a146c2ff90eca4858d23fd6d6719ccce3234322d9531e211f151b156bc2e5e8074b144bc20e68a252c157749
e3a57d0419f80ed9cf3988865a76154aa01ebfc4543c210e11cd70152fe041f89c665164d90855b087a8af053b2b8232a15478946ace4a17601bdd45
d47e81ed641789ba7ab61c2ec30b280ac36127bb4c7c35c20fb01cc70965a4a254a190e83f4b635da0fedba08496f365a6072e24511d514f734d569f
91d85dbaacdeb7a08c661e077be45ad9ae71d12c8ac4f6b1d3ac59de085eb888bfc239f8115b29bac4fde270a8f2490073a18ac6dea6f4910bd942e2
5db94b95d18505622e3b085f8ab99ac934f63b0a478af50b8f124785708260816c219efab19558499496aafabca01929f6a4fe34826631710d508469
f034e54ae135f250ddb11aaa6824955f395dfa817a6e173f259eabd85ae107b88043c827158a3749d68a6991c739a691251105f2414e4b8de01e915f
e3796482f3dd8931ddbbfda2e8b4689c35905d635ae8acede8c89e20864b136ba4881a746b6b44b7ebd37fd7f869f76ea3b227386bdab386f847cdca
1d427563275056295135d5670d51db94496b2437fd46e4d638a74c733e6779ced5f7394b41dfeeca7a13941d07ad76c54bd4f0ef8452d9461133dd13
20bf005bcc260da04d8640bdd97265544de0b80975a0ef38d567e2a89a00350f9e3e139b529aad19190f80a5ad3999c982c36e0b76c50b69bd6ce942
69f9b2e52bbdd59b376d916d5ff007af5fe7fd3eff9a9df9e42a6b68a6f9f6d07c45ea7cd19e008d329f8681c126066a81e6cb6cb93b6e606a90cd61
1734aedeb6b45ec21e1a7273b577e5f2e5b2ad99675efd84f7fdfa73f6cef5ebec6d1ab5bf30124f938d59618ca79bc56400d1a8d38882a8c5ad92d5
b855cfcc1b6c5aa31e75b215190a600f100d3abb49b6592e6536a55c2108ceb0da8295995b2e6536a7343453059593998969dc5697554ae8ced2256b
aad58da7f95636b52faf99cb6bfab2a97c6b5f963d97658b57df3e3db99157b0858d934fbf3da5912de4158d44d9ce0e1b3b0d9cfc7ea8c7883b61b9
8c220b85109958bc74dec7697aaa035d81b72eee59fa283fc44f310f69a8a2e39a58451c1920185c9e40d96b03af71836d4d882e22200a231ce121d4
b3a59906686a69b6dc4c66b182d5624b4db1592d42420a582de08a559ec2eaed3b76d06fc78e9f998edffef9677e9be9a46c7e819f27b8c052e9eec5
52bdbc8497f30a5ec2d6b285ec59b656892fb47f1627d16e8454ee710c42af2878a5a51af0eab4d17204423433582ef9ad822956d1dcd04604f56c4e
6969bed49c4cb63931961d09c00051c8498fb14a69ee546b8c2386b39124c182f7d8c8b63d07c592e1b5c35b2f1fa401681d892389e308d8e949080d
0bc79008ab24825592c4419697ad9b4c5efb0691fc2958f402d347045b508eb4b48daa718c1b551334ee895135f6714f1025a8d86ac3a5e653a7acb6
0c3f352d2a351a8bf48d46fa86d54458acc119449b27e53171bc345ef3acf8acf44c7845a886bc6da81846cb2e621e3c23cf0f2b099f17b10cca4397
852d0b5f16b11ff6875b7320c74d4ca4f586f407595aaf7857acac497b90a5a6880ebbaca13d6aa5f056db6812636ade43af943f75f1d7cf5e9a7083
d9b39e08e52d070f1e5cc036f49db565c482ea4183cf3f9072e3ed5fed2d8ee45f13f7db49df25c47d1728f6f40047a0be5c175dee0cf43a4c5edd46
39c2ebdce8da20af71ec4e0c8a0804b48746c43b2d11688fd6c9898a1082c675f2af53f92701d0520a56d75273534b53b3e58b9b16f526a924338f2e
3f2a2f3acf991f23420e8b620ebb18131b9f9016458cf426ae92589a2f731f7b3860c36efe3ebff1e4d9a7c7bd3bebe4d9babdaf1dddbc73f70b634f
ce2d3937f10b665c87eee886f51f7fe7769f7e20a5ba6ac5e67d0b8a4b4ae3e28f389d7f3ebce855c5f3e49396f7904d09e401967a2299094d80681a
0468d07825864b75cca88708592b1a55ff6320c64c2a634685b14b99b434ad8a5e9b94754abca88a15cf9172cf292aed6aa0adc3709848817c013c07
9a209604f12c097bb331ec61e3c3a6f1ac90cd67cfe24a662255ea580ca65a531d2e5ae23169287381f1347ef9f2b9f6272577db35bcd096ba9f7b59
ee6975255f13f389f24878d2e312c334d6724b64985763f75a2a4d8217969ad668f6440547303d4680de224759dad8bd7ab1dce3432dca6a2115591a
6e2a0b5859c1a41edee0d34e20d99755913938ec709f5a146d7c8ca1edde6e13bab5b2387e897ffbe4e969934ecdf8ed7beffdf69197c749970ff2e7
0302f8cdaffecebf773a1b1f483eba7dfbd1b878927615515fadfa933898e0890b94c1546e046f90ec8d08da6bf11a2b633744ac711b637511a15181
1118131dee26074346d4a4ba98a6b6a6bbe6e3b1d35e875d102ee005b1516a9489efc351420ecb61b1b2c31ee4a395397a3057ac809d8cb89c8a3b8a
490912f6ac7ae9a555044c37fac5d1ef5e0ce87778c6a74ce2b73ee3edfc26cb66e1a35fc47ef5bb5e3e7efce55df5c2c2dab878fe1dfff6f11cfeed
d75ff0af54073599ed8d523cd47eb2a669a41319a6784224ab80025a45f21712e903256422035963693bdfa046959ef7f801023217454113dea4cda5
473d926a4847d6f43e133db6090293314cca90864b53b1066a640d590b2986b958cc7e3cd5fed945c6db53a5cbe35b974a49caae7535c977b52a5f17
9d26067bdc2124dd04d91bd5dd6bdb10b5266177728831ae6b84232e224047de9b5c78404c78b2a5ada1b9a5a159156ce75a554b19b448ef11a6bb07
f99ab8d49420c5c9a8cbd5151b97d6ab776027025986b07afddebdebd7efdbcbf72edb001dff75956f58fafc6e7efbf66d7e7bcff00dcb976ddcb86c
f906e19d6d1515db5e2cafd836de7978c91befbfffc692c3ced833551fdeb8f161d51996376fd9b27904ea1b0960a74517f1a681088f595e2eee83e5
022d511142b496365a7d8a776d51c3974b5943b72ed2c5b9e8e2d4fb43962b7d84bbfcbd4db05d90b78b5a8989e0523a375067cb25254c28d6aedcb4
f96de50ae0aefdb70ed208e7e8a45443bad55254efe171c066dd52b6d9a2152c7a90424d2910a1136d6a0c24c9a9da5568399c1bc8c80eada93e6b73
c7a86922631b5b581a8be69ff2463e88bdc40eb36a3e8d67f33ca9e7cf0b5808ebc1bab1e07d7c0b5fc27fc3abc9186876b69166a7b3e631f89dc0e8
2c252a8ed6a2ac5f50a39d3225b95253b2e491b2a55ca94a7a4992730253adae73efbd275d6e4d52fc5d49c735e9c37be2f8661b6cf6c5f1d080540c
75589438de766f1c57084e552378828f09f5890957783bc32b5718e31d57585ff66bbe8a9fe1ef283b0e6934afe59ff32f782d1bcec258381bbe873f
c177f297f8136c0fad155a2d444929596877da39ebc10d2728c2461b82756638102cd799adcef2e8fa883a57ad754db0118231c4a4d31aa2516bcf8a
27eace5f6a4e49f12da086a6963622f38c6aafd60cc517cc4e8e4c8e4a8e4e7626c724c70e48f0447aa23cd11ea727c6139b1d991d951d9dedcc8ec9
8ecd4e284e5819591155115de1ac885919bb3ec19b702b21aab36b67a7ce0eb951b9d1b9cedc98e2a8e2e8626771cc92a825d14b9c4b6242eef532fd
59bad595662637134f6b2135e6de7815249cbc7a6869d1d6badada0127561d6a6cff9909af6cc93d3aaee0e4a47fdc12520b4b27977c78247174fbd2
8385796fed7af394ad6c758f1e071312da14add593acf6c876d25a04f4f184629d31405717e25813501bbe25146cb6612146591b3694f61f249a16d5
129a94757ce666f2d1dca82551de28243a55727ca432d50dd2de8c684d50960b7efecaf3cfbfa240fbbabeaf979e878e8ef3a5aff7adab137a365ebf
de48203c9a9fc74ff01fe93e9197bf9fa86130a7e31a5e271d86c2004f3894b355a2b9dcb44a5f6715eb824979611a9b0986dbb3c22c6db443f7db14
6fb969f9fe66b2c710106e095f12be3edc1b2e1171aa7bf65397ee5084e8f7cf787dcc8eec37ce9c79237bc79887f6e6b4f30f5877263fb64b4c3b94
9474edc2856b494907e3e2882133b3b1be2e921651254e22fa2c3e6985d581d95e2769d7986bd9160c16412b0cb3da0c5991ea6a4d49b923ad86fba4
459b6b9f3205750104b17b7c1beeaaadedfbfaa2c60ee8685cf47afb5992dbfefd243b3c2a3cf953f3fefc3c368469e91e92c71d7ef1f9e92a2369d9
e9745fec890307d3956b57498e034caa33b2e32175b65ae39a887087a075686194600bc88a50496c50778e8af07c41afc517f512074416477a23df8f
bc15290d80016c8030c031205ceaa6e9a9eda9eba62f8222562414398ac27539731401c7a8615b95add3bfa835aad0356259db61e385634f9f9d3ce5
fd19bc859f65896d9f314dadb077d5b63ab3f0e4a493677bf57aad6b37d687e959201bcc3f6ed872e4b59d8a65cee1e3c549c49381768a233dae5063
a4ce561e1854178075f1aeda8413baba8037c322e343416b1c26db6cceac4435c6f8c4ded0e4133cbfac709441d2efbaa4abb7eb2f6c35d822dcdd75
f4677e95d0d92228382d1577edddbc69efde4d9bf7d672de9a77e89147763efafb23198717fdb1aded8f8b0e67d40afddfbd72e5ddb357ae7ccd3fe3
5f4646bdd1adeb9b7f7862ca6472594a3cee3b79ca41858fb7c8bf2e249b51e24312459793e2eb7082a28b5684a177a24b131d8e3c068bcea3cbd6e5
ea8a7564b78a7b5582cd5bb57489b93f7b65fb97ca4ea09e7602db693c3df9b644ad5592258d559625753b20a86708ad24209e802d3a99a2904cd602
430dcafabdd4a0ee27339b9b823b7793df746e1194fcc4d8c316230515cf53a2a0d706090942a294a41d2f140a53b525c20269995029add36e14aaa5
2ddadd824d27e964c1807a4d174c10bb48497292c6639c86b9c64a5c29564a6be52acd36dca23988af484735ef683ed0dcc65b785bbc2586e5cc0185
3d96aaa35d86d5555f27b8bf6e7f4d9871abfd6c9d6c6f9bceaeb5b7b41f125ced1f13bf77e5177b0cb6080a37149f7c31c963b2f84352b1744b927d
422381c9f69f9afdb2d244920dc5c2244fbc6cd38504801ca971182b229d581b7e22d4a2016b80562b675bb501d91121e4e85c8aa0dadada9a7d67aa
cccca6167523ae1c913d81c971d971c571ebe3bc74ff21ee6a5c479c8e2c4af5c20ed296cfb4eecda43ad4463131ebd4b2df9dac9b3bbf6a5fdddc05
6bf7d5d50da859f8ecab58b9e899ef3f6bff95b0f3e5ed27f7b457083b77bdf887dded1562ee6b53272fead4763e711008bdc93beb00cd4cae305b6b
8d27f44cd0c218257e0db52b44abce3953b17a95d823b98e3f3904c5dbfc0b72f26b172dda7ca8ae6ed01bf3df3a23ec51087869a742004d5c90ffad
dfa3cc57575f30adbe40b9ce0675c65a259adb021e419b23eb17a7728f6b40682994ca659a326d99ae4c5f6628359699cacc65016596326ba9cd1b7a
2bd47affbef9bec37bc9a643af6ede78e8d0c65bccc66fdefa3bff9659f1eaf573e7aedf78f7ec97dbf9bbbc997f43ee2383bc849df5512217f9873d
44a1e28b1ff48477fae25af31af6269e88243f3c4cf5c8f7c42e3a8074ba638fcee78f3f8912598efb8e68fc81ebbe8056525777376e097d3aa3d9fe
f6d764fdc17b2217fbbad321fbf48623893a2bd0214236909d19b0c25cab3ba1d1cb5ad00eb529ce4a5df9e4852f9d57dcee91ecc09702158df9e2d5
5d7505e3c8e811ddb6bf4274d4af0cec1181476cd6c693ed87495985532489662ba2687996664b80eb9e4c9351301bc64647697582463f363a3a6a90
de10152d3a288a568af6724765881245dd1445bb44e90dd1e11a78345c6bd668edb1595d14aa2e3537293e3323a333ac7eaf84555ba7cf307f43a76d
8dfa248701098ac39815a18f3044187b5070e866e866eca7eba7ef67e8673438c1c9e2842efa2e86ae813ded3d1d5d83ba4475894e7426c6c42594eb
cb0de5c67293f22e9b0982ac970d6844139a31002d188a61188e1162a42ea167e280c4a712cb129724ae4ff426de4a0ca103ca9cbb513d5a3dd5cbae
7b8f8f3d99728ce84db2c3d563f64faaac9cbc6940c3dedb7f9b747a66e199bc656b0a5ef5bcfac2277f2c3c220e78ad4b9771e33c2362cc5db7566e
3fea729d4c4b9bf8c8a86c7740dce6653b0fa967af7472e3df493b690d52cc374bda003c005676425ba137908cc9c62c36b3b206331b94edbe7f87ee
3bbe67241ffe9d43dda72b475d7b503fe650f67014ffc9fd2d60a57ce5a89237dfbcbcaba242dac9dfae6af7568ed9f6d25f84dc2af6a032f304b2a2
8fc45cd0c10e4f98d677ead3683583b407e0041e90b4c84064b2bef3f581513d7fcbbef3b7fa3e5356cfdf29fe5361f33f1d0b3d8324214888178609
233492411b6008c1706d92d669e88d19da64838779842cf4881e69b0f6719ca87dca90cb728542cc1573a5c9da32c312c3ef0ce1fef3a2f2ae88c5cc
c1a7db470b47da160b47da0bc4dcfd6d1f6ddc8f6ee285013f2c06931f8f83059e5ea1010e83ecd685591c9106c919836038a08303ec94ce7120f075
b751a797e2824221522f050a7670860ed607486ee5159c51614a5298ea143699aa62ad0aa8ef181b147744c64ad5771a93d9618867399222fbfeacf3
a50fe57aabfa9023997ad88c54366262f0833f1d98b9ae7fffaa19077e7a70c8daf14fcc2e9a347eedc9f59b3efe76cbbcaa92cdb73ede583561ed8f
3bd68586afdbfee3da09f7f356e1e91ded0e0f320668c30d8ed000517222841e088703ae53e10107acafbb2342c31c017438d486395c3611c2a21d83
2d601055eef40a77a24f7d6635efe3348558f533f92f79552bfdfc76f2ea6730bd07bbc37190c2e05d8e9388eea16b1e9b545434e9b1354307fcf4ca
cc750f3eb86ee62b3f0d3839beeac7edebc243d7edf871dde3551b3fbeb5b9a46ade966f3fdea4d8644d7b1b666b9ea7c800cc956609b405a6020a56
5672da5bbfcfcbdb2695b5b77d855bd8674232c3f67ff0f2f69b6ddff8fab1124da5f2dd4b89d235a74f6b2a7f285147e4761ab1521dd1a10ce82243
b7a6b192b28d8ddef7d77b35955fb5ede58f733b2f6059eca6108f295ffd9292340b0a2cd51668b33a84b56593789b775fbd57212450b0b1858299b7
b7bfcfa3daa67f45be66addaaf52fd1f0279d964cc564951bfadde4b4baa8306459732a8b0f6b477fdfbdec68d650a3117da3fe2367e9cbdc89ad901
7c42f15fbbdbbf922668668149918ab5573a1dd21c31b11ad98139f5450beb5920bf5e3024a3fd2bcdfcab1b76d6143ebe5579a74374e468ca2184fa
0405078552474b42bc3b212dd599de3b94a55a2d1a59d8d4b76b41fdec51fce7fefd193e34bbbea08bf096b3ff463eb8fd866064c79e1a28b2e35953
d81b82b1fd061ffe42a6420bb7132de52a2d567b70904a4a02791dccc91852c042f937b5a5459af21f376d1b5f58b3b3ea534dbe8f8316a2668da21f
a61cdb72eaebdb5bd4efbefbdbbfa2161f6fd65e09f1ae98b4989460daf108bb693c7e9d05d62f2c22d6166f7d9cc6db70f5c7d5bef1da542a14eee2
13e2557e828362521d1a5123abbca6f7c699995d88bb8718f6efcf7f1e45dc7515de8ac97c81d50a21ed2dfca12959c79938f0293ea2bd4508616f6e
ecaf50431acaf173976ab56b64959cf4de69c2eefaa2d25afe0d0b2549137bdb3fad22498fdfe6fb1fc5c4ad232e9ffaeea980ccef215aab7e8cfdf3
6fcc4d9de9ed0fda469b27eaaef8bfb9fa2eeaa799c52301ccfcf607ad8f9827fed317df64f182fafd130465ab4f6c6334d410ec21d3ea4fe59d5230
54107c4a504db09d209f40a9af22d84fb05a380cb7e423f0a19c08e7845fc13979349488d7a154b243bd580873289d2336fb522103de127741bd926a
1aa15eb211cee76a5b3d8ea47c1214a10bd2c55498200533508068a9b91f84b54a2a83b01bf60bbb7d4fe50de39dab0f4c811a384f27a22cd620f411
360b2df814ee114d6237f121718d58237e2fad96fe2ac7cab3e5bf693235f99a2b9a56ed13da2dda1bba74dd6c7da47e92fe94a187618fe194e182e1
4ba3640c34f6348e30ce37ae37ee30d61a9b8cdc275fe88de32009a68191767416d8aa485f740841948aeab7f049cab73b5147c8c9ea377325cf2088
4abebc005a36d49fc77beac57bf21284b031febc0c76560883a1088a6121cc85e93095669fa7fe1b600a24529a02c974a7526e3261386110e1cc8312
82b9500079300bba51ed08984df83d28371066d2ed8447ef8c55a2960a282da03ecfd0339f30f5ff83597bdf99751ccdf40ccda57cad9d4dd80a1d79
d4e7ff6cc621947b9afa8d87f984318570f3d4d10ad41e792a474e1a65363d8b0967328d3b9df09cd4bf8866cf53db7e39ce58759412a2a888ee1954
abcc5a42b845ea482934772aa4ddd7abb38fefdf4ed0f11bf5bf34ff7cf556ed42206d29ff6f324100edbaece080203a878440288441b8fa4faa04d2
420acd90054361180c8791301ac6c0c3900d8f9024c6c26334ebe390034f422e13d4f3b8c464a6615aa6637a666046666266cdfcd9d353523306f9d3
c1fe74883fcdf2a7437de9c0647f9aee4ffbf8d30c7f3ad09ffac71be81f6fe010805a6189a7e3678ead76fcc98d3fa6e0ed6afcc18cdf736ce1f80f
377e67c6bf57e32d377efbdc40e95b8e37abf19b6a6c6ec5af5bf12b8e5ff6c51b83f03ac72f52f0f3a6b1d2e7d5d844884d63f1da673da56badf859
4ffc94e3271cafa6e07fd9f1e36abcc2f1231bfeafc5f8e171fc1bc70f08fd83c578f9d230e9f262bc340c2ffe255cbac8f12fe1f8678eef73fc13c7
3f72bc508de71ba3a4f31c1ba3f0bd143cc7f1cc4aab742602df09c2068ea739becdf12d8ea738fe81e3498e6f723cc1f138c77a2bd695bba53a8eb5
c78e4bb51c8f1dcd918e1dc7634bc4a3bf774b47733c1d78d423fede8d4738be518d8739beceb186e3ef38be968fbf35e3a157ddd2a17c7cf5a04d7a
d58d076d7880883ed08afb39bec2711fc7bd36dcc371f72eb3b43b057799f1e57cf4128ab71a5fe2b8738791f6d0b8c388db5f0c95b6e7e38bdb2cd2
8ba1b8cd825bf5f802c72dd526690bc76a136ea64e9bab71d346b3b4a90b6e34e3f3adb861fd716903c7f55539d2fae3b87e8958b5ce2d55e5609547
5ce7c6b51cd7acee21ade1b8ba073e476c3e37102b5719a44a3baea2831d5554e4633949aadc8d2badb882e3f265566939c765565cca7109c7328e9e
8edf2c5e2cfd86e3e2c5b8281f4bc739a452373ecb7121c75f9b7181119fd1e37c8ef35ab1a415e7b6e29c562ce658c47136c799313883e3d3d641d2
d363713ac7698b712a150a391670cce73885e3648e797d31b7159f34620ec727384ee23871825e9ad88a13f4f87850a8f4780a8ee7f818cdfcd8201c
e7c0b1cc228d0dc147edf8c8c840e9118ed9067c98e398872cd2188e0f597034c751d4328ae3c811166964208e883449232c38dc84c3380eadc6ac6a
1cc271b0d05d1adc8a838ee3c051e8e13880e383fd6dd28376ec9f1920f5b761663f9394e9e908c07e26eccb3183639f74bbd4a715d37b5ba4743bf6
4e3348bd2d9866c05e51986ac294070c520ac7070c98dcd320259bb0a7017b74d7493d2cd85d87dd5230a9ab5b4acac7ae8936a9ab1b136dd825c12d
751988096e8c771ba4f800741b308ea38b636c00c6109f313674e663742b46110b51f91869c208926004c7f0560c1b84a15408e518928fb417938239
0651a7a0507470b4730ce46823041b472bf16a1d8496c518908f668e26639064e268246c63101a38ea2da8e3a825342d478d1de57c14a951240b7020
d522a7bdaa4512ba23b3207064b52c7fe55a96f4ffc305ffaf09f8df5e91ff0d4021f9ef
>}
\immediate\pdfobj useobjnum \csname EF3O13\endcsname {[ 32 [ 318 ] 40 [ 390 390 500 ] 44 [ 318 ] 48 [ 636 636 636 636 636 636 636 636 636 ] 58 [ 337 ] 61 [ 838 ] 68 [ 770 ] 73 [ 295 ] 79 [ 787 ] 97 [ 613 635 550 635 615 ] 103 [ 635 ] 105 [ 278 ] 107 [ 579 278 974 634 612 635 ] 114 [ 411 521 392 ] 122 [ 525 636 ] 125 [ 636 ] ]}
\immediate\pdfobj useobjnum \csname EF3O14\endcsname stream attr{ /Filter [/ASCIIHexDecode /FlateDecode]}{
789ca58ed70a835010440f187b8b35c61e4bf4ffbfd04544ae7912b230bb531858f873b41ffd40c7906b1edac2c6c1c5c32720e4295eb427f1a5979c
2c55dcecc607392f0ade94c22b412d6868e9e8f930881a999895c69745f6ba01bdc10332
>}
\immediate\pdfobj useobjnum \csname EF3O15\endcsname stream attr{ /Filter [/ASCIIHexDecode /FlateDecode]}{
789c5d933f6f833010c5773e85c7748808909854424855ba30f48f4a3b451dc03e4748c5204306be7d6d9e4da522254fefe7bbf3713ae24bf55ce96e
66f1bb19444d33539d9686a6e16e04b1966e9d8e9294c94eccdeadffa26fc628b6c9f532cdd4575a0d5151b0f8c31e4eb359d8ee490e2d3d448cb1f8
cd48329dbeb1ddd7a506aaefe3f8433de9991da2b26492942df7d28caf4d4f2c5e93f795b4e7ddbcec6dda5fc4e732124b579fa02531489ac6469069
f48da2e2609f9215ca3e65445afe3b4f38d25ab5c5a72e1e72857eaff80cdc78eced63a00812b0c20709e00c25b33370b0092485649023e404e1903c
e4a35c03eb5bc9fce59904961e4be0232a3ab94281d1b9932b1458012b8f153047affc041c2c5ae7689d1f430c52d0b4932b1418f7727f2ff7f7f276
c5b99fbab71cf3e478334e101542d7cc1c4de4fefd82cd0245104696b73e68b33845fddc4f2e7793b3bb1296c2ad8ddbf16d27c5dd18bb8eeb87b0ee
a1dbc04ed3f6ad8cc3e8b2dcef172fd5deeb
>}
\immediate\pdfobj useobjnum \csname EF3O16\endcsname {<<  >>}
\immediate\pdfobj useobjnum \csname EF3O17\endcsname {<< /A1 << /Type /ExtGState /CA 0 /ca 1 >> /A2 << /Type /ExtGState /CA 1 /ca 1 >> >>}
\immediate\pdfobj useobjnum \csname EF3O18\endcsname {<<  >>}
\immediate\pdfobj useobjnum \csname EF3O19\endcsname {<<  >>}
\immediate\pdfobj useobjnum \csname EF3O20\endcsname stream attr{/Type /XObject /Subtype /Form /FormType 1 /BBox [0 0 306 154.8] /Resources << /Font \csname EF3O1\endcsname\space 0 R /XObject \csname EF3O16\endcsname\space 0 R /ExtGState \csname EF3O17\endcsname\space 0 R /Pattern \csname EF3O18\endcsname\space 0 R /Shading \csname EF3O19\endcsname\space 0 R /ProcSet [ /PDF /Text /ImageB /ImageC /ImageI ] >> /Filter [/ASCIIHexDecode /FlateDecode]}{
78dacd58df4f1c3710dee7fd2bf69154899919db63bb521fa069a3e6a5a241ea43938794902b14d2025123b5eaffdecfbedd5b9b1c77802ae53856b7
37f68ebff93db357030f840f0fcf28ff9f5cf6fbcf4fff3a3b39fde9c5e1f0edabfad7c94dcfc339ae051e38c7f5098fbdc0b5e8338bcbde92e2fba2
7cb37726e29e5677bff5fd7bfc5ce081735c9fb0f505ae457fd5670ccf32086b8848dcc0361ab6249e6c8674789c4f301a831f8edff5fbdff3108c1d
8edff77bddc193e1f8bc5763835020cbe2c1ac6c9279d3376553302e068ea21c64b969affba7ac38a32c1a28a6e8a615be73e5df3b579e8e6028303b
eb3ca5094c85f870421c552c398df17f422c5b107f77dc1fadd5364563c947f57eadb62b583f144602e3448ed6271ba7233e8c87fbe452289698564e
ca8a35292a450b542b58aff7467651a24b2cbc565ff7b2f05765933764830bde477fb7e69d49c106ef224e5dc3e9f5937197384aaa21699af03ec628
ee914609c1888d21c5dd350914e9f17141541fa6c8bbacf565a4f9126e611fe9160e3025910b5bfde2f9a8630f488e9558a713ce46a5103b89222beb
ed7537a3264308220c21c3b4f276c4eb9c049748ec0aef629284843c85c82be9afc77352443e94986875cee9c44dc9b9241ce3f695cb51cb84ac1902
9410b63f73b7cf7c9c1cc3897ab63adbeceb710592aa68f2f333c3b882a4edd5bb0ac163ecfc74d23507724cecf5412963ff0085f766a8ab318dd5b8
e7608d0b1a513d251848e05229cf30454a4167e2454564cf2694bb8b9a414d2e25fc6a58ad2213af58b124e34592872bb2204c344a18ae4f879f870f
70516c214e0affe80981a2f022544718cad8427560325c2f9a9dc3a69d59ca64ac3ae15a488e3edb87707625654565b10861ebb8d0572c1a7291f368
789ca434c1777d03998c77cc415d401c1449e79dc3a69d903459635d030292a6642412696c249da9291845d4d92579e25053774d4ed4776359253526
158669bca24ba905ada8e8140d5212c4cbf4994743df3959858d24b1561a5925977dc93f6a5967aa45da4eea5c28e4158b9abc73925a9f4fb1a1c946
e2009e22b782ae8801850be5c917eacca026ef9c9c0ea1e79ce3564eef0da3ca586e04ada8d18087d558e49f5954e49d9354117b822ae81a493519f4
19125a49672ac2d1064e54249d59d4e49d93146d3876bad4a45e89160daa621469246da8685e63c9470d8789ba73722636e85bbdd5464e140b4665f4
b69173a65a672c5a487285bc6251935792ee1f486e58a637006462f50e007db3b1a8be1cdb623e531d8685e5f9fd21dae24f7573cc116d43f4d06264
680cd982634a63839c9b62352137c5b4eca37e19f63a7a32bc198e5f2e7ba982051585503ab5cdc72bdaa6f391b260f9105420f53d01b835006ccc33
9fc4b6fc55d48d20ac9ad22cda704f08710d0455782bc18bdbc89ea91b21281bb4169883d0d4de1304e7d7169fc1480e7d60ce84ad3bced48d306242
5fefbd2638c57d51688da2b16d1e7af254018970ebad555ff14265bec5eb7977d6dd746fbb451e8050a2036198cd7f098ba7f85ce2fad07dec066c3b
ebfecef3cb6d05acefe019b934dda2ae8d07c5008da65d00199a02724c8e0f8c860a81c798843643b9853093d762f016a327866160f00822ece33c6f
ea4620bea32d6002a2c15b343d2d9899bc164caeabf0cc9c65730b81e11d01e55193363b056d85c36431a7617e732d9e8abe0d1013da541f22264a40
da8268bb8230690384cb135b8b68a66f45248a1e3a22990776762322f95c47577d799f9c39e36994bcd10910b27277e4146e3fe6906134f0d959ca1f
88ef102eb723294ffd5818ba5fbb8bee8feea4fb1df77f22e8cebaebfc16e34d3fa191e1e572402ef5e6d6ebecf551b67ef8ed5f3d64aecebb71f2b2
d229548d5202ed594cf47930f126c1d6680590d5c8e73862975f7b334c449464e0e40d7986ea608d00ff70798c41cb886496df62e57935d918a0a193
7c18153117e38bf5e6a57a9e94021a0c4a49d138b343b1b4281209e1b82d2b467c02543b6533782ad44f1e287999cd6acb1ff5fdd17fda8cd7a8
>}
\expandafter\gdef\csname EFWidth3\endcsname{306}
\expandafter\gdef\csname EFHeight3\endcsname{154.8}
\pdfobj reserveobjnum
\expandafter\xdef\csname EF4O1\endcsname{\the\pdflastobj}
\pdfobj reserveobjnum
\expandafter\xdef\csname EF4O2\endcsname{\the\pdflastobj}
\pdfobj reserveobjnum
\expandafter\xdef\csname EF4O3\endcsname{\the\pdflastobj}
\pdfobj reserveobjnum
\expandafter\xdef\csname EF4O4\endcsname{\the\pdflastobj}
\pdfobj reserveobjnum
\expandafter\xdef\csname EF4O5\endcsname{\the\pdflastobj}
\pdfobj reserveobjnum
\expandafter\xdef\csname EF4O6\endcsname{\the\pdflastobj}
\pdfobj reserveobjnum
\expandafter\xdef\csname EF4O7\endcsname{\the\pdflastobj}
\pdfobj reserveobjnum
\expandafter\xdef\csname EF4O8\endcsname{\the\pdflastobj}
\pdfobj reserveobjnum
\expandafter\xdef\csname EF4O9\endcsname{\the\pdflastobj}
\pdfobj reserveobjnum
\expandafter\xdef\csname EF4O10\endcsname{\the\pdflastobj}
\pdfobj reserveobjnum
\expandafter\xdef\csname EF4O11\endcsname{\the\pdflastobj}
\pdfobj reserveobjnum
\expandafter\xdef\csname EF4O12\endcsname{\the\pdflastobj}
\pdfobj reserveobjnum
\expandafter\xdef\csname EF4O13\endcsname{\the\pdflastobj}
\pdfobj reserveobjnum
\expandafter\xdef\csname EF4O14\endcsname{\the\pdflastobj}
\pdfobj reserveobjnum
\expandafter\xdef\csname EF4O15\endcsname{\the\pdflastobj}
\pdfobj reserveobjnum
\expandafter\xdef\csname EF4O16\endcsname{\the\pdflastobj}
\pdfobj reserveobjnum
\expandafter\xdef\csname EF4O17\endcsname{\the\pdflastobj}
\pdfobj reserveobjnum
\expandafter\xdef\csname EF4O18\endcsname{\the\pdflastobj}
\pdfobj reserveobjnum
\expandafter\xdef\csname EF4O19\endcsname{\the\pdflastobj}
\pdfobj reserveobjnum
\expandafter\xdef\csname EF4O20\endcsname{\the\pdflastobj}
\pdfobj reserveobjnum
\expandafter\xdef\csname EF4O21\endcsname{\the\pdflastobj}
\pdfobj reserveobjnum
\expandafter\xdef\csname EF4O22\endcsname{\the\pdflastobj}
\immediate\pdfobj useobjnum \csname EF4O1\endcsname {<< /F2 \csname EF4O2\endcsname\space 0 R /F1 \csname EF4O9\endcsname\space 0 R >>}
\immediate\pdfobj useobjnum \csname EF4O2\endcsname {<< /Type /Font /Subtype /Type0 /BaseFont /GCWXDV+DejaVuSans-Oblique /Encoding /Identity-H /DescendantFonts [ \csname EF4O3\endcsname\space 0 R ] /ToUnicode \csname EF4O8\endcsname\space 0 R >>}
\immediate\pdfobj useobjnum \csname EF4O3\endcsname {<< /Type /Font /Subtype /CIDFontType2 /BaseFont /GCWXDV+DejaVuSans-Oblique /CIDSystemInfo << /Registry <41646f6265> /Ordering <4964656e74697479> /Supplement 0 >> /FontDescriptor \csname EF4O4\endcsname\space 0 R /W \csname EF4O6\endcsname\space 0 R /CIDToGIDMap \csname EF4O7\endcsname\space 0 R >>}
\immediate\pdfobj useobjnum \csname EF4O4\endcsname {<< /Type /FontDescriptor /FontName /GCWXDV+DejaVuSans-Oblique /Flags 96 /FontBBox [ -1016 -351 1660 1068 ] /Ascent 929 /Descent -236 /CapHeight 0 /XHeight 0 /ItalicAngle 0 /StemV 0 /FontFile2 \csname EF4O5\endcsname\space 0 R /MaxWidth 635 >>}
\immediate\pdfobj useobjnum \csname EF4O5\endcsname stream attr{/Length1 3880 /Filter [/ASCIIHexDecode /FlateDecode]}{
789cb5567b7094d5153ff73bdfddddec6e36bbcb920779add96c480a49304fb230b2405e90400259c206086493cdfbfd8090c4402a612ca2f29c5522
168a54912a4dd1b6a1b10e3e5a74943f5a651c676cd511c67606ad9d41ad19b9dbf37d59283275a6fed17bf7deeffcce3daf7bf67cf77ec000c04293
0ce692c2a26298035a001647dcc892ca8aaac21de5db09e710b697546d5831fff4b2d7087b08072aaa32b3dadc5d0abe4eb8baa1d3d7038fc90100a9
90f058c38e01fb9327265308bf4e321d4d3dcd9d9fb6fd93687911adef6ff6f5f79037f227bf44d8d8dc31d4f4afe2cf1e21fc0600b6b534fafcdaea
57ad003ac57f5e0b318ccff23f12ee209cdcd239b033ea13e6207c90705c4777838fc54acb093f47d8dce9dbd9831d9a30c28a7d7b97afb331236f65
1fe10f2826734f77ff40f05d25017a25fe453d7d8d3d870f49490006dabf6c0465c908b34d2284a05179aa0ac44102b0c2e2720fe8287bd482c190ec
ecfa61380a36757dd8d7e7ab87715f5f67178cd7f7f95a61bcc1d7d54f734b631fcd437d1d30dedcd84d74735f633b8cb7f8ba48a6a5b19e38edbe2e
1f8c77f8baedca3c40163a7d032d30ded5ae70ba9b7d9d30deb7bd8b24079aba9a696e51ecdf1551282ed9c40e0207e0d9fc18e4b104e5192cc2f7a0
49a22c4b060d22ca06490eede176ab6c2af213a71dfa3536616313da4ef6c977643034e2d49d039413622a9641f9af75d4257ab6432ff4ab5185a8e0
b1e0d1e0913b2ccde63be70ebbeda1a144d51b1ab407e80f0d0d05e2609930056fd17805cec225e9129c847ceaebe165b64f4aa795a7e141789f4bf0
221c6705ccc60a68f51d8d4d33c4f7f233b4be85744bc9cafb70846c2a96a660b734225542135ce29761827ab7caff027ec7f6c015781cde924ae12b
d8831e38407d02fa65e057981e84b4004e2b9ea803d4d388c7747e45ed5fc06e18010f9cd64c696cec1d35eaa7d96bf406fd9d627e07b7602f691c87
336cafec90cfc8a57060365eac8303d21e3621d7a97d84feaf41382ed7b1b31a1bfc4189953895146913bc4463102eb3256c2feea3c8469408f815b8
ac5d2d67ce46a51dc55cda0fd03807e7211d03a4afee45d304c7a526f2f5154572190b218dac519e6df4c6c3dcdf68b88c12838576f3a4e45ce59f74
aff3dadfa8b9277de15dd06ed6da27a172327cc83e150c567ae5585e33c9e326d1a99b949d8e8fbf6ff1e3f48565955efbe40b458521ab457585c4ab
f212a9206213bfa8307df6ada4baa27a52aaad944ae41ae5420f1fbb4b355cabd3722e739d4e8b693297a53426110ee3a0d7d11ae8258e921ef48c4a
49efd2710971176c09d384e9f41a2e3199cacda94f36986f5c7ddb1255602d28b81732975ebf1a55b08842d19af9675ab3ee332dfd4294f93fcf9aa4
f33a99b15af77d5c8a94f2a42ac9c33dba16deaa1b938e487a23ea6497c6a5751bbde8e51bb51b75adbc43d3a61dc221cd9076aff428eed33ca49d40
5b2dd4b2b249abc7fb7b30042fd20e0dd4dd8b2df98b6b5836b2ec302a7b74b0c1a94969f01f37dd92f4a1486aa3aafa763b7befe6df6e4e498e9b7f
a10c6d159be44b542f7320099e7797c7b1f8586eb544c74851f3b8c56a919884408c39dc6ab59889083749c60865c51501f151e5185f6e1cc675d6dd
e65d8e3936b3056c11f3ac4e8bcda44972986fa8f15da077d0bdb8e6fabbe7cd16567bb5a0c0a2f67bc17cf5ead51bd7cd9f5b0a6ea78d92a9a68c2b
9953c06d1ecd94b9ba64cadc8b2793c792a55aa7893992527273acf9f7b1ecacc8a8fc6cdab12d323bcb9a9b93e248d2684d0cab57bf397cf695c1ee
d2e9e6fb4f26db2f882b1784ebfeede7beda3db8f168be6be5199fffda9ba758c6ceb59efa8633df7c8beb8f3ec1d67c7ef0d8bab5eb3fa09c46053f
a1ff7a0452e1b7ee65a67029c2c8ed89ba303d6a0d3cd19ee84a488c311813ed329b5b165911536e8532562197a70c5bd7a52518628c768cd3328833
4568b923cdfc6ed9a4dea3a46473d9a4c9b39912e3085e541273fde2458b35aaa040cdc98d6f2927e2cbcf8961b18612635293609a4d0273a7dac1ce
32e76646a64579c3bc7aafc16bf42678130d6990462757015b0dab590dd4309d5223da0c4a9366ae2d2a329125b0b9369992363f3f815296979b9329
65b0dc9c644a9f0c8e55a7aa0f3df36a8476c9d488b876ade6f51dc98daf6f7af0d9bfce59f6f0032cfcd3cdcfcbcbcea5a6366c68c895256b768fef
fc4749c92fe7e40dd4d42de166e7d0d617fe9c70eb7e936a8e3d955af9f6b688a55f42a24e3dacffb4cbf4e1ade7d71fddcc0e2fd6ed57dfd0dbf70f
9d3b9d221e207cefd71f7d7321bc18fc54957736adac9c5f8af9b334f6d31b7d0a4ab90db6ca0b20ea0eb987592cfbb51a8716d7c20268a17b430233
b895af093449e1a13b590b9b94b341a66f00b648bd4f149a4124a1595a02132b0ed10829f465334bcb77d01ca2d97088d640323b0c2be9dcec8121e8
83566826ef0360a70a6aa0b3d20e5974e32d826ca2ea49c20e2b4866804ed001926e041f74c242e2ae822e92cf206a397450b7d35d75cb56bf8a1ae9
d9483a3b68f693a4fe7ff09a77dbab873ced205f6da4d345d24a1c3ed2f9611e0b896a23bd6ad84e120d24eb53ad35aa1a3e754776b2d245730fc9d4
93dd5692b3937e3779f7a96b77dba952adf4434548be97b88ddf2363bf4baa5a8db09f70b7ea358be2cc86dcef68dfd24dbf4b97be40825fd2d84555
f1df9a56ad5189fee9b594bfea2969cc1dfc95c04927fe320bcf05f079133ed76fe2cf65e12f049e75e2b3263ce3c46702f8f40cfe7c064f0b7cca85
a704fe2c0b4f9ea8e227037862cd727ea20a7f9a854fdaf078009fd0e384c063567c7c141f9bc680c0a3247174148f083c7ca8841f1ec5432578f040
2c3f28f0402c3e2af011810f0bdc2ff0a17d09fc2181fb12f02759f8a0c0f148dc23f001813f16382670b7c05d0247cb9c7cd48ff70b1cb1e0f0d034
1f1638b4b3960f4de3d098bc73d0c977d6e24eb73ce8c41d02b70770c08ffd26eceb75f23e3ff6f65879af137bacd84d6175cf60973b28b0536087c0
f6486c6b75f1363fb6928f5617b6ac35f096686c6e32f1e62c6c3261a31ffda4e60f6083c07a9f91d70bf419b16e5b0caff3e3b6ad66be2d06b79ab1
568f5b3687f32d023787e326d2d814c01aaf89d7a4a2d7841b67b07ac334af16b8c153cb374ce38631d953e5e49e5af4b8e52a27ae17b8ae3283af13
589981151444850dd71a700d45b5663996d3a35c60d96a0b2f73e26a0bae12585a62e1a5024b2c582cb04860a1c0952b46f94a812b4671b940f70c2e
9bc1fb667069de0abe54e09237d04594ab0a0b84bb07178f623ec13c399de7adc05c813902b35d9835838b8c9829305de042810b6879c1bdf82333a6
a199a739303501e7a798f87c3fa698d0c9f4dc9985c9c6689e3c8a0eeee20e81498492a6f11e92bf2716ed89066e8f40fa22bae89e90130d98108609
6e39de8c71241e17c0d800ce8b71f2797e8c89b6f21827465b312ad2c9a39663a413e70ab4099c3383564b0cb70ab490554b0c9a0546083491055300
c3c961f8281a0d466e8c468311f50275b4a40ba086c4350239ed82bb502624a7239ae95b4fcfa568647a646e19e2904d31ffde47d882ff6f83ffb3fd
1fdae2ff0d51184750
>}
\immediate\pdfobj useobjnum \csname EF4O6\endcsname {[ 107 [ 579 ] 113 [ 635 ] 115 [ 521 ] ]}
\immediate\pdfobj useobjnum \csname EF4O7\endcsname stream attr{ /Filter [/ASCIIHexDecode /FlateDecode]}{
789c63601896800589cd0ac46c00014b0010
>}
\immediate\pdfobj useobjnum \csname EF4O8\endcsname stream attr{ /Filter [/ASCIIHexDecode /FlateDecode]}{
789c5d503d6bc33010ddf52b6e4c87a03886643186922e1efa419c4ea1832c9d8ca096842c0ffef73d49a90b3d901ef7eede7df14bf7d25913817f04
277b8ca08d550167b7048930e0682cab8ea08c8c0f2fff72129e7112f7eb1c71eaac76ac69805f2938c7b0c2ee59b9019f1800f0f7a030183bc2eef3
d217aa5fbcffc6096d84036b5b50a8a9dcabf06f6242e059bcef14c54d5cf724fbcbb8ad1ee198fdaa8c249dc2d90b8941d8115973206ba1d1642d43
abfec5eba21af4967e1a28bdc0bde057a2cf55a613dc0b16ba2e74fda06ba2a9cb6fbdd4305d67db462e21d022f984798334bbb1b85dd93b9f54e9fd
004d4a7d84
>}
\immediate\pdfobj useobjnum \csname EF4O9\endcsname {<< /Type /Font /Subtype /Type0 /BaseFont /BMQQDV+DejaVuSans /Encoding /Identity-H /DescendantFonts [ \csname EF4O10\endcsname\space 0 R ] /ToUnicode \csname EF4O15\endcsname\space 0 R >>}
\immediate\pdfobj useobjnum \csname EF4O10\endcsname {<< /Type /Font /Subtype /CIDFontType2 /BaseFont /BMQQDV+DejaVuSans /CIDSystemInfo << /Registry <41646f6265> /Ordering <4964656e74697479> /Supplement 0 >> /FontDescriptor \csname EF4O11\endcsname\space 0 R /W \csname EF4O13\endcsname\space 0 R /CIDToGIDMap \csname EF4O14\endcsname\space 0 R >>}
\immediate\pdfobj useobjnum \csname EF4O11\endcsname {<< /Type /FontDescriptor /FontName /BMQQDV+DejaVuSans /Flags 32 /FontBBox [ -1021 -463 1794 1233 ] /Ascent 929 /Descent -236 /CapHeight 0 /XHeight 0 /ItalicAngle 0 /StemV 0 /FontFile2 \csname EF4O12\endcsname\space 0 R /MaxWidth 989 >>}
\immediate\pdfobj useobjnum \csname EF4O12\endcsname stream attr{/Length1 15368 /Filter [/ASCIIHexDecode /FlateDecode]}{
789cd57b69405447b670d53d77e9bebdd0dd74b336d0d02ce206015151a21da2b8c6e01a3563028a4b340a8a3b1a5c46d0a82346c1c418ed1874dc83
c631a0c468241aa3ce24a3ce1b5fcccba2c62cc43879664328be53b71b4493796fdefbf37ddfbdd4adaa7bab4e9d3afbb97d21941062c38b485cfdfa
f4cd2243c93c426847bcebec97fdf8f0f0e5b19db1df1f07fcb5dff091991579bb9d84c858c89dc71e19d1df5d99339d10c58263663f3e3c29e5995f
f29713a2fb3d3e1f35617a6e01b4332762ff2a3effdd84b9b35de499887442d4eed867930a264f9fd965ee54428cd8277b27e7161610054f62bc8e7d
e3e467174ccad831c18bcddb84c4364e99989ba71bf74e09210fd5e3f3ae53f086699bb297909458ecc74e993e7bbe7b680cc33ee24b1a9ecd9f906b
ce0cce2024b50ff6974ccf9d5f20ee936762bf0efbae19b9d327267cfb702df6713d7aa920bf70f6e03357ff41485a353effb260d6c4829eca3ff852
88833485705a1989ef10f004928ef77a905ed8e6cf54d2996410a14fd6e011c4fc6ceeec192404e98a47733321ad2d3e924e9b386b06d1f196564484
c06b1d11a09a8f844fc576c446c27c8b35bfe95f35bdb94ff37ef2eb23fe37ee91e6ab2dd7d6f9ff4f1ccdb96dda1cbb3e5aab4cbb4ed3ae6ffaebb2
7ba35a67bcc9bef0cd78e0fed507eea5b77dfa2f6276f157772a7d6bb63c6dbe8ab85dd5b06e596f7ff349bc6ef6f7a661ffe2bd996da1b4a9fb3447
e175e53d8cd97e86cf986fb593fcca2ad94586bc6667d97eff3a2b1fc06db1765d790ff3e6956df1fd9f1fdabecafed5d96da9f0bf387e5366ff8bd5
389517ff7abde68bc88f8bf7a49df75b9f9dd5ae65ff6b1cdbae73b58533f7f0697350d4569904123b719060d4fb30124e9c24824412378925ed4907
b40d292415c7059268bc72ebc16d828457196d9e8ee8d17eb41c86563bc30f13316b7580bf6f215684202344fa9ba8725b02adb065bc2a5addb20687
ce6106e0c9610520ee8184db368e7d90867f28217e9bb695ec2376cda62dcc9d953b9e94e5ce9a3e83948d9f95fb0c299b903ba310af5326cec2eb82
59cf92b2c913f3b13d79d6c469a46c4aee0c1c3365e278bc332d77462e297b3637dfc5af681b7f3f3d77f6145236631abf933f39773a299b3567068e
9c3d69c664bc4ee1f0ff89fdd428feec339373efb3a122f1d9504aba6ab5a4edc84962483bb4d1807c68471ed6ea046ea7b1d713af0968c105bc47f1
8cf5eff953f225c9261946a27cac1174192e9f833ee0b26f69de6e7bb4f4e9147f7be46ff2e57e81419e0a73fffb7164ac6f2caf5bdb6d60dc777f6c
9bfeac7b6d01fdebdd75d8ce689dda0e391fa2f9ad489245666812c925866a922cfa47015212c7eae7e0350de7b4d3afc5e75db045f41584b41997de
3aae7bebb86ebf1a27206c1c277109b5f1a712c7808a665ac6a5554a955e42cc237d35fc1b9924e078c12003e84441f0f1b7cd913da96f1ef1101799
2fdb999d6e56a6d36bf78d017fe1f10a97fc4aec51ad2f126e3723910a809ae02289a413cac1c3488bc790f3c3c808f2047986cc21f335d97321ae9d
48323ecf2483c9e3fee7b9e45932afb9b9f91a5a9d2bcdffd67cb9f9c3e6f3cde79adf6f3edb7ca6f978736df3b17fa2a1f70e9f9e57fb7b666d2d5f
018d4b0431f3d1af93bf707d4ec6d28370bd268893af70ca6662c9c2c2b5793096c7b08460799c130b0b8f2786f90ba7c9087f89c0f284bf4462e11e
fa192ce8a5708f84ccf1e382b121998f2585a621cee7f03c49f6902d7427f626e1fd9978c72b1c222b7046353945cfd1554227bcb793dc2617716429
39077b444207a2253c87e3af20ffefd011e430c248a7769aaec8a8c643c4c3e230b15abc295e20ddc442f182982316d254d82e8d927662498777512e
ce2276d5f41352488ec257900ab5621fd14c3e810bb087dcc05538cdce9175c8f322c4c54ef349b150240cc33b67a40b64339ef9f8fc02dd4a2f2276
47e9727299bc08a2d09f6ca597715fe7c88f64398c108a9115a9c224c4ff0cc2ba80f33793425495cb54254ce880f7107b5c6bbc768d804ed265edbc
4d8a71e511a452ae96ed8a1b57e114db494fd17a7903f1928bf03b98091fd115a25bdc25f627eb7c14801cb20e616fe673e4497401ee9d9f451cba30
4fcca17bc857628e321e61bfcb77846b1e1686e18e26915a2cf3640beea9275d01ab10d3228db71794816212ce4708ca62dc3521f99046a662ab881c
20874827a820eb1092b65fb99bf423cedc227e867b5e47d70a3f920bd007a57092780b698d0695a046bfa9c8920802251d5d962a216e405e9567e868
d77b63a23b757ca0ebb228ae2a925d655ae0aa6e6ece1e2d864b63aa246715c4e9aac438f767ffece1679d3a0eca1eedaa6aeadbc70fb56f4e1fbc37
7c3436790f6fe3fdbe7db4677cd12a290eff06e454b9264c713d6f79dedde379cbc41e9d7c7607236bd464aefd55ec7ba148b6a107ece609905f249b
cc2685800d9db76ab65c1d54153862740d519b4f741f33a82a406b134ff731d753eaade9e90f114b637d32950587dd16ec8e17d2bad8ba094525cb96
aff056946fdc24dbbe60bd6ede643d6f7c434f7ffa09adabc7f52a71bd7c6dbd284f80c2d7532831d8c4401dc1f532eedc831b981a6473d805c5ddd5
96d645a84490e515de15cb97cbb67a96f1c9a7acc73737e8bb376fd277106a1efd44281696e38eac47c8164144ff67b97a1ee15c4238d18ee83c21bc
e986b0bc92efbdb4f99ab80ee5d1809edded0994bd36e235aeb7ad09d13b0322c1e9080fc1d5efd4e3e4eb77ea2db792698c60b5d852536c568b9090
42ac16e28ee15761f596575ec1bf575eb94bf5eca7bb77d94f542f65b30bec3c960b3415cf2e34d5cb0a59092b6585742d5d4017d2b5dc227f86466c
2cfa4f24a6c791095e51f04a4b15e2d5eba2642790286ab05cf2d39b727ad7d735224249f52977ea7147c8f53131f470000488c2b86ed156292d2ed5
8abb6474207b894e7c9f0e6cacdc2316f6afeedf70790f0240091507e28e9d64ab2721342c1c429c564924564912332daf5a379abcf6f5225a2a6251
05aa3a832d2047581a075539460caa0a1af1e4a02afb88271113e0525077a9fec409ab2ddd8fcd1d0d1bc5227dab48dfd22aa7c51a9c8eb87952468a
a3a451ca4271a13437bc3454413b162a86a1403b6793b9f29cb0c2f0d9ce65a4247459d8b2f065ce5d6457b8751c1917879b48eb4abaf5a2695de2dd
31b292d68ba6a6880ebbacc8048de7c9c6c148c6d4dcc7fe58f2f4c5f90b2f8dfe92dafb3e19caeeecd9b3671e5ddf63faa601f32a321f3dff50ca97
effc6e474104fb0677bf05f95d88bb6f470a3c9d8923502dd14795b802bd0e9357bf41767a5d1bdcebe5358ed712839c8104eca1ce7897c509f628bd
9cc8891034a265ff7a6dff480014d2604d4aebafdfb95e6ff9e296453b912ac9d4a3cf8bcc8dca75e5458b641c8da40ebb181d139f9016891be98abb
ea40d37c8dfbb607bdd7bfc63e605f3e7566ea88f7a61f3f53b3e3c091f2adafbd38fcf8acc2b363bea0c63f405c545dd9c7dfc7c59d7a28a562ddef
cb77ce2b282c8a8d3fec727d7868d15e2ed779c8e54a942901756ba927829ac044004c99040c8a57a2b0544f8d2a71ca3ad1a869b6013766d23666e4
1bbb9451579f62e57cbd7e29a33e05f7a231563c8bcc3dcb59dade80817b7f32065de13cf23c51826807124f3b40573a843e6e7cdc348a4ea273e842
58414dc84a3d8d86546baac36d755ba3d34066026569ecf2e5b34d4f49718dd7e04263ea2ee6a539a790435b91437988790479cae316c3146b892522
ccabd8bd965526c14b969ad6289591c14eaa8293a81639d2d248dbf2c5d2c63a59b8b6208b2c75b7b802730d46f6b03a1f77b829b0729a13879ddcc7
16ce8d8f21b4c9db7174c7061acb2eb1ef9e3a3565ec8969fbdf7f7fffd057474897f7b0170202d8adafffc17e70b9ce3d947c64cb9623b1f148ed75
887d85664f62c9684f6ca04c4c2546e20d92bdcea01d16af7155cc7ae79a38638cde191a19e884e8a8f03834302844d7351373bdf1fa3df1f1d8318a
a017840b70413c279d9371df87228571741c8d911df6201faed4d199ba630468d988dbc5cd51744a9050b972dbb69558a87ef0cb83dfbb18d0f3d0b4
cfa8c46e7fce9ad82d9a4dc307bf0c3d8f6e7ff5d8b157b71f151654c7c6b3efd9774f8c63df7df305fb5a3350e3e98e486ea176a1344d419ec86482
2744b20a208055447b21213f40028a2656562c8de7eb347b9dd4c60e604171e10c1afd1686641e04a610057964edd67d8cc7365aa0328449e9527f69
3254912a59416941c650378dde05279a3ebf485953aa747954c352a9038ffb56237d576bf4759324f2a8272e04a99b207b233b796deb23d724bc961c
628c6def74c43a03f468bdd184074487275b1aebeaefd4d56b846dd155ad978e4ada8698719dd1d6c4a6a6047123a3a9ab3b2636ad4bd7c096012819
c2eab21d3bcaca76ee603b96ad27cdfff1095bbff485d7d84f3ffdc47eaaecbf7ef9b20d1b962d5f2fbcbbb9b474f3cb25a59b47b90e2d79e3830fde
5872c815737add952fbfbcb2ee34cd9dbd6cd96c2c28314b7147a5b8a3104d62dc4a54282d21a15e7587e825ab82a2bc96f5416be214a7333a3092c4
c4384d9ac020fa2d3ee90bf6438bbc04d585be137622fc84f344c43b917551ca1e5baded2b1ba0c474d364db1668465921695d48aa4f4a62e269cbb6
90069f0dde3208e5a4c7a1673f6577a9e5730ad4ca0eb21b83b7d05e7e598a4229a1266a1bf53b1af0cd1734487367dbd89391c2a61649e2d6e7360a
cd29d18dfc5288d36396978b3bc97201cd8e4842749646b428dc63a09874e33601cbed8b783026ba19cebe4273a48f60bb7fb6095db8bc45d4495424
6e3eb92e4573e57e67ce4f0c951b182fb07dd7ed3d38ef2cca6b3dcaab0e3378174a893b9494abfa72db525aaeee8fb21a744260689444ccce2029d4
d9594f9c36311aa99ad28850b935d71c5a3d1791f4e44301311455ce1aeda3546b232ebaad9844d30db4cf6bafbcf21aaba51d36ae5fbf911904f166
c39245e53bd8edbb4d5f0a679b3e2e5dbd66853089f5ca9f35b360e78983abb6db5de75e7cefdf51250a9baf49092803a1a4ab27ccf4aaf9805a6ea5
af92036279f07aeb9a3025d44492ed96308ea29fed1cb71f6f251f0e088f0a17103d6e05fc9adfb59bc3dcda09921226dd5cd64cd86d6aa164d9cd49
53bffd3ddbcf16d2123abce45b69fce5a79f6267d8dfd91576e6a9a72ff6ef4fb7d1c9740addd60ff9805494aafc54ecec7190723dd2cfa2132c2a91
424d29c4a9176d5a74843ae5231972f4504ea0462ebf84c5456b7522a51beed0341ac53e63e75826ae728856b0292c9be54a4977e7d110da9976a4c1
3bd926b6843dc72a90269c87ab7175035f5b2e178572b254572eee5725aa57d0f68b464e8e4b7575adbc4a3e1465c2b535bbee2f67a1aa294c38d394
2efcdcd88b9bedac3d4dd7f6f8a1bb11ba9e247a6c7ee8e27e14320db4ea03eddb12071c60680bd87d16c6361508d94d55ef7398fdf73475237e9853
11a644e23c460d2075420f22cadcc273575a9fecd1272bd9ca1258228a084f8bd3cebe2ffc5be3d3d2655f748630e4288461246f7bba8055d1298295
0a3a5e81a057f5d4aaaafa4c55114007e4759d41d2eb30d59054d929f652117313aed5c86518d7d37c7870fa3d5bacf3176e930f1598714b9e51a004
e802f482ea10ec4aa01a2fc42b2e255e75a95d9434f519619150a42c509708cb94656a991024520304d27070d38e90a06ba7ef423360946e8c7ea26e
aa7eae6e01cac75a28a72f835db3e2e8f5d1dfa3294782d14e74312da69dde65c5e758719d74b951073f377490a21a31816ff80ca58debc095369178
b98d94fb22f1d0805408755842b4adb589c4b960a56a3178824fd8b42b245c654d14ae5ea594355fa53de87cb6929d66efb252ba401accaad90df605
aba6fd69180da7fd2bd9936c2bb760b412bd1dfa3b9fdc8b6b35b90f243d3c2128f35cf46d165527885cf27b5bb9e8db7d06c3476a2d12f718021c51
8ede8ea71daf3b244d075a2d0586756207a403ddc0d66edebc9675a7efdde5f8dd65ef4b494d7f79a1b4e4859dd73efaf8f3a65d7ebdc31803577792
839e34bd4e0155b6820892551421531689034447b9de5e6e5a6a102519ac68c182cc921a1a2a5a7bdb55a7518cd090abe3d8597dba99c151b4a5f3b3
ad73d622f4439e482e0b990b03a944242a093228a28338a85d088260318ec4d138211e12e478255e17af774576a55d852c9a254c91e68873a479812b
e595ca8bf28b4ad4382dcc0d0e744367caf76b8d767123d94a0a58fb4851af0b57de1eb87afed5f7e97b94342e6f5ac55e282f7f41a80d2a7b8e4da1
c515e39b564997fff6f7b54785c79b6e952e5fbe0275abb90179f215d2442103d19ff894cb83aaea9174964b1835693c4849c64843e591864e8b3474
44d7126904127d14b1508b10a558f41e7d817e9b5e3f0e3465b646cbe2774db7ce35dd42656eb8cce30c4afa10a2ece47a488b3c03a4305942dd13c3
543d84a9065508a382c1803c41f5449e487ef534e2682b21c64c55124046f534e88c0655aff3bd0e3028c464b974dea799f52929bfad98adb5af89bc
219a9e7e2f0bb224a89823aa36b59d148b1ada4be825755193d5c1c26352a6ea51c708538569d26435472dc2f47791542c2d512b84722942217a01ad
85282373894c1511e9a2e8895e5455233187814374e8428d16b34b8c965cb24b71e9dcfa5835cee032bbcc19420f481353a5645d577dbaa1b731d99c
45b2e840c123f6953c52a69ca964ea3c3a8fbe8ffa98d163f698470b680f8cd9e649c264c815c74b39728e92a3cbd3e7a9798679642e2d12e6c33c71
b6b4405ea0ccd315e8e61b8b8dc5e612a114568aaba415fae70debcc9bc46de6d7cd4f721b92aaa7fc8fbaf5d4dde73c2a73fa357eb9c05631d4e877
9874f9ae4dbcc50bda124bc36dd49e228ca73a894548a938528b196e9421586f26bb83e51ab3d5551275d459e3aeb6ae093692600831e9758628d0d9
fbc6a3be9cbf845cf105b075d7ef34a29139adc58b56ae329e19c911c991c951c9aee4e8e498de099e084fa427cae3f2447b62b223b223b3a3b25dd9
d1d931d90905092b224a234ba34a5da5d12b62ca12bc09b713225ba6b64c6a999013991395e3ca892e882c882a7015442f895c12b5c4b5243aa46d94
ff30ed6675a7f1d02d1e63d1d4e8b6f9629070fc937d4bf35faaa9aeee5dbb72dfb9a6bb54f8e3a69c2323261e1ffb9fb785d44945e30baf1c4e1cdc
b474cfa4dc93dbdf3a612b5eddb9f39e8484461ea71d455a55ca76b4b94ed2dd130a35c6007d4d88634d4075f8a65062b3f50b31cabab02ccd9aa4dc
d15e3a5ce771f4e95bc947722297447a2301f16c891f1155aaa521568b80b826f0d00e6efcf18517fec84bd31f7a1c2c3a4f9a9bcf171dec51532324
9dbb79f31c1661585e2eab653fe3599b9bb70bb1a16466f335b8893c0c25bd3de1a484ae14cd25a6956a8d55ac09aee60191cd44fadbfb624074bd25
20b2b03bb72c3fdce24638dc12be24bc2cdc1b2ef903a316ecb4c028c61f18c1cd21af64bf71faf41bd9af0c796cc7b826f637f451f2c8ed62dabe0e
1dae5db870ad43873db1b1b82133b5d11e6ea41662258e45fc2c3e6a85d510b3bd46d2ad3157d34d6826894ee867b519fa466831514a4a2bb5eaeea3
9635d5cf4c41735f41b44d6e01dbabab7b1c5c74ae99349f5b74b0e90cd26dd72ea41d1c119efaa57e575e2eed437578f6c9650e3ff9fc781523b5ec
249c147862d16eeb4b742b25c76e2ad518e9b1901a5bb5718d33dc21e81c3a3248b005f4756a28d6696f6e38f17c49e71d5f1691d83ba220c21bf141
c4ed08a937e94d7b0bbd1dbdc3a58e4a922e49df51cd27f9345fc877e487ebc7cde4048ed6d2e67b41270a80a2115d118b1b0f192fbc39f5ccf8091f
4c6377d8199ad8f83955aa851d2b37d79885a7c61e3fd3a5cb81f61d6977aad240fa28fbb86ed3e1035bb9fd4d42f1fc19691d48c6789c92851a75bb
655a4a3699e55a5508c414522fe94c0186c176fe5646e5c9be8127fb83aacc5a9bbf04c8a8c35ca1cea629f475f4d4965b29369eef1df138b21d5e07
a6450e443282fa9c803b2d952b97f073d584c76812fbb0a6aaeac05bb2fda5ec2913d63526c187eb861cdbcb69cd46896391d606d20e3d913bd418a1
b7950406d504404dbcbb3aa1565f13f05658447c28d119fbc9369bab6fa2967bfac4a1eeba4f20d8654ee974948af64bda7bdb3fa043c116e1dedb88
87a95f546c282ac169a9b07d47f9c61d3b3696efa866ac2177dfd0a15b87fde970faa1457f6e6cfcf3a243e9d5c2c3ef5dbdfade99ab57bf619fb3af
2222dfe8d8feadb79f9c301ecd26cfd37b8c9fb087d317d334314fa36f17d47b3d0133954bcdd66ae32615c34d32845bc62c2dc4d1d43e83a7cf561b
faac43390e2dbe775b7d286323557b8b1224e6552f5a54beafa626f38d39274f0b954dbf13b66edb7abcb2a954b6376d9d98f71db7382771f105b82e
cfed3aa0273f2e1e24b59819ea4492d59a195e6fe4c114f7d3d9fa1cf4d592162b6b89e2c96a3cc49cbb5ed9fe15dfc73d78316f924d02d5912cd1e2
7b3d9aec3159d047654b395281745b927d4010806cffa5de4f032502791943c67ae2659b3e2480c8118ac3581ae182eaf0da508b42ac013a9d9c6dd5
05643b43d010bab5b0aa11a33eed9d6746c6f53b5a90cd09e3094c8ecd8e2d882d8bf5e2f976ec27b1cdb17aa494461b475b7add239cc347b8c4be27
96bd7ebc66d69c753b6b66cd5bbbb3a6a677d582857b61d5a2b93f7ccec9f8ea164e4661ebf697df7eada954cc393079fca2562ee20e02316fbc8f8b
b5bfcdc5eb2d5c3c9ce3f88b4378908f8eff868fb83067a3cfe2ccd1b42018b52050aeb1911a63358fd56d0143c1e6e8fbc05b738fbb77681129928b
95625db1be582d3614198b4dc5e6e280624bb1b5c8e60dbd1d6abdffbdd67d2fd70b37eedb5bbe61dfbe0db7a98dddbafd0ff61db5c22737cf9ebdf9
e57b67bedac2de63f5ec5b342fe96845ecb43bf76ca8a7958821b7d5bd3ce12db6badabc86be05b51168a7fb6916bb8d6fb35cbfde62ae3d7a9fbdfe
3412d3b4b856d2f81ddb7d0eafb0a6e69e5f13bab778bb5d4d0764754f1bcf46bf6931d8f759110dbb16bf5b1db026fcadd0da08cdebf643ffdbc697
b46077fa01ecdaba0fdac6ada0adc07423a9c5830885f7fc4a8feaea56efdb74a08d53c9dbf3cb8f3ea98281889d95247bec32c6ad5603949aabf5b5
8a2a63e89865e3264dd353f42197ce73a771383b705b2097279fb7bd274cc130306a40c72d7f442a1d5d11d8d909876dd673c79b0ea1284d9a20f1df
d4f3d1d79fc1d512c84d4f86c928980dc3a322757a41518747454566aa86c828cc454ae82ad15ee25815c26380388c01da45aa86a870850c0bd79915
9d3da66f3b8ed5a5faebdcb2a6a7b704053ff0a0c0d6126b9b79bea398fd590f49e091f574a7ea34388d9dd1b575347434f6d4f7547b1a7a1a0d2ee2
a2b1423bb59da17d60923dc9d13ea85d64bba8445762746c42895a6228319698f8efca541064553680114c608600b0402884413838c5087d425262ef
c4a7138b13972496257a136f278660503bf35e4c12a5fd2620bbdbbe7c4ea2fc256457a41dac1eb26becaa55e337f6aedbf1d3dfc79e7a76d2e9dc65
6b26eef5ec7df1d33f4f3a2cf63ed0aedd88119e01d1e6f62fadda72c4ed3e9e963666e8a0ecb880d8f2655bf7696f6ebba1b87d2f6d450b81118b59
d205c06e62a5b5ba52d58034460db0d8ccdc4268ce32c5ff16c7f7f21f6dfdeb3e5bcf3da43da827f797f169dc535ae93c5ac4560c2a7cebadcbdb4b
4ba5adec9d754dde5543366ffbab90b38ef6e2327e006dc468cd36d9494f8ff39e755aa3d25a7bb5116d93dd3004ad5496830b79ba4fa2aea7b49aa8
7cc7096ea202d1bff864bc35728aa707b889da5f5dfde8c13927dfa37fa147859d4db9dbb61daf148aee7af74d9a701b76f1dd8f22444e1373302ffb
06b3b891be2c6e24667123791637f25fcae2defe8d2c8e871c83aaacfcd7301bbf04f28b41fbfd81f03004d350a3ef0795415516df6d1e91fc8f933f
ea6996842021488a51d3d401c200290bf3bc278527a5916ab63a4398214d521760aeb70073bd52e125e14569a35a2bd44a7f16cec05fa40849d0832c
1a245567d063657408a110248649e1ba70bddde030f2bcde2d2440b41827c5c8314a9c2e01f3be6883db980e5dc5aeba749eed09fd214bf488999247
f6281e5d1fccf4fa1878a6378a8ca2a3846c71a8344c1ea664eb86eb47a8230d13481e9d284c8589e25469aa3c5599a1cf354c36e69be790397481b0
18e68b8ba58572b1bc502956e6eb16e88bf545ea5cc36263a9b052c2cc8f6ca21b850db0457c59e2ef135ed279922a8cdbcc3bc94e5a2954c25e71af
b45bdeadecd5551a5f37ff4938086f89c7a46afddbe63ae1149c17df971668596338e57fd46da0ee51d55fdcb8f2c58d6af6d1957f7c7f45cc69ac80
a9bcdcf54245e35494919ee84317a08c18e8a39e2cc92a2bb2680551e195245281825540b6f3afca54ab5ea5bc32a828327a2b0a4ca6aa8814f3e9b7
3137f7b5049d6c6c1190003fff35514121907d21aa9d87a8b2eff7d33a6b8b4cfc339178f0c51df5bca88aa21a263ad478f561f12175a4f884325a9d
a4cea50bc5b9ca6c75adb84c7d49dc266e525e50cbd49d74b7f8bab843794df5aa4e15440975c08039bfe4d0871912215e8ad3b737b84c98544337a9
8bc2b3fc64d300c892faea071a3ca631c8e531c21878421a258f5146e946e9c718b24df9a6f9b4d8f432dda8eca5954a95e92fa64f4ccda624fe4b9f
c053752d6b17f3d834bae70a3bca8e5ea16fb0595768224d14739a3e693a49ab597f61a010c466d2755c4fab9a1a215b7901fd21a1ee344ba02d3095
8060a585a7bc47777a59e3d8e2a6c6af6113fd5c48a6d0f49faca4e956e3b7be79b45059c5bf07e2b15dd5a953caaa1f0b3588cc8e105769101d1ca0
5b408069b4b078c339ef07655e65d5d78d3bd813ccce26d2bef496100f295f3f88499a05153ed51668b33a84b5c56359a377e7512f472450b0a1309b
5953d3072cb2f199afd107acd5e6ade25f23f1982a19b23554b46fbddae292ea40a0e0e64085b5a7bc651f78cf6d28e6c85c68fa88d9d831fa32ada7
bbe1497f0e24edc1d83696f4f704c66b298f313ac414a9b31aa32df6c171da6f21dc6e5b3278a653f710f158f526eb6e9b10564a4236c951b65a4340
52c61729292ce3560aa63f29c9f7a53cf7d21ecdb42bfc01f73cd29e961c88295a1a74a06a42423cfde5be7ca825277aa95dbb29137cb95112ffb240
e6df498692611e6b781609d60505d8459d0e82547970d83d7c5906c65b1e9b0e9d91a5d41c723ce8a079939ed44a94637b8b69bf80a6600ad08c497c
1926f3162d95ff35ca883152d42d0ef461baff8d1a8ef92f35353c4368c1f1cdd739d2f4d0577e9afa71ece60908cec2bc4335ea7416d1661e1cc4f1
f3a1c7b1c3ec72b75ec494d3aaaf35091c31a661a5bdd48e7e307144692635eccd7bb9234f32a4a207b2475c5d6ec4d5f91775633d214959c11d74ed
2de10e5d587b3d899275b191fa98f8c19def11aa2e855f1b3572058747b977c75a310bee74bcfd410bd914a4c4d6864644f3342425051d67bda53e05
ff7c5cf673b35bd76eada46ae1799b745742eaf19497b3f70967c290e5c8eac784439c9c7ede035213d9eee3f24867600227660b715bb626b4d056db
5d3c19e2096a9765d2598242ec3a8b9e7f66101dae8f720f4e68b3336d639a1884385dbba3ad42a9317e934389ae0d088bf46de94ec6aff7d335f501
f23f90bbfbe4b42d2ffcfb68ddc3deb67c69e50dff165118f3921c1f61793a20e30712a5d33e3afcf039f3f596faa7bf350e368fd1f32fa875addf28
e23c653a8b20c4cc7efa5bc350f3985f7dd5d84bbca07dff47044cbc85d5185b47912a2c95c23592270593522c9f61a9c0b2050bbfb715cb3a2cbbb0
acc6b25438446ecb87c9156931392b55904239116b33392b6ec632879c956f62194c0ac58b787f1ddebbd6dca07c45fa88374991642747c5496426d6
33c57a3253f89024f1b66423478574729217e51c39cafbe20d6ddc51fe1c0662bf03c90737e986cf0ec8abc928b196f4441357757f11d6f25a5a8e70
555f516a511a92eea3437732812c20c7c8f77426bd21640aaf62c43c148e8ab1e2a398c3d64a8a9425cd91b64927e55ef272f994f298b242375ef7b3
be8f7e9dfe23b5afba5efdc1f090618c619ee1ef46bb71a2f1b0094c3d4c5f99e3cd5302165a322ca32d132d4b2daf58f65ace58a3ac33ad55b628db
731a377ac308d2814c2146d4400b7989734f740841588bda77a263f9b76fa21e9999ac7d5bcadb940461cfd716888e66f9dbd0e6bed8a62d91103ac4
df96899d4e228f927c52803b9e459e219371f5d9da97b21350f75d248524e3998aadf138c2453271cc6c52886516994872c974d211ef0e2033707c67
6c3d429ec5d34586b5c22ad47a13b19e8873e6e2350f47aaffc2aa5d5b571d812bcdc5b5f8d78e337034c72317e7fccf56ec83ada9386f1499832326
e0d85c0dda446d46aeb62317429981d7021c331ee13e83e35c383f1f57cfd59e3d0867b806a51031cac7731adee5ab16e2d87c0d520aae9d4ad2ee9b
d53247f0095cf373da37debf3e7a6b7acbffcb807fed6ff1ff6742cbb7fd2dff9de0227168c912907a8928c9c9da7f28a4613ed503d7ec4bb2483fd2
1f3934900cd2be7ace2643913ac391a623119727c8685cfb491e3d52914a54a60ad59113e424d553559933e39994d4f44c7ffda8bfeee3affbfaeb2c
5ffd4832afb332935bea547fdda55a58e269becba0c10ebfc4c1cf29f05305fc68861f18dc61f09f71f0bd19fe5101b7e3e0bbe71f91be6370ab02be
ad80fa06f8a601be66f0550ff832136e32f822056e5c1f2edda880eb38f0fa70b8f6799274ad013e4f82cf187ccae09314f80f3b7c5c0157197c6483
7f5f0c578ec1df19fc0d87ff6d315cbed44fbabc182ef5838b7f0d972e32f86b387cc8e003067f61f06706172ae0fcb948e93c837391f07e0a9c6570
7a85553aed847783a08ec12906ef3038c9e00483b7191c67f016835a06c7181cb5424d499c54c3a0fa4d8cf319bc79649cf4e63178738978e44f71d2
91719e6638e211ff14078719bc510187181c6450c5e0750607f260bf19f6ed8d93f6e5c1de3d36696f1cecb1c16e447a7703ec62f047063b19ecb041
2583d7b69ba5d75260bb195ecd032f0ef156c036065b5f31626e0baf1861cbcba1d2963c7879b3457a3914365be025155e64b0a9c2246d6250618272
9c545e011b3798a58ded6083195e6880f565c7a4f50ccad68d93ca8e41d91271dd1fe2a475e3609d47fc431cac65b0667567690d83d59de179dce6f3
8fc0aa950669951d561aa0146f94e6410952aa240e5658e1f70c962fb34acb192cb3c252064b181433f0343fb778b1f41c83c58b61511e148d704845
71b090c10206f3cd30cf08735598c36076031436c0ac0698d900050cf219cc60f06c344c6330d59a294d1d0ecf3098b2182663671283890cf2184c60
309e416e0fc86980a78c308ec1930cc63218335a95c634c068159e080a959e4881510c46e2ca233361840386538b343c0486d961e8c0406928836c03
3cce60c86316690883c72c3098c1207c3288c1c001166960200c883049032cd0df04fd18645540df0ae8c3e051a193f46803641e8347068187416f06
bd1eb649bdecf0704680f4b00d327a9aa40c4f7300f434410f06e90cba77b34bdd1ba05b578bd4cd0e5dd30c52570ba419a04b24a49a20e5218394c2
e02103242719a464132419a07327bdd4d9029df4d031053ab48f933ae441fb449bd43e0e126dd02e214e6af70824c4417c9c418a0f803803c4327033
88098068dc67b40d5c7910d50091b885c83c8830811329e86410de006199108a9d50062179108c940a661084938242c1c1c0ce2090810d07d818e698
9d246b26581643401e9819988c4192898111471b83c0c040b5809e810e87e918287690f340c487224a8003f02e30cc552c92d009a80508035a4df356
aca51dfe7f38c8ff6d04fecb23e2ff00352c5a9a
>}
\immediate\pdfobj useobjnum \csname EF4O13\endcsname {[ 32 [ 318 ] 40 [ 390 390 ] 45 [ 361 ] 48 [ 636 636 636 636 636 636 636 636 636 636 337 ] 61 [ 838 ] 66 [ 686 698 770 632 575 ] 76 [ 557 863 ] 79 [ 787 603 ] 82 [ 695 ] 84 [ 611 ] 87 [ 989 ] 97 [ 613 635 550 635 615 352 635 634 278 ] 108 [ 278 974 634 612 635 635 411 521 392 634 ] 119 [ 818 592 ] ]}
\immediate\pdfobj useobjnum \csname EF4O14\endcsname stream attr{ /Filter [/ASCIIHexDecode /FlateDecode]}{
789ca5ce570a03400800d107e9bdf7de7bbfffddb22c218490bf080e3aa2c89f91f8ea9352af2a1d99919593575054525651558b93fa7ba7a1a9a5ad
f371a7ab17d837081c861c453bfef1c1c4d4ccdcc2d2cada26daad9dbd83a393b38bab5bb0778f270a2c052c
>}
\immediate\pdfobj useobjnum \csname EF4O15\endcsname stream attr{ /Filter [/ASCIIHexDecode /FlateDecode]}{
789c5d53cb6e833010bcf3153eb68788f0b213092155e985431f2aed09e540f0122115830c39f0f7c51e0352916034bbde197b59fb97fc3557edc4fc
4fddd7054dac6995d434f60f5d13bbd1bd555e1032d9d69363f65b77d5e0f94b71318f1375b96a7a2f4d99ffb524c749cfece945f6377af61863fe87
96a45b75674f3f9702a1e2310cbfd4919ad8d1cb3226a959e4deaae1bdea88f9b6f890cb25df4ef36129db577ccf03b1d0f2005baa7b49e350d5a42b
75272f3d2e4fc6d26679328f94fc970f6294dd9a6d7d68d6034ae0d5864f089f5d78a3c84a50e9b212e1085a5185f04a0340088800312001708000c0
2c3aaf6a108767e43c23e7194336e608af142e315ce2645d83921ad429ed14d9c6d2c43564a7369b40df4009441846064a20c23890811268c31c1de1
aeb92b8534c7d6391439b6ced1200e3d7e5aeb21870388c4c98172f48b1300c710f825028e028e028e225e65acaa8097384175a7573358eb04991933
17621be0faa1f532bbf6d6d8a135e3da2ada2ed6d00fa6cabc7ff06fe5e6
>}
\immediate\pdfobj useobjnum \csname EF4O16\endcsname {<< /M0 \csname EF4O17\endcsname\space 0 R /M1 \csname EF4O18\endcsname\space 0 R >>}
\immediate\pdfobj useobjnum \csname EF4O17\endcsname stream attr{/Type /XObject /Subtype /Form /BBox [ -6.93649167 -6.93649167 6.93649167 6.93649167 ] /Filter [/ASCIIHexDecode /FlateDecode]}{
789c3354c8e23250f0e2e2d235d4b3343633b1345240b072b9b008e6200491c474b109667071397101009d5e12c9
>}
\immediate\pdfobj useobjnum \csname EF4O18\endcsname stream attr{/Type /XObject /Subtype /Form /BBox [ -7.12132034 -7.12132034 7.12132034 7.12132034 ] /Filter [/ASCIIHexDecode /FlateDecode]}{
789c6d90310ec3300845774ec105be05546e9db563af9125aa94fbaf71ab802de1c5c21f78c057fe92f087fa032b6afa303e494a7d5a6d3a342d2aa6
5b656869dbab49edd2ff5779a7d01075de8860b922bd21624f4ed41bf0233b3da4bb68c1f468a73c11998081479e8db41fe6ed918fc3c2024c0e6161
2056748c430ea2375dab904deb
>}
\immediate\pdfobj useobjnum \csname EF4O19\endcsname {<< /A1 << /Type /ExtGState /CA 0 /ca 1 >> >>}
\immediate\pdfobj useobjnum \csname EF4O20\endcsname {<<  >>}
\immediate\pdfobj useobjnum \csname EF4O21\endcsname {<<  >>}
\immediate\pdfobj useobjnum \csname EF4O22\endcsname stream attr{/Type /XObject /Subtype /Form /FormType 1 /BBox [0 0 306 327.6] /Resources << /Font \csname EF4O1\endcsname\space 0 R /XObject \csname EF4O16\endcsname\space 0 R /ExtGState \csname EF4O19\endcsname\space 0 R /Pattern \csname EF4O20\endcsname\space 0 R /Shading \csname EF4O21\endcsname\space 0 R /ProcSet [ /PDF /Text /ImageB /ImageC /ImageI ] >> /Filter [/ASCIIHexDecode /FlateDecode]}{
78dad55b4b731cb7119ef3fc8a39920741e8c63b55394496c52a575c896856f960e5a090142985a42cd289fcf3fd35667607b35cec82a2ca54482dc9
e9c1f403e86e7cdd187d1a68d0f8a6e199967fa7d7fdf397e7ff7b7f7a7e7cf462f8eea7f2eaf4aea7e1033e1778e0033e9ff1d8113e17bdb0b8ee8d
f6f87d957f1b0ecae36fbdfeebb2efdff59f064a2a26fc348aa31d38b0327620226570797b3efc3cdc0ccfff06a677835649bb18d9384e72618982b7
41fb1e17ce47cfc13a1a6e459fa3fae8616374df93b68a7011e3c0cea9648283fa9cb4721c222ed6d4ab05d5b2d2780ed492c34ccd26be1ebe3d23ad
56d6df3372455d1ac95e397bcfc899fa202379f8010e0267810b7cc6cfacac1860615e485e87680d59dc89b4527f383e1a36dc2e4b12091469328234
1b6bd7c4abbeff296b853b3669caac71c1da24ceace502b24c606793cca7e84559abf989befec4f1513beffe532f71f54c026b521ed364301001f6e2
44e2457999ed93b3fef92b1a82e2e1e45d7fd0bd3c1c4e3ef44eb9c8de92d72411854107dd79be63b162da5a88c5a24c777e9dee78cdda4133e2d59d
dff21d56d1b277647c58dfb99c9e71c9a6a08336b4ba334ccf186d9d77965318ef3c7fc5b39a7779d0f727fdeb85a530d507091bb1168f1b6237593c
5a69e5713d32fc6538e8f870f8d770f2c37d4ece2bf6c93e8497adf14a980098cfedac7c85157392f98f0fe1156bbcbc5181f8010652a7ab268ef111
54a2bd2ef6735ebba0c8471712152ef671e562c6791f93756675e776728a1435748e49af9f39ab3a5fcd910ebaabe90eb22647466a6970f29baac35e
fcf9eeffa9e2feacbc15d79085880dae81254df518b049b1899493722347ee7467ab2e87e95414827f808a0e2a72351aac0ac1641eadfcc0ad25221a
b999fae48d11e15939bacf8737f8bc3a1c02298fbd2e1af902e97d7783cf6fdd793774d7dddbee775c5d77ffedae6b02e7dc5711baa97ca8e63ea312
bb1cbbadbc4c470d6bdec84cd628b4ac79233ff1a19a7aec92d2ec1ec06d74f1b07bdd6d50daee5ff7efb0bc3779916f4b7e5ae9d52e2fc8c49430c5
cd884b20596d64bf1899a14323cf8a53552cfa22a76ae4d5e6548dcc9a9daa915fa3533572dbe254ed38b23ab2ff9af8d1907261bf4bbf189d998232
9ea62f906fba8f485e57f87e8f8476878476ddfd8aabf36fddef2b667f91df37f26af3fb4666cd7edfc8afd1ef1bb9fd1ff83d79947afbfdfea4bb84
839f77cff0fd239cfd164e7ebe7fafae70ffa23ae5b1bc8a3ae591ac1675ca637979af34a0f057b02fafa7465962f6afe73f0f078bb248030aca5704
e92d3216925bf70ecb7bd59d658876b17f892b029bcbc772891fcbab58e247b25a2cf16379ad97f84b184d7dacb22da8a7b6a0734ac7c158a4fd38b5
6dc2eaf26aba242fad278beb71f47cbd6e342dd8b071caf3400c6f64bfd96722458fe833a100716c92057b0734eaad4b0cbdb792d70da7a7d0ce6227
a564a4a1477004ef9cf35490ff7ce588c475e441f08f4a074b3a887a2ea8e034055bd29f563f722a728cde8a7ec12967181b50497f5afd34163a1acf
d2eaa414957556a3282de84fa0dfec66899431d18424214d606b036af582fc14da5965c90768178081122085cfda49930957a9203f817606b031790b
c742a68d98ae64443bb81e52afa592fc04da012263a2b0630e2e77d1bc9b94b3de23e1ccd427d0cd3b3895f1c8b7969571c1eab8546e26cf2700c3c5
74e610a7bd0894227baef7a382c4d39ed4bfc0b6f8b9dc1c3d294e40128961028c4508bac5eeb809f897601a2a94b1bd965dd2eac24923fa49dbe44c
4cae41badd227dcec0b3f482b643ba78460a10af81c35a6c5f76ce47f9ac2d408fe15400819954972e60c44a7bda2131c706e97e9bf5ec4825ac9c1c
49aec517b41df22dc233f94031702ec0f7c9671467be8a475108238d4be92849c9d90d88e50a6699db4be0dbbbeeb78c724ff17be83e4b5983d2de2b
bffa0ad281c7b0a17b73004cfcb1bb004abecd8dcb4b54f8eff1a454fba7a0a2d6efde1c42bd7e394113ea2af10e967d49da363d981d84bbd4d5d2a8
4e3a1acb8c5daa3e3ffafeea4cc20b3833c92e91cc2ed12660328c4ec991f50f8b8b4976095626e10b9cb24b3a230d5188ce23a1d10ee9b12abd8422
93f4050ab92f5d4f750d915114d863aa501fed0c4aae8a2f91c6247e01327688477422259a40463ce4813961145f028951fa0243d485c7a87cc49e85
49c243bb22b2ea73254c18652f10425d76f0ca1a1f23ac66b35376d5e74a1030ca5eecff75d9de024790b7463a283b649bfa92175bfc28badcddeb92
256302bf686da3b3bbe2ac3ee3e5063e8a5eecdd75d926a9a0111621f8b033fd9b65f2cd6fb5083b0014adf3395a64e0062a4e48e58cb13c2375e301
df3fa663f804dc8db9967a783a1f7c379d0f622e4cd4f002bbffce785a6ec02dd8e569e3eabc1351cc2ec442ceeaec3261cf080609e1fe0965b49376
e509e564c07fe613ca46c03623a65ff26f4422633af570d6da12bc87de0a9836efbd336daef427e0269871066d4b5e1bad84cd5e0138ec6e36b42357
5d6caf62e0884d75927743687ef966f10249e58961f1c4f151df3872dd08cd007442a9d278b6c5ae3c6eb6c3f31ff5f0f2e37a348207f802d8852d03
bad82d436c50d83c88bcfcdc3f5c7f952184ec4ad169ecd526ec1ffeccc9d977482945ecebfbb91b154292c8d2d88cf60f8717901cef27d3329ae11e
483f1c59ef98afd7dfe6eb47f2e696d556338bdb20648c8a46e33967ca44ee55d876d2b0ebf4fb9b392e92d70524e1268362dac82e99086be5c27297
f40b903d9aa8eb6f238029e2130548708834849d0f2807382d37c06d4c779cd02c99ca0b782031f6dff0309e0f48eaa38f352ec043f20e9541020483
6ce229b1f6716b902c860bc406a02003adecfee1847235b273c441a706ee2c1bbbd18151943470c7da2623518121fb87638b0a641d6cf566ff684c23
6372937352eb373047dd8bb29fa41fb37f747dc82a1d356e385b37b37ed766d6cebbefe1602a4a1f414abb28a046671c4851401e802f97f4ab051d8e
07609472195af229e997821ff570b1eef62c0e5d00f3223288d4a93a915b445adc88b4bfe7025b47eb4603403a1768b7499482fc1cb5f7c5f294319f
82b4e5b0e3f18c04364515a409286d4217e128b9ad8f72c4207f452ae942052b975ca646f1f141268b470649317c0780f6749c426c10da65bad71171
294459a6f1080153ce88be2531940c56645228f5e0f243c915d44d0580e867050a5d419f1428cc02311a26b9bf9e8299769a27069f8840588c0db306
33dbb055033f6bb0561644cf3a7be26c56412ca760415e4d57c1759ed8b5028b3528749d97ab30ab5cd9fb5e70dae8d77822b1cb227778f631f6f08b
eeb2e6b9dfccb9bf246f147d92f243d958888b33e26cd21b39147e73088386eefbee13008af4b99ee55ed70d0c942ed8dc370ba28d49c8f2264d7db3
bfc85c6cdc02dfdfc167ecb9bd5bbc08789adfa29137c6eeca46da3de5c5cb919263b5d08cab3749df1c4cb5216e23d652f196ecbfa71a948d8f24cd
caf5dbabd2c6ab3cb42a0e3dc314a9b2ecfe3b63b10b949846bad978b37c9b0af53b1fa73bda061b9d71f3ebc57753b10b08c5888c306bf07ed28d31
4b6325bc2c8361a9013ba070ecfe8bd21972021c0ad1386b5db7f474f5daaf411e8f3e50daaff5cdca52c6faa5e09d7f8c06f567ceaa337a51bd53e7
569febba6e6f57dc907b2c6163e5fdab50e776f6055ad7bdaaaec165757d6aba49a3242eff2783747a802e83f46056838a28fd6b1e84bc4c29a660fd
2c3d167d96be7ffd07920a9f1d
>}
\expandafter\gdef\csname EFWidth4\endcsname{306}
\expandafter\gdef\csname EFHeight4\endcsname{327.6}
\pdfobj reserveobjnum
\expandafter\xdef\csname EF5O1\endcsname{\the\pdflastobj}
\pdfobj reserveobjnum
\expandafter\xdef\csname EF5O2\endcsname{\the\pdflastobj}
\pdfobj reserveobjnum
\expandafter\xdef\csname EF5O3\endcsname{\the\pdflastobj}
\pdfobj reserveobjnum
\expandafter\xdef\csname EF5O4\endcsname{\the\pdflastobj}
\pdfobj reserveobjnum
\expandafter\xdef\csname EF5O5\endcsname{\the\pdflastobj}
\pdfobj reserveobjnum
\expandafter\xdef\csname EF5O6\endcsname{\the\pdflastobj}
\pdfobj reserveobjnum
\expandafter\xdef\csname EF5O7\endcsname{\the\pdflastobj}
\pdfobj reserveobjnum
\expandafter\xdef\csname EF5O8\endcsname{\the\pdflastobj}
\pdfobj reserveobjnum
\expandafter\xdef\csname EF5O9\endcsname{\the\pdflastobj}
\pdfobj reserveobjnum
\expandafter\xdef\csname EF5O10\endcsname{\the\pdflastobj}
\pdfobj reserveobjnum
\expandafter\xdef\csname EF5O11\endcsname{\the\pdflastobj}
\pdfobj reserveobjnum
\expandafter\xdef\csname EF5O12\endcsname{\the\pdflastobj}
\pdfobj reserveobjnum
\expandafter\xdef\csname EF5O13\endcsname{\the\pdflastobj}
\pdfobj reserveobjnum
\expandafter\xdef\csname EF5O14\endcsname{\the\pdflastobj}
\pdfobj reserveobjnum
\expandafter\xdef\csname EF5O15\endcsname{\the\pdflastobj}
\pdfobj reserveobjnum
\expandafter\xdef\csname EF5O16\endcsname{\the\pdflastobj}
\pdfobj reserveobjnum
\expandafter\xdef\csname EF5O17\endcsname{\the\pdflastobj}
\pdfobj reserveobjnum
\expandafter\xdef\csname EF5O18\endcsname{\the\pdflastobj}
\pdfobj reserveobjnum
\expandafter\xdef\csname EF5O19\endcsname{\the\pdflastobj}
\pdfobj reserveobjnum
\expandafter\xdef\csname EF5O20\endcsname{\the\pdflastobj}
\immediate\pdfobj useobjnum \csname EF5O1\endcsname {<< /F2 \csname EF5O2\endcsname\space 0 R /F1 \csname EF5O9\endcsname\space 0 R >>}
\immediate\pdfobj useobjnum \csname EF5O2\endcsname {<< /Type /Font /Subtype /Type0 /BaseFont /GCWXDV+DejaVuSans-Oblique /Encoding /Identity-H /DescendantFonts [ \csname EF5O3\endcsname\space 0 R ] /ToUnicode \csname EF5O8\endcsname\space 0 R >>}
\immediate\pdfobj useobjnum \csname EF5O3\endcsname {<< /Type /Font /Subtype /CIDFontType2 /BaseFont /GCWXDV+DejaVuSans-Oblique /CIDSystemInfo << /Registry <41646f6265> /Ordering <4964656e74697479> /Supplement 0 >> /FontDescriptor \csname EF5O4\endcsname\space 0 R /W \csname EF5O6\endcsname\space 0 R /CIDToGIDMap \csname EF5O7\endcsname\space 0 R >>}
\immediate\pdfobj useobjnum \csname EF5O4\endcsname {<< /Type /FontDescriptor /FontName /GCWXDV+DejaVuSans-Oblique /Flags 96 /FontBBox [ -1016 -351 1660 1068 ] /Ascent 929 /Descent -236 /CapHeight 0 /XHeight 0 /ItalicAngle 0 /StemV 0 /FontFile2 \csname EF5O5\endcsname\space 0 R /MaxWidth 634 >>}
\immediate\pdfobj useobjnum \csname EF5O5\endcsname stream attr{/Length1 3908 /Filter [/ASCIIHexDecode /FlateDecode]}{
789cb5567b7094d5153ff79eef7edf97cd66b3bb2c7998809bc7262004700381442a0b84100c8140164884956cb2793fc9833c3088250c05449e8d42
43a18a0ad4da95522714ebd05a076784693b4ad5a98e3ae2b49d42a533a8d394dcedf9360b52a6ced43fbc77efbde7775ef7dc7bcfdefb0103001b75
0a5817e72f2a8071a003b064e2c62d2e595e9abf716917e199849d8b4b572dc83c3eef75c25ec283cb4ba7bb1bceb72b84af125e5dd5ec6fd376e9ff
06e0f9840f566dec741e391acc20fc1ee934d5b4d536ffa5e19fbfa3c90cf9ae5a7f471b685441f90361736d536f4de1489f21ff00002bebaafd016d
f56fed00ba87e43975c4309f141708f7104eaf6beeec49fe3dfb1ee12123bea6d62a3f3fcfae117e95b0bdd9dfd386cd6a1461c3bfb3c5df5c3d2d67
613be1eb1493b5adb5a3b3c6d37711209ac430a7adbdba6dff3e9e4a780ec56006636fcc30563821a4580d9ed14c900c1381e5172cf5d28e29864a28
14d11d93ef8783e008cbfbfcedfe4a18f0b737b7c04065bbbf1e06aafc2d1dd4d755b753dfdbde0403b5d5ad44d7b65737c2409dbf8574eaaa2b89d3
e86ff1c34093bfd569f49de4a1d9df5907032d8d06a7b5d6df0c03ed5d2da4d959d3524b7d9de1ffae8822712916b6170480c8168720874d34c6d022
7c176a38ed328f56115189e686ede3771a96d42c0ad0aa1af18cea900e76586b669f46d6395630d292c32b07f0126261acc09c304fa71de4d0082dd0
8767c271dda64387424f859eb8597a87bf5bfbee892025ac3fd60cba25d2682dd017692a857066ac1939c5d2d8741886b7a8fd064ec1057e018ec16c
aa2be135b6836791e479d80eef0b0e676088e53207cb25e9dbaa43ed15dbc40992af23db42f2f23e1c209f86a761788c6fe225500317c425384cb535
ccbf0ebf625be1323c0d6ff142f812b6a217f6503d0c1d0a88cbcc04924f81e3c64c60e46a25b50998252e87eb75780c36d19e1d578755077b3b1cf5
f3ec75fa57fd8d627e1bd7e106b21882136c9b92a69c500a61cf58bc58017bf8567658a908d74d7486dd30a454b053aa03de3062254e09455a03af52
eb864bec01b60d7750649b8c08c465b8a43da44c1f8b4aebc759b41ea0f6129c862c1c24fbf05ad41a18e23534d79714c925cc87c9e4ad03c041b700
8c7f45150a7206539dd620772d09043d2bca9c6f96a7644dbd0b3aad9a330825c1985ee770285452a62489f2a0480ea24b0f2aaeb44fbe49f849d6d4
a2923267f0178bf2235e1755e413afb48c4803119bf88bf2b3c6fea9946b946146061652ba7c467b61824f3c85aad0744d0845e8ba869315a1f0c98c
138e1260d24906262e909bc0c428ad4c79bae0889b615d941aa59b54c19942a9e732a5475b6f5cb9688bcfb5e7e6de0fd3e75ebd129f3b8342d1ace2
9a66d5af69f48b50d6afc7f2d4d3bac298cff3a0e0713c879772aff0ea75a25edfc20f70931975254fcdd33ce6322c136bb4357abd68521bb45eec55
7bb56dfc49dca1eed40ea3c3073e5614b47bcb7e0dd1a1f3b4c268aa9e39b6d973ca5936b2ec284a7b4c63ddc341defdf9a887f38f646a0365d5cd2e
f6eee85f478779dae887917dd94ef992082f798a758d0e50a0203e178c418cb0c45a4cd13cca2c622de3ecdc61a3d12a62632d7966b4018b5aeae88b
dd1cc362622d106f35abc8852d9665c6c7de6749bfc77ae3f4510bf35d79e7ccfaf879f1dc77d5961b9f6bcbb5d9c39b75e506612ab7374c5ca39d8c
f4e1ed1a6bf1b9e5a9afcc489a97b4370999ef744512f3b1d3478d1ed3c63dc866cdcc484b5535a2b3b50759b63b6ebc4355bab69edd1a656a587ff0
4f299b86df1f7615bfd77df0d9287e7c741d3f32bd70bcf791374e8e1e502a82fe8663fbd257d205111aa15d28532a200a3a3c13f4291c394ea114c9
d35b6133b68a289df21a5413534dd6778a82666f51d0e25d5b14b41a9dcdbbb6ec2ca8a1f373ca8d64a0c55dbd71f5eb3c089f780ffd297c1ec736dc
aeedd4f7e1215df4f06e6d17dfaa293ee61b97a2b1142d65364bf90736a4dcfc251f9e70b323058b461b958a1337ff7ce004ba28978b439f2a5d7443
24c3340a99964d8715171f875c53d3523333c6e5ccce999d31db9ee28e67aa46353e2edb4dac9ccc8ccc0c654dd9e7a3f17608bd59d19933f38513d7
3f6f6cccc93cb960e5eee459ecec870da3af2e2968ca9c34cff3c3034dcdae8cc2e247e7095b41d7ba19aeb6edf2eff2aba1234786d8b88bcf79d297
ef66d6023969c2920d17df6aefe8ec7ced42456db47aebade3e587123e70df581f3bf70bb8570f5fd87fdc6cf9e8d6f8d5c7a3d93105faaef03ff3f6
5b44f74db39c0010b3edab8fff7536a60002f4ed7167d115e3de32dc9fa2b68bceea192814762854b24323ca67507c87ee132c96f9d9cb9178745c06
53a08ede100e567a43e835400b8f89bcd31a3c6cdc0d0a7d17b019e177cba019c4111aa3395858418446c860de08addc410b48607d115a8574b61f16
d2bdd906bdd00ef5504bb3778213264115dd954e70c30caad9445592861316904e27dda09da45d0d7e6886a9c45d426f5a159db313e643135527bd55
b77c758451358dd564b391fa00699afe8f59736ecfeaa59936d25c0d64d342da461c7eb2f97633e613d54076aba18b34aa48d71ff6561db6f08757e4
242f2dd4b7914e25f9ad273d27d9b7d2ecfeb0ec6e3fa5612f1db03ca2bf81b8d5dfa0e3bc4b6b7538c20ec2ade159dd146736ccfa2feb5bb65977d9
72fa4afa82da66ca8aff55f4704e713ae965b0822e5f2e02d54d9dfe61bec5137a5962d0853f77e34b83f8330bbed861112fbaf1a7124fb9f0a4054f
b8f085417c7e049f1bc1e3129fcdc36724fec48dc78e968a638378b478be385a8a3f76e311070e0de28f4c7858e2213b3edd8f4f9dc341890749e360
3f1e90b87fdf62b1bf1ff72dc6bd7b92c45e897b92f04989bb253e217197c49d3b268a9d12774cc41fb871bbc48138dc2af1fb121f97b845e2631237
4bec2f7289fe003e2a71930dfb7acf893e89bd3d3ed17b0e7bb7283ddd2ed1e3c31e8fd2edc28d12bb06b133801d166cdfe012ed01dcd066171b5cd8
66c7560aab75045b3c2189cd129b2436c661437d9e6808603dcd519f8775cba2455d02d6d65844ad1b6b2c581dc000990506b14a62a5df2c2a25facd
58b13e51540470fd2356b13e111fb1a2cf84ebd6c6887512d7c6e0c364f1f02096975944f9242cb3e09a115cbdea9c582d7195d727569dc3555b146f
a94b787de8f528a52e5c297145c934b14262c9345c4e412c77e0b2682ca6a88ae7e3521a964a2c7ac8268a5cf8900d97482c5c6c13851217dbb040e2
2289f912172ee8170b252ee8c7f9123d23386f041f1cc1b9390bc45c890fbc897944e59562aef4b4e19c7e9c4d3047c912390b7096c49912b3f3d03d
8233cc385d6296c4a912a79078cafd789f1527a3554c4ec349133133c3223203986141173309971bd3cd0922bd1fd3449e4893984a28f51ca6907e4a
123aef8d16ce58a42fa3f39ec3cabdd138310a277a9409564c26f5e4414c1ac47b125de29e002626d845a20b13ec181fe712f1f331ce85e3253a248e
1b41bb2d51d825dac8ab2d11ad1263255ac88365106368c2987e34479b853901a3cd6892a893481f4495d45589825621f25021a464215ae99bcf2478
023213328f02c9c8865960db6e36e5bb2df01dfbffb665c27f0073675542
>}
\immediate\pdfobj useobjnum \csname EF5O6\endcsname {[ 107 [ 579 ] 110 [ 634 ] 122 [ 525 ] 948 [ 612 ] ]}
\immediate\pdfobj useobjnum \csname EF5O7\endcsname stream attr{ /Filter [/ASCIIHexDecode /FlateDecode]}{
789c6360189680054cb2e29065a39f4346c1281805a360148c70c00e00693c0017
>}
\immediate\pdfobj useobjnum \csname EF5O8\endcsname stream attr{ /Filter [/ASCIIHexDecode /FlateDecode]}{
789c5d50bb6ac43010ecf5155b5e8a43779723698c215c1a1779102795b94296c646104b42960bff7df4b838108134ecec8e7667f9a5796e8c0ec4df
bd952d020dda288fd92e5e827a8cdab0e3899496e116e5574ec2311ec5ed3a074c8d192cab2ae21f313907bfd2ee49d91e778c88f89b57f0da8cb4fb
bab4856a17e7be31c1043ab0ba2685217ef722dcab98403c8bf78d8a791dd67d94fd557cae0e74caf1b18c24adc2ec8484176604ab0ef1d4540df1d4
0c46fdcb9f8baa1fb6f2873e9617e80a5e338d42e346a3d08f22d309ba8289beefcf89ced015bca6e6bf6dd21c69699b49b9781ffde5cd6663c99236
d896efac4baa747f00837d8403
>}
\immediate\pdfobj useobjnum \csname EF5O9\endcsname {<< /Type /Font /Subtype /Type0 /BaseFont /BMQQDV+DejaVuSans /Encoding /Identity-H /DescendantFonts [ \csname EF5O10\endcsname\space 0 R ] /ToUnicode \csname EF5O15\endcsname\space 0 R >>}
\immediate\pdfobj useobjnum \csname EF5O10\endcsname {<< /Type /Font /Subtype /CIDFontType2 /BaseFont /BMQQDV+DejaVuSans /CIDSystemInfo << /Registry <41646f6265> /Ordering <4964656e74697479> /Supplement 0 >> /FontDescriptor \csname EF5O11\endcsname\space 0 R /W \csname EF5O13\endcsname\space 0 R /CIDToGIDMap \csname EF5O14\endcsname\space 0 R >>}
\immediate\pdfobj useobjnum \csname EF5O11\endcsname {<< /Type /FontDescriptor /FontName /BMQQDV+DejaVuSans /Flags 32 /FontBBox [ -1021 -463 1794 1233 ] /Ascent 929 /Descent -236 /CapHeight 0 /XHeight 0 /ItalicAngle 0 /StemV 0 /FontFile2 \csname EF5O12\endcsname\space 0 R /MaxWidth 974 >>}
\immediate\pdfobj useobjnum \csname EF5O12\endcsname stream attr{/Length1 13300 /Filter [/ASCIIHexDecode /FlateDecode]}{
789cd57a797c1455bae877eaabaadebbab3bdd593b49279d85b025260408a0b4c80e62908080a209844d8144022a06278043420486201004115a0444
402622420291018928226e80338e3a2aa0b844646650c7909cdcef54272c3af7dd797fdcdffbbdae9c3a4b9de5db974a01030017dd64f00dea3f6020
0c81a900ac338d7a07e5de312a36d8e17dea0f0690fe3668d4e87ed5c9db7fa4c91de8f9e5db6fcd1bec5f947f0f80e2a53973ee18959e39fd97a2df
03a855f47ccca49905c552b5ad90fa5f8b3d263d34c707d36373008c19d4e7538aa7ce7cb0db43f70398a90f3ba716941483812e307f427debd419f3
a6fc75df343ff5697d7cc6b4c90585c609af9503748ca4e7dda7d1806d93e139eae7523f69dacc398f5c7eddbb9cfa73a8ffd71945930adefae0834c
804ee279f1cc82478ae52dea83d43f4c7ddfac82999353bfbfb99efa9f133ca78b8b4ae60c0fffa4174097bdf4fc2fc5b32717f736fc9d9a5d97118e
d340d0ca0aa19f4417c24d349609dda92d9e99a12bf401a9ffc0e179609f5130671644125de9d7da0a70b52566b20726cf9e0546d1d28b4c3b88da08
92641433a524e91e3acb43a339faba022a9fe8d7fe3608a8df7e6f5dd23a979ae9a2cfbfbafee98d6d7d6e7f2a8fb5f5b7e8777dc7d653fa09fbc519
d7ad4e81fff0272014ebf88b9c76e5a19d8fe8106de1a7f88b541fa772e43fddef3f3af3936bf7ffcd5feb29a2caa96be789fefff2918cb8ef823070
930c84430c78c10f49c48d8ed0899e854114dd8504ca575728a0deb08341972f535bcf0c165ae182687d5d68a542e3aa3ecf44cfc50c21df66b0819d
5a0ed0c0d90603b4c9f746d8056e5dbe1f2d985d3011aa0a66cf9c05551367174c87aa4905b34ae83e6df26cbacf9b3d03aaa64e2ea2f6d4d9931f80
aa6905b368ceb4c91369e4818259055035a3a0c827eea427bf9f5930671a54cd7a408c144d2d980955b3e7cea29973a6cc9a4af76962ffff4697746a
cd983eb5e0067d9221a44f8cb453d40a61e6262a2642071a416a2540b65ec74337a2470264d13d9eb459a2314697b70de7cfe16bc8853e56307c2a8e
9216d1f1f944ee33a1a345fbfa5f7b9f4d6b6b8ffe0f984df4961efa9fe7c1f8d05c515f6d5fb7c70de3e3afebcfbed6967a025c5941ed3e57977620
4e47ea362c0e06c22c5d428474082a38af4a189275a3b9266169bad09a0ea6e5f4bc33b5c0540d70ddbc9baececbb83a2ffd37f324da9be629c2f239
c5534540c0643bab1292a96429eb08f2b8508d7f8629928b16595444a32c4921fe5ef7cb9d32a01002e08379aa9bbbd97ac34c76ee8639d856c85711
570136508fe97d1916e8b5462356da218d742c076e865b6138713e0fee87627808e6e992e723483b115eed4f474101cc80d9e269eb39b20c1fb7feb9
f574ebfbad6fb436b4bed6baaff595d6bdad7b6e84f437bf904fd9d5d6b3e9a7840aeadc01822844b74e6d45d5691bf20de42f099a501114bdb5ad10
b508c2500913046a2b429b4751c9a322fc8bb0dbe4832182ca0c2ac55412a808991132293cc0bcb6d28965432d9ca0eb08ec800d6c1bf5a6d0f88334
1294f6c062984b2347d909562975a1b16d70094ed1cc0a38813b64604349cb4ed0fc8f88d797591eeca53d72989be5185452d911f25ef94eb956be20
9f841e72897c52ce974b58166e56c628dba8e4e0eb2403c7494b6bd967500207f01bccc27ab9bf6c87cff024ee802fe91441a713b002b64029c1e266
455026954a77d2c81bca49584f57113d3fc936b25304dd01f6389c81a7509606c3467686f03a013fc1e398279511f9b3a42904ff1bb4d7495abf1e4a
482dce303370a9138d11f474d644fd1e8b5d9433fa7509cae8e43cd8a2d6aa6e839f4e1114dbc68eb246751504e114de830fe2c76cb1ec97b7cb8361
458802980f2b68eff5628d3a85cd23dcc5552a76971e96f3d90ef846ce374ca4bd5f1718d1997ba53b09a329504fe56155239c7ab3c55849908aa7b1
70d230544ea7f5b483e131c21aa008b389d745f47c37ec812e580d2b68271d5fb587f213addc207f4138af60cba59fe024f627c99b225f245a0bb121
eddd6f50151925d27b9f5623250f29ac098c1ceb7b735c4297cebfeafa3483af06726b6cf37cb5adadb963e518655c8de2adc164638d9cecffe2bf7b
f84597cec372c7fa6a5a06f46fdb75407e7f1a1b35969aa247c3343ea0bffe4c1c5aa324d3df90fc1adfa469be27b427fcbd9ed026f7ea12b2311451
91de0a4dafe1ff904a5517e9578f80437d0ad6da6d0640970a6166bbf6c9b09ab0bcb175606e3ddc73dcb01a87de8640cf71e7331b9d39393781d6dc
98c154c9e37645f853a4ec6eae1e5269f9a2c71707abd7ac5eabbabee2b75cb8c07b7ff91d3bf6f967aca191cedb42e715e9e7c5071c06719e8181c5
25871981ceeb73f9dabe6159e12e8f5b32f8bbbbb2bb495b68cb35d5c1c58f3faeba1a799fcf3ee7bdbefb92bd7ee1027b8d76fd882ce3c78a4cbeda
1f7082896d30a02279648830ab1ea355fba4b94f739fc69b20fd34d50d19cc99e04970fa9d09d9094edc2d756939b5bde594d445915b4eed108d1da4
a51254b49e935790dc5ac806f803616ad00541eb4ad7b24893d711875e4f4c2441799976d5ce5f6ed42e66b044c9a9b9b2325d4e4d4acd04a706fe44
7197966e78e619fa7be6992bccc47fbe7285ffcc4c4a2e3fc9dfa6729265d1d58d650579092fe715bc842d67f3d8a36cb9b0d25f90811b4f3e95881e
f0f4c3a02c05958506089a8cf1aa17219e59b4d36d7c61822f8d0dcd02cdc6cccb8da71b33483ac625b2bd0e74c8d2841e094e253b394ba0ced950be
8e4d7e8b0d6ddeb2432e195c3bb8e9cc0eda8024591e4a187b616320352a3a0623bd4e22aa5351e47edab3ced5b6a07ba54c160d34b3c4ccde080dd5
58ad79588d276f584d78deddc36adc7977132428a4a5e174e3e1c34e574e1b349775680c9af2bd41f99ed5783567440ec116c81c2d8f51c6181e951f
551e8aa9883290bd8b92a349f0bd73e021756e7449cc1cef22288f5a14bd286691773b6c8f714e8009c984447677e8710bcbee96e24f540dd9b7b0ac
4cd9e3560d2a90913dd23c9cc8985570fbf3e5f79d7ae4d1d363bf66ee017747f1cb3b76ec7898adec3573ed9087abfbddf6f64d995fbf76cfd6e258
fe1d61bf81f85d42d87780e24057f08499cb4df1e5beb0a0c71634ad52bd41df2aff4a7599e7b9b4706f18a03bca9be2d3bce88e37a9698208e179ed
f89b74fc890024cc11ba34379ebf7cbe51fbeaa2a65f44950c163015c615c417f80a136498c0e298c72d2724a6a466c71122dd092bf229a1c60de861
df95cff1f7f8d7f7be717fde9b330fbd51b775f7be351b9f7b6ad4a1d925c7c77dc5ac7fc0e4f886aa4fff919c7cf4a6ccea15bf5fb3ede1e292d2a4
94bd3edffb7be6ef14ba4f99a7bc85644a221d5c18886536b401a2ad1fa0c51054182e3431ab19bcaa51b6ea16c04288d974c4ac02b1d37d1a1a339d
82afe74ff769cc245c74c6cac789b9c7054b3b5a281c1f0ce3603a3c0c4f80219c758214d609bbb311ec0eeb1db6316c0a9bcb1ec5c5cc46ac34b104
cc726679fcba3ea2ca25c6b3f99933c75bee55929bcfe1c9e6aced3cc8f28f12873612870a09f258b837e097a30dce722d363a687007b54a9b148485
b665862d71115e66462f9835354e6b66d7f345bbce8a69425b88455ac345a1c04283893dbc21c49d30611f04cdc1e3861bd822b8f12946b5043b8fed
dcc492f869fec3bd47a78d3ffcc08b6fbdf5e2c867f394333bf8930e07bff8eddff98f3edf899b32f66dd8b02f2985a8bd82a0afd6ed49128c0d2485
a9602bb742305c0d7ac3b76a416b65e24aefb2646ba2c91b1517e6c584f8986432302444e7751373bef9fc35f109b829da6027a59378523ea19c5009
ef3d71d204368125aa1e77780856e6e9cafc8912b623e2f709739490192e6d59b269d3122acc34fce9e16f9e72f4def3c0174ce197cef2167e91e5b2
98e14f63ef039b9f3d78f0d9cd07a479b54929fc1ffc87bb26f01fbefb8a7fab1ba8896c6b9cb050db499aa6114f54981488549c124ae894c95e28c4
0f5490c90c5483d6fc76836ed7d3afb30354485c0483c6be4ae15a803633808178e4ecd1735cc03556622a462b39ca60652ad6408d6a206921c6303f
4bd88e875bce9e62bc254b3933a669a1d249c4844b89be4b75fafa29aebe2d901c49d44d5583715d82ae9571cb529fcb88b42675f47a92bc0e13596f
32e18e84980cadb9a1f17243a34ed8765dd57b39a4a4d71133b92bd99aa4accc7061647475f527266577eb1ed63e8124435a5ab5756b55d5b6ad7ceb
a295d0fab7cff8ca854f3ec77ffef967fef396c12b1f5fb46ad5a2c7574aafafafa858ff7479c5fa31be3d0b5e7eefbd9717ecf1251e5bf1d1d75f7f
b4e2182b98b368d11c2a24310b09a30ac228529718bf213e8a954354d0bc550e4265787c505b19be2cd9e0f52684c54162a2d7a60b0c81dfee93bee2
3fb6cb4b7843d46bd187630e7b0fc7be16d7106fd8e1aa777de34292981eba6cbbc2ec242b90dd0db242529298c2dad1221a7c317cc33092935e7b66
7cceaf30ed2c43e6e42ff12f876f60b7b4c9523c4909b331d7987b98e3bbaf58b8eece36f1bbe3a4b5ed9224accf472c5ff9183713bf0ce00dd86083
a46e908d0a93c16fd4c8396712eca785f3123a282e0a8a9bb828b879fba51db48326de3ae9f6cb0caf056e052725b68a2c092fc59c66348393f27834
1bc875a962d0e444b3513c200b675823ec9b623252642de2359362260fda102144b3cff9d38dedd64c7753572be3f7ed022bdae312f7f8ac8c4d08f4
733087e430388c0e184b3941312c0393811925154d72388b92c6b0b152ae752a9b263dc21e92e6e36cf961c323c60ab6445a607d4a5a87d57244c800
5200e0c704f44bf5fca294cc4bbf94723e58d272df92338abd250a77377562657c21a9c771d2b5fb097305920356690d2c9499177b81ac0a3b210c72
6346c09461c8352cc005b2cc2684e9defef85bd29f9bef53ce081f2fe97b2ca73d8c94fdf40a44b2351aac312d74696623258f4a94adaf13bc26d94d
3b66360bf7a5eb2cb123607178e23d7d3df779fee851c8ce38dba43e39413826b91373fad92abe7cfdfae5bc277bf30a63bcf50a7f4b496f79f7c98a
f227b79dfbf8d3b32ddb090711556d26994e85df05fad8ac92dd22c5c5c7194d92c12cc5c7c7f5335be2e2650f03cfb3eed5916b9cf21a589dbcd2b9
ac439cd9121f6380c498287b1743943bb183f6494323493b49778eaeb7ba8bbdf8d345ed18411de2a2fd7b6ab655c435471a716d5f7c5a7ada1d6924
f9ba7514c21dff6fdc6f3a6bd7777970c9dbf76d7df9e16d8f9efd33ff945fb8ff8705a58db35facaf585f7af62d16f1e3f4bf2a5b5eefd17dc14393
26c74775fa68df479f67a4bf3760e092dfcd9a1f1fd9e5f0ce63e7295564ad4d44f76f88ee06181ab0ab21e605c84f0514a3769a6cbb4ee7cc0cb287
66610f8dba3d3482b1dd1e8681291e34a649f106cd1430159b36994c13b02d8e55e51f5a2e9e68b948dea7e98cb0861294129dbb509e648664a8a768
2ede1261b2c30b116a9ddde92b8f3fe0adf3d73a9745582102236d26a3251e8dee0129c4f6b74f376666868c75c3f9cbcd64468ee9b6d19923ecc8ac
8cd88cb88cf80c5f46424662dfd4406c202e101ff005120289b9b1b971b9f1b9bedc84dcc4dcd4e2d4c5b1157115f115be8a84c58955a9c1d44ba971
ed4bdb17b52fc88fcb8fcff7e52714c715c717fb8a1316c42d885fe05b901079bd47bb99f570fab385994a21bb9b95707d6c142e1dfa6cd7c2a27575
b5b57deb97ec3ad1728549cfafcddf9737f9d0f87f5e92b2a6944e2cf9686fdaf096853ba6141cd9fcea6157d9d2ae5d77a4a6360b5a1d205a6d51dd
e439bcd0331085755687a92ed2b3cc511bb3360a5cae419156d5183d30562845e6653dc03e2f7cc6b18b19fbf2e316c405e390e06cb795042ad35d2e
e501046b2a853659f8e5f34f3ef9bc282d7fe8f552e9dbd0dafa76e94bbdeaeaa4f413172e9ca022dd5958c0ebf9bfe8aa2f28dc4ed03078b0f51c5e
201e4641df400c94b325b2bddcb6c45ce794eb228879d106970d06bb07446bcd948fb5d97c7ef9a2f6e345a1ae315acc8298aa98608c50d7abc24ed0
f5f0e8b63e140be08511cfe4be7cecd8cbb9cf8cb87deb8416fe21ebc2d4d19be5ec5d9d3a9d3b79f25ca74e3b929208213b73b15e7ea21641258f27
f8b410b5a2ebc0eeae538ccbecb56c2d46c8609406395d9601b17a76949979955a0d3750cb99d5c64c4a9a8018c8aef3a3b8b9b6b6d74bf34fb442eb
89f92fb5bc4174dbbe9d6887fba47b7f69dc5e58c0fa33235dfd0bb8a78d7c6d709511b5dc1043117c127898a9dcb844f1bcc0943a2b3b1859e7aab5
2ef3c67824a3c708c3249763805707b141cf5204f14201d6e590c74ceb1b5b1c1b8c7d2ff652acd217fab2be525f4fdf18a5b321dd986eea6c2e8222
562415798a624c131e14044ed04d874e5bdd8392001874a21be4b2e63dd693fbef7f63e2a4f71ee097f91b2cadf92c33d44a5b97acafb34bf78e3ff4
46b76ebb3b76663d999985b1dbf8a70d6bf7eede08fafb4190fe45b40e837101afa231abf105955550d2acd69ba5300a974c8ad1e6b00c778b0cc42c
025b8b086c87d5d8f5b60878fb3490576d70e90a7d9e6cba769162768a6df6053cb99ea0471842023296854c893f3b4b2897f4af9a49b7b374fe7e5d
4dcdee5755f7badc69935634a7e3fb2b461cdc2968cdc7c8e389d616ca968606fc51d65893ab3c2cbcce817529fedad47a539de3d5e8d89428305a07
a92e976f409a1e6785c4a1e17c4820f81941e91c928a8e0b3a063bfe4a872234e99a45be99b5890ae5d7e111d959b879eb9ad55bb7ae5eb3b596f3a6
825d23476ebcf395bd397be6bfd3dcfccefc3d39b5d2cd6f7ef2c99b6f7cf2c977fc2cff2636eee5ce1d5ffdd3dd9326b25e4cc4a4bd264eda21e84b
21895ca8d3b71be9bd09d0ced40abbb3d6bad6cc24238c109671a0ee0c75b5ef234245a78bfccb9e7c0f1352ec778640a646969e3184cb85b5f3e7af
d95557d7efe5b9478e495b5aee91366eda78684b4b85ea6ed938b9f00761718ed0e1f3e85c110575227f70487e09ea258519651828c2a04641a7f3cd
c2ed0a6b9f6bca278bafe81edd2f92a523b5f493f3af0455f737b45febc77c8cbe9f051cd03fe0b54806b01fb21a2a9457a1defa9266d414f50e1b33
5a61a0a6ef7e3ec7752dd6d59940073903ce5c67beb3d8193ac8adb6e5a0a1039f7b65e04dd387eba72efbf0f08682756a876f8882d73049dc0f6b25
668481b2167a099111b0694a40c955f29562e592a286c027d055f72f8d6dd437c4921425c2f8408aea32453a408d3578ac15b13eac8da98fd20ce074
188d6aaed3e8c8f5469209f6eb7149334526fa9b853e7dce5fd6d351c192405846526e527152555290ae3f257d96d49a64221ee95cf15ccfa96b2cf3
84589636e0f0a23f1eaa9b3d77c5b6bad90f2fdf5657d7b766dea33bb172fe433f9e150c7c768360a0b471f3d37f7aaea542cedf3d75e2fcabf24318
8441f71be5a7fedfcbcff976f9d99bef79d723fd5a823cff8304d1c1428042b66eaeae7f11a47f616a9d0beaacb5e2dd94cb31125d9e01bf7a3715f0
f78d2a8552b5cc50662c339599cb2ca5d6325b99bdcc51a695394b5dc1a84b51ce1bb3c71b5e6195acdeb573cdaa5dbb565d622e7ef1d2dff90fcc89
9f5d387efcc2d76fbef1cd06fe266fe4df9361cb21fbe5663d854f250bb18520145ee296404cbb97a8b52f63af627d2c798841baafb8ceab521adeee
2802a690a7f83c8ec2d8e4aba46973a937b8da92baba6b1e55ead9ee67b7b7ec56cd3baef3a9ecbb765711e21b0e25e89c40a9b46a2139b36085bdd6
546f30ab14740d740973a5eb20f987d36f0b87b037376c5398e058c8935e6357040e8d1fd279c3f304c781c5615dbdb8d7e53c71a8650f316bca2445
fcbfa888fcf81b745a2a5c688b7947b585bca3ae85bce4df2b6577b9a73252f8f7e4da6b31ef9d3146bbc1e84e1cd0414075fa869897ace88fc2e1bb
6e8c79db435e481589ca4cafd96bf15abb92dbea6ce96ced6dea6dee6de96db5f8c0c792a40ee60e968e61e9ee744fc7f00e711de2d37c690949a9e5
e6724bb9b5dc26fe5fc2244935ab16b4a20dede8400da3301a63d02bc79a52d3d3faa6dd975696b620ad2a2d9876292d92729a077f1d5cabfedf06d7
dd8976b874c4f6f195951357f76dd8faf35fc61f9d31e558c1a265937706763ef5f93b53f6ca7d7777e8909717189260efb8ae72c33ebfff5076f6b8
91c372931d496b166ddca5bf81e84106f01fca46d2418a46ec8ad1812f504e586fac305b88c624639acb2e7450778499ba1f6c0c590d327b7bfe18b2
e3c2fbb9c37b0b5f98922dbca0933dcc4af9e26125afbe7a66734585b291bfb6a225583962fda60fa4fc15ec1661c57793168ed5b5df0dbd03de6bfa
bfcccceaddb556d27eb76504d981811ea18e3921893a9f79d50814790e0b231046be23a47657a3a214b65b1881176b6b6f7b69ee9137d9bbec80b4ad
a560d3a6435ba4d22bc15d53265dc2ed02fb9bc90295c9f9a0c29540aafeea4572324911154a2aa8cc09a0f69310fea4a80aa5bc8a0c06f1ce580f16
20142cb8f3c43b5cf18a0cf4579711a1b7b6d772613d0336b6bfb209fc61b074bf542a9549e5d20269a5b445328a834c94f67a28ea8ac66839055258
1aa6c93e633664b35ed84bce300e84816c080e91072a83d580710c8c61e3709c9c6b9c0253d8749c2e4f55a6a9f9c6b930879562a93c5779545d0c8b
592556ca954ab95a0dd56cadb41e9f929f52d6aadb95e7d51ae361e367c656e32de2bd50964924d1371f65f7b27b8ff27b9ae4fce63cdc7525282844
964850c8c196066e33182593131c66a7c50ce0b03b1de0b039ad361095dd66b698ad4e8bc5dccf66316960512af055bba55eb3dbac66938a6074c80e
8b16a2deb01aa34e314b3b09dbe3ad86063d6bd688cd8d6d59d3bf21a35e2bdf47640a7a5e524131aa26b4859b236c9acd6fcbb60d31df611e611b6f
1a6fbedf5c615b605b6573998180b028568bdde288601e49933525c2ecb6b8add1f668472a24912efb649f9266ec604a36275992aca9b68ef68e0e9f
b307f1205bca9033949ee6ee96eed69eb61c7b8e23c3792b0458400a60400e280135600818fb99069807d986d8873802ce3c18c9464aa33157ce55c6
a8a30d638c7799ee328fb68cb68eb38f73e43aa7b029d234f374fb7447beb3d4f888fd1147253c615a6c596cadb455da2b1deb4c6b2c6bacebedeb1d
5b2c5bac3bed3b1d35ce779d9f395b9d9389638a9d85c28bbe8c09e649ab46ac9ebf6ac6f0bcac04de3bc4c6696f3eba7e70799e3ca27935ce10b25e
d3d28cb98627c51719cc9fad85b9c2b20049dc4b8e060f6c0bf2e6f1652dcddfe25a7656ca60d8f24f5ede72b1f9fbd03a5662a814ff37161148cdd1
a386ca9f4af41db99b76acd477f4880dfd126d98cd4aca569d08be571534547edbbc95dfc5dd7c321bc02e4a2998f9edaf21c9d648adb25c612ea747
5a5e369e3707b71d080a40c224179b27d9794bcb7b3cae79fab7644797ebeb2ac57fad85e7cfc05c1d14fd5b80eb61c9f2d0a6e8179b4acb8f06abde
0b9e5855268039d9f23177f183ec69d6c85ec0bbdb7204650745604930381096a2a704d684485b9cd1694dd0dcc393f5b76ac2f669e29f5f0d0d3741
c069b2395f7049d11510b9568d77d55b1ce97dbecacce47d2e66527a909971434a702d2dd0cda3413c10d65bd9d19e2370839e26ecae99949ac27eb9
215f68cf19d675e8306d5228774817ff6552c5173c517067c0193310228ce10eb76c3462b8591d1e7d0d5e2efe67177019c9a06b15f6c843e12fd9d7
9aa05e6102da8b5c7f1b9e49916b2b25b95594ec6a7aaafb5b900962a2a85f1e1a82f4c597eb04e4bfd4d58908ba1dc6fd7f1440b33ddfb4d1b40dc6
1e0147c4408acbcd56a351935df6e1e102be1078023acabe5e30c99492394df5364900c675a8984eba5f2756d25009eaf8fe6bb995088595d25f6557
74bada4ca7a741178a8c23d30746743276d4623cc6e88e2688578d4971a6c494e15daf11aa2153dcf57f71062262e2fd2f2439294bec72a8e34b1aac
0d3724d547c52688603933535825ad3193fe425c6ee3668fee3dae92aa9de7d7a5830a514fa48482bd777953473c4eacbe5dda23c8d9c67b246a12db
435c1eed0d4b15c46c276e3b6a523b6d75ec52604420bcc3409b510b8f741b3593f89753428c29de3f3cf53acc74c4743188f4fa5e48704a15d694b5
1e4342bd233a2e84d2e53ebfc5a77bd6afc8ffabdc3624a7d7f3a20d8fab38ecbc9e2f577923be5891c6ad7bb143f3f9fb1c7d7e8478a3fe71cafbbf
b39f6faf7ffeb079b87d9c497c0967bcfa2d0bad33cce4b10076fef3874d23ede37ef3ed4b77f9a4fecd084894984a4b293e8d871a2a5bd4bdf09112
011554bea0524d650395422a1ba9aca0b29dca522a0bf5b9f34093e7c271f9141c57dc50219f6b6d922f4029b50fc853e041aa1f941be141e97d4817
6dc50507a41c3822e5b47e2c6ac3093820c6e42ff5b9077028b53b4111faa1078def96ebe166bd00d4dc58a4e5a2561ea77dcda162a8276ea7df8067
4f9804cbe15da6b2c5ec17698cf40afa7032be23f794c7c91fcaff527a2a7b95467591fa8e21ca30d9f0ae31c658663c683c67f2996e316d327d6cee
6c2e35efb11458365a3eb73e681b639b6d2bb33d6ddb6f7bddf6853dc7bed4fe9623c7f1944edd1e98079d601a58f5ff19ac13dc903d5238e85f9252
de3d5e7cff209b883919fa3745a2cd209c7aa1b6044636b0ad8dd78dcbd7b515886423dada2ab8d914b80d8aa018e6c16c980e53e9f439fab7519348
977d9009197465516b22cdf0413f9a33074aa8cc86c9500033a1338d0e815934bf2bb56e851974f9e0ceab7b95e8bdc9544fa6350fd1bd90669aff83
53bb5f3d358f4e7a88ce125fbccca2d9028e025af37f77627f6add4febc6c05c9a3189e616e8bb4dd65714e818f9689759742fa6391369dfe934cf47
eb8be8f402fdd9aff719a5ef52421015d1f5008d8a534b686e91be53269d9d05d937ac6a5f238504acf577fa377dbffdf5d0f5507ce3692199b05ffd
8a340222c91345eb5f93fac88ba6d309dd69767f18000361100c268e0c8561301c6e873b201746123d46c1683afb2e180be3e01e26e9ef9414926903
33c26138c24ccc6c983b6b7a66564ebfb6fab6b6ba7f5b3da0ad1e18aa6fcd10f5c07e19ed75565bddad565a1068bdc2b1c98dbf24e3bf32f1e76afc
c98e3f72bcccf19fc9f80f3bfebd1a2f25e30f4fdcaafcc0f162357e5f8d8d4df85d137ecbf19b5ef8753fbcc0f1ab4cfcf2fc28e5cb6a3c4f13cf8f
c27367d395734d78361dbfe0f839c7cf32f16f6efcb41a3fe1f8b10bfffa187e7410ffc2f1439afee16378e6f420e5cc63787a109efa204639c5f183
187c9fe37b1cdfe5f80ec793d5f8f68938e56d8e27e2f0ad4c3ccef1d862a772cc8baf876303c7a31c5fe37884e3618e7fe27888e3ab1ceb391ee478
c08975e5c94a1dc7dafd07955a8efbf74d50f61fc4fd0be47daf242bfb26045a715f407e2519f7727cb91af7707c89630dc73f72dc5d882fda71d7ce
64655721eedce1527626e30e17be4040bfd084db393ecf711bc7ad2edcc2f1b9cd76e5b94cdc6cc7670b31485382d5b889e3c667ac94dfe13356dcf0
7494b2a1109f5eaf294f47e17a0dd799f1298e6bab6dca5a8ed5365c438bd654e3ea55766575075c65c7279b7065d5416525c7aa151394aa8358b540
5ef1876465c5045c1190ff908ccb392e5bda5559c67169577c82d07ce256ac5c62512addb8c4821534505188e544a9f2645cecc4df737c7c9153799c
e322272ee4b8806319c740ebef1e7b4cf91dc7c71ec3f985589ae7514a93f1518ef3383e62c787adf89019e7729cd384254d38bb091f6cc2628e451c
67719c91800f70bcdfd94fb97f144ee738ed319c4a9d291c27732ce43889e3448e05bd30bf09efb5e2048e77731ccf71dc58b332ae09c79af1aef028
e5ae4c1cc371349d3cba1fe6797014d394519178a71b470e0d534672ccb5e01d1c47dcae292338deaee1708ec3e8c9308e438768cad0301c126b5386
6838d88683380eacc601d5d89fe36d5217e5b626ec77106f1d86018e7d39de72b34bb9c58d37f7712837bbb04f6f9bd227d0eac0de36ecc5318763cf
1e6ea56713f6e8ae293ddcd83ddba274d730db82dde230cb86993759944c8e37593023dda264d830dd825dbb9894ae1a763161e74cecd43159e95488
1dd35c4ac7644c736187d464a5c3ad989a8c29c91625c581c9164ce2e8e798e8c004c233c185be428c6fc2384221ae10636de8250a7a39c63461743f
8ca24e14c7c8428c204a45700ca745e151e8e1e8e618c6d145135c1c9d84abb31f6a8fa1a310ed1c6dd670c5c6d14ab3ade168e168d6d0c4d148d38c
1c0d6e540b51a68732498007691439e51a9a227541a6217064b5ac70f172d6e9ff871ffcbf06e0fff88bfd2f181c77f9
>}
\immediate\pdfobj useobjnum \csname EF5O13\endcsname {[ 32 [ 318 ] 40 [ 390 390 ] 43 [ 838 ] 48 [ 636 636 636 636 636 636 636 636 636 636 ] 61 [ 838 ] 65 [ 684 ] 76 [ 557 ] 80 [ 603 ] 83 [ 635 611 ] 97 [ 613 635 550 635 615 352 635 634 278 278 ] 108 [ 278 974 634 612 635 ] 114 [ 411 521 392 634 592 ] 121 [ 592 ] ]}
\immediate\pdfobj useobjnum \csname EF5O14\endcsname stream attr{ /Filter [/ASCIIHexDecode /FlateDecode]}{
789ca58edb0e43601084bfa44ea58a3a5367aa78fff7f347feb810aeecc5ccceec64b270731e07ada00ad476ad63f0c4c4e285cd1b07575e3cc99fd3
5e5f72b0614874f9414c424a46ce9782928a5ab80d2d1d3d83d87f8cfc9998b7fcb202b66703b0
>}
\immediate\pdfobj useobjnum \csname EF5O15\endcsname stream attr{ /Filter [/ASCIIHexDecode /FlateDecode]}{
789c5d53cb6e833010bcf3153eb6878800c14e248454a5170e7da8694f510f602f115231962107febe86b1a9542418cdecce7a1796f85c3d57ba9b58
fc6e0779a189b59d5696c6e16e25b1866e9d8e9294a94e4e9ead4fd9d7268a9df9328f13f5956e87a82858fce182e36467f6f0a486861e23c658fc66
15d94edfd8c3d7f902e97237e6877ad213db4765c914b5aedc4b6d5eeb9e58bc9a779572f16e9a77cef697f1391b62e9ca13b4240745a3a925d95adf
282af6ee2a59d1baab8c48ab7ff12483ad69b7fc74c9075c81dfab7c847cf2f246116d401b1f6d2067a8957953a009200564800320077080001c4319
5455a0ca5755900f28bbc015085942965e969073f492fb31733f668e6ef2839737ba46398ee035a28162108e5c8e413806e1188463108e41f8299441
5574287c2b9e724cc909d0869cd52270a4e0b0048a0e043a1079c88105870aff2dc4f232dd42842fbfecc6b2c8dbe2c9bbb56ee7d66d5f976d59b34e
d3f64398c12caee5fe051eb0d6c5
>}
\immediate\pdfobj useobjnum \csname EF5O16\endcsname {<<  >>}
\immediate\pdfobj useobjnum \csname EF5O17\endcsname {<< /A1 << /Type /ExtGState /CA 0 /ca 1 >> /A2 << /Type /ExtGState /CA 1 /ca 1 >> >>}
\immediate\pdfobj useobjnum \csname EF5O18\endcsname {<<  >>}
\immediate\pdfobj useobjnum \csname EF5O19\endcsname {<<  >>}
\immediate\pdfobj useobjnum \csname EF5O20\endcsname stream attr{/Type /XObject /Subtype /Form /FormType 1 /BBox [0 0 306 273.6] /Resources << /Font \csname EF5O1\endcsname\space 0 R /XObject \csname EF5O16\endcsname\space 0 R /ExtGState \csname EF5O17\endcsname\space 0 R /Pattern \csname EF5O18\endcsname\space 0 R /Shading \csname EF5O19\endcsname\space 0 R /ProcSet [ /PDF /Text /ImageB /ImageC /ImageI ] >> /Filter [/ASCIIHexDecode /FlateDecode]}{
78daed5c4b8fdcc6115e20b7f9153c4a084c75f5bb0f3ed8492cc0c821b217c8c1f2c196e49514f925f981e477e507e62b0ec9aee2b077b8a35d8f16
08ec85668a64f557ef2a92d33f77d419fc47dd4786ff7ff6fdeed15f5ffcf6ead98b2f1e7fdafde54bf9edd9bb1d75aff177850b5ee3ef775cf6187f
573b66f1fdce99887fdf0cffdae4fa88cf66fef472b7fb6ef7e8139cfe0e673ddeb9d2a7dc918fbd4b1157db62fa347f7f337eb7def6e4f8fbfefcfa
7de467f7fcae800398fa0c54e0ce949d77bd8f36792f5691343faeb4fb1442fdbefb79c7baf88895e1a9b72e52a6e22974e4421fa20bac9d4f2f778f
3ea32e7597df0d525f3edf7dd53db8a087ddd7dde5e7bbbf5dee9eec0614bb427da15cac15ab4b5a7bf55cfa58ac0bc5671fb7acee0f57276b7a1b5d
89452caf88edf589726f5cf2068a0a9b00941500d1f6ce276783042089d70008a9cfc55b635cf6760b0077610e21d860fb923c2e976e26896d081617
d8e23dc06eb4c18559f3029b439fbc8b4eba81225e8321c53ebbe4b285423679213058ed0dd2ad1c16edc8ba9e08e69d18e1fabe58ca2080cdc0b60f
ccf8c1c53f1e7697af77be0fd626134ccab45febc1c58fd3919ca375c6c3a3c6236f8623b637993dc83a44da78e4dfe3351e464d314438c178e487f1
480ace532a31e7e3eb7c3f1c49bd33a1986c8d9bb1bd6a22f866e256105bc902c371d4dd7804a08d73729d9f26d4d124bf57e0d175de0e471cafe34b
34c5e5e3d7b4257d315e13c9db00eb548dfe32a22e086276a0bce19a36b6960e1e7d66abb3b48ca83caaadcda70f26c8361a1b5c311b8cd3768fab5b
354edba5da8a7ed974ea535cf7d9c80de11b5c2044cf71077d37ca53a082649238d2e6764a88b45deae9c3e190c88763d547120ec0eb0bb70d69415a
cd85a5cf36b219c917f8484a05d91916ba514d1e574fd4e7980dd9ba7a251d593d96de1225b81b3ce49ad56d7375340114a880d9bc7a251d593de7de
979050c43819dfa82318fb273248c0817d7b5a5d90ae5fdd1a9c3f783295e0ae593d375707b344a518aaab57d2eaea01e5918b154a506fa28b8188ac
bfb608465d0087ee9619521979c17c3d7736470be097cd2cd28eee760c7f371e8908d46c3202ec7816d9927afff4df461a54b23413eccdb3b80ceb9f
3bddc8dbc0daec0a3a9cd8bd7dd1fdb3fba1b3dde71df59e5bf3dec0d248ae2943f8deed3f39f27093e039ae90a962f7c5e34e0f1ea26d97a943f4d3
2a7d7cb97bd26d01b611cedbabf699dd12b83c73d73ef3edd54e49457098512a907334019f05197d22661fa2fdb9f868ca400cf0301306223eba8c9c
b40339b317ecc9fc31f3a9a56a90e0283e2c88a3fe86eb05b964cac17792a9ebd1bc7ad20030e6a15da6fdf53356c7fed64991fc2c92107f263e63f1
4df4455131f5c5ec86d89d38da1ed185c37a6588863455bc02496be2d0bae8b4a626c175d6a7585faa5e00ad66aa12d19af054857fc939709bf30e83
f8e678e2315dc58b70b882fc4231d924c920c6926db15292c2c35309d96aa98bc35c6972724241a0f91a9ba3263571d6ba268f16924c675b4a00d5ec
12ebde3da448b31749f185c31596046565792eb9128260894a8685e36269283ac59248a1346bf29875d9cd9a9e04d759a1627da97b01b4da4988446b
e29fe672fbfb2b5f0dff528ff1f56b5487e7f83af819ca8f65c74b32e5657497991d3d1c66f5d5f4adb2ba6acbb666f5f72f37ab08f4fd9393909db9
de48b170e6249688954a162105624024462ba30fc46c9283dba9f04fec58397a27431d446040b25c104705aaf84f16bc4286e74aaedc347163a41070
2b93d12fe90490701942203a9903929be5123a98892a07a873a7d0125c6b182a043562055ab3269759d78159d397e05a352b10482b08b4d562422eb3
a603537570b7b547058ff03ea6c74869b0f34c270b71bd73ae486998ea2cea66d1a2331ded45a122f5c4d454c37554e9823aeb5fd1676329ceb35d15
0ee1040af5ec314a42572514fa109e481853324e8f5e9d0daacf3e14cd1954cc04ce2c7150c6176b282bd4a0ae4828a9521f9a3e694f72ae9a9638a4
5d24ea6a4529a159d5c7695e79cbe5693dddeb32a026f73fae40ad43d037d84f8276e60a25e5e2c720b35c22842a5d861ba8998cb1410567e1b5c951
582409d05d823706950ef889130071b695d4598f3a49706af52565952338b53a727e0103d14676b879af7204e8be386e2ea5805ccc2601859967aace
11e26c116d82b3884d854344b2402da2be0aa889551b925e752719573d4b18d22a02b4b0a11050da7bcd3beebc708998523e093af20669978c0eb362
1cc664210ea81e03648e0be1418f28c4292a55819aa720ae6a955469044d1f2d261957e34a18d2150468e138553ee16332c1488f8c58c517f41afa6c
008ae8d7a2668c3905d68d6109039161c2303c4bd0664d3ed3508659559d602cd42c6028a354d0d284554069ee35df386bd15aad00aa32e81bbe7f60
d15a85a01ec99e06eddc454bc86590fb26b96a0009b28835347158c4959c656432159e6a21b34a104c8f01ff65990a5861c66544ca923aaa512508a6
f3dd296ed904674220507046e300d597184ad61982e929a59c8bc8104c9c0514569ea92a43a8b3e760538ce7d05cc09803598136ab029a863accaaf2
2a67a9e88a439ba5a29646ac124a83af79c75d172d1953d2272ddfa44a04c092ced4145c30ca292ddf0d03f4b2909ee918ac4d56ba62aaaf513cea75
419dadb0a08f36939cab7d250ee90d1275f51d2961f53395638457321d13b673cb93a9f04b0a9a318fcdc0710003ed9fb304bb28d0b42a2035d441ab
caab9ca5a22b0e6d968a5a1a711650da7bcd3bce59b6d66b80ae0dea49e11f57b6d621a8b7784e8376e6b2a5e4caf0db492e1140952e830d5497108e
a44213544484f5b44811a0178fd44a2a1940630e4342894beaa8479d229ce9819d914bcea0fa8212a671809ab209911629c221c347cac1aa14e1a84a
58f551a92a45e893a760938c6b686a18359025685a15901aeaa055e555ce52d11587364b452d8d380b28edbde61d2245706099f99d4b1d32cb773b97
ef6e22448ebdfe89530cafd0939f1c1a5f3041163ba411fe02bf462fcf4f15dfeedfffa421cceb15bbf61543ac6ce42d5f480850673148e716e34287
d36121d8c25b1855bc50908be117dca6c7ebf2e17ae833268a62170fd7f7a7fc797c37082636869fd1cf6fe69078e65ee194c4838bf7c165c3378943
5f4aa29c282681861fe4a75b456357d14013fc5613e79834dc8508689f888231fc1ee75d6ac7afe329b016cceb0c4070dfe9d1b60522670cb93bd54f
5ec563e13d0eb531232311473d77e6c5a450903eee140f4d2f5f061761a54cf39158917258f3add7cc4fa7d330198057ec7c413ce413cad6901ec210
94b52de8ae6b0b761b4b1d1730be2b1da3eb32f51911583cd248f6e06739bb56ea1b418d851fbbdbe1b1dd74bda4cde9edb81ee617e10f24bcbef119
faf31cfb927d563831311aac8aeed13b055f9025d6cae344093e184b92410542fdf6da98085a8e0f8f4b943a2a590a2e989cc3a2c35bf37cdbd36b19
d01db8384cd44a864a563254266791010929a08550602103aa72ce3184a465a864254365721619c08b6fc62bb0fb9b310045b668192a59c950999c45
86884285b15381850c296188700b2b4c4485bf32380bfeccc1bf48cdc44ff9fd416e166425c1c4e2dee7363441910e52fd3077c783542fc84a1b95c9
39ecc9f73510d6cbfc8c7f31921ce4674196600593fb6e536b2386aa835c6f31d4a0fd5ee67a4156faa84cce62539ec6d241aeb721f4810e72bd202b
192a93b3c81012dfc45ee67acb6f26b8835c2fc84a86cae42c320caf731ce47a8bee2e8445ae9f890a7f6570dfe36aeec8b9932b905475f4952a3bfa
308cf1fca976f492761ff5203a7ba1093906085508b294bbf238511befd70188665c8a205a772942254bb082c9596410cdb89441b4ee52864a563254
26f7dd2b65632ff521c600a98f4a56faa84ceebd3ec49020f521460aa98f4a56faa84ccee2e362509032cc43859460222afc95c159f0cf6382442f46
0a89bf929504138b7bef8f624810da902385d086202b6d5426f75d1f72e090fa10e389d447254bc1059373f8b71c12a40c62a4903254b292a132398b
0c62489032889142ca50c94a86cae4defba51838a43ec47822f551c94a1f95c9bdd7871860a43ee661476a63222a5d54066bba503bbbc43e5914811c
12a0e03aef0a2a02c25e3e9f524fa7f44f63636f89df568d66faf9aefa69ecc7e31398343f6859792244c37635fbe810d852e05f25f3ab42846cb386
4d2cd5fea1fbd607a6cd3377b7f2a0b4506f5330de470fbb6c97466de4d2d0141998b140c140d430e37d521506d9ded1fef7e5272a4a3d844d7d7626
85ccbf07b90d86fcba842f98311da2e63618c6dc678c4a8612e2f03618668f6c7742e86c72b682a26a4ef0e40fd3d97046efed09b1b34557fc1e4bf1
37f5e40f5451a8153ede621c597e3f34dd621c590c64bebc6f1c99a1b7db6f99d6ac9481378b89fcfb3244d91d57ca72c34ab9866d83a3356bd589fc
3e9c72a2abe376719ae955d4a713b9b572abae55ef6fc945adbaf7a6d495f2164ca92be52d5b5357caf7b7e6a256dd776bcaeaf8fea65cd4a7db35e5
a256dd822975ad3ad594ebb56a78edd1649f07a7c68868f9adbf644d905b74212b6e542dffce0c550be6653fdeceaf99954cea0dea72a1c18fb733b4
d764e4e0f79b286ee7e69a134346210fe4dce049db19fa6bb270c82e993078cf7686a13931949e0c5a9b3878cf7686f19a3c18ed092e935afe6d0c6f
bf70739fc9d7e4be184ef099d262680b9c70d89eef061e7d4dfaf1ec8291e5bd91d3503b47446e28831d8c4c3762699b6927b02766f6ec1b390e35a3
052045639c39f5f08f9253e5b8dc514eb4c6ff921b227e2065c9b93e46289f7f2a61e4d67ab9b7422b835a9e3eb8f8e6e2e9c38b0eff7d82bfb70fe1
f9fc1cc1c590f9bedc838b1717cf2e7ee40394188d2b062b17c8fe1ce73fc7a15f2fbebd7883d3dee1fb2f172ff1a9c3e79717af78873c7551376c8d
f7f56ec50ad34f0ad03f66925bf945b991df5ee97f1f37c2cc0555c23a7fb8ebaacd250c1bdcceaf457f3bbd30ed1dff16a9eefc37ed76e94c8cde88
5d6cf71b64f24f009154b8cae7e5469cd1c732eca979b079a7b5a837491eb96aadffaab9cc6fd3ae9ecb6d389b90dbcb4f3b1226f47236cb37c95fce
dbf846defa6d8b66dacb3cbfb1969b7a793b2e92531ade8edfc0ebc6806fe76dfcca65deb9d1668b649267cf145ce67d1b933325f02eae2d30873691
6c1eaefe22611ffb64721fa9d8e68e98b92785c6f5a8c8391ae4d5ac2326f41496dbcd8e6baf5d34693a4001062928a6e3472e47694bc4e4e8a096c5
ae9c8133c878536169ea5246ff28c7d7793d7941482e07de32733af26b739d696fdf4218b15cce95db4f4df5b4114ce9a924de0e3d27221d9fd028aa
908550261ce3c6ee329b71ba4383311059042d86ad0e339fb4bf43937bb86e36994c08fa0e4d607b7245e18772476599f61cf631110c9743396eb5b6
fc3f34afb96aeab9ed03efa630e4fd491c669f0d5ed8b6c0ed7a615b6f3fceda2183a0c2dff175a63d6811861e15a2a474fcc8ab661cb4b1b5af6963
6bdbb4adeb77a29f5a9f0da7ee86df17ccf2375643859c7c7e99dd2c19595c688c97c839ba92ff3c46087b4cdd1d7888b43dc3ff4c25b9142b4fa82b
cea9b18c7be23b714f1583771ff727e6f13cc01e6bd547faccbb1621f02fe7e01af0cc0644bb15e2c369c3f0a4eaf4c7ab503e68abd0785e189ba419
83d97769d4230bc53de81bda6baa24cb6e668a7a64c4e04493f77c3c3fa69042b0c979fd5bcdd09b5c949eae6d090aff24f8fe864bf9f0c3c5ff3f5c
660cee9e85cb93ddeec9ff003f1ef8c7
>}
\expandafter\gdef\csname EFWidth5\endcsname{306}
\expandafter\gdef\csname EFHeight5\endcsname{273.6}
\pdfobj reserveobjnum
\expandafter\xdef\csname EF6O1\endcsname{\the\pdflastobj}
\pdfobj reserveobjnum
\expandafter\xdef\csname EF6O2\endcsname{\the\pdflastobj}
\pdfobj reserveobjnum
\expandafter\xdef\csname EF6O3\endcsname{\the\pdflastobj}
\pdfobj reserveobjnum
\expandafter\xdef\csname EF6O4\endcsname{\the\pdflastobj}
\pdfobj reserveobjnum
\expandafter\xdef\csname EF6O5\endcsname{\the\pdflastobj}
\pdfobj reserveobjnum
\expandafter\xdef\csname EF6O6\endcsname{\the\pdflastobj}
\pdfobj reserveobjnum
\expandafter\xdef\csname EF6O7\endcsname{\the\pdflastobj}
\pdfobj reserveobjnum
\expandafter\xdef\csname EF6O8\endcsname{\the\pdflastobj}
\pdfobj reserveobjnum
\expandafter\xdef\csname EF6O9\endcsname{\the\pdflastobj}
\pdfobj reserveobjnum
\expandafter\xdef\csname EF6O10\endcsname{\the\pdflastobj}
\pdfobj reserveobjnum
\expandafter\xdef\csname EF6O11\endcsname{\the\pdflastobj}
\pdfobj reserveobjnum
\expandafter\xdef\csname EF6O12\endcsname{\the\pdflastobj}
\pdfobj reserveobjnum
\expandafter\xdef\csname EF6O13\endcsname{\the\pdflastobj}
\pdfobj reserveobjnum
\expandafter\xdef\csname EF6O14\endcsname{\the\pdflastobj}
\pdfobj reserveobjnum
\expandafter\xdef\csname EF6O15\endcsname{\the\pdflastobj}
\pdfobj reserveobjnum
\expandafter\xdef\csname EF6O16\endcsname{\the\pdflastobj}
\pdfobj reserveobjnum
\expandafter\xdef\csname EF6O17\endcsname{\the\pdflastobj}
\pdfobj reserveobjnum
\expandafter\xdef\csname EF6O18\endcsname{\the\pdflastobj}
\pdfobj reserveobjnum
\expandafter\xdef\csname EF6O19\endcsname{\the\pdflastobj}
\pdfobj reserveobjnum
\expandafter\xdef\csname EF6O20\endcsname{\the\pdflastobj}
\immediate\pdfobj useobjnum \csname EF6O1\endcsname {<< /F2 \csname EF6O2\endcsname\space 0 R /F1 \csname EF6O9\endcsname\space 0 R >>}
\immediate\pdfobj useobjnum \csname EF6O2\endcsname {<< /Type /Font /Subtype /Type0 /BaseFont /GCWXDV+DejaVuSans-Oblique /Encoding /Identity-H /DescendantFonts [ \csname EF6O3\endcsname\space 0 R ] /ToUnicode \csname EF6O8\endcsname\space 0 R >>}
\immediate\pdfobj useobjnum \csname EF6O3\endcsname {<< /Type /Font /Subtype /CIDFontType2 /BaseFont /GCWXDV+DejaVuSans-Oblique /CIDSystemInfo << /Registry <41646f6265> /Ordering <4964656e74697479> /Supplement 0 >> /FontDescriptor \csname EF6O4\endcsname\space 0 R /W \csname EF6O6\endcsname\space 0 R /CIDToGIDMap \csname EF6O7\endcsname\space 0 R >>}
\immediate\pdfobj useobjnum \csname EF6O4\endcsname {<< /Type /FontDescriptor /FontName /GCWXDV+DejaVuSans-Oblique /Flags 96 /FontBBox [ -1016 -351 1660 1068 ] /Ascent 929 /Descent -236 /CapHeight 0 /XHeight 0 /ItalicAngle 0 /StemV 0 /FontFile2 \csname EF6O5\endcsname\space 0 R /MaxWidth 974 >>}
\immediate\pdfobj useobjnum \csname EF6O5\endcsname stream attr{/Length1 3576 /Filter [/ASCIIHexDecode /FlateDecode]}{
789cb5567b7094d5153ff7fb7df7db4df6c16edec926616159200d0bb891d0440616c803054d348126914836d94d02e4b124e11162004ba8225a1e3a
5ba420d8528ad46a54c6498b7daa834e86e958653a4ec751a7743a9da1337606759a29777bbe4da0c8d499fa87f7e6dc7b7ee79ed7bddfdd73438288
dc3ce8e4aaaea8aca2743288443e4bb3aa6b6bea2ab6add9caf84ec6deeabab52be69c5ef606e37ac6f19aba05c14d053dbf627c95f1bab6ee708c5e
d5761369158cbbdab60d784f9c1c9dcd789475bada631ddd7fdbf4cf373998e9ff4047b83f4616eea49f656cefe81a6caf7a79f551c6e7d9e681ce68
386259f7fb34228b95d74b3b59607f5ebec538c4785667f7c08eec77e9af8c238c73ba7adbc2a2889a183fccd8d91dde11c3662385f11366fe3de1ee
e8fcd2957d8c5f60ffae586fff40e20ff42851caebbc3e3fd6178d1d39accd64ccfbd1ed649e8d9d269bc6082493329352299f0a495454ada9272b9f
1eb744624a7772fd083d4d19c9f59de1be702b8d84fbba7b68a4b52fbc9146dac23dfd3c7646fb781ceceba2918e682ff31d7dd1cd34d219ee619dce
682b4b36877bc234d215eef59ae3007be80e0f74d248cf6653d2db11eea691bead3dac39d0ded3c163a7e9ffb68ca6f2d29de210ef8164897c864a45
a139272af1276ad7f894359b0140b769fad41e6eb6daf6ca084b5651b791a132c4314bb7f8cb97743045f9c99d1355301249acd35c9e0d9e059f217b
48e6949c13af278edef43179ce736f9eb6a933496636dd53c4b90b9f58406334cef43b3a4717b58b748a16737f807e2df66b015e39c3dff403a9d179
3a2eca448628e3d5f78c0c6350ee9367797d3ddbae622f1fd053ecd3f43446bbb521ad96dae9a2bc44c7b8f726e59fd22fc55eba4c47695c5b459fd3
5ed4d341eec7a85f277959a492d28ae9b419893b512b530102f272b27f4abb6988eae9b431666488f792599f116ff0afe5ef9cf37b588f2d6c719cce
8a7dba4f3fabafa28393f9a2850e6a7bc531bd25d987f840b6d371bd459c3332e82d335796d472a6edf43ad376ba24ee12fbb09f331b32339097e992
e51e7dc1645696612ce2fd10d38bf40a051067fbe45e8c763aaeb573accf39934ba8a022f6d64f94c1bf6eca7ccd903a3441f3bcae51cd7f77643474
7f83f7edc6198179b741afcbe21da5da51c7a0772c91a86dd03db27154e68fc26f1dd5fdbe4fbe6af193c0bcd5b50dded1572b2ba6bc56b654b0acae
815913b198e5951581c93b61258d6f8f79b316f355b8c4bbb4d170a8c22a0d69910667cb0c240fa9d246b61449368d49f0a5b1955b9b34fe6455a986
cecbb614ab85776690a1c902bef71ebbebda956b57dcd9656965eeb4b23b68c1922bc17f5f5dc899585cf21f96c93fd714dd9c1a678a12cc482f11be
f419f089a6f1edda87dbc7d587e307aefb0e8ccbcbd78f6b918962ede2f5324afe7ac8b28cbf65b1c80cf567a4a7b95d32739ab43bec4ee970d8ad16
33ff14981546934250aeccf3e465656b9939d29337bd50f316f09c2f3d9ebcf29c6976e8053e82c84c5933dbbbd3b32bd72172f3f273d2a639a5c701
cd675091b0fb0a3c8e393e31c790459ebc59f35cd75e3999279adf3c1ff3b478b4e6f7438e44d129ee878a7a8b6a8a8ce6ab57df3fff92afc5a7355f
7597b9cd73e03fa66c06e6892cb9726dc9d5209f5176d92d07c3786a74fdf7886e81cc36ce7c65614088e6d7428196402c80e6507a22f051e0378153
dc0f057a03b5819a404ab3802f7da95874e76cdf4cc3721b5f62592a4a8259d9c93133c3b02cd83bf688d51a69fade1fa70f8ded65aee13193fbf398
6ff53bfd3b4fb8ea3f1c3a74c2aa9dbebe5e7bb678695653e49de7ae1fd49ef52fcb7d306ab27acb8bad1d7bfab63d7ce6e88c9a1b355b6b7cc6d8fa
fcdc0dd3967c46d3adc942f4ee2ee74737e62f3ebe5ee2a8b21e48dec49b3595bf6ab72a2072ecfbe2e37ffdc25145117e436f6d866efe4e4df7e798
0e98025a6c194fd6b61bad89dfadc91c0cdc47c5d4c9f550231785ccd7114ecd31f5c658f875e3fbaff39b2616266baec90bca6234c96be41455533c
6836bfd493bc7e0b2f2947ec9ce20d9a258ed04aae0d311aa43eda481d1c7d80bc5c8bdbb81e7829480bb99730d7ca1a5e5ac13a035c2506583b4a61
aecbf3587a37f5b0fe7ce6965317772fd7e31bbefa9328ca73946db6f11861cdd4ff236ae9cda8f51c691bc7dac4363dac6de611669baf17b182b94d
6cb78eb6b2461beb8693dea2498b7072475ef6d2c3638c755ad9ef46d6f3b27d2f470f27d76ef75397f4d24f3553fa5b581afd0a1def6d5aeb9219f6
33ee4d460d729e25b4e84bd6376c03b7d9f22b99f88c6997f93fcfff6846f24e69fca58ba996684cdb134abcac30eac74b41bc18c7cf9d78a1df295f
08e2670ae7fc78de89b37efc348e3313f8c9044e2bfcb81c3f52782e885327ebe4a9384edebb5c9eacc3b3419cc8c0f1387e988a630acfa4e1e8307e
70017185a759e3e9613ca570e470b53c328cc3d53874d0230f291cf4e0fb0a4f2a3ca17040e1f1fd85f27185fd85782c88471546b2b057e1bb0a8f28
ec51d8adb04b6178b55f0e47f0b0c2901b3b072fc89d0a833b9ae5e0050ceed1776cf7cb1dcdd811d2b7fbb14d616b1c0311f43bd1b7c52ffb22d812
4b935bfc88a5a197d3ea9d404f28a1d0add0a5b0390b9b3696cb4d116ce4181bcbd1799f4d76e6a0a3dd293b826877221a4184cd2271b429b486edb2
55216c47cb865cd912c186875c72432e1e72a13915eb1f74c8f50a0f3ad0c4164d7134363865e35c3438f19d09ac5b7b41ae53585bdf2cd75ec0da3d
7a7d9d5fd637a33ea4d7f9f180c2fdb5f3e5fd0ab5f351c349d464e03e1beee5acee5d8e353cad51587d8f5baef6e31e37ee565855ed96ab14aadda8
52a854a85058b96258ae5458318ce50aa1092c9bc0d2092c295d219728dcf536ca992baf43990ac5f0ed612c6658aa0764e90a2c52b853a1a41cc109
2cb463814240619e42312f17df816fb95004972cf2616e21e6cc76ca3911cc76c22f52a53f8859f61c396b183e592e7d0a3319cdbc8019ac3fc303ef
749bf44e03bffabf0d1dd3a7db509882c2905ee0423eabe7c7e189232fd72ff322c8cd4993b97ee4a4213bcb2fb39723cb8f4c850c85f409a4b97365
9a829bbdba73e15298a6e0640fce381c1cd0310cbbcd2eed39b0d991aa60e5256b1c06ab1b0a927721cba133d203808b5ffd54a9e540a4428474ca87
1813917d4f8ae26fb6d137ecffebb682ff004f19f87d
>}
\immediate\pdfobj useobjnum \csname EF6O6\endcsname {[ 72 [ 752 ] 109 [ 974 ] ]}
\immediate\pdfobj useobjnum \csname EF6O7\endcsname stream attr{ /Filter [/ASCIIHexDecode /FlateDecode]}{
789c6360186480856a26b10200020d000a
>}
\immediate\pdfobj useobjnum \csname EF6O8\endcsname stream attr{ /Filter [/ASCIIHexDecode /FlateDecode]}{
789c5d50b16ac33010ddf51537264350624ac9600c25593c340d753a850cb27432825a12b23cf8ef73925a177a203deeddbde3f1f8a93db7d644e0d7
e0648711b4b12ae0e4e620117a1c8c65870a9491f1a7cbbf1c85679cc4dd32451c5bab1dab6be09f349c625860f3a65c8f5b0600fc23280cc60eb0f9
3a7585ea66efbf71441b61cf9a06146a3af72efc458c083c8b77ada2b989cb8e647f1bb7c52354b93f144bd2299cbc9018841d90d57baa066a4dd530
b4eadfbc2aaa5eafeb2f475a2f702ff848f4abca74827bc1473af72b4c97530cab6d3987408e7356d96a32692cae717ae7932abd27357778c7
>}
\immediate\pdfobj useobjnum \csname EF6O9\endcsname {<< /Type /Font /Subtype /Type0 /BaseFont /BMQQDV+DejaVuSans /Encoding /Identity-H /DescendantFonts [ \csname EF6O10\endcsname\space 0 R ] /ToUnicode \csname EF6O15\endcsname\space 0 R >>}
\immediate\pdfobj useobjnum \csname EF6O10\endcsname {<< /Type /Font /Subtype /CIDFontType2 /BaseFont /BMQQDV+DejaVuSans /CIDSystemInfo << /Registry <41646f6265> /Ordering <4964656e74697479> /Supplement 0 >> /FontDescriptor \csname EF6O11\endcsname\space 0 R /W \csname EF6O13\endcsname\space 0 R /CIDToGIDMap \csname EF6O14\endcsname\space 0 R >>}
\immediate\pdfobj useobjnum \csname EF6O11\endcsname {<< /Type /FontDescriptor /FontName /BMQQDV+DejaVuSans /Flags 32 /FontBBox [ -1021 -463 1794 1233 ] /Ascent 929 /Descent -236 /CapHeight 0 /XHeight 0 /ItalicAngle 0 /StemV 0 /FontFile2 \csname EF6O12\endcsname\space 0 R /MaxWidth 974 >>}
\immediate\pdfobj useobjnum \csname EF6O12\endcsname stream attr{/Length1 10768 /Filter [/ASCIIHexDecode /FlateDecode]}{
789cd57a0b5c54d5daf7b3f6b3f79e1b33cc0c33dc2f03e38078834054147322c55bc730c9c48e060a7817042f291a680a919a5730cd742a34433332
8f819969526664a752cf399ef2b50b6517324fc7ea84b0789fbd6750ec9cf73beff7fbdeeff7fdbebd597bdd9ff53cff673dcf5a6b16c000c04a1f11
1c23878fc880004800607da8347064e6bd135c97fa71ca0f0710c68c9c707f7a5cedb05300b891ea6b7e7757d6a8a4fb0b45eadc93dad4dc3b212169
f6d2cd8100d209aa9f387d5e6e111a977c012007537dd1f4c50b1d302b221540f318e57941d18c790bfa2f9e0da0a33cec9f915b52041a7a41df9ff2
7e33e62e2d886b7e2591f26e62a7cfccfcdc3ced94372b001c73a97ec04c2a30eed66ca17c2de57bcc9cb7f0e1170687aea5fc3b94af9b5b383db7e2
f3751f02442fa47ccebcdc878bc49df202ca9fa7bc637eeebcfcb8ef871ea3fc75e2e77c5161c9c2d7b5373e00702afd3f2d2ace2f1aa2f99b427a1f
c9341314acfcc0fb089443e8a79629414fe934108667dc9305a6b9b90be74330614a4f6727c0cd94da7a4e7ef17cd0fafa31aa13d4580b025ba7b464
b5ec23a21700ffd79ece4f7e9bba55f23f3dc2ff00ad93b753ee3ce7a57e2ba5e6cefc9fd1be7d14d2871e4ce00f66b0800d42210c22201a62a8dc9f
f28c74866a3bb15b1f0964fa6a7c392de8a89d09ecea4c11a8a552afa172a5464f5f833a9b8c37c7b142d75cda0c5bc1a6cea565b9c5b9d360756ef1
bcf9b07a5a71ee2c583d3d777e097d67e617d37769f15c583d23bf90d2338af3e7c0ea99b9f3a9cdccfc69543227777e2eac9e9b5be850be342757cf
cb5d381356cf9fa39414cec89d07ab8b17cda7960b0be6cfa0ef4c857eb779db1d0f85af4fe16bc884343fd05c520a855504400e097ec1db4849777f
baf26ca62f7dffbfd7083311ddc5ffbe1d4cf6b655e29be96e346e2b9fdc2d5f7c2b2d0c02b8b181d26937bbf6242d44aa08c4c01828868daa852aba
73420f15852e8d23c429ed758be81b4bfd7aead653bd8b52a0abb9ad5daf9bede26fb6ebf94fed04e873b35d6fb576bd3a76576d3fb5b680be7dd5da
47bcfe4334b18dc41d48c9d276ca467a63fc331408349f04838ca8150541a1a3b42fef1234b360441eb8c1e134c8366e633b34f3d817be365d7c7b43
38289a07384039a6e64550fcbf835e518de389df14180e23e11e9800b9900fb3a00816c3c34e833a8b1cc46f6f5a576eb5c883193097d07dd8a9efec
ecfc822cf9e3cebf74bedbf94e6753e7f1ce973b5fea3cd879a0737f67dda5906e1cfd578fd72737f972feea88de80aa4615dcbd38f7f605c54e699d
23aebdf63adc1794b932d217143af7f88299c2045fb050c8f505c562f328e45350fcf50c0ab328d829286b5411055afb409973ca9c8ea0f0b02fc4d0
3a43b83a0d9476b2146880667a4f421dec647b29a7687a0195788443b0061651c929d6ccaa84be54b617aec1396a5909cd5827021b03c9540a705112
e03acb82c3442395d958aa46a6c93b4e3c2cde27368857c4b330502c11cf8a3962094bc667a589d25e0aa9f816cd953310050dec3294c051fc0693f1
98385c34c1653c8b75f0258da2e0d70c1ba0164a89171b2b8432a154b88f4a4e4b676107bd85547f96ed62e788bba3ec51b8004fa2288c825dec02c9
d50c3fc3a3982594915a928502e2ff34d13a4bfd77400919d805a6072ef4a632e29ec69aa67e23b0af74417daf41198d9c05b572836cd338691405b1
bdec146b95b78007cee1ef71017eccd6884e719f380a367811c01cd840b477287de402b6946457de5285bab044cc6175f08d98a39946b4df5224a231
0f0bf7914405708cc212d94c320d616bb08a382d5575785633464ca0fe4441b382a40628c414984da952380887a02fd6c006a2a4ca2b0f947ea69e3b
c5cf48e60d6cbdf0339cc5e134230bc4ab84352d26405ee0558d2c892830e8e330d70baed179f5eef1931cef6447f7edf39bacc3ac71d44366bd71a9
a3a1b3337392182665d74be1f5e8d2d68b2ee767ff55e5677dfb8ccd9ce4a8ef1831dc477544ce702a9b3089924a8e8aa97cc470b54e19b45e72d1df
e89c7ac7f4998ec7cd8f3b073f6ece1fdcd7eb8d68b742b6ad78847afea3502a5b691d1be8f6979f846d26a306d02a4380de64fe646c7d40d6a446d0
779e18943db6de5f4d837b50764b52ab2535f50e30b7b7263259b0dbac41ce5821a5bf75a0505ab1ead1359e9aeaaddb64eb57fcce2b57f8902fbf63
6f7f7a9935b5d278b5345ea13a5e94db5fa38ca76160b08a015aa0f1d2aedfa21b901c68b5db048d738035a5bf504b24ab6b3c6b1e7d54b6b6f2b4cb
9ff2c1df7dc9deba7285bd49542f9207fd5812694d76ba2da0633b352809761182f4b25deb67fea43dad3dadf50e48384f715322b344dba32d4e4b74
4ab4050f0a7d3bceedeb3827f495c48e73754aa28eac94fc5667ba7088d64684be6eda43308109a108982eec8695226d2031a14965f57a6b22a94163
96be5742760c4b664e415bd7f18f3ae9c2aff314b42b3bbf10379005182088f80b903d56f0f86db2ae0bd685fb4762b83d2c98e4bd4efc995baeb79a
af26b218c162b62627592d66212e092c6670c6285f61edcea79fa6bfa79fbec174fc971b37f82f4c2765f2b3fc3d0a6769e064d69f257b7809afe095
bc84ad674bd932b65e59173e23173a99a421f5b9ede9e811058fb452031e9d364a0e47886206f3799f8699a2e1d6a67605b0d6a4ebade7550149b0c3
fee82f0a5306465ba41457b202226763f87696ff2e1bd35e5b27968c6a18d576a18e08904d886348e270d8e58e0b090dc3e0700ba9c7224962baf919
cb56a3c7b64924df0866bdc0f4e141669423cced63ebed5963eb03b31e1c5b6fcb7a90384165de359d6f3d71c2624df571d30d6ee97b561f6eb604a5
126feea4fbc589d244cd327199b438ac3244439e33440c25130a5f088be545a125610bc3574145c8aad05561abc2f7c1be30cb1498e222215206c0c0
3b594aff58678cac49b993252789769bac9181dcf5c9f67b08c6e4dcdf3d5ff1d0b987979d9ff435b38d7830845fafabab5bc2360d9eb76df4929af4
bbdfbb23e9eb377fbfa728827f47d2ef247d9790f43da1c8dd0fec01fa0a5d548523c063377a745be4708f638b7393bccefe5c7c607800a02d243cd6
610e475b944e8e574008ccea925fa7ca4f00905904a993adb5e57a4babf9abab66f5255412995b9717991b95ebc88b16610a8b64769b181d131b9712
49820c20a97ab3146fe236f170d8a6e7f807fceba9a76767bd33eff8e9c63d078f54ef7aeec909c78b4bce647fc5fc9e405754d3c64b3fba5ca7ee48
aad9b0ba7aef92a292d21eb1871d8e0f0f2ddfafcc6b5a3bc55a9a530259f34a770433a211108de980068d4762b852c7fcf4102e6b453fd597184830
a32a989f22d8f9b4a6d6248ba2d796f369ad49248baa58f10c29f78ca2d25e06da7c8d826c5a9497c0e3a00964bd2196f5c6016c1cbbd7ef5ee34456
c016b165b8861949953a168dc99664bb53b56c94b9c0780abf70e14cc754c9d5fe059e6d4fdec73d2ce71469681769288f388f80a96ea718aab15498
23423d1a9bc75c65143cb0d2b84e531b1914cef4180e7ab31c696e67ddf562eee60fcd8ab5908acc4d571503562c98d4c39bbcda09503c8d8239d86d
709b5a146d5cc2900e4f9f497dda580f7e9eff30f5d4ccc927e6bcf8eebb2f8e7f264bba50c737fbfbf3abdffe8dffe47034df917864e7ce233d6209
6d3a2b49f5c4bd967633fddc76a8d6ad64d566ad60d68314624c82709d68553d0af9d254d56ec8680ee50430368559921d8a2b897645ab713c635bae
b31416c53fe3cd3c9ded6687580d9fc93379ae947063090b66fd581f16b4976fe3e5fc115e43d895747e21ada5d91d01a96e074861ac1ac3aab5d667
2c07edd5a64dda759102845bfa8bc9c121063301d7daded2dee4756d040d3f7f95984a74a9be57169d5e6692c4207b3fd69519209ee24704eb22fe95
873fcb17b1b56cea66a6292c6a5fcbaff2ef5900b3ced977816ddadb5136e17eb69dcd63f3d9f651197f7e2887bfcf3fe41ff1f75dc4256124ce268c
2470b9fd846a72d92c1c07832813478a636b694d74eb1235999a722c1745362540f56767de15fedcfe9074c1ebc514bffd2c491a078fb8d38c7e82c9
204446456a7582462f444545a6eb0d9151a29d81fd19dbd6e06a8b580d5b5d9b2ceb7a46ea0d51611a88090b31f5d584d8627a9a3f696a6d6f6da109
a1eac3ac1af1d59faf9adf26b3f6ce7ad3f794f445d93187fc4931538e44c527c4df1b8f5398824e8c6cb70546fd0b034f201da5f4ef919c14288e2a
79efa13daf2cd9bbecf33ff34bfccaec1fca4b5b8b5f3c56b9a3f4f37759d04fb3fe2ad5be357040f9e2e9f95121bd2f1eb9f86962c20723321e7b64
fef2a8e0be27f6bfdd12abd87529c9dd97f6587a3a0f1d23ff1d6508d299e08520b9d1647154441d0d6f743658d605f94110061b755a43146a6d2362
09d8f7ceb72625a90b784253cbf576d2f7db8a974ab5a49229b8e7274624462646253a12a3136386c5b923dc91ee28b7c31ded8ec98cc88ccc8cca74
644667c664c615c5ad89a88cac8caa745446af89d918e789bb1617d9d5b5ab5357879cc89ca81c474e745164515491a328ba3cb23caadc511e1d3c85
66bb8a19a134940db438534c04626c4aff01c9d1ddbd61a070fcf2819585db1b1b1a861d7bec4073c70d263cbf2de74856fef1c97fbf262417944e2b
b97838fe9e8e957505b9279f7dfd84b56c6dbf7e757171ed0a560b08abc9b28ddc40380c7287843682c9d62869d7991ad8360c12412b8cb4580d2322
547b4c4a528cb1e57a93e22d128fe44496477a2251b54a1f7bb4f003b1c454aebd5ce2b30d0d835f5ededc099dcdcb5fee38fdfce6cdfbf66ddefc3c
1e11a6fedaba2f2f970d675a7a87e7727bf3952bcd147c7c95910e6d1046ab500fb0335d85f631c9fe02931afdd86bc18dd606bf75e16176416bd7c2
58c1ea3f225c65b1495d69156b6d51d798ebaa1b73c70f8b288af0447c10712d421a06c3d83061987d5898d44793a04dd0f5d11742212b140aed8561
ba290b481e7bb43a3907da4d3eab2607a851e7b0462c6b3fe477f6d5d9a7a74dff600ebfce4fb3f8f6cf99a641d8f3d88e46933075f2f1d3fdfb1fec
d5870d627ab2f5bbf9a5a66d870fee52b1e613c5c924938156d6316e67885f84ce5a1110d8e88f8db1ce86b863ba46ffd74323624340eb3752b65a1d
2368396dea82bda9c50b3cbfa048944ae8f72aefe5e9a5a0afc2ec9d284166e1966d0d653e95d05e2c3028858e5c7baab7eed9b3b57a4f03e76db907
c68fdf75df1f0ea71e5afe7e7bfbfbcb0fa5360843dff9e493774e7ff2c977fc73fe4d44e42b7d7abdfec683d3a7b1c10c99c8064f9b5ea7ecc58e92
6fcaa3391300fddd21a8033431b9d26469f0dba6678216c629369561237524299bc284b456e2db62254f7128c7aefa70a7c5cb322592d5d52550cc6b
58bebcfa406363fa2b8b4ebe2dd476fc5ed8b57bd7f1da8e4ad9d6b12b3fef0705bf9334f8521a17e9d4dcdb6d928f8b2fc33141625a1132b466da21
2b38b5d0d6db6d30ebdcba4c5d8eae4827a9bed1a92cac271be811736e7864db378a1cb7e8c5bc0adb04a6850cd1ecdd3626ba8d66c92d654a395291
744d92bd4488806cfbb5d58781268274190393ddb1b25517ec0f7284c6ee5719e1c086b06321660d58fcb55a39d3a2f5cf0c0fd68666381540dadbdb
5bbd7bc1b4b496ebea064201c61d90d823b347518f8d3d3cf4bed1e3728fce1e3a424ac5c6de1daf5bc0d9bdc0c58f38b1eaa5e38dc58b36ec6d2c5e
b27e6f63e3b0faa5cbf663d5f2c53f7daec0f8cc4e054661d7b34fbdf15c47a5987370c6b4e537b5481204c080dbb578ec5f6bb1a54b8b8773ec7fb4
0bbfd5a3fddfe8910656d4e8b5ec45aa1504911504c88d5668f46b504e1356fff168b58ff8cd69c2ed1c16520aa57299a64c5ba62bd397194afdca8c
65a632ff327399a5d4ea09b91662e9e62bc9ebdc76e828d97a607ff5960307b65c63567ef5dadff80fcc8297af9c3973e5eb774e7fb393bfc35bf9f7
64c6a964ad363688383c4a765a4b1c2a3ef14e7758974f6c30ad63afe3b108f2872355cf98a178c5a4242faf2d5d6ed1adf3fac54f23695976dd8486
7811c87d77375656d2d838f8e5d2f7a0b3f3bdd2978541e4199f57c2be8e83b2be2e2f971fe3ffa0f7582efbaecb317af58663883b0b24ba6db281e6
99012b4d0dba631abdac056d8655711aaa2590373cff9ee2fe0e6706ec0e5034e65d866fa92b08c7448deeb3f379e2e3e89a807ee178d86a693ede71
889455305d9268b4c2ce2ff0348d1607577c7b8809be2dc4845b5b08a86055a2adc25e15dc68111b5d0db7f610f785694d1aad2d66444f85abf3b7ed
21c897fd74d5fcd355ebed7b88ae2d04c491af70cf0bd7871bc2fdfa9193ee63e8e3374437443fc430c4cfe00007eb21f4d4f734f40a48b025d87b05
f68cec1915ef888fee1157a1af3054f85518d55fb40541d6cb06f443239ad01fcd1882a11886e162842e2e217e58fc43f165f1e5f11be33df1d7e283
691bbee0b79b15d9f9cf9b9501841dae1db76f7255d5b4adc39af6fcf297c9a7e616bc9dbb6a5dfe7ef7fe273f7dbfe0b038ec60cf9e5959eed1d1a6
5edbab761e713a8fa7a4648f1f9be9f2ef51bd6ad78148459703c9adfd28ed221ba4b5d72469fdf105b0b063da4abd8130a63966b69a141b4c6ba2bf
24df5ed87bec206ffa92d79b2a5b745be01066577607b40e275bd81256cad78c2d79fdf50bcf56564abbf89b1b3a3c55e376ecfe48c8d9c0ee547ce9
41b2c249aaf5db60883bfc96fdafd3b363b6063fb27e9b611cf9810cbb628ea9de19d59274d30914da4f284e20803cb8d7ec6eee0162d941c509bcd8
d070f7cb8b4ebec3fec88e0a7b3b7277ef3e5e2b94def01c28987e0df729d20f250f5426e6800c37dc7168112551b0304152221464909905404e1710
de906409054627708d72cad72be71650ce2dca814639752b871a500f9b41de7376eaaddf34b4bea0feb6e17e6294305b2815ca840aa15cd824d40a5a
65201dea6816db5928868ab174468bc778d1a14d81143618078b89da0cc860a371b498218d92ddda8930916563b698a92d8002360b678933a499728e
76112c64a5582a2e9296c96b600dabc22ab14aaa906ba0866d1376e093e293d236799ff4bc5caf3da1bdacedd4de49132e2059a7fce432f4149bcaa6
9ee2bf6f1373dab3f0c00d0f213484105a4a0819d8ddee0cc942877ad182a246892491090c2d82c00c16e5bec8a2d3332532e8355a8dcea2d56ad2f5
1a91895a424ff0a548583f2f80cae1efc1b1f566e56351e193bbf054d2dedf2d9a2c2a9ab4ae06fd4b3cff15be4fea45511f2adaf5b1faa1e21dfafb
c5073493f405fac56c99b858b350bf5e5ca5df2eee16b76936eb37eaf7b217c497c43d9ae7f41e7db81e4549d2e90da16897ecba50433cc64a2e5d2f
83c33898a5e240a9bf66802ed590681c8d19d208dd1883db98ade841c8c607a48972b666a276a22edb90692c343eccca8c4fb1ad9afdac56536ffca3
f1b2b1d398a09cb005a78ed11f012ee6f139acee223fca8f5e64aff0e28b2c9ec58b391d973b4eb2063e4a182304f2056c03f87e619c44bb04094099
ecc9f6fa866bfc474dccaff33650ed69f5f740b5964593e9250b857cd2b506fee306798bdabba31d33359b69b503e64c310758039201699a979cf21c
ddebe1ed93cb3adabfc56dec73219161c7df7945c7d5f6efbdfd5889a64ab94350761ef5a74e69aa7e2ef1f263238a552a45bb42d02910c1145652b6
a5d9f3c1468fa6eadbf63dfc016ee3f96c04bb2ac462d2b7bfe524c54ce6946c0db05aecc2fab2c9bcddb3f7a847612440b0b2a5828977747cc023db
677d4bfe73bddaaf4abd1ba6952311335556d4fba1eebc24db89283a15a2c2fa539e8d1f789ab79429cc9cedf8985bf96bec29d6ca5ec007e1bf259d
82a7ead95809bbd491a3a9fa65dc29e572aa8b9f654a5f467d69ac539a653fafbad97799da375939a1535d470ebb44d53b4ea9373942f6f68ef2e7de
7fc83fed2788d2aa17361f3e626ae98a7ff953fb3da66c9d72ababbd79bf43fd34f378048089fff2a7b6f1a6ec7fba138a15cfaaf72520d0f658584b
eb7314d453a8950fc345e11080140495143ea350436127853c0abbe4783823274389b808ce4836a814af4029c50bc45658a0c492158e0aa97052099a
6638aae4c52fd5baa33886d2bda1109d3090ca0f8ac760288521beb14f93baeabdafb0be5bdcfd1904d3613d1c82ebac37ab1322c92dfe9556c7b9f8
84a8173d62ab34513a2afd2c47ca69f2d3f2454d2fcd524dbdf621ed4eed7fe80af577e8ddfa4cfd5cfd527db5fe65fd1bfa8ff55feb7f35c8864015
9d04cc82de3013fc68a531c376054dd12e04522caa775e9395dfee451d819ba8de932969068194f3a605d0b20c5f1abb958bddd21204b371beb40c36
560077432114c1522886593083465fa8defd4d87788a932091de644a4da3160e48a7360ba1844231e4432ecc833e543a1ae653fb7e94ba0be6d2eb80
fb6ed22a5173f914e7539fc5f4cda396faffc6a8036e8e9a45232da6b194db9af9d45ae12397fafcef8d389c52b3a9df4458442da653db5c955abeda
235795c84154e6d3b788da4c23bab3a89d83fa17d2e8b96add6fe94c50a9941047741a863954aa8c5a426d0b554a493476b27ad37aab57571fc13ba1
3a1f51efb6fff94950ed48f9af05e5bf0e4c6055f71d7608a4334030ed2d7b518b01ea0ded28180363e11eb81732613c493f01eea7911e8049900d0f
2a2b1e9d432526330dd3321d9db00dcc8f19358be6cf4a4a4ebccb17a77be3d4aef86e5f3cdc178ff0c519def8ae445f9ce48b937d718a2f1ed02094
bb3b6f706cb3e1af2efc4712fe52833f9bf0278ed739fedd853f9af06f3578cd853f3c7e97f403c7ab35f87d0db6b6e1776df82dc76f06e3d7e97885
e35749f865cb04e9cb1a6ca1862d13f08bcf13a42fdaf0f304fc8ce3a71c2f27e17fd8f0520d7ec2f1632bfe75055e7c0dffc2f14fd4fc4f2bf0c2f9
91d28515787e249efb284c3ac7f1a330fc90e3071cffc8f17d8e676bf0bde648e93d8ecd91f86e129ee1f8f61a8bf47638be15884d1c4f717c93e349
8e2738bec1f138c7d7391ee3f81ac7a3166cac70498d1c1b5e7d4d6ae0f8ea9129d2abafe1abe5e2913fb8a42353dc9d78c42dfec1858739be528387
38beccb19ee34b1c0fe6e18b263cb0df251dc8c3fd755669bf0bebacf80231fd421beee3f83cc7bd1cf758b196e373cf9aa4e792f059133e93871e6a
e2a9c1dd1c773ded47fb547cda0f773e1522edccc3a77698a5a742708719b7ebf1498edb6a8cd2368e3546aca64ed535b8758b49dada13b79870731b
6edaf89ab489e3c60d53a48dafe1c67271c3132e69c314dce0169f70e17a8eebd6f693d6715cdb0f1f27311fbf0bab1e334855367c8c0e4f54509987
158454850bd7587035c7475759a44739aeb2e04a8ee51ccb38ba3b1f59b1427a84e38a15b83c0f4bb3ec52a90b97715ccaf161132ef1c3c57a5cc471
611b96b461711b2e68c3228e851ce7739c1b8d7338ceb6a44bb327e02c8e3357e00cca1470cce798c7713ac7691c7307634e1b4ef5c3291c1fe43899
63f624bd94dd8693f4f8406088f440124ee4783f8d7c7f3a66d97102334b1382f13e1b8e1f13208de79869c07b398efb9d591ac7f17766bc87e358aa
19cb71cc68b3342600474718a5d1661c65c4911c336a70440d0ee778b7d057babb0dd35fc3bbc6a29be3308e770eb54a77da70689abf34d48a69438c
529abbd31f87187130c7548e8306daa4416d387080591a68c3012906698019530cd83f12938d987487414ae278870113130c52a211130cd8afaf4eea
67c6be3aec9384bd7bb9a4de79d82bde2af57261bc157bc6b9a49e77619c0b635d0629d61f5d06ecc1d1c931c61fa349ce682b3af230aa0d234984c8
3c8c3062382118ce31ac0d43d3318432211c83f33088900ae218489d0243d0ced1c63180a3951a5839ed98fb4a967434af40ff3c347134fa054a468e
7ed4da2f100d1cf566d471d452332d478d0de53c14a952a41960472a454e7b27b324f4456646e0c81a58de9af5acf7ff0f0ffcbf66e07ff944fc275c
3394ef
>}
\immediate\pdfobj useobjnum \csname EF6O13\endcsname {[ 32 [ 318 ] 40 [ 390 390 ] 43 [ 838 ] 47 [ 337 636 636 636 636 636 636 ] 68 [ 770 ] 71 [ 775 ] 76 [ 557 ] 83 [ 635 ] 97 [ 613 ] 100 [ 635 615 ] 103 [ 635 634 278 ] 108 [ 278 974 634 612 635 ] 114 [ 411 521 392 634 592 ] 120 [ 592 ] 8970 [ 390 390 ] ]}
\immediate\pdfobj useobjnum \csname EF6O14\endcsname stream attr{ /Filter [/ASCIIHexDecode /FlateDecode]}{
789cedceb7aa03411004c00679efbd9efcff7fa39643089e824b1428a90a9a69961936f952e3a337d32ad97eb54ebae9a59f418619d5de19573979f7
e9bfd759cde6bcca45962557596753f56d76d9e7906399ff72ca39975ccb7cabfd0500000000000000000000000000000000c06fddf37802a19f0298
>}
\immediate\pdfobj useobjnum \csname EF6O15\endcsname stream attr{ /Filter [/ASCIIHexDecode /FlateDecode]}{
789c5d923f6f833010c5773e85c776880810702a21a42a5d18fa47a59d5006b08f08a91864c8c0b7afed6792aa48c9d3fbdd9def645f782a5f4ad52f
2cfcd0a3a868615dafa4a679bc6a41aca54baf822866b2178b77ee5f0ccd1484a6b85ae7858652756390e72cfc34c179d12b7b7896634b8f01632c7c
d79274af2eece1fb540155d769faa181d4c2f6415130499d39eeb599de9a8158e88a77a534f17e5977a6ec9ef1b54ec462e7238c244649f3d408d28d
ba5090efcd57b0bc335f119092ffe2518ab2b6bbe5c7361f5243cf0e1f819f3cbe59445bd8d6475b8f3b8793d4636fd12189203124811cb60a77c001
d64a0d05e6c0dc63eeb100161e0be014475ba9a10e67e86fa58602a365e667be5b44d139f3d7b0d9e346918439b8bf426f33092148b7e5b8128e5be0
194a368bc93966e0e996831234b552430d8e937d63b011f7147fedd9befff6d07615ecdedef64c5cb5362be696dbed96ddaa5ed16dffa771b255f6f7
0b8c68d42c
>}
\immediate\pdfobj useobjnum \csname EF6O16\endcsname {<<  >>}
\immediate\pdfobj useobjnum \csname EF6O17\endcsname {<< /A1 << /Type /ExtGState /CA 0 /ca 1 >> /A2 << /Type /ExtGState /CA 1 /ca 1 >> >>}
\immediate\pdfobj useobjnum \csname EF6O18\endcsname {<<  >>}
\immediate\pdfobj useobjnum \csname EF6O19\endcsname {<<  >>}
\immediate\pdfobj useobjnum \csname EF6O20\endcsname stream attr{/Type /XObject /Subtype /Form /FormType 1 /BBox [0 0 306 147.6] /Resources << /Font \csname EF6O1\endcsname\space 0 R /XObject \csname EF6O16\endcsname\space 0 R /ExtGState \csname EF6O17\endcsname\space 0 R /Pattern \csname EF6O18\endcsname\space 0 R /Shading \csname EF6O19\endcsname\space 0 R /ProcSet [ /PDF /Text /ImageB /ImageC /ImageI ] >> /Filter [/ASCIIHexDecode /FlateDecode]}{
78dabd5a4db3d53612bd6bcf9f7055365029f4d4dd524b5a0e930953a9d9105ed52c862c524008141002f9faf973e4eb6ba9759f2f2f64007885396a
d9dded3efd21f3f34cb3c76f9aeff9fae7c9ebe9eaab67bfbd78f2ecdb07f7e77f3ceafff5e4fd44f34bfc3cc78697f8f91ddb1ee0e7f9546ff17a12
aff8fbd5f23785e414d77ebbfa719a7e98aefe0ef1f7907a304dc22e69cc1a67cea7ab7a1376a594a41dfaaa4329041797ab57dd1d7a747d101f1ff4
1c0a425997a12e1e5b9149830b21784fe6c91d1a1d9f9e3cdd87c5bf4f3f4fd551f7aaa7941d97185522699a495d88229085efee5f4f575f43c2a5f9
fa87c52bd74fa7ffce770ef1eefcdd7cfdcdf4cfebe9e1b4283315486956313a34f0a20a855d489462501fe5962ad0c19f2b415c9c2fc2311b357af8
a222c4c1a5a292858ae45b6b72833b48c9054a3927ab49075fd62426474c9e20ccb7f509dfe893222e058e6403b3872f6b920b02296af0927cbeb526
37f884599dd784e0369af4f0454d98c9a5e4597389e5b63e919b7c8248705228251bae3d7c59931810502117c46ccab7d6e4269f14ef12c1b996bc3d
7c59938c675389ac241a6ea949b03e3161472e67cec4718ece7bcf61bd13b63bcd29d67bd4bb26c7f5ae770e5fdd9daf5f4ed1c5cc1a483de9f14177
0ecf9615a421f52114a69c4f2b6fd715f5eca34f78dc69e59765855d0e301a26a56de5c7754f2ca1249fbcd069655ef7884798c6c0251d57aebee6a6
e6eb45a8f37ccbb6829729e4032f8e3f79bba137b242b6dd785331c1c41862ec9d95162faeee5a5f02b2c56a0617884785cb4e66f87545946a54211d
cf8cbb94deeb705939dec7efdb13d5e13dc04fc69e865eb627c2f9395154bc30e9ec4941d2a73387f6cd491977f714a231a7a197cd49086314360459
41947db6d7c3fbf690f795b28a30ed0deae0cb16155486a01c4bacda7d368be48245cc2e82e1a1588b1a7cd92222ec493113b1a4cf1672a137488d41
b001c557b92c1ddc39ba98d32510ec47ae4046eac51b3a8ad71cab419315dfd0515c10c13904ab4c4307f180c7164ede2ad3d0511c1915f594d98a6f
e8289e9cc4206aa54fe0288ccea1e422d14a6fe8208eda837e84b291dec0511857825640acf4868ee2a1a6014fd6890d1dc5233896fde0c4860ee2e8
a1333347ab4c434771c432159f06f10d1dc5e1b08cbec42ad3d0413c9153ce21dbf06ae8282e4e329e6a03a0a1a3787059992558f10d1dc5517540d3
419713380aa35aa7140737367410cfc19548d95b551a3a8a677415d0d246634307f102879104b5ca34741447368bd98b8dde868ee26898c8e38558f1
0d1dc515f3224601cbbb868ee2b9e6531a48ddd0511ca1e4130d8e6ce8204e3e399f49a2d5a683c70de45d41754a360c3af86c431d4ec4531c366cf0
d98658bb965806f9137a268e215e4a96f1fe1b7cb661c9fd32bcde0e1e3730aab94736b73eede0b30de4a2a88e4e6df0d906144df2c50f2a3578dc80
9281779478d8d0e0b30d705f6294d861c3069f6dc8c8a798e478d8b0c1671b0a2e316a0f81d1e07143a8e5c0e73cbce8069f6d400f228107d677f0d9
0671aacaa35b1bdcf532cb49536d3f303712e3a56ad5436a1bd6e6af7504e9e6af47875f0f6ff1fba7c3bbc32f87f9f0e2f0e6f0b48e4b883ff42eeb
2f08fe61c6b4f9a62326d474c708c0b286fbf1f2ddb3f93ff39b99e76f66723ed6c3220cac28748a4638a0df91e39550c04a0c887c0dc9ebfced83d9
9e914d010d349a380c82fd9c84b21ce175db58c4043ea38259b41d44f55d05d4468c42c6a27040921cb9ef29b2c0ac9c306877200664a92712c1f413
db81530f9297ea22348b7d37018e4047e47e3228d5314990f84c2f515b59944f413db4f076b0d4b712952111ed2cf2b88111749e056fb66f24708142
8e1a61fb0b74b0a8635c94cce4d61d205938ad73fb8c97e0111b65b11c75594282472d9cd1c113a375ae30e622f4110b7cacb4c5a2eda02866a7b926
17c08c71857d0ae8be2dace8b81983fb1c0ba22cc8c27ae4484792b8c800977a068049735604684c81aa35dd899081a592160fada8a6a44be1606497
82b109961918711b940ad3424a74e20bd761872b787b210df076f20338fbb2243746e866e4b9920d8a564aea1c106a87275c381fe1e4508a63b428da
113c3324db0e76e73ba64b7c343d9cff12cf41ede4234241103b73e5484168c10b8a7f205c57a6ffdf78de08d968de33a4d1bc6742a379ffca1acf71
2ff83a2f25a9111d312e52e448ff13d1d1a87a0c992496e8807328ba4459c7740337aa1bb851ddc21bd5412aef7d3952a9511da44281d0180cd5733d
e52a61a47aa55a407e0a96ea066e54af70d6e3d140477523dda85e61f4564b42eda86e6eb2511d17c2a80362a98eab9263496ca95ef0ecaa6ab654af
671788aaa52beea84e90f0685e966ea7e37aed141198bc94eb8eecb5bf1338b390617b8531c363b3a5fb82c7c2c7c4d3f87e94474b1f2ce1173cd573
5bcbf8da94292ae1e2c78ef3471c392259d6a3d2d7c307564b7b4cfa0eaad70361c3fbda6095083da3257eed724442cc1df32b91fdf681c73274e703
d38d1f8d70bb3ff33d0ae2f5b3d20dadc27ca95598f02e52595b36c430ccadad627d13238aa4c1f0ce198a27fbc5e6fa61ab7dd2ea8e89906c40e8e3
3e448456dd6f3e92de4e881eac47d22c288f3eb397d309d1afa7b3a384e156d19396d3caf7eb0a8922754684c669e5dd7abc0cba50d09011a81fdcf3
66f739a7036e0daa5e4388321e975344f681ceed507c7fe5e9e93919b7ab075d793c14c7a0ade06bcda5ebca8b6505f52cd5c162a9db9f44eb7dbffd
b6ac80152520d316affc618fbedad57adfd27dddf62dddf7e8dbdd959f4ecf6104271225f3a7f1f5aea55f3c7e73da2475aa598e3ed74d8fefec2cd5
cf231b658e9f476a76f7927dae5da53fe3d5978b90d6249a91dc4ad36cff4cf6f1dd5dcdaed615cf8190d0b9b980f76ef7c5dfdaa1ad196e3eaee9e9
f2170b72cb59feda5093bf3af44fe52fccf230f2c3f9ebdf2b3f104c9838a13cff952cf57c376ef7f9f17ed5a0fe9f005478699cfa88e8fc28163cfd
08adffd8cd2bfbbabd5e633a423742f7d456f67dbdf71cc3a87fad45086d325ae158ff3fc13daaf35dfba8c1ddd788eefbe4c3697af83fa6e1e3e3
>}
\expandafter\gdef\csname EFWidth6\endcsname{306}
\expandafter\gdef\csname EFHeight6\endcsname{147.6}
\pdfobj reserveobjnum
\expandafter\xdef\csname EF7O1\endcsname{\the\pdflastobj}
\pdfobj reserveobjnum
\expandafter\xdef\csname EF7O2\endcsname{\the\pdflastobj}
\pdfobj reserveobjnum
\expandafter\xdef\csname EF7O3\endcsname{\the\pdflastobj}
\pdfobj reserveobjnum
\expandafter\xdef\csname EF7O4\endcsname{\the\pdflastobj}
\pdfobj reserveobjnum
\expandafter\xdef\csname EF7O5\endcsname{\the\pdflastobj}
\pdfobj reserveobjnum
\expandafter\xdef\csname EF7O6\endcsname{\the\pdflastobj}
\pdfobj reserveobjnum
\expandafter\xdef\csname EF7O7\endcsname{\the\pdflastobj}
\pdfobj reserveobjnum
\expandafter\xdef\csname EF7O8\endcsname{\the\pdflastobj}
\pdfobj reserveobjnum
\expandafter\xdef\csname EF7O9\endcsname{\the\pdflastobj}
\pdfobj reserveobjnum
\expandafter\xdef\csname EF7O10\endcsname{\the\pdflastobj}
\pdfobj reserveobjnum
\expandafter\xdef\csname EF7O11\endcsname{\the\pdflastobj}
\pdfobj reserveobjnum
\expandafter\xdef\csname EF7O12\endcsname{\the\pdflastobj}
\pdfobj reserveobjnum
\expandafter\xdef\csname EF7O13\endcsname{\the\pdflastobj}
\pdfobj reserveobjnum
\expandafter\xdef\csname EF7O14\endcsname{\the\pdflastobj}
\pdfobj reserveobjnum
\expandafter\xdef\csname EF7O15\endcsname{\the\pdflastobj}
\pdfobj reserveobjnum
\expandafter\xdef\csname EF7O16\endcsname{\the\pdflastobj}
\pdfobj reserveobjnum
\expandafter\xdef\csname EF7O17\endcsname{\the\pdflastobj}
\pdfobj reserveobjnum
\expandafter\xdef\csname EF7O18\endcsname{\the\pdflastobj}
\pdfobj reserveobjnum
\expandafter\xdef\csname EF7O19\endcsname{\the\pdflastobj}
\pdfobj reserveobjnum
\expandafter\xdef\csname EF7O20\endcsname{\the\pdflastobj}
\pdfobj reserveobjnum
\expandafter\xdef\csname EF7O21\endcsname{\the\pdflastobj}
\pdfobj reserveobjnum
\expandafter\xdef\csname EF7O22\endcsname{\the\pdflastobj}
\pdfobj reserveobjnum
\expandafter\xdef\csname EF7O23\endcsname{\the\pdflastobj}
\immediate\pdfobj useobjnum \csname EF7O1\endcsname {<< /F2 \csname EF7O2\endcsname\space 0 R /F1 \csname EF7O9\endcsname\space 0 R >>}
\immediate\pdfobj useobjnum \csname EF7O2\endcsname {<< /Type /Font /Subtype /Type0 /BaseFont /GCWXDV+DejaVuSans-Oblique /Encoding /Identity-H /DescendantFonts [ \csname EF7O3\endcsname\space 0 R ] /ToUnicode \csname EF7O8\endcsname\space 0 R >>}
\immediate\pdfobj useobjnum \csname EF7O3\endcsname {<< /Type /Font /Subtype /CIDFontType2 /BaseFont /GCWXDV+DejaVuSans-Oblique /CIDSystemInfo << /Registry <41646f6265> /Ordering <4964656e74697479> /Supplement 0 >> /FontDescriptor \csname EF7O4\endcsname\space 0 R /W \csname EF7O6\endcsname\space 0 R /CIDToGIDMap \csname EF7O7\endcsname\space 0 R >>}
\immediate\pdfobj useobjnum \csname EF7O4\endcsname {<< /Type /FontDescriptor /FontName /GCWXDV+DejaVuSans-Oblique /Flags 96 /FontBBox [ -1016 -351 1660 1068 ] /Ascent 929 /Descent -236 /CapHeight 0 /XHeight 0 /ItalicAngle 0 /StemV 0 /FontFile2 \csname EF7O5\endcsname\space 0 R /MaxWidth 611 >>}
\immediate\pdfobj useobjnum \csname EF7O5\endcsname stream attr{/Length1 3780 /Filter [/ASCIIHexDecode /FlateDecode]}{
789cb5567b505cd519ffcef9eeb9775996ddbb0bf2300496908d50c2435010ccc40de1a5310185202459b30bbbbcd9254012484a421b188d51a331b3
558c4d6a531be38b5adba1a50fadb63ad5ccb463328e7f389a319dd669ac76c6474bcbd97ef7423066ea4cfda3e7ec39f7fb7de77b9e73ee77171800
38695240afadaaae8144d000583a71936b1bea1b573e95f713c2d7110ed7366eaebce6e44d2f133e49385adf5858dc931b9e05e06b0837b7f70706e0
71fe0ce1a384fbda770dbb1f3b3ebd0a0009b2be8e81cefe3ff7fcfd1572d6478c439d81a101f246fe947f10b675f68d766c9d3ca40208856cfcbe2b
14086acdbf7101687fa0f5d22e62d89e14bf03b05808afecea1f1eb9ea25c824bc9a707a5fa43d006fb31708d711d6fb032303d8a7c6116e21ec0e07
fa4305a5eb07098f907d7d2032341c3b03e701ac77d37ad1c06068e0c8837c05e1198ac906c6ded860a1714208aac93386150a600df0aa9a5b9bc0de
17180e432aed21b5580c608932a57b438361b02cea315ae3e6d342cf2453f26ac820cc4d3fec3209c54497fc1d81a39064fadb13180cb4c14460b03f
0c136d83816e98680f848768ee0a0dd23c3ad807139da108d19d83a15e98e80a8449a62bd4469cde403800137d8188db9829ee89fec070174c847b0d
4ea433d00f13833bc32439dc11eea4b9cbb07f596e5f34a6d8d90320e8bc4ac42350ca328c67ac1adf820e4ea7c6e3554454e2b9b298c3526be8a80e
12a7097ad52499c4a6b47ef6fe97647071a49b9903541162265620879eaab937dcb060c6d4043db1586c3af6e8928d8573cb593a3d43666118d1f4d0
e8a541b1b36c560833f03a8d97e034bcca5f85135046fd76f8153bc8f369e509b80bde161c5e8063ac9c25b1725a7d534d5247c5a43845ebdb48b78e
acbc0d0f914dc3d20cece77b790374c0abe20c4c518f98fc8fe1e7ec009c8387e1755e079fc1016c82c3d4a7604801718e5941f23c386978a20ed046
6339e68b7366ff18f6c35ecae1a43aa326b137cda89f602fb38bf001c5fc266ec31da4710c4eb149255b39a5d4c1e18578d10f87f90136a5f8cdbe97
3664371c53fcecb49a04bf3562254e0345da01bfa0b11bceb01bd9241ea4c8f61a1188737046bb45295c884a1bc3eb291fa0f12c3c0ff918257d3317
b5038ef10ef2f519457206ab2097ac0d015d7388c2553f5585829cc16ab73ecd3d3707a7bdb7b5b85f6bcdca5f7d0574eb9a7b1a1aa61346dd33b158
438bb24cb44e8bf469f458a6154ff6f9af5a3c9fbf7a43438b7bfac7d5558b56abfd55c46b6c21d240c4267e7555fec29da037906e8f71b3a83e882e
ca5283216f9622842a18e728545554a87c0bd428a02902b9ca34950b8b7ef6c2d9579ce5e5ae7267f9b550f8c9c5e222f2a1e9e2c39462636e5df1bc
c619f3793da4c035d494144c513cccc3d7c25aad1eea351ff3713ff3f3e7e039cde54bccd2b2e81a668d2a1fb9e7ff1a83d719cc7f2cce9d9e3b27f2
28ce589a6ce66574527670c11bde754ee172b874e170b82c9ab1a56a1c1776b033baccf60ab4703bb782b3dd611db3ec877d8e6d894ea63bac8971ba
c3121fe7ca7140a27ed69b30007e68002f14811b54dfd90b050505fa8b2f8cdbc7e3b8cf4989b9ca2935fde2854f2eea7f33008d4b491ac399527e89
6c5dc1bcb90e874377381dae4c47a69ee9cc74153a0af54267a1cbeff0eb7ea7df157144f48833e24af279e3f6274612b727d6272a3e56829aaad959
1e2b2b292e2d63255a36cb9d7987798af29fd9559d51d8b7359395cce4cf28fe7fdd2b7e5dbbbeeed896136ace0746a5a00aaffc89f6c40ae7bd75aa
d02c9a108ab05834cc5584c27319271c27c06aa135b0723a3d2b588d3db256580447dc07dbe2d4388b55159c2954143cd695f1fa2717dea0ccccdc0b
d75cbc907259ce960f35fa2d52fa174f3a6c8b621cf65ac19379296fe44da2c9d225ba2de3fc216eb5a145a9502b34afad055bc41dda1d966ed1a7f6
68a338aa8e6a93fc7e3ca8dea34d61920f7c6cc3b4aba9e597101f7b91328ca7eebdc15976432b6d132b89a31b82d96cf7cc34dffdd1bc97f377e58a
1eaa03ffdec9de9affcbfc0ccf9e7f67b186f2d647b20afd95db1d6b3e854c8b5900ffb8cffeeea5e7e7efcd9724d4580e996fc0522da7f7ba5f2e07
4898fcfcbd7ffe2ca10682f45fe0f2a629467d30cc9fa67188f6ff7d68e1e5b134e571e334965a3d7d8343661c1a6e823ce8a25acc41a7bb465f7ab4
f304f26ad4600db618ef9e42df675664d67b8366904c6881e674a96b16698455ac6991562ea305a4b23d8bb40a2bd911584f756900466110baa193bc
0fd30dcf8176aa456e28a6fb5e042544b591841b2a4966982ad43049872000fdb09ab8374398e40b885a077dd4ddf42db8646bc844217a86486717cd
4192b4fe0f5e4b97bc3691a75de4ab8774c2246dc411209dafe7b18aa81ed26b869d24d14eb201d35ac8d4089819b9c94a98e601926923bbdd24e726
fd08790f986b57da6934ad0cd1292ec8ef206ee82b64dc5748359b110e118e985e8b29ce12b8fe4bda9774f3afd0a52f74ec531afbe856fcb7a69977
94d34997c146d834c3c7bdb11f499cf6e073c5f86c149fb1e3d34376f174313e25f1b4079fb4e3290ffe308a4fcce10fe6f0a4c4ef57e0e312bf578c
278e378a13513cbe719d38de88df2dc6c792f058141fb5e294c4475cf8f0187e6716a3128f92c4d1317c48e291076bc591317cb0161f38bc4c3c20f1
f032bc5fe27d12ef957848e23d0733c43d120f66e0ddc57897c489643c20f1db12bf25715ce27e89fb248e6df088b1207e53e25e27ee199d157b248e
8ef8c4e82c8e8e2b23bb3d62c487235e65b7077749dc19c5e1200ed9717087470c0671c7804becf0e0800b231456640ec3de98c47e897d127b93b1a7
bb42f404b19b7c745760d7a678d1958a9d1d76d1598c1d760c0531486ac128b64b6c0bd8449bc4800dfddbd3843f88dbefd4c5f634bc53479f15b76d
4d10db246e4dc02da4b1258aad2d76d19a832d76bc630e9b37cf8a66899b9b7c62f32c6e1e579a1a3da2c9874d5ea5d183b74bbcada140dc26b1a100
eb2988fa24dc148f1b29aa8debf0567adc2a71c32d4eb1c183b738f1668975b54e5127b1d6893512ab2556495c5f3926d64bac1cc37512bd7378d31c
ae9dc335a595628dc41b5fc30aa22a1ab15c7a07f086312c2358aae48bd24abc5ee275124b2ab0780e8b6c5828315fe26a8979b49c772d7e43c75cd4
456e36e664e035abece29a20aeb2a3875985a71857da52c5ca31cc1615225be20a422b66318be4b396a13b335eb81d48ff385ef44e2999f198118719
5e65b98ee9249e1ec56551bc3acd23ae0e625aaa4ba47930d58529c91e91b20e933d7895c424898973e872a6099744275975a6a12ed121d14e16ec51
4c20870963688bb7095b2ac6dbd02ad1424b9628aa24ae4a149485a8408590928fa8d37f29abe0a9c8acc8bc0aa4239b61c1c9fb58deffb7c1ffd9fe
d76dcbff035f8b097c
>}
\immediate\pdfobj useobjnum \csname EF7O6\endcsname {[ 84 [ 611 ] 106 [ 278 579 ] ]}
\immediate\pdfobj useobjnum \csname EF7O7\endcsname stream attr{ /Filter [/ASCIIHexDecode /FlateDecode]}{
789c63601822808504b5ac0c6c0001a90010
>}
\immediate\pdfobj useobjnum \csname EF7O8\endcsname stream attr{ /Filter [/ASCIIHexDecode /FlateDecode]}{
789c5d504d6bc3300cbdfb57e8d81e8adbd0f61402a3bbe4d0ad2cdd69f4e0d872302cb2719c43fefdfcb1a530812df4f49ef410bfb4af2d9900fce6
adec308036a43c4e76f612a1c7c1103b54a08c0cbf55fee5281ce351dc2d53c0b1256d595d03ff88cd29f805362fcaf6b86500c0dfbd426f6880cde7
a52b50373bf78d2352803d6b1a50a8e3b8ab706f6244e059bc6b55ec9bb0eca2ecc9b82f0ea1caf5a1589256e1e484442f684056ef633450eb180d43
52fffa5551f57aa59f8e915ed257c98f049f4586cf7d819fe5234dfdd3a705e91aab7b397b1f8de79365c7c9ab215cafeaac4baaf47e003bd07a48
>}
\immediate\pdfobj useobjnum \csname EF7O9\endcsname {<< /Type /Font /Subtype /Type0 /BaseFont /BMQQDV+DejaVuSans /Encoding /Identity-H /DescendantFonts [ \csname EF7O10\endcsname\space 0 R ] /ToUnicode \csname EF7O15\endcsname\space 0 R >>}
\immediate\pdfobj useobjnum \csname EF7O10\endcsname {<< /Type /Font /Subtype /CIDFontType2 /BaseFont /BMQQDV+DejaVuSans /CIDSystemInfo << /Registry <41646f6265> /Ordering <4964656e74697479> /Supplement 0 >> /FontDescriptor \csname EF7O11\endcsname\space 0 R /W \csname EF7O13\endcsname\space 0 R /CIDToGIDMap \csname EF7O14\endcsname\space 0 R >>}
\immediate\pdfobj useobjnum \csname EF7O11\endcsname {<< /Type /FontDescriptor /FontName /BMQQDV+DejaVuSans /Flags 32 /FontBBox [ -1021 -463 1794 1233 ] /Ascent 929 /Descent -236 /CapHeight 0 /XHeight 0 /ItalicAngle 0 /StemV 0 /FontFile2 \csname EF7O12\endcsname\space 0 R /MaxWidth 863 >>}
\immediate\pdfobj useobjnum \csname EF7O12\endcsname stream attr{/Length1 11240 /Filter [/ASCIIHexDecode /FlateDecode]}{
789cd57a7b5c5465faf8f39ee79c33f7616698e13a0c03c3c53b34888aa24ea4e42dc33453cb0245d4bc80a2b5462e5e56c8d21533b1cc6c72b594cc
d05c052537932c33db5ab5b66fb5dd28bb90b9ad6d85f0f27dde338397faede5f7d7eff33b87e79cf7fe3ef7e779e7000c001cf490c17be3b0e1f990
079301582f6a75df5870f3f894e33de6527d04c1c51bc7df9ab7f11f3bdf01c05ceabf78d3f51346f44a9ff50d4dcea4feda9bc767f8efce599f00a0
bc46fd13a7cf2b2acb7fd3ff2e80eaa5fedf4dbf6791176627e400e8b6539d9794cd9cb7a0ef3d770318a80ecfce2c2a2f031ddd602ca0ba79e6dc25
25c33e4e3650fd0e80e89659338a8af5535fae02480e507fbf59d4607952b785eabfa17acaac798b7e7392c73d46756a8323734ba717adada9fe2d80
8fe6c3b47945bf29939f501750fd25aa7be717cd9b91feede026aa7f4cf89c292b2d5ff43f2d63d601a4eea7fed7cb16ce281ba4fb3b15d396114db3
40f0ca0ca14ba21ba107b5f5840c2a8b3e23f4815c9086e58f9900d6b9458be6430cf195aece4e80cb253192cd99b1703ee84549039956106f3d482c
538c64c3d85caa39e0aaabf3835f97436f7e944adbb5fa41ed795af4741e24f8e0da79fffeea9cf36ffafeeb55febbb19dc3b4e769c2f174d70c51be
86caa2ffb88ce0998df81405d1100f6ef04032b539c0ae71565c424ea14b06859e6ab8a6a3112a245c1e77f52585e789192a8dd48381a46bd2a46f01
2b44d09ef6b0ccb7c26e706a32bfaf6861d134a8295a386f3ed44c5b58341b6aa617cd2fa7e7ac190be9b964e15ca89939a394ca3317ce980335b38a
e6d3985933a651cb9ca2f9455033b7a8d42b9ea43bbf9b57b46816d4cc9f235a4a6716cd839a858be7d3c84525f367d2739658ff5fe897c699b9b367
165da36372985f0cfa696f85e87112cf92a11be9301207e34883c53b167a13fd71d08b9eb1a4e112b531baa3c3347f0c5f4201e49a41f7a1c6ae15b4
7d21d9c8d930a30baf6567579dcd0a976ffd8f720566a575eff9cfe3604a68ac785f2e5fb5c635ed53aeaa2fbc529606005c22bb87dccb53bb910eb9
89178cb8330a88eb540ae9834fe3444488df7421a48bf186c5f44ca379dd0c6ba93f954a60a805b86a5c8fcbe3ba5f1ed7ed57e324da81c609c15189
7af521e975f546885e45686884e85542be87c9565623f45bc952c803324fe88def4289445e4432a9887a5992423a0010e822b4a0647831d5bcc90ed5
c99d6cb36e1efb2c3ca60bef10b841481e60bfc6155197e1517a278357f35f5ea2ca4f9a350006c34da41d45301366c35c28837be0fe6487a6a35ec2
b80ff4d5c68cd1c64cd7c6cc270e6b633a3f231ff0d7ce773bdfe9fc4b6773e7cb9dfb3af776d677eee97cae7377e7337f73fc1fadf59757c847bf15
aed9b47d43809a6405ff43fcee43e087904fe84bd08fc04430200c4267068741f07d0cc14de1350bc2207c7451182209a613cc247012cc0e432cc1dc
30c411cc2728238827107a28f4dc43707f184803921d218014960d0d7092eea350075bd8d3542ba1f605d41294f6c12a584c2dc7d849b65aea4d6d4f
c305384d23abe124d6c9c0464116b502bc479a72914d20f96d6139ccc972742ab984b1f27ef916b9413e279f82fe72b97c4a2e94cb59166e53262a4f
13e4e02ba43f2720111ad847500e87f02bccc22679986c858ff014d6c1e7b48be0e5495807dba1827071b252a8942aa45ba8e555e5146ca6bb94fa4f
b1adec34617788ad84b3f028cad208d8cace125d27e19fb01227489524a22ca984f07f95d63a45f337433919dd5966042ef5a436c29ef69aa63d13b0
b77256bb2f4025ed3c01b6ab0daa53e7a35d04c79e66c758abba0182701aefc005f83e5b25fbe49df2085817e20016c23a5a7bb398a396b02544bbb8
2bc4ead2bd7221ab83afe442dd345afb154191d07ee916a2a8049a08ee556d44d320b60a5713a6a237014ee946c919349f56d02d25aa014a311beea6
5205ec817dd01b6b611dada4d1abf657fe4933b7c89f10cdebd85ae99f700a87917696c8e789d74285c8331cd4a98a8c12835e5e5bbd943ab2b83e30
6e92f7b5c949bd7bfda2eab5e9bcf550506f59e26de8ec2c9824c72b93eb15773da6eaebe554df27ffaaf393debd46174cf2d6770c1f165e7578e130
6a1b3f898aa246cdd43e7c98d62736ad5752e96f6461bd77fa2cef83b6077d031fb4cd18d83be4a1288b216b175ea29e7f2f55a80e8a99fd0311eaa3
b0c96ad1013a5488345a6d1f8cae8f9c30a9118c9d2f0d983cba3e422b4360c0e4167fab3d27e73ab0b5b766325572391dd1be3429bbafa3bf5451b5
62e5aa60edc64736a98e2ff89073e7f8a0cfbf61c73ffe8835b7d27edb69bf526dbfc440844eeca7636072c8917aa0fd722f5e5937322bcae1724a3a
5f3f47765f693b2db9b136b86ae54ad5d1ca733ffa980ffce673f6cab973ec655af57d72877b801345f683b05262b11023d31ad741466b66ff2c97ef
fdd3a739179ea9334fda473110a177c009714c62521c02e6494fc27299dc3466346bdb5f6ccd24d6ea6ccab7022627b32ce693f4751d3fd529677f9e
273858ddf999bc8eb4da4411d7178854830e089ad73bd6c418dc111e74bbe26368ff8b8481ade562abed7c264b96ec364796df61b749e97eb0dbc097
2c9ed2435b9e7882fe9e78e21233f01f2f5de23f328352c04ff137084ed1c659ac2fcb0af2725ec5ab79395bcb96b0fbd85ae1ff3f21173985a82191
045c791894a5a0b25c0741833e517523243293ed4c586a4c48adb539c412ffc5d6331a8144d8fe088c90a5a9fd93ec4a766a963dc995c4d928fe189b
f13a1bd5bebd4e2e1fd130a2ed6c1d2d407a2e8f228addb035901e1b178f316ebb22835d51e43cdb53f6472c41e77a99fc1dd88c1233baa36da826d8
da47d7bb268cae8f9a70fbe87ae784db091314bad47ca6f5a597ec8e9c303657b15bf996d5bb6df6e81cc22de0bf559ea84cd4dd27dfa7dc135f1dab
236f182bc79159b817c13deae2b8f2f845ee155015bb226e45fc0af74ed8196f9f0a53538988ec7ed07f08cbee9be64b5675d94358965f7639559d0a
e4828fb68f21366615ddf44cd55da77f73df99495f32e7f0db63f9c5bababa7bd9fa81f3368dbcb736ef8637aef37ff9f21d3bca12f83724ef62a27e
3bf15a22cd5d1e486016b400a2250fd0a40b2a0c971b98d9086e552f9b35bb3111d5168d60b320f84c6e73abdf2ee86d3993dbea77e48408964f10d1
2704a93d4c947c8ca0a3dd6cb8171e045d14eb0969ac27f66363d9cde69b2d1359095bcceec355cc42241a581266d949b1ed3e7b5236aa5c623c9b9f
3d7ba2e34e25b5fd333cd59eb5930759e13192db56d2d462c23c01ee0cf8e4389dbdca961017d43983b6d5162908cb2d6b74db3dd16e664437186daa
c7d6ce84c0bae465bbcaf66d428b4876b6e6f342b185667f71dec69b6de749969991c477bb6035b89c90949c969eeda14abfb0103ec4d88e60af49bd
da580a3fc3bfbbf3d8ac292fcd79eef5d79f1bf7d404e56c1d7f3822829ffffaeffc07aff7e4759907b66c39909246dc5e47d8d76a76960293022991
2a58aacc108c5283eea81db6a07975f27af79a5473b2c11deb89746352627c2a191e69548b667a2ded2db6f3e21608069c14a3d929e9149e924f2a27
55a27b9f479acaa6b264d5e58c0ae1ca5c7d982f59c22e427c5e61a649fe2869fb034f3ef90001338c797ccc6ba72306ed9bf30953f8854f79073fcf
0a58fc98c771d0a16d4f1d3efcd4b643d292869434fe3dffeeb6a9fcbb6fbee05f6b863b8dedf0881ce721a2ea218d2a1fe5f53704526388a67435e8
e91d74acf7ac49ff43668c39a587db95e28e30902f21871291149f696b6f6ebdd8dcaa91a3b9a956cdb7d8cee7e4645e4d426a1fd2fc942c7f945079
0f7339655f724a76df7e915d03481ed243353b76d4d43cbd83ef58b11e3afff6115fbffce13ff01f7ffc91ffb87dc4fa952b366c58b172bdf4cae6ea
eacd8f57556f9ee8ddb7ec85b7de7a61d93e6ff2f175ef7df9e57beb8eb3a2452b562c222039bdc70a95f7711bd1a60377c0025b24758bac57980c3e
bdad3db7d94fe89e116e476889b829d969e30270dbce0bc2bf9c200bbb9bf45481d48059da480e99b97120c8aa90a7309cd6cc80215357a05b86cb64
994d8dd4bcd589d7a577dbef52ce867c14ada126d21a66f853a02fda757a9d6467925ebc5032180dcc6e341af28c3a09f508cfeb4d8a414f615b31aa
6e798891d2650bedd5de4ade87f6d32c363ae74a20d0874104847d6556c6a60626a22e421f61908c2ec9a98b34a649693aaf2ecde835f6d5651b674b
f74b15ba25c665d20add0a638d1425331346b278f4b15e98aeef66e8cb7271a27eb26186fe6ec33dfa2586e56c2d6e648fa3932c3c32896c9cac9bf9
98dd7782f5664b5925ebfd0aaf3cc92b9b95b3ed7afca9ada792d84e4972db27c47fc1bdb544b99e32dc818118b6d1061b0dcb1d36a39e0e144aac65
a81ddc06d949f4f9bb08d4a240c014e14a740d75dde57adea59025d8c31a929a243ca6dc9376671bf8dacd9bd7f201ecb54b8cf1ce4bfc7525a3e3cf
0f57573dfcf467ef7ff869c74edabf8234ba37e565463a5735517c4834451bacb02b5a6db4dabd558987dc8dbe06fb9a683344638cc5a03725a2de39
3c8dd079e34cabdfafa9734673cbc57652e8e39a7edb7384c5cecf4cc8f46426667a3393329387a60712029e4062c01b480a24172414780a120bbc05
4905c905e965e9ab12aa3dd589d5deeaa455c935e9c1f40be99eaea95d93ba26147a0a130bbd8549659eb2c4326f59d232cfb2c465de65493157fb82
c1acbfdd976d25679046b6939574753089928e7cb47b79e9638d0d0d439b1ed87db2e312939ed9547860c28c2353fe7141ca2aa99856fedefeee633a
96d795141dddf6e24b8eca87faf4a94b4f6f17717b41e767788e78150b4303f150c51e90ad5596078c8d76b9319a9814a77358608473789cad9df2ac
7006c12f9eb7fd705e882bde16bf2cbe263e182fc4a539ab30c2fd5d02d9b0b7c273639f2878e1f8f1170a9e187bd38ea91dfc1dd221f5d66d72f6ee
9e3d3f3b75eab39e3deb5252d81066650e36d0471224ace429aa939cbc1b060462e31ac1ea6c54f46bac0d6c1346cba0976eb43b4cc313b4bcc6ef17
a1bb4578a3e6f399070a3dcb3c410f6afa13661aa53b408c6257f91cdcd6d03070effd273ba1f3e4fd7b3b5e7de6e18777ee7cf8e167f08074e7cfad
3b8b8bd830a6a77b5811779d3c77ee244118af4ae29693ce6265811470314395fe01c5b58b298d667638a6d1d1605ee38e77497a971e464b8e88e16e
0dc5662dbf10cc0b85808ba118d07d6842594230e1ad840b09ca5018ca864a435d43e3955eba0c7d86a197b1144a59a954ea2a8d374c5d20189ca405
318db75a24a0a8a0d398ae932bdbf7994f1dbcfbd569d3df9ac32ff25759f7f64f99ae41daf1c0e646ab74e79423aff6edbba7472f3680195924bb81
7fd8bc69ff9ead4203322885fc89781d0993036ec5c6ccfa5d2aaba664586d324a913ad01914bd25c234c6291228a308bd26117a47d75bb5b208c9b9
cde4559b1d9ae1b4904ddbce53564171e040c055e00aba90502724139808ca942664670925967eaa9f7e13cbe06f37d6d7ef7951753e56306bfabaf6
0c7c7bddd8c3cf0a5ef389f214e2b589cee0a302be587382c1511519d518818d69be86f4264363c48b710969b1a037dfa83a1cdee1ddb598145287e6
969042f0b382d339a4153d96f508f6105aa1893fa4a5d136e94a6e309885558532e3a8e86c3ad4eed8f8c88e1d8f6cdcd1c0795bd1ee71e3b6def2c7
fd39fbee7fb3bdfdcdfbf7e53448835ffbe083d75efde0836ff8a7fcab04cf0bbd7abcf8a7dba74f63031932990d9c36bd4e64e74789c94b88bf221a
f50c58d523f25e689214a697215f84a356816f4bbb707f3643c0506028349419142da6f8445a75b4812eb9f05250757e25e47565bde483b049627ac8
bf7cbe08586c4a4029500a9532e582a28616a10554e7cfad62ee21f2ccc5c4d348e817884503a095a9d5567b83b9c948a109c60a6f98af3966715cc8
c8153cb43b28f0ec2f74fdd925099bf2d9430cd490d3fc8f5cdc70fffd1b773736e6bdb0f8e871697bc71dd2d627b71ed9de512d17ee9951fc5dd86e
166bb28c265946aa8d0e68343788138a23621c3a5cc37f714209f886c65640855aa9abd4571a2a8d95a60a73a5a5d25a195169abb4573882b11762ed
d7e64ad71c64ca1fd9fdecc60dbb776fb8c01cfcfc85bff3ef981d3f3a77e2c4b92f5f7bf5ab2dfc35decabf2523c9215b70b20184e121d2b6ed84a1
f0384302f15d1ea7c1ba86bd884d09e46d6ed4fc4ebef0397e7f08d7962ea7133084bccec71e4a06522fb3867091c8655fad72acbcb171e0de8a37a0
b3f38d8abdd200f23bcf08d8d9b14735d61517f126fe13dd4d45ec9b2eb713921b8e22ecec4089a36ad281dd84d5d6064393cea8ea419fef10aaafe9
11f99a336f08e7b2bf20f2c94821b19057be22ae681c9538b2d79667088f43ab22fbb871bfc37ef248c73e1256c97445a1dd4a2926bc4abba5c3b940
aec52c594de3133d7a83a4338e4f4cf4e4194d9e44d945b162b5ecac72ad8e11b122956245378fd19418af835be2f5569dde993cbc9bc0ea4c6b8bb0
c09c9caee0f183081e8eae64c6fa2d25003aed49990ca48b4c669edbe836b9cd7dc805f632f5320f320c320e320d329bbce0652952376337538fc80c
6786ab4754374fb7c4eedeee4929e955c62a5395b9ca227e036392a41a55139ad182568c401bc6621cc6a35b4e30a467741fdafdaeee95dd9775afe9
1eec7ea17b0c25380baec4ae44269254d577f59121838924b61ff10e1f1abb73caead5d31e19dabce3c7bf4e3936b7e478d18a35339e0d3cfbe8c76f
96ec9787eee9d66dc284c0c8246b8fc7566f39e0f31dc9ce9e3c6e74416a44cac6155b777b842cfb9353f85ed94a364891cdaae8237017d85993beda
68221e938ed91c5661839a53f56b3eb5f562e8c89693b9ef79171356283ca9336a90f0ab69d9c2a3dad9bdac82af1a5dfee28b67b755572b5bf9cbeb
3a82abc76e7ef22f52e13a364478a23d64859334eb77c2a080fb8afdaf31b226678399acdf691a4b7e20df25cc3127a4512dfecb4ea0d4f592700291
145843667739c2a6b13dc2093cd7d070c3dec5475f637f6687a4a73b8a9e7cf2c876a9e2527077c9f40bb853503f983c50a55c082a5c0aa4a35d5664
91152be285920a2ab353da9c2721fc495115910fcba013bf1c6881074281c739419ce4c58110b4037c74e8ecfe2fd26316f8fd08e96ea942aa94aaa4
65d27a69bba4171b19d0405aec62711827a7d1f9b63b7697bdfa6cc8660371a09ca9cf877c361247caf9ca0835a09f0813d9649c2c17e84ba084cdc6
d9f24c65965aa85f0c8b580556c88b95fbd455b08aadc6d5f26aa54aad855ab649da8c8fca8f2a9bd49dca336abdfe25fd47fa4efd1091516719c4cf
38838fb13bd99dc7f81d6d7261fb04dc7d29088cc93c81ed5113c8bf3b0e32d84b19b324d37145fc10f546a6f0bf6c4f9d9af053f85bcd7b526fe57d
e95df16bd3013ae8c80c6870b376c0d18e37dad1467ab74e8cadef68c702ddc3e2376be6cbb6453a22b30089fde5c782879e0ef2f629951ded5fe326
f6a994c9b0e31fbcaae37cfbb7a179ac5cb75afc1e2ee249fdb163bad5ff2cd756e44e5a71b5b6a24b2ce89368c16c565eb9e164f0ad9aa06ef5d7ed
3bf86ddcc967b0e1ecbc9486feaf7f8949b68dc49ce58874d85dd2daca29bc3df8f4a1a040245272b02592957774bcc53dedb3bf26bb5eabcd5bad7d
f3244e646281868af66de46a5cb25cb428fac4a2d2da63c19ab7822737540a644e75bccf1dfc307b9cb5b25d787b38ff51ea289ea6c08840649a96ee
9893622c1ebddd9c64738e49d54e8cc2166db922cb21de06ec068b7d97438aab86984d6aa2a3c9149191fb85dfcf73cffb29f5f1675e93ee5c497934
73d5890ee14d94baaefc87ebb414684ffdf4f434f6f335b950573ef458b76eb3a687f2a20cf1db17e1eba29cfd96803d3e1fa2f551114e59afc728a3
3a26ee0abe3c97b426e0d09383b1555b638e44edb56e324093c204b6e7b9f65b849fd28e4e4ae06b2891b76969fcaf51268c89a33e795408d3e75e68
1498ffdcd828b2922e1c0f3e2f9066fbbe0af3348c63ff4044743ee53a46b35e6f931dd6315102bf107a023bca2c7719644a37ed86268b2410e31a56
da8133e99749a3344a82467ef04ade28121ba5e2179923edaeb6d3eedda1374c09c464e447f7d4f7b0c5bbf4713d0c90a8ea533c86e4b4317dae30aa
d92f9eed1abba2e3137dbb52ec9401f73ed263af0d3645e9529a6213923272735bfc7e7286adb6563ffd85a41c9666ff7efd2fb3aa4be657a5ba0a71
4fa4bb42bcb7b9d3c7ae2451df24ed13ec0ccb1e899b24f690946f7547a60b667631b78b34a98bb71a7569303610d52ddfa2b745c538f53683f8c12f
29de90e81b937e15651a619a1ac4b8bdbb92ec52b5396d934b97d41411e709917431f7d7f4d011f35af6ff226f0fe9e9d5b208d371998667af96cb65
d988ef71d2e4c73eca6ffdfaae88dc1f2051fb6e096fffd6dad2f5fef19df631d6c906f15d5f7ff91b1dcdd3cde3090056fee33b6de3ac937ff5652f
433ea57de7024924dd0f51be9408f504dba9fcbeb48f6c3c1aaa093e21a8252826d84ab08ee021753fbc272f8613ea3938219f2638071572092c509c
b0406e8505d2db9021ca520e1c15a038e090fcb9d67f084751b92794a20ffa53fb1eb909064bfb98acad09507f2d486bc55b5949eb1943a06b228966
5c43cb00980ef7c216f81b73b147d8dfa502e930a6e0709c892d728a7c48519471ca6155a7ce5137a87bd56f7543746b75cdfa727dbdfe7bc34ac366
43bde10dc37b861f8c7663a271907183f173538a6983e9fb101f211327404f980566b2141b3c26b82cbba4287a8bef723a9822bec5c8061a9ca97d03
15650651540b9525d0b3fc7019af6a97af2a2b10c3c686cb2a385909dc00a550064b6021cc8699b4fb22edebee74b2512ff82193ee2c2a4da3115ec8
a3318ba09c6021cc80229807bda87524cca7f17da8743dcca5db0bb75c5eab5cabcda0f70c9a730f3d8b69a4f1bfd8b5dfe55d27d04ef7d05ee2ebdb
7c1a2df028a239ff773b0ea3d2dd346f222ca611d3696c91b6da0c6d469146919756994fcf321a338dd69d4de3bc34bf94762fd2fa7eb9ce786d9572
c2a894ee39d42a762da7b1a5da4a7eda3b0bb2af99d535470a2955e76fb5ff5ff8f595a9e985f8ff15f11f2b767068f9a18b4e6a311461e269f50c4d
3ac3201f6e841124875130066e867144ff78b895f6ba0d26c164b80bbe639276f25598ca74742e7d098e320333ea16cf9fedcfcac90bbf6f08bf8785
dfc3c3effcd0fbfa4cf1cecfcbec7a6785df7d011aa46581ce4b1cdb9cf8732afee4c71f6bf19f56fc81e3458eff48c5efadf8f75abc908adf3d78bd
f21dc7f3b5f86d2db6b6e1376df835c7af06e29779788ee3177efcbc65bcf2792db6d0c096f1f8d9a719ca676df869067ec2f1638e1ff9f16f4efcb0
163fe0f8be03ff6729be7718ffcaf11d1afece523c7be646e5ec523c73239efe4bbc729ae35fe2f16d8e6f71fc33c737399eaac5374e7a9437389ef4
e0eb7e3cc1f1f82abb72dc8daf446133c7631c5fe67894e34b1cffc4f108c7173936713cccf1901d1bab5295468e0d070f2b0d1c0f1e98aa1c3c8c07
97c907fe98aa1c981ae8c40301f98fa9b89fe30bb5b88fe35e8ef51c9fe7b8a7189fb3e2ee675395ddc5f86c9d43793615eb1cb88b90ded5863b393e
c3f1698e3b1cb89de31fb659953ff8719b159f2ac6200d09d6e2931cb73e61a6b3033e61c62d8fc72a5b8af1f1cd36e5f158dc6cc3c78cf828c74db5
166513c75a0b6ea4491b6bf1910d56e5916eb8c18a0fb7e1fa9ac3ca7a8e35eba62a3587b16699bceef7a9cabaa9b82e20ff3e15d7725cf3501f650d
c787fae08344e683d7e3ea074cca6a273e40075a6aa82ec62ae254552aaeb2e3ef38ae5c615756725c61c7e51c9771ace418e8fcedd2a5ca6f392e5d
8af71763c5049752918af7715cc2f13756bcd78cf7187131c7456d58de860bdb70411b96712ce5389fe3dc249cc3f16e7b9e72f7789ccd71d6529c49
95128e333816739cce711ac7a28158d886779a712ac7db394ee13879925199dc86938c785b54ac729b1f2772bc9576be350f27b8703cb329e363f016
278e1b15a98ce35860c29b398ebdc9a68ce578930dc7701c4d3da3398e1a69534645e2c8048b32d286232c7823c7fc5a1c5e8bc338de40c9fc0d6d98
7718af1f8d018e43390e19ec50863871706e8432d881b9832c4a6ea0330207597020c71c8e03fa3b95016dd8bf9f4de9efc47ed926a59f0db34dd8d7
835916f45f6752fc1caf3361668649c9b4608609fbf436287d6cd8db80bdfcd8b347aad2b3187b7477283d52b1bb03bba5a72addaec7f4544c4b3529
6911986ac2148e3e8ec9119844742639d05b8c896de821123cc59860413771d0cd31be0de3f230962ab11c638a319a3815cd318a2645c5a28ba39363
2447070d7070b413adf63cb42dc58862b472b498a3140b47338d3647a189a3d186068e7a1aa6e7a873a25a8c3275caa4012ea456e4746eb029526f64
36048eac8115af5acb7afeff70c1ff6b04feed95f0bfd19d398e
>}
\immediate\pdfobj useobjnum \csname EF7O13\endcsname {[ 32 [ 318 ] 40 [ 390 390 ] 46 [ 318 337 636 636 636 ] 52 [ 636 636 636 ] 56 [ 636 ] 61 [ 838 ] 76 [ 557 863 ] 80 [ 603 ] 97 [ 613 ] 99 [ 550 635 615 352 635 ] 105 [ 278 ] 108 [ 278 ] 110 [ 634 612 635 ] 114 [ 411 521 392 634 592 ] 124 [ 337 ] 8722 [ 838 ] ]}
\immediate\pdfobj useobjnum \csname EF7O14\endcsname stream attr{ /Filter [/ASCIIHexDecode /FlateDecode]}{
789cedce370a43311005c007ce39e71cee7f460b171fe3c22edcb89881e5ad247651f2a3dadbb99e46d537d34a3b9d744bdf4b3f8392c3ea75f471ef
3893674ebffe60566a9e459659655dfacdf3765b6a977d0e258f39e59c4bae2f53b7af7b01000000000000000000000000000000e03fdc1f5b2a0298
>}
\immediate\pdfobj useobjnum \csname EF7O15\endcsname stream attr{ /Filter [/ASCIIHexDecode /FlateDecode]}{
789c5d533d6f833014dcf9151ed3212240308d8490aa7461e8874a3ba10c043f22a4622c4306fe7d6d9f215291e074f7ded9877984e7f2b594fdccc2
4f3db615cdaceba5d0348d77dd12bbd2ad97411433d1b7b367eed90e8d0a4263ae9669a6a194dd18e4390bbf4c719af5c2762f62bcd253c0180b3fb4
20ddcb1bdbfd9c2b48d55da95f1a48ceec10140513d499e5de1af5de0cc44267de97c2d4fb79d91bdba3e37b51c462c723446a4741936a5ad28dbc51
901fcc55b0bc3357119014ffea1187edda6dfdb1ed07d4c08b939f219fbcbc5154c9d124f655d0b8838a159368ed7196e408ca615969baaa68c23616
6a206401597859403eb64e3e7af9415d35458cd4bf58ea5f8c2396851a0839819c79d953a4e448c9f9da03cb09d41f11f767c311c3420d848c33ca7c
1c4f79b7aaae298b41fd21ad14613284c9d2b507166c97f9ed32b75d1c47d60aa881173b0feb87b7a361e7789bbbf6aeb5193937ec6ed6ec94f592b6
ff418dcabaecfd07a090d7ea
>}
\immediate\pdfobj useobjnum \csname EF7O16\endcsname {<< /M0 \csname EF7O17\endcsname\space 0 R /M1 \csname EF7O18\endcsname\space 0 R /M2 \csname EF7O19\endcsname\space 0 R >>}
\immediate\pdfobj useobjnum \csname EF7O17\endcsname stream attr{/Type /XObject /Subtype /Form /BBox [ -6.4 -6.4 6.4 6.4 ] /Filter [/ASCIIHexDecode /FlateDecode]}{
789c6d90310e80200c45f79e820b7c02b506581dbd068b31f1feab80da88b234b4f4bffed69b9d9c59a904782be62067a7e039ca953b1b3888e79af1
cc124329a5989224037d65d25fa8a2caa1b4066b8d15aa4565de2245bed82fa2b677d41a33f5b3f0576104c7df033a8f789ca35f08839531bc0d0647
c4978f6b898d68a113af964c75
>}
\immediate\pdfobj useobjnum \csname EF7O18\endcsname stream attr{/Type /XObject /Subtype /Form /BBox [ -6.4 -6.4 6.4 6.4 ] /Filter [/ASCIIHexDecode /FlateDecode]}{
789c3350c8e23250f0e2d235d433510013b95c70660e980961e92298195c5c4e5c002e230a56
>}
\immediate\pdfobj useobjnum \csname EF7O19\endcsname stream attr{/Type /XObject /Subtype /Form /BBox [ -6.4 -6.4 6.4 6.4 ] /Filter [/ASCIIHexDecode /FlateDecode]}{
789c6d90310e80200c45f79e820b7c02b506581dbd068b31f1feab80da88b234b4f4bffed69b9d9c59a904782be62067a7e039ca953b1b3888e79af1
cc124329a5989224037d65d25fa8a2caa1b4066b8d15aa4565de2245bed82fa2b677d41a33f5b3f0576104c7df033a8f789ca35f08839531bc0d0647
c4978f6b898d68a113af964c75
>}
\immediate\pdfobj useobjnum \csname EF7O20\endcsname {<< /A1 << /Type /ExtGState /CA 0 /ca 1 >> /A2 << /Type /ExtGState /CA 1 /ca 1 >> >>}
\immediate\pdfobj useobjnum \csname EF7O21\endcsname {<<  >>}
\immediate\pdfobj useobjnum \csname EF7O22\endcsname {<<  >>}
\immediate\pdfobj useobjnum \csname EF7O23\endcsname stream attr{/Type /XObject /Subtype /Form /FormType 1 /BBox [0 0 306 183.6] /Resources << /Font \csname EF7O1\endcsname\space 0 R /XObject \csname EF7O16\endcsname\space 0 R /ExtGState \csname EF7O20\endcsname\space 0 R /Pattern \csname EF7O21\endcsname\space 0 R /Shading \csname EF7O22\endcsname\space 0 R /ProcSet [ /PDF /Text /ImageB /ImageC /ImageI ] >> /Filter [/ASCIIHexDecode /FlateDecode]}{
78dac55a4d931cb70d9d731ff30bba72920ecb254000240f394471ac2a555c15595b9583ed436a2d2996f5614b8e7dc98ff70367a69b3d9aed9e9513
47abd5cea2d9e8d704f0f041fd38d218f145e355f4bfb76f86ebcf9efffcddedf32f1f3f1afff2acffedf6c340e32b7cbfc40daff0fd0b6e7b8cef97
83ab7833a468f8f9bafda49282e1739c3efd6b185e0cd77fc6f20f58f57818520ea42675e412b26931850e921ad498ad93beeea45462d05ab36b9c14
f4c2c36378ff98978007a8a1002c1eea92413830144b5a3cb8936afbd41e3c3cc2fbfe32fc38f8365df93e490c4459a25aa1329205d194b0163bf7e8
66b8fe9c460b79bc79d1f6e4e6dbe1abf1c12e3e1cbf196f9e0c7fbd199e0e0dcc902d242e54cb0244275d05912550c6fb46e69c2e04a11f83208a21
d5a8ca4b1b74e25518142dc49a2c67c6d78538e8dc769008eee51af38937cce275246e16b5980856e18b912cf6a4dfdf148ad69475d410636439e8c1
cdc10ac4d0e03a7360d7f960f7f787e3cdab4142ca54a5b030ef1ff360f7ee7045929a950a58c72b3f1caec08a51632e34dd73dbaea4508bc5926a11
d9d6f6efc315ad5273ccd889e395b7775ef9a95de150045b4bb0e384603c5c4951d454b8e6fd95ebcf797ee9576d5167c529209502769f9a0d13c2ff
4478d682e9687629c172ad24c688b01503fef10fbbb80b3bf9d89d26205903fc40adf64066e13a908c9dc939111ca1ca0540780548ad21a758b3f440
66e1592035c08f8874ac39c42c9985525e271aa0882b2888e1545cf03e3d8c4eba8e839802229db8c0f56c13c9da7e10ac4cacb92c36a4936e20110d
62b5c538626d0bc99a8b10cc9c18c451164866e90612abc1cc6a2a126bd944624bc66999d7b5969088e108a906b24cce31b3aa43c41d54355d5fecfe
b9fb80af71f7edeebbdd0b7c3ddfbd7f38c24039522afea762d9f3dddbdd2dfe1d775f8370da9ae174cd6d5bf51314be6c2b7fd8bd834a977cd87dfd
10788709ef7826671371403acce07f447849a9541ddf3f1fff31be1d797c325288ead91799225235109d0c9e5fdaa744a0b6a082946a9251377cf978
5c961c5d62ae25d4caca6e281404b8df6fa74881844a65584a3514648b0258112c898744a71b63bc7262b526c6b709b9185505694b5d306801394537
f79c8529a6c0acf80131ea18a41998dae3168fae6d71a9016ee8389034ac16268fa26abe3d062f4086d414d585ae0dbe1d0b8d49e134a822bc62e9d2
6d8df06a68ce4d0cb3c714192f8edc57395b13b384482a75ffe6aa2593470b25f835e7ea16e9770494114ca2247f7757efd58bfbf433f8d3938559c6
bbcdf2de0ba98b8df8fee521565ac174b49fa7e6c98660546ce4ed9bf1fa8b387ef66e5a6d0179104a6aac34325e03e585c070796b6d0ac91130d236
6dadbd72e8c877096bd3d662529034925156c0ded48cca033b1cdd9ab4f97e57000d08553972d95aec3589c13990405b3db28123bad995505c88966d
20d891824c9fe088b6b55a25e4aa19fe5fca261281576231f257ddb48bb9a7c03170835c60432143b8c1e817acb512c1aa66da3bd2d3e1e9f8db382d
472d85134a540f072480982a13da20a4b17a0888754eab01b493954e380d0b6ae6944e388d4136a88ce884d3501f94e2efb6e4342724a0b0134ecba1
c22ff6ec35739a383754a505a7e18150638d79664e4b2122281bb3769ce6d541662ef584d49cc8405794cb92d5720cd52c0a2f59cd1f93538e27a486
1aa844b27d3b38731af8c428256e4ae66e61ded68ee966638d771bebfdbebba5cbcdbbc6764718fe96cdef6885ec2a180c0ecdc9b6d6c29f612bf8c4
14adb41280c00a97d04c5b4b9dbb6256f0916daebd022d9624de9f6c2e85238a7acb8c967553314cab94c95216ddd48c72298a3386c9360e85af20bc
2b79e5b3a95a90f9915bc18af182c516509750521864d3226865c14666cc88c3adc5a01a0f806828bcb7dda2787a30848f6cafcd82beaf8a2c007f32
251e662c20da162f428c90c9fb8a0041d49e77779577b64f9aa73e8b46e95983d89e39cd76eed0d6cf12ce4e8ba0ec1e6328acf671d2998a765cab68
1b49e6dcc65799bcdfc7a7fd98454ea47b82a48fa4ffe36a6d812582a06784b841453a3111aacb5a7d58e0c22435fb0c0a528421c7daa40616f3bc3c
b81c8d2bbad7262fee74a5ad2ed34bb29bd64e84fb37df2b98a535efd1f45abd51dbc3e931306cdc3eee554c88d1d3ebfec6fee590d10e2f37efc351
76dbf601355d03392f8d704ad9bf4ea7155290b8bffc1203e4a808db567588a3a7fcd3975b08a77d58488f5bd66b9d77b7c7d0dba247dc59ae7bb9d9
c81ffbc3adcf561f0db1cd7ff7a3d4be2366bc6d6a77a004410fcf767668c6c7c6f8c1eefb363f426fc3a9d68a92408e43269a17fda92d326c4e2ce8
c4ea3cf1a2c35c0b64a3aae62ddde18acc83a93baab5f11ed55a17bd869dd78fa277922ea2b793fe0e1548ef1f8af674423945cf247ddd4b5145da51
3a6be8a493cddb207dd5eec8afa849fe8f76d7d9ee878386feb0221e0e2b06544e084c3c7b991fbcb86e69a1a7fc59daa78d59c3fd8e1ef01a5eb4a7
93b3875ebc3ee6aea8b65125dfebf4e1cc180e9c11b820fe97387af12a0e7637c77393a2b6cd9f7e00c170192dd99696e8a4eb28d49d00ad87c6a4bf
e5f481bdb9a8514e4e1f7af13a10370baaf7ffd2e903da3e34d82d7474a9e9cef387bf1d6202ec4045d06e4fd1f2fc7862804e4b401d5e8aedaffc3c
c55154ce70beb27dcfebc3b94024e1c2887abbd789c1f7a727067320a18266b162adf02bd9cf9516d2b3e160c5ab349485e20688e80451956e0fa6f5
ec11d404c5d00725f7ab059459ba0ec5fcdc864b42859ff80228bc06c59f9ec0f465016596ae43297e20a5518024960ba0c81a149f1b0832112d2dd4
89d7c1f8e435a3e831b854910bd0d82a1a466c92e6b4345227de408388464a94e2d30dba004d59452305455286ba259a59bc814614bc4e3e3e4e171c
f1d8baffba5e8b717ffcd6a199c51b684c43453f8d462fb6f9e8161abee350c3eb88bd4a10249811ede782c9221a2ffe98ccfe73a012114663ef93a8
33547273e4282ea8908a57d3573ef646493f971c1aa2f4cc0314292674bc19f474b2fa02000f76d7775ee1e3692e98dbc34d91bdda2fe56e386dbfbc
993d57915c36831c2eee424f67905dd1d1732ffb40cf52297979a2c1de5cb1483a154f25432ee8236a6e67258c3884537b71001fa0ec84e8e2e4e7e9
e6ff3f04f15791adcc0395857db66db855d16d2077ec75e397cc9950e1a222005ed8ba898fe581c062e2c7852e053f8b773d692cc84f95559b388303
63c9d846c45846e9147d9689e415106e20e8fe2406993044d0aa614baabf8cb6496b5707e089a98260caef7982325baa0d158fd63ae6baf1fa0beec7
3c006b51c54fb8ad11a22825548dad875b5f1db7165c010ac24d7dca743cbf59594d3032de8daa702e5bab9138b39ff4e44cc7539915d51976cbe2ff
d3a2306fadcea11a8a22f855ba00740c55c12a8861dbdcb12401de5f337a84e359c88aea64e8ba618b360abd0408ca0db0649acebf560c9d9c94c813
f1f67e80297d930b5a28ebb7fad3d8e8d05f7f35a650fc940225eaf80d62e3dbc53870bcc738f07c5938f7628bb2707b1e78bee13bdfc4f98cef1efd
a13f7d189efe0a344dd3a9
>}
\expandafter\gdef\csname EFWidth7\endcsname{306}
\expandafter\gdef\csname EFHeight7\endcsname{183.6}
\pdfobj reserveobjnum
\expandafter\xdef\csname EF8O1\endcsname{\the\pdflastobj}
\pdfobj reserveobjnum
\expandafter\xdef\csname EF8O2\endcsname{\the\pdflastobj}
\pdfobj reserveobjnum
\expandafter\xdef\csname EF8O3\endcsname{\the\pdflastobj}
\pdfobj reserveobjnum
\expandafter\xdef\csname EF8O4\endcsname{\the\pdflastobj}
\pdfobj reserveobjnum
\expandafter\xdef\csname EF8O5\endcsname{\the\pdflastobj}
\pdfobj reserveobjnum
\expandafter\xdef\csname EF8O6\endcsname{\the\pdflastobj}
\pdfobj reserveobjnum
\expandafter\xdef\csname EF8O7\endcsname{\the\pdflastobj}
\pdfobj reserveobjnum
\expandafter\xdef\csname EF8O8\endcsname{\the\pdflastobj}
\pdfobj reserveobjnum
\expandafter\xdef\csname EF8O9\endcsname{\the\pdflastobj}
\pdfobj reserveobjnum
\expandafter\xdef\csname EF8O10\endcsname{\the\pdflastobj}
\pdfobj reserveobjnum
\expandafter\xdef\csname EF8O11\endcsname{\the\pdflastobj}
\pdfobj reserveobjnum
\expandafter\xdef\csname EF8O12\endcsname{\the\pdflastobj}
\pdfobj reserveobjnum
\expandafter\xdef\csname EF8O13\endcsname{\the\pdflastobj}
\pdfobj reserveobjnum
\expandafter\xdef\csname EF8O14\endcsname{\the\pdflastobj}
\pdfobj reserveobjnum
\expandafter\xdef\csname EF8O15\endcsname{\the\pdflastobj}
\pdfobj reserveobjnum
\expandafter\xdef\csname EF8O16\endcsname{\the\pdflastobj}
\pdfobj reserveobjnum
\expandafter\xdef\csname EF8O17\endcsname{\the\pdflastobj}
\pdfobj reserveobjnum
\expandafter\xdef\csname EF8O18\endcsname{\the\pdflastobj}
\pdfobj reserveobjnum
\expandafter\xdef\csname EF8O19\endcsname{\the\pdflastobj}
\pdfobj reserveobjnum
\expandafter\xdef\csname EF8O20\endcsname{\the\pdflastobj}
\pdfobj reserveobjnum
\expandafter\xdef\csname EF8O21\endcsname{\the\pdflastobj}
\pdfobj reserveobjnum
\expandafter\xdef\csname EF8O22\endcsname{\the\pdflastobj}
\immediate\pdfobj useobjnum \csname EF8O1\endcsname {<< /F2 \csname EF8O2\endcsname\space 0 R /F1 \csname EF8O9\endcsname\space 0 R >>}
\immediate\pdfobj useobjnum \csname EF8O2\endcsname {<< /Type /Font /Subtype /Type0 /BaseFont /GCWXDV+DejaVuSans-Oblique /Encoding /Identity-H /DescendantFonts [ \csname EF8O3\endcsname\space 0 R ] /ToUnicode \csname EF8O8\endcsname\space 0 R >>}
\immediate\pdfobj useobjnum \csname EF8O3\endcsname {<< /Type /Font /Subtype /CIDFontType2 /BaseFont /GCWXDV+DejaVuSans-Oblique /CIDSystemInfo << /Registry <41646f6265> /Ordering <4964656e74697479> /Supplement 0 >> /FontDescriptor \csname EF8O4\endcsname\space 0 R /W \csname EF8O6\endcsname\space 0 R /CIDToGIDMap \csname EF8O7\endcsname\space 0 R >>}
\immediate\pdfobj useobjnum \csname EF8O4\endcsname {<< /Type /FontDescriptor /FontName /GCWXDV+DejaVuSans-Oblique /Flags 96 /FontBBox [ -1016 -351 1660 1068 ] /Ascent 929 /Descent -236 /CapHeight 0 /XHeight 0 /ItalicAngle 0 /StemV 0 /FontFile2 \csname EF8O5\endcsname\space 0 R /MaxWidth 974 >>}
\immediate\pdfobj useobjnum \csname EF8O5\endcsname stream attr{/Length1 3672 /Filter [/ASCIIHexDecode /FlateDecode]}{
789cb5567b5094d7153ff73bdfdd5d96dd6517160416706577791456cda25888a3ab82a012c1801689ab2cecf250f6e182c8231853b189da0435ce56
a906d25a6bcc8b5a2743629ba64dd22493719ac6984ca6ed249938d3c90cc9d8a949a6b4b9dbf32dc8244e33d3fc917bf7de7b7ee79ed77d7ce72e30
0030512783b1aaa2722da4800a80651137adaaaeb6bea2b7660fe12584ad55f59b57e79f5df912e106c2d1dafa45ae9de5c1df129e22bca535e00d43
54fa27805441b8abb5b7c77a666c228ff004c974b585db037fdff98f97c99962ff48bbb73b0c6aaa209f27ac6befea6f33b53fbf8ef025d2b9d4e1f7
fad45bfe900ca0eea0f9d20e62e81ee7af101e276cef08f4f4a53dc77484c926a477855abdac098e10fe2b6143c0db17c65daa04c24a7cd6a037e05f
58ba2602a0211dc9180e75f7c4fe046701b44a3c0bc3117ff8f8312997700dc5a003656f48325e2442083cce539a16b2200758c5da9a06d0d0ee5189
c5666567e68fc30930c7e707bc116f0b0c7b2381200cb744bc9d30dcea0d7653dfe18f50df1fe982e1767f88e8f6887f170c77788324d3e16f21ce2e
6fd00bc35dde9055e97bc842c0dbd301c3c15d0a27d4ee0dc070644f90247bda82edd47728f66f8b68362ed9c08ed21a8097f05350ca7294315689ef
429b44bb2c25aa10514e94e4d935cc95bab64a1f71fc10509985998daa03eca3afc9e06ccb8aaf1ca082108b63190a6854d1c8680fc9423ca6f8181b
8b9d9cb331b3cf0573bbadc8cc34259ac06ca3d8998d2d82497883daefe102bc2abd0ae3b08ceaddf0023b243969e61c3c00ef71092ec16956c6ccac
8c66afaaccaa7e7e909fa7f96da45b4d56de8347c8a6626912ee9306a53a688357f91518a51a8af36fc0f3ec005c8393f086540d9fc3016c8011aaa3
d02d03bfc6b420a422ba41e4892a400bb56c74f26bf17a03ee83416880b3aa4995995d8d477d8ebd445fcbc714f355dc86bb49e3349c6707659b7c5e
ae86919978b11946a4036c546e8ed741da90bd705a6e6617546678458995387514691bfc86da5eb8c2ee6407f1104536a844c0afc115f57a79d14c54
ea215c4aeb016a4fc345706294f4e36b51b5c169a98d7c7d4e915cc10a28246bdd0066faba21f55915975162506c354e488e75be09f7a646eb6b5b17
388b6f8356a3da3a017513fa7eeb642c56d7285bf8d6099e35810ecd84ecb07df84d931f3a8b37d4355a277e5d59316bb5b2b98278f58d442a88d8c4
afac70cedc090d48747b90a8edb18fe463b42f09900767dc760869533798379a6ad2b5f7aa6a160ca46fcad7da1313000de654b4d90d364b6ebef1e6
c5310bf34cbded4ebc617fd33e66df615f69e79ea9b232d3bc3253d91d60bc7e73eaba51dcfcd4f8e9e28be102c63ccc5d1c82100b4921ecd2841242
da5062c81e7284f20ea71c369f4a39654e1a4919318fa48ea48da68c9a551ef0b08ba51ae67127dc2878b360bc606581ec612bd8d22579b65c959a2f
30b054735a89ab7499dac06cb9794b97242f5bc14a5c69a96615be87383278e8ad525cfa9ff32af989d37def6cb3b55e0d7cfcefa1fe8deff6bff0a1
4e3a61cc350d3cfcd4d8485eb1f1c8c9d225cf1614bc7fe519e6ec5d5f77fdc5859df42555d359afa45d2962a9ee6e734ab2c9c85393b84eaf3370bd
5ea751d379f204547298c419830c9e69c94c9b27a5a6734be6fc1cc99a4d6316b75832cbd393742867db00596a424d9e75c0b22f43cf3232b3d29393
0cdca247c9a68242a6b3655bf4f93696afe285964c7bb1b2c599ccf3f2a5b0a5d92279de76eb6385e3548f16860a6b0b559ea9a9b72f3d636bb6499e
29136db92959f9519bd9ff45cb975fbfb97cca45c731af6c315d02b5917fa2e69f109eed153cdbbe0289dc9a7b71b1930eec59b7b3d91976a2c79d12
73beeffc9d739cea5167c859e7ac75267818da52e68ee336ba441d3f8a79b70e44bde8c0e4fd1a8dafe9476fcd1f9c3c4054e3830af59749db86d7bb
07ce181bfe3678f48c463afbe536e9d1a215694dbed71ffb72447ad4b132e31ebf42cacd4fb7b4ef8ff4de7beee482da995c296d3d1579e78fc77724
2dff0ce66be289eecffb0cefdf1abff8e0cb12fd5acd91f84d9fcbd974a601910da03ff8c507ff7a4ebf167cf4467fb5a864250f28e62f50a3d74f9e
82edea3794db30578ed0bb3813830a3742117450be95c0086ee5f54583a49f7dc3d4d0a4646e99de4cb6389ed3159a411aa1195a02035b3b4b23e4d1
3f81195afe0acd219d0dccd22ab0b3e3b086724f18fa21029dd04ede7bc04ab9be95f28d155cb0986a09512d246185d524d34359a887a4fde0a5bc5f
4cdc751024f98544ad822eaa56caf7b76c75c7919f463fe9f452ef2349edffe1b574ce6b0379ea255f3b492748d24a1c5ed2f9761e2b88da497a5b60
0f49b492ac376ecd1fd7f0c65764252b41eac324d342763b49ce4afa21f2ee8dcfdd6ea73e6ea51b6a67e57713d7ff0d32d6dba4b6c423ec261c8a7b
75519c25b0f46bdab7749db7e9d22b1cfb8cda3e6882ff5554f13b25d14957431dc0a4b4df1dfb95c009073ee3c2a7a3f894019fec36f0275df884c0
0b0e7cdc80e71df8cb289e9bc65f4ce359813f2fc79f097ccc85e363f57c3c8a6377ade263f5f8a80bcf98f174147faac55181a792f1e410fee43246
059e20891343f888c0e3c7aaf8f1213c568547472cfca8c0110b3e2cf021813f167844e0e14339fcb0c04339f8a00b1f10389c860704fe50e0fd02f7
0bbc4fe03e81431b1c7cc887f70a1c34e140ff653e20b0bfcfc3fb2f63ff7eb96faf83f779b0cf2def7560afc03d51ecf161b70123bb1d3ce2c3dde1
64bedb81e1640c5158a1690cba63020302bb04ee4ac39d9de57ca70f3bc9476739766c4ce41de9d8de66e0ed2e6c33a0df873e52f345b155608b57c7
5b047a75d8bc238337fb70c77623df9181db8de8d1e2b67bf47c9bc07bf4d8441a4d51dcda68e05b0bb0d1803f98c62d9b2ff32d02373778f8e6cbb8
79bfdc50efe00d1e6c70cbf50ebc5be0a6ba857c93c0ba85584b41d49a716322de4551ddb50a6b68a811b861bd896f70e07a13ae13585d65e2d502ab
4cb85660a5c00a816b560ff13502570fe12a81ee695c398d2ba67179e96abe5ce09daf613951e5f55826dc61fcfe102e23582a3b79e96a5c2a7089c0
9272744de3621d2e12e814582cb088a68beec0ef19b1108dbcd0860539989f67e0f93ecc33a08369b9c385765d3ab70fa18d97739bc05c42b9977101
c92fb0a0757e22b72621fdab78d13d2acf4fc49c04cc71cbd946cc22f1ac285aa29899e1e0993ecc484fe6190e4c4fc679690e3e6f15a6393055a059
60ca34269b3278b24013593565a05160924003593044514f0ef543a84bd4715d3a26ea502b5043539a28aa485c2590d32a7839ca846427a291fe2f69
b9948e4c8bcc2d4316b249e63bf8102bfa6e0b7cc7f6bf6dc9fe2f01ab2568
>}
\immediate\pdfobj useobjnum \csname EF8O6\endcsname {[ 101 [ 615 ] 109 [ 974 ] ]}
\immediate\pdfobj useobjnum \csname EF8O7\endcsname stream attr{ /Filter [/ASCIIHexDecode /FlateDecode]}{
789c6360183680058dcf0a000125000a
>}
\immediate\pdfobj useobjnum \csname EF8O8\endcsname stream attr{ /Filter [/ASCIIHexDecode /FlateDecode]}{
789c5d503d6bc4300cddfd2b345e87c37781760a8172b764e8074d3b950e8e250743631bc719f2ef2bdbd7142ab0859ede7b3c242ffdb57736817c8d
5e0f94c058879116bf464d30d2649d38378056a7db547e3dab20248b876d4934f7ce78d1b620df78b9a4b8c1e111fd48770200e44b448ad64d70f8b8
0c151ad610be692697e024ba0e900cdb3da9f0ac660259c4c71e796fd37664d91fe37d0b044d99cf3592f6484b509aa2721389f6c4d5416bb83a410e
ffed9baa1acd4e7fb8677a6d9fb57f15182b8c37181966bb5f6176ce67d863eb35464e5c6e55a2e690d6d17ecee04356e5f703343d78c4
>}
\immediate\pdfobj useobjnum \csname EF8O9\endcsname {<< /Type /Font /Subtype /Type0 /BaseFont /BMQQDV+DejaVuSans /Encoding /Identity-H /DescendantFonts [ \csname EF8O10\endcsname\space 0 R ] /ToUnicode \csname EF8O15\endcsname\space 0 R >>}
\immediate\pdfobj useobjnum \csname EF8O10\endcsname {<< /Type /Font /Subtype /CIDFontType2 /BaseFont /BMQQDV+DejaVuSans /CIDSystemInfo << /Registry <41646f6265> /Ordering <4964656e74697479> /Supplement 0 >> /FontDescriptor \csname EF8O11\endcsname\space 0 R /W \csname EF8O13\endcsname\space 0 R /CIDToGIDMap \csname EF8O14\endcsname\space 0 R >>}
\immediate\pdfobj useobjnum \csname EF8O11\endcsname {<< /Type /FontDescriptor /FontName /BMQQDV+DejaVuSans /Flags 32 /FontBBox [ -1021 -463 1794 1233 ] /Ascent 929 /Descent -236 /CapHeight 0 /XHeight 0 /ItalicAngle 0 /StemV 0 /FontFile2 \csname EF8O12\endcsname\space 0 R /MaxWidth 974 >>}
\immediate\pdfobj useobjnum \csname EF8O12\endcsname stream attr{/Length1 9704 /Filter [/ASCIIHexDecode /FlateDecode]}{
789cd5390978146596efaf57d547f5599dee1ca493748e2609674c08100ed3861b14830424289a8624843321e10e1844484441402008226400110362
64584830224846406474069859565d46c541772232b338ce84e4ef7d55dd6070677766bf6fbf6fbfadea57f5dfef7efffbab810180831e2278460c1d
361cc2201e80f5a0d6f011b90f8ff756f6984bf5a100827dc4f80939c97bb34f03601ef5d73ef440de486b43ff06aab7d218f7c3e37ba7cfccda1803
2019a87fe2b439fe320c5b40eb4905d45f366de17c0fcc88c902d07d4c755e5c367dcebc3e0b670218a80e07a6fb2bca404f37189751dd3c7df692e2
47decbd843f56701c2969514f90b0d53deab068811a9bf6f09355876e95fa0fa20aa2795cc99bfb8b6c9514575c207ab66974ef33f9035dc4af56b54
1f3ac7bfb84c7c41370f203685ea9eb9fe3945c9df0e6ea6fa48a2e7525969c57ce16b711175b9a9bfb0acbca86ca0fe8fead015c44309a8b23243f0
12a886e0d1da5490a1170c0261e8f007f3c03adb3f7f2e44824a25040200774bdae85945e573c1109ac7a84fd0de0610589e3a92cd64b5600407fcaf
5f81bd04c7082e063e0d1c23f894ca9ffe4333ffa151ffc03aa7ee5d8d68b818a422f8d6dacefded593fbeefb6cffabb081949d20c16b0916547904e
ba402cc451ab85eaaafc85d0380cbd4590e8a90bd5f4a41546ba0d0f8dc54e23d43e23810c26000d8795b0d84109d9c38bb0199c9a3d2cf597fba7c2
2a7ff99cb9b06a6ab97f06ac9ae69f5b41cf92a2727a2e299f0daba6179552797a79d12c5855e29f4b634a8aa652cb2cff5c3fd9b2bfd4a33ec9ae56
cdf1cf2f81557367a92da5d3fd736055f982b934727ef1dce9f42c51d7ef647b9d65a1d2f53bf81a72619019f49fa98dc24a62a980ecfb72480005f7
0af04e9d9584ca13feaecc8191cf090bfffe38981c1cabbeef963bad714ffbe44ef5f21fcb427f80dbeba93ce8eed414d2865de3373ce4638c742486
7a054d5f2992aa79138d0549f565265ad90655b35286b48da6c406dff85b2816c81305930ed1200a821892a3ef0eb2dce2618554f32438744eee64db
f573d897a131c10b43a0c61499e020d59856176103bd3d748bda3b1306c060c881a1308a3434158a600694c1025892e0d034e981b47b46f8a1104a60
0e94c3627544e0cbc0a5c0c7815f063e089c0bfc22f06ee0cdc0a1c0c1c0eb81fd817dff6aed44d17f7505635b4ba866d3300641a53f8d201382b16c
4008547e068740956a4e08d4958686c042302a04aa1e7343a0aeef27984aa06aab90a088408d7b646bc43b8093600e4119413881aaf705045d081613
2c218805487004013c2c131ae13cdda7a01e76b07d542ba6f679d452271c86d534bb114eb3f36c8dd093daf6c14db848236be03cd68bc0464306b502
5c219bb845d1f808ad91c59c2c4baf13411c2b1e111f111bc5ebe205e827568817c402b18265e06e69a2b48f200b7f41b6728ee24b23bb0a15701cbf
c10c6c16878a56b88a17b01ebe222caafccec37ad80b95448b9395429550293c422d67a40bb09dee52eabfc076b28b44dd71f60c5c8697501446c24e
7699f83a0f7f8667304fa07d0e338462a2ff0cad7581e66f870a0a4e97990c5ce84e6d44bd265ff519833da5cbda7d13aa08731eecd535ea9cfa44c2
a24a6c1f3bcd5a759ba00e2ee2e3380f3f61abc54471bf3812d607258005b09ed6deaeced115b325c4bb7a57aaab0b8bc402560fdf8805faa9b4f62f
548e08e711e111e2a8189a0916e9ecc4d340b61ad710a56a6f0c5cd08f167bd37c5a41bf9cb80628c54c9849a54a380487a127d6c27a5a49e357d74f
fa33cddc217e4e3caf67eb843fc3051c0aa9502cde2059aba6520b704caf93441418f4f0d81b04efa8c206dfb8499eb3f9f13d7bfca4eab1eb3d0d90
db6059e2690c04722789d1527e83e46e40afa141f4267efe5f757edeb3c798dc499e868e614343ab0e2b184a6de3275151ad5133b50f1baaf5a9481b
242ffd46153478a695789eb33f9738e0397bd1809ec15844bb3ef93652a926f0a5b89eb463a27d2ad117a6ab73409d79a3636da4d16d8b45b72b3ad2
dede7aabf53eb05fbbd56abf91c61204c5eec84877287621391d143b2426a84fe1f91dafbc42bf575eb9cd8cfc87dbb7f90fcc28e5f20bfc43820b2c
83ee3e2ca38e57f06a5ec32bd83ab6842d65ebd498f539b9f764da0964f0f95c3958270a75d2d37aa8331ae2746e843866b25f1ad360cb9bd444837d
fdf35b5bda89a0deade9b75a2fb5a6910cf213d8111bda44614abf7845caf46628f1ae78ce46f36dace80336ba7d6fbd5831b27164dbe57a5a80f425
8e268eddb0d3971cd5251a23dd8a24822249628efd67ca664b9d73a3487e0b765960b23bc28eba187bfb980657de9886f0bcc7c63438f31e234a3070
b27f7ecba5d69327154756889a5b1a357abbf4ad5efa9635b8ed4a4416d1e64b9f204e9426ea978a4ba585d135517af2ea28b10ba9d73d1f16ea1674
a9889eef5e09d5512bbbac8c5ee9de0ffba3952930c54b4c64f6857ef7b3cc3e5d131374faccfb5946bae872eaf43aa05072aafd41126386ffa1d7aa
9fbcb878e9a5495f33e7b0c7a2f8adfafafa456ce380395b472daacd19f2e17de95fbff7f8ab6531fc0fc4fd0ed27705719f0265be5ee00a93ab8d71
d59eb03a97a5ceb849e7aef36c4adca85bebda931aee0e037446b9bb7aec6e74c61975a9aa10c2f3eef06fd4f82701dc6a252e4902f6d66bb7aeb5da
7f7fc3aedd249534e63316c6fae3fc9ec27811a6b058e6728af1095d9333638991bec4557796192cdcc31e666fdcc33fe65f3f716666ded93927ce34
bd7ae8e8969d7b5e1a7fa2bce25cfeef99f905f4c6b56cf8ec4f5eefe9fbd26bd7afdab26f5159456552d7231ecfaf0e2f3ba05a38c575712fd99440
3bc1d3be1866410b205a72004dfa3a89e1d3466696c1ad338866abfdd3310d2662cca231665619bb34a8a5355d51f57aedd2a0d674e24553ac788e94
7b4e55693713748391904f1bc622780ef4e1ac3b7465ddb12f1bcb1e363f6c99c88ad902b61457330ba9d2c8e23143c970252a894a7c26eab8c07826
bf7cf95cc71392b7fd4bbcd09eb19fd7b182d3a4a19da4a142a23c069ef0258a5df44ab53da64b9dde59675f6311eae069cb5afdded8083793d10db2
5d176b6f679df56257c90f798b5df5165291bde586eac0aa07937a784b503b61645f8a2a737039e11eb5a8daf80ca33aea7a4cead1c692f825fedd13
a74b269f9cf5c6071fbc31ee6779d2e57afea2cdc66ffcdb1ff9f71ecff9fbd28eeed87134a92b49fb2639f4693191628b1edc3eabee19711f3c2390
c0458834d8db4996aaaf90a3f453a54170f3225d9c8b899c665f6105d227b83b34db023b04dd0ed120311112d5c92d34d97e49757a9576f5a62db38d
ab80bbf7dfaca77994c58bad243d0365c41e18e24b8c822db2718be369b6457e234e311984b0a83809acee7029caddcb086e87184f112ebd9d5655ed
5873e55635d465a51db62530368529f141a1dc2d78e3d3c3552fd4ec39319e6d6243f7bcf2ca1edeccba6fdeb871333709e2f5b615cbb6bcca6fdeee
f85a38d7f159cdf36b570bc5fcfed2f27965fb4ebeb566b7d373fea5b3ff421c13bd524388de5e3e176c3112a576836097418ab2a483db283ab408ac
64858823d91d2e08d308cbf0a8a137de1bafbd5319db748b65b238fe393fcf73d82e7698d5f2129ecbfd52efdb8b5824ebc57ab0887d7c2b5fc19fe2
b5646b849d6d22ec08bd8ec19b02a3ec5054dddcae5a0f68b15645498e6c49937c52ae5420ad977649ba2961194ae2b90f3e902eb775872017e23a8d
8b3018e08b240e54461c76d940a92bf191ada88c3883822631d3a25aecf6996cae3857b6eb49d79b2e49e3e8ae84291088dd9992c836f175dbb7afe3
fdd9d9db8cf1c06dfe81d4bbe3a3176baa5fdcf7e5279f7dd1b19ff05792cff4a4ac40062f3453548f334518adf07a84aec9aa78aae38ebb9b121b95
b5116688c0488bd1608a438373585722e7c34bade9e4e72a412dd76eb593dedf5763579692450ee29b9b1693169b1697e6498b4f4bc84ef6c5f8627d
713e8f2fde97901b931b9b1b97ebc98dcf4dc84d2e4b5e1d53135b1357e3a9895f9db021b92ef96672ec9da97726dd9950105b1057e029882f8b2d8b
2bf394c5af885d11b7c2b3223e720a492041e77286931406b37e4a62a695252674cdecd33723be738c0c174e5c3df874e9b6a6c6c6ece6670f9eefb8
cd84d7b6161ccd2b3a31f9df6f0a19c595532bae1c497db0e3e9fa62ffa9ddef9c74543ddfab577d7272bbaaabe324abbd3a27edf86ee8ef8bc226b3
cdd814e95a6b6b8cde1a050ec78848b3ced065788caaacf45b9afeafdd6a51059376b42076456c5d2c129d1a394152991643281f205a9355a7c6af5e
7bf1c5d754e87861c05b951fd289f0c3cab706343509bdcf5fbf7e9e4078a4d0cf9bf95fe86ef617ee276a049847744da6923d48579726b03a9b24c3
5a6b23db8a11221884118ac3342c46f386f4f4bb74b5dc43979211121ba52940a2621a8941e9e1eec6c6016f2d3b1f80c0f9656f759c210af7ef272a
f1a8f0c45f5bf717fad95066a07ba89fbb428486e8aa22db724234ed9949e062c66ac3b392eb75263599d9db914d8e46f35a77b44b30b80c304670d8
86b935125bb4bc404d9bae693be22d2de8fa52b363ca62ea623e8eb91923654336cb16b25dd9d1520f7d6f436f630fb9144a59a950ea2a8d364e9947
fcb8e2b5a0dccf453c695e4ea2d6f762aa2d8855ed87cd178ecd3c3375dac7b3f82d7e86a5b67fc1f48dc2abcf6e6fb20a4f4c3e71a64f9f43dd7ab0
fe4c66616c08ffac65eb91433bd58ceb38f96b21c93a0cfa90051801ad4c5763551acd5b65261860acea23c33577d50c60502b095a714450482c7069
912751091a001532b43d245c2c6c5cb66ccbc1a6a69c9f2f38f5beb0b7e37161e7ae9d27f676d4e89c1d3b8b0abf536def14215f4278d5f8de9d7687
13e25bd04cbb834184e17777876bed6a60b01b7dc65c6381b1cc4881410d37ea6671aa912eb1e0769dcef94d880f7d0ce9260126fbbaea1cc6481be8
62f42e734d8c071ba39ba3ec7a506c06832e5731d872dd9164d6895a0c6aa728a4656d83065dbba56df52a73beb0b4a4dca4b2a40d497574bf9b7435
299064246e35fe5c9d79fe91795790f9d4612757be79a2a97cc1fa7d4de58bd6ed6b6aca6e58b2f400ae59b6f0fb2f5451fc6c872a0a61e7ee97dfdd
d35123161c9a3e75d95d4d100761d0f75e4d34ff6d4d5cbba3892305ae8f5cc24f75e1fa3bba20c4aa2a8256bd80f0aa79ff68cafb9b1cd0646e54f3
7e876d1c3a5cc37e92f7fb12b3a32aa15257a5af325419abe42a53a5b9ca5265adb255d9ab944a475dd4cd28a553fc228fbbe77850b1f9e0812d9b0e
1edc749339f88d9b7fe4df3105af5e3f77eefad767cf7cb3839fe5adfc5b32e12cb25427ebafc6293e51dc4b14aaf1e07e5ff49d78d0685dcbdec1e6
188a0523b4a8d02952518a732724f88cc198f0bb58914df1de154d284cdd13be2a9a9a7e8c5242ff3bb16b7fc7219d5cdf294eb13fdc090a41bde168
a24e81349f5367223b33618db5d1d8ac977506300ca74dbb2568cd14092e7da8bafe91dcb05d61aac634f7eda4ae081c1d37aac78ed7888ee3abc37a
b9f18843397fa2e33029ab789a2411b6d2c0977886b025c375df208b59b09ac6c7c51a8c825e1e1f17179b239b62e3441754b335a2b3dab526b24911
9bbcb4e1a5c4caa6b8683d3c126db0ea0dce8461292a55975aaf11395a36a129977f7fc3fefd0dd5a6b41cd7fa2d6dd07aed999f701892c9df7d73dc
b2dbe436f7a200d5c3d4c33cd038501e681a683679c0c392841439c5d42dacb7b3b7ab5b784a6c4a5caa27353e29b95aae36559bab2dea171226083a
596742335ad08a36b4631476c168748b31c6e4dea9d9a94fa656a5ae48dd905a977a33359212e679cc159492333c4e3b37e8123b27a8bd4986eab698
1e8ecf8fdd3f79cd9aa99bb35b5efde19f279f9e5dfcbe7fe5daa203be032ffdee97c547c4ec4329297979be51f1d66edbd6ec389a98782233337fdc
985caf2d69cbca9d0763555df6a3d0f4276927f920ed3b56c960c3d74161cd861ad94432261bb33bacaa0f0e6a5153d05016163c2050447c331811d5
64da193e90b9d41d9bf6a00c852d62957cf5988a77deb9bcbba646dac9df5bdf51b766ecf65dbf160ad6b3fbd5787888bc7092e6fd4e18e873ffe8ff
6b65d6ec6c3493f73b4d63290e0c77a9ee9815b4a86be9778340a9eba41a04c2280a07ddeeeefed7951d5283c01b8d8d43de5a70ea2cfb881d17f675
f877ed3ab157a8bc5d77b078da4ddcaf723f9022d012b1004c6c886fb8a4d0e1525450d4ab2f4964024345109849fdbe2c2b4699a92f93ac37e88d8a
c1a0cf91f522130df0ae24844a824167568feb729e7a08794ccd25e9a16867139d9a51aa0715b51c3c3fb704b3418afc77ec8f0ecf764308a46f7f5a
57cf5e2fc9a22877115d725779b0789f3c417c543f492e9617b2a5e242fd7c799db852de26ee12b7ea5f9437c8fbd8ebe29be2abfa3d729dec965194
24a36cea822ec965ec624ac5ae92d7d8cde4b10c6059d84feaa3ef6bcc32a55946e170699871b4c967c987892c5fc8c747a589ba7cfd44c34463be29
d7526a59ccaa2c2fb3cdfa036cafbec1f291e5aa2560e9ad9ef4844423a35f869189857c16abbfc28ff3e357d8cf79f91596ca52c5828eab1da75823
1f298c16c2f93cb65ed5c1608a7555a4031b7bde37446f108c0ad8543103d8ac8a0d6c16c56c01f565b5c826d9ac984c728ec564b48349aac177aca6
66bbd562968d3a04834db499ec771460d0c46eea247653f0d8ae49dd4e86d41aca75ffa6e80dea778c887455e6377520197446b484cb1116bb25d192
6919253f2c8fb54c364e9667ca359615964d16870c448449329bac265b04730976d12e45c84e93d3dcc5dac5960c49142d3ca2474a35a418bd729229
c99c6ce966ed66f328fd2093650a69629ad45fee6bea6bee6fc9b266d9d29407c0c77c820f7da24ff2e97c7a9f21c7384c1e6119651d65f32979308e
8d132660ae984bfa9940fa79d4f8a83cc134c19c6fcdb7e52ac5ac582891675867d80a944ac362eb62db1a78ceb8dab4dabcc6b2c6bac6b6cdb8c5b4
c5bcddbaddb6d7b4d77cc07ac0d6a07ca45c55024a11e952b2b2e0078a6ca6ea3343d83476f3b24db31fcccb88e7034fb327181d8acf2edd3eb23a4f
1cdbbe1967ab9abc22f4943e117e4b198e72948eaf2203d17ea9453bb66a8756edc02afcb63ef8ad5ec8dfe63cfbcba54fda067d0f7106ed03f8af9e
b25ebbf3fee137ed0f5af38dea3f5486bbdfcb699e7e0e8f01b0f21f7ed336ce9aff9fbeb1c78b17b4efcf20101ae1793a6146400dc1e704b5043b08
0a09760a87e1a6ee085c9196c3395d2a9c131e8773e24582eb502939e138c13cb115e6490e382e64c129fd796aa3b2f815b5513f8ea6727728c544e8
47ed87c466184830585db3133d39b004fec266b373428a304228114e0a014cc469f8577198b851bc2df5939e944e4aedba87743b7417f52bf5ef5212
7edab8df7842e32c1ef3a03b948099e2a51db6a992105d4238bdd56fda7a98acfeb3211a493069da7f066a994138d58265010c6c78a88c9ddac54e65
0922d9d85059074e560c43a014ca88ee729801d309fb7cf0400a4c83547aa7431add19549a4a233cc4df0ceaaf20288722f0c31ce841ada3602e8def
45a5076036dd1e78e4ee5a155aad88de453467213d0b69a4fc0f60ed7b176b1e615a48b8d42fd77369b44a879fe6fccf300ea5d24c9a371116d08869
34d6afad56a4cdf06b1c796895b9f454ff159a4aebcea0711e9a5f4ad8fd5adf4fd719afad524114d179066651ab8ab582c6966a2ba513ee0cc8bc67
d69d39a1ff46034f69ffb5fde72b5ef301f55f5175d77441b8f60f6b0cadeaa5fca807ad3b1486c108180963e041c88571c4f7789840381e8549f018
3c0edf350a2b7c81db1cdb9cf8572ffe251d7fa8c53f5bf17b8eb738febb17ff64c53fd6e24d2f7ef7dc03d2771c6fd4e2b7b5d8da867f68c37fe3f8
cd00fc3a07af73fc7d3a7e756dbcf4552d5ea381d7c6e3975ff496be6cc32f7ae3e71c7fc7f16a3afeab133fabc54f397ee2c07f598e57dec67fe6f8
1b1afe9be578f9d208e9f272bc34022ffe3a5abac8f1d7d1f82b8e1f73fc88e32f395ea8c50fcfc74a1f723c1f8b1fa4e3398eefaf56a4f7ddf88b70
6ce1789ae37b1c4f713cc9f15d8e2738bec3b199e3db1c8f2bd854ed959a38361e7b5b6ae478ece814e9d8db786c8578f49fbcd2d129be001ef589ff
e4c5231c7f5e8b8739bec5b181e39b1c0f15e21b563c78c02b1d2cc403f50ee98017eb1df83a11fd7a1beee7f81ac77d1c5f75e05e8e7b765ba53de9
b8db8a3f2bc43a1a52578bbb38ee7cc54c190fbe62c61d2f47493b0af1e5ed76e9e528dc6ec76d32bec4716bad45dacab1d6825b68d2965adcbcc92a
6d4ec14d567cb10d376e785bdac871c3fa29d286b771c30a71fd0b5e69fd145cef135ff0e23a8e6b9fef25ade5f87c2f7c8ed87cee015cf3ac495ae3
c467290da7869a42ac2649557b71b582ab383eb352919ee1b852c1a739aee058c5d117786af972e9298ecb97e3b242accc7349955e5cca7109c7c556
5c64c685322ee038bf0d2bdab0bc0de7b56119c7528e7339ce8ec7591c672a39d2ccf1388363c9729c4e95628e451c0b394ee33895a37f0016b4e113
669cc2f1318e9339e64f92a5fc369c24e3a3e151d2a3e93891e304c23c2107f35c389ed9a5f191f88813c78d0e93c671cc35e1c31cc73e6497c6727c
c88e0f721c433d63388e1e65974687e1a8188b34ca8e232d3882e3f05a1c568b43390ea14d68481be6bc8d0f8c411fc76c8ef70f7648f73b71f0209b
34d88183065aa441be800d075a7000c72c8efdfb39a5fe6dd8afaf5deae7c4be9926a9af1d334dd82716332c987e9f494ae7789f09d37a9ba4340bf6
3661af9e46a9971d7b1ab1473a76efe695ba1762b75487d4cd8ba90e4c49f64a290f60b217bb7a4d52571b7a4d98c4319163820de389cf78077a0a31
ae0d638985d8428cb1a09b24e8e618dd865d72308a2a511c230b31822415c1319c268547a18ba393631847070d7070ca4e7b4a4a0eda97a3ad10ad1c
2de670c9c2d14ca3cde168e228dbd1c8d140c30c1cf54ed415a2489d2259800ba915390a54177a22b32370648dac70f53ad6fdffc305ffd704fcb757
cc7f009ea56dc5
>}
\immediate\pdfobj useobjnum \csname EF8O13\endcsname {[ 32 [ 318 ] 48 [ 636 636 636 636 636 636 ] 58 [ 337 ] 61 [ 838 ] 66 [ 686 ] 68 [ 770 ] 73 [ 295 ] 80 [ 603 ] 97 [ 613 635 ] 100 [ 635 615 ] 104 [ 634 278 ] 109 [ 974 634 612 635 ] 114 [ 411 521 392 634 ] 120 [ 592 592 ] 8722 [ 838 ] ]}
\immediate\pdfobj useobjnum \csname EF8O14\endcsname stream attr{ /Filter [/ASCIIHexDecode /FlateDecode]}{
789cedcec90a02311404c006f7dd71dff5ffbfd21064c0d31cbc78a8823cd21d1292fca8d771decf20c38c32cea4eda675ceda3c2f6bd1a6e5d7fd55
e70fd669cadc645bd32efb4f7fc831a79ccbee926b6eb9d7f69167e78b00000000000000000000000000000000f04f5e6f090f020e
>}
\immediate\pdfobj useobjnum \csname EF8O15\endcsname stream attr{ /Filter [/ASCIIHexDecode /FlateDecode]}{
789c5d52cb6e833010bcf3157b4c0f11810468248454a5170e7da8b4a72807072f115231962107febe366382542418cdcc3ebc78c353f95aaa76a4f0
d3f475c52335ad928687fe6e6aa62bdf5a154431c9b61e3d9bbf75277410dae46a1a46ee4ad5f4419e53f865cd6134136d5e647fe5a78088c20f23d9
b4ea469b9f5305a9ba6bfdcb1dab9176415190e4c6967b13fa5d744ce19cbc2da5f5db71dadab435e27bd24cf1cc231ca9ee250f5ad46c84ba7190ef
ec5350ded8a70858c97f7e9420edda3ce263170f38032f4ede43de2790171a0162c01e70584291294085cf145e9690a59725e4034a39380321a3a683
3310f211f2d1cb47c8094e97f819123f438ad3a6bef64ae1a245ea275c29dc6750df69a570314ce61b7a9a32a059cc3936c38499efb350fcbbecb098
88459bcc775da975e33872a98033f0e2ee78b94c77dd6e371fbb54df8db16b342ff0bc3f6e735ac58f1dd7bd7659eefd032190cd3f
>}
\immediate\pdfobj useobjnum \csname EF8O16\endcsname {<< /I1 \csname EF8O17\endcsname\space 0 R /M0 \csname EF8O18\endcsname\space 0 R >>}
\immediate\pdfobj useobjnum \csname EF8O17\endcsname stream attr{/Type /XObject /Subtype /Image /Width 379 /Height 203 /ColorSpace [ /Indexed /DeviceRGB 1 <ffffff23313d> ] /BitsPerComponent 1 /Filter [/ASCIIHexDecode /FlateDecode] /DecodeParms [null << /Predictor 10 /Colors 1 /Columns 379 /BitsPerComponent 1 >>]}{
789cedd8319283300c0550312e2839828fe2a3c1d17c141f81928259ad24444276427653adc27c158c47e8d10805c7441e037fc9b55f892da63227b9
1225be57308d8d0062018d911772202176cab69ae92400fe1d94993a696c6f054bc75ced094d3b5d66af908735802b00294d06f42d915743afec099b
69808060ffed95a4f6bb0eab3536e972b959806080e4fe1ed91a7bf87efae402c403a37f296dfbd37ad909155e3d214b06b80ca894a42a4b8a4a231f
61bd93f8210062016fac4ff3d67a9de9cc0c101a48745bea90f9590d100eec23ba6c9f4efdef313c2907b804d873d5124fcb01e2013bb6cb7ae4530f
fb5a80e0c00f6cb6e57939402cf070e8f63a0002019b48eb74fea51ce0e3c1e4e7ad009f042c86bf940344027939dfa6020406afb6a9001702e58d72
8038e0ad728030e01b317e861b
>}
\immediate\pdfobj useobjnum \csname EF8O18\endcsname stream attr{/Type /XObject /Subtype /Form /BBox [ -4.98205081 -4.98205081 4.98205081 4.98205081 ] /Filter [/ASCIIHexDecode /FlateDecode]}{
789c6d90390ec53008447b4ee10b8ce50d25b429738d3451a4dcbfcd66f1f9320d420cbc017238288595ee801ca75a12e770528a8da53636c5146711
a9f294b809dfa51c4b69d3d7d5b38d54854e2803cafd61df11f550b9430ca33b182b236afbc8d77423c71f23049e17c69530ee8dbfbbe01c0ee73f70
1f09e7e3709d604edc8916ba0078975993
>}
\immediate\pdfobj useobjnum \csname EF8O19\endcsname {<< /A1 << /Type /ExtGState /CA 0 /ca 1 >> /A2 << /Type /ExtGState /CA 1 /ca 1 >> >>}
\immediate\pdfobj useobjnum \csname EF8O20\endcsname {<<  >>}
\immediate\pdfobj useobjnum \csname EF8O21\endcsname {<<  >>}
\immediate\pdfobj useobjnum \csname EF8O22\endcsname stream attr{/Type /XObject /Subtype /Form /FormType 1 /BBox [0 0 306 190.8] /Resources << /Font \csname EF8O1\endcsname\space 0 R /XObject \csname EF8O16\endcsname\space 0 R /ExtGState \csname EF8O19\endcsname\space 0 R /Pattern \csname EF8O20\endcsname\space 0 R /Shading \csname EF8O21\endcsname\space 0 R /ProcSet [ /PDF /Text /ImageB /ImageC /ImageI ] >> /Filter [/ASCIIHexDecode /FlateDecode]}{
78daed58cb6e5c370c9db596fd02a1ab781199a428892ad045dc34460214a813035d1459a4cee485d8a993b46efebe479a3bf7e18e274eb30dec6b8f
295ef288220f295f7af6842ff677a97d9f9dbbc3fbebbf5f9fad1f1f1ff99f9eccff3afbe0d8bfc1f3122fbcc17385d78ef1bc74cdc4b98b94f1fb6d
ffcd9582e1338d9f5e39f7c21dde83fa07681d3b2735a41a6bf562a1e46439751b12b8722d3613bf9d8bb96828b5960cf168622eec8e2efd0ef35204
b252cda2987ad61c2846abc9bf5ffbdffc853fbc270d1eb689e7aac35c06e0b2db30db040def73dee5e8ecdc1f3e647fff9d3f7127feabb150c00b57
8e42a164263149852c52358a550081024b55e692b5f8c7c7eed26d8f35c640061f9948c547b81778662d257694bf10508eda1c03204553e26c3e871c
4b258d6cf24df99bf237e5776e53cf436d12a870539fb6618b267163c92d896d92a6205b5e7347707f35d4ebdd5eb014aaa2ac2db582e51c34c5d829
c51d9dbac3070c64c59fbee89c7bfadcfdeeeface4c03ff5a78fdccfa7c0d7c138ab81083b90058899742f08039b52d55aa8805d6e078257f45f189c
2c548a529738e6e2bd403869c894950ba764b70dc72e2402ff4624bc443217ef4522a22181a1113cbd258eb81347b5906be1780dc74cbc1f47d5a000
9c4a945b27882e91cc03ac35588577e50c7309f941d8e0c618b5ce63253533cd7009d20cdf59dd3ff0a76f5c0ac9046793a9b5a0a67467b5ee2b8098
4915bda9b7cabef2e7b09249285131447c58f9d857d05615fbe698cbb8f26a78276d7291226f57fcf04e244d39a9d4b259397c2013ccf3aeb408fed0
87a3a01ad5b8b4d0a3f35e13ee4c450bf0c29c5a3757b194919395f7d7c4aef31f20a00472cc1a750e6112ee87905240f367d3c64f7b21a43d102a07
9354539e4398843b21e89021a01246c2128e53ca9713c334b921dd09658ec96e066226dd8b820529902a206424ce6760ec0b45632421adf838873149
f7c3c07924b5828c473efe0f729ae6580e6a84af058c49ba1f46069989a654d0c4ea6760c89211fa1da09934eca4e79861ef26a8d3f4393a78381423
4763c371c4b1e82f6e2ce0e737d2c1cd14f2cfb022959294a8939f5bd1c17aa283af1ccac53ff2fd2e14102774fedb0ee67e799b704a015b4f62d364
8e7347cfa3542a22a2296072cfd26a11674b683e35fbc4a1623e88101a288b33c69539715490436d5dd367b492a231b6f465aa986f92aa79a413ac8b
360b2c187b90bb683c05479f087783268e25d4a86d0f9602ce00116fe2e6bb55da9c25186e363dc4a38b698ed6a5059da548043826982004a1b534ae
3158e2027bcc1c4aac39377c422558e4669c4174b114b4b82617b8e2e67e410a0876285631a16170635c90704e0da1a0a1955c0559837444aab07438
9263286aadb9b06a005cdc279bbc14441edb835b6c0eac4e35f5ab25b05151bd46024fe60360cb8369f85b1eeeceebedce0b2b6c7ec95d18ead4f3ef
e57851dddec317bd2207cb027ee696bf637e5c6b18d779e1d703df02cbadc1e1678468bd7a7f8020840c22404f5d3d5b5d4076b1fab8f2f86b8de78f
a6b3fab07ab57adddafc5337b25bff5fc1662c9e636bac3d0d1d6c34c462e29956f713d3d8a6808f86b1436de0a5a1fadf0dbc6015b3d04cfed7a0df
8967125fec163f1fc573dbcf06db9504845346f9fb2e8fa0c84a7952ff34a89785f20f033de5bac0b765ad46589d5847ceb2f90053d1e2388d4e6601
f9b1af233d335a844dfe64bb15438ce30ec31b2accadabe8dcf3d6f2f7dfdd6099671ceadcc9bf4660aa17
>}
\expandafter\gdef\csname EFWidth8\endcsname{306}
\expandafter\gdef\csname EFHeight8\endcsname{190.8}
\endgroup
\edef\EFSetResources{\noexpand\pdfpageresources{ /XObject << /EF1 \csname EF1O23\endcsname\space 0 R /EF2 \csname EF2O20\endcsname\space 0 R /EF3 \csname EF3O20\endcsname\space 0 R /EF4 \csname EF4O22\endcsname\space 0 R /EF5 \csname EF5O20\endcsname\space 0 R /EF6 \csname EF6O20\endcsname\space 0 R /EF7 \csname EF7O23\endcsname\space 0 R /EF8 \csname EF8O22\endcsname\space 0 R >> }}\EFSetResources

\begin{document}
\title{Diagonal Periods and Newton Supports of Rules 30, 86 and 135}
\author{\authname{Tigran Nersissian}\\[2pt]
\authadd{Independent researcher}\\
\authadd{Nice, France}\\
\authadd{ORCID: \texttt{0009-0001-8755-7412}}\\
\authadd{Email: \texttt{tigran.nersissian\symbol{64}hotmail.com}}
}
\markboth{Complex Systems}{Diagonal Periods and Newton Supports}
\maketitle
\begin{abstract}
The single-seed Rule 30 cone has unbounded least diagonal periods in
both directions. On the finite-support side, first contact fixes the
least Newton index and an integer polynomial lift gives a Fibonacci
ceiling. For reflected Rule 86, a backward map on periodic tail
profiles retains the zero boundary and bounds periods independently
of transients. At depth $m$, the Rule 30 period is at least
$\lfloor m/2\rfloor+1$; for natural Rule 86 cuts through depth $m$,
the largest period is at least $\tfrac12\log_2(m+2)$. This proves
infinitely many reset/integration period doublings. The physical
Rule 135 orbit is a translated complement of Rule 30 after its first
step, yielding finite or cofinite supports. A separately initialized
auxiliary recurrence has exact index-by-index stabilization. Residue
tables and block formulas describe the remaining structure and permit
evaluation from supplied representations. Construction and minimization
remain separate costs.
\end{abstract}
\begin{keywords}
cellular automata; Rule 30; diagonal periods; Newton expansion;
Boolean zeta transform; dyadic blocks
\end{keywords}

\keepsection
\section{Introduction}

The diagonals on either side of a single-seed Rule 30 cone become
periodic, but periodicity alone leaves a basic question unanswered:
can all their least periods remain bounded as the diagonal moves
deeper into the cone? We prove that they cannot. The two directions
require different information from the orbit. One argument starts
with the first active cell of each diagonal; the other uses its
eventual periodic tail and the zero boundary outside the cone.
Following both arguments reveals how the initial geometry constrains
periods even when the full pattern has no simple description.

\begingroup\widowpenalty=10000\clubpenalty=10000\brokenpenalty=10000
Newton supports provide the common representation used here. A binary
sequence is encoded by the indices of its odd binomial coefficients;
the support calculus translates the rule into operations on those
indices. Finite supports describe periodic sequences, and masked dyadic
blocks collect groups of coefficients into direct digit tests. This
representation separates three quantities that can otherwise be
confused: the period of a sequence, the cost of evaluating a supplied
formula, and the cost of constructing that formula. A large period is
compatible with a short digit description, while the existence of a
description gives no bound on the work needed to find it.
\par\endgroup

Our first period argument concerns the finite supports on the Rule 30
side of the cone. The first active cell determines their minimum index,
which increases with diagonal depth. An integer polynomial lift controls
the maximum index through a Fibonacci degree bound. The exact relation
between finite Newton support and least period then proves unbounded
period growth. This argument uses the earliest nonzero value and the
interpolation order of each diagonal; it does not require an explicit
formula for every intervening support coefficient.

Reflection gives Rule 86 and exposes the less direct mechanism.
Its diagonals have periodic tails driven by adjacent diagonals:
a one in the driver resets the recurrence, whereas an eventually
zero driver leaves a discrete integration problem. The known
reset/integration criterion describes a record increase when it
occurs. The question addressed by Theorem~\ref{thm:profile} is why
such increases must continue. Pairs of periodic profiles satisfy a
backward recurrence whose zero boundary pair is fixed. The prescribed
first nonzero pair forces all subsequent predecessors to be distinct.
Counting those pairs gives a logarithmic lower bound on the largest
period reached by a specified depth. The proof compares absolute
phases on tails, so it needs no uniform estimate for their transient
lengths.

Rule 135 requires an additional distinction about initial data. Its
physical single-seed orbit becomes a translated complement of Rule 30
after the first step. The corresponding support transformation therefore
has to include that step and the changed background. A related
recurrence obtained by prescribing two finite diagonal seeds defines a
different object. We analyze it separately: an interval becomes
permanently included in its supports, and each index has an exact
stabilization time. The residues that remain at powers of two give a
concrete setting for questions about density, popcount and compression.

The period results are Theorems~\ref{thm:unbounded} and
\ref{thm:profile}. They provide the main orbit conclusions and
explain why the two directions are treated separately. The Rule 135
sections then determine how much of this structure survives a
translated complement or a change of diagonal seeds. In particular,
the exact auxiliary stabilization law isolates a permanent part of
the support, leaving residues on which compression and density can
be studied. The finite tables record those residues at stated depths
and windows. Thus the later questions arise from explicitly
identified parts of the recurrence that the period and stabilization
proofs leave undetermined.

\keepsection
\subsection{Relation to Earlier Work}
Rule 30 was studied by Wolfram in the statistical mechanics of cellular
automata \cite{wolfram1983}. Rowland proves local nested structure and,
in Proposition 2, the white-stripe and odd-parity criterion for period
doubling on the regular side \cite{rowland}. Our reset/integration proof
restates that criterion in the diagonal coordinates used here. We then
retain the zero boundary profiles in the backward recurrence to obtain
a quantitative bound on the largest period. The record indices are
tabulated in OEIS A364239 \cite{oeis}.

Kopra's rapidly left expansive framework gives aperiodicity results for
spatial traces, including the relevant distinction between traces of
width one and two \cite{kopra}. Those results concern a different
direction through the space--time diagram. The fixed-diagonal theorems
below do not determine the eventual behavior of Rule 30's center column.
Pivato's defect formalism treats interfaces in one-dimensional automata
\cite{pivato}; the driver used here is the value of a specified adjacent
diagonal, rather than a particle variable.

The support operations come from the foundational paper \cite{U1}.
Classical algebraic and fractal analyses of additive cellular automata
provide comparison cases \cite{mow,willson}. The counting argument for
block decompositions follows the circuit-counting method associated
with Lupanov and the Boolean-complexity framework of Wegener
\cite{lupanov2,wegener}. Its conclusion concerns almost all sets of a
given density, so it cannot establish optimal compression of the
particular supports in this paper.

\keepsection
\section{Support Representations}\label{sec:bg}

We first establish the common language needed for all three rules.
The same support may be viewed as a set of Newton indices, as a binary
function on a finite digit window, or as a symmetric difference of
blocks. Each view serves a different step: the set recurrence follows
the local rule, the digit view explains periods, and the block view
gives an evaluation formula. The coordinate conventions are fixed
before using any of these interpretations.

We write $\NN=\{0,1,\ldots\}$ and work over $\FF$. For nonnegative
integers, $r\subseteq t$ means that every set bit of $r$ is set in $t$;
$\pc(r)$ is the number of set bits. Symmetric difference is $\symd$,
and Boolean exclusive-or is $\xor$.

\keepsection
\subsection{Coordinates and the Newton Transform}
\label{bg:I-1-3}
Let $A_R(t,i)$ be the infinite-lattice orbit of Rule $R$ with
$A_R(0,i)=[i=0]$, and set
\[
 b_R(m,n)=A_R(n,n-m+1),\qquad m\ge1,\ n\ge0.
\]
For a local rule $g(p,q,r)$ this rotation gives
\[
 b_R(m,n+1)=g\bigl(b_R(m,n),b_R(m-1,n),b_R(m-2,n)\bigr).
\]
The boundary values come from the infinite-lattice orbit. They are zero
for $m\le0$ for Rules 30 and 86, whose all-zero state is fixed. Rule 135
requires the nonzero background treated in Section~\ref{sec:physical}.
For radius $r$, the same calculation with $A(n,rn-m+1)$ gives offsets
$0,1,\ldots,2r$.

Every binary sequence $d$ has a unique Newton support $S\subseteq\NN$:
\begin{equation}\label{eq:newton}
 d(n)=E(S)(n):=\Xor_{r\in S}\binom nr\pmod2,
 \qquad [r\in S]=\Xor_{t=0}^r\binom rt d(t).
\end{equation}
The sum is finite for each $n$, even when $S$ is infinite. Lucas' theorem
identifies the matrix on $[0,2^k)$ with
$Z[t,r]=[r\subseteq t]$. Its square is the identity over $\FF$: the
number of intermediate submasks between $r$ and $t$ is a power of two,
odd only when $r=t$. This proves both formulas and uniqueness.

\label{bg:I-1-6}
The product of two sequences corresponds to OR--convolution,
\[
 A\conv B=\{c:\#\{(a,b)\in A\times B:a\vee b=c\}\text{ is odd}\}.
\]
It satisfies $E(A\conv B)=E(A)E(B)$. The operation is well defined
for arbitrary subsets of $\NN$: a pair
contributing to $c$ lies in $[0,c]^2$. Addition corresponds to symmetric
difference. Increment is $\Inc A=\{a+1:a\in A\}$ and satisfies
$\Delta E(\Inc A)=E(A)$, where $\Delta d(n)=d(n+1)\xor d(n)$.

\label{bg:I-1-13}\label{bg:I-1-14}
Write the source $h=g\xor p$ in algebraic normal form. Replacing its
constant monomial by $\{0\}$, addition by $\symd$, and products by
$\conv$ gives a support expression $H$. The initialized recurrence is
\[
 S_m=(b_R(m,0)\{0\})\symd\Inc H(S_m,S_{m-1},S_{m-2}).
\]
When $h$ is independent of $p$, this is explicit in previous diagonals.
Otherwise the increment makes it triangular in the Newton index:
the coefficient at index $r>0$ uses only indices below $r$ of the
current support. Thus each finite prefix is computed uniquely. This
fact alone implies neither a finite support nor automaticity.

\keepsection
\subsection{Finite and Cofinite Masks}
\label{bg:II-2-1}\label{bg:II-2-4}

Finite masks describe freely varying digits in a bounded window.
Cofinite masks allow all sufficiently high digits to vary and arise
naturally when a support is transformed by complementation or a finite
initial correction. Both can be stored with finite data under the
conventions below. Their evaluation properties differ, however, so
the distinction is retained in the notation and in the period
arguments that follow.

For disjoint finite bitmasks $v,M$, let
\[
 B(v,M)=\{v\vee s:s\subseteq M\}.
\]
A valid block has $2^{\pc(M)}$ elements. We interpret a displayed pair
with $v\wedge M\ne0$ as the empty set. On $k$ bits there are $3^k$
valid blocks, and $\binom kj2^{k-j}$ have $2^j$ elements. For valid
inputs the product rule is
\[
 B(v_1,M_1)\conv B(v_2,M_2)=B(v_1\vee v_2,M_1\vee M_2),
\]
with the empty-output convention just stated \cite{U1}.

An infinite mask is a set of bit positions. The same notation defines
a block using its finite submasks. A cofinite mask contains every bit
above some position $j$. We store its lower bits and the explicit
threshold $2^j$. Two useful infinite supports are
\[
 R(v,2^j)=\{v+q2^j:q\ge0\},\quad 0\le v<2^j,
 \qquad \Up t=\{r:t\subseteq r\}.
\]
The former has all bits $\ge j$ free; the latter has every bit not in
$t$ free. Formula~\eqref{eq:newton} gives
\begin{equation}\label{eq:transientatoms}
 E(R(v,2^j))(n)=[v\subseteq n][n<2^j],
 \qquad E(\Up t)(n)=[n=t].
\end{equation}
Indeed the allowed free bits contribute a factor $2^q$; its parity is
one precisely when none of them occurs in $n$.

\label{bg:II-2-19}\label{bg:II-2-6}
Splitting a free bit reverses the merge identity
\[
 B(v,M)\symd B(v\vee 2^j,M)=B(v,M\vee2^j),
 \qquad 2^j\wedge(v\vee M)=0.
\]
Together with cancellation and the absorption identities of \cite{U1},
these moves rewrite a finite XOR list without changing its support.
Convolution of lists requires all supplied pairs; increment fragments
each block at its binary carry positions. Neither operation gives a
constant-cost construction or guarantees a minimal list.

\label{bg:I-1-19}\label{bg:III-3-24}
For product-free sources with polynomial boundary data, the integer
coefficient recurrence is linear. The initialized Rules 90 and 150
have finite formulas; in particular Rule 90 has empty even diagonals.
For the polynomial lift of Rule 30 used below, the dimension bound is
$F_{m+1}$, with $F_1=F_2=1$. This bound is specific to that lift and
boundary data, rather than a universal dimension for all rules.

\label{bg:V-4-7}\label{bg:XI-5-7}
On a dyadic window the zeta transform uses $O(N\log N)$ XOR operations
and $O(N)$ stored bits. It makes OR--convolution pointwise and increment
a prefix XOR, as shown in Section~\ref{sec:zeta}. For comparison,
the diagonal basis of a single-seed rule is
$B_R[t,k]=A_R(t,t-k)$ for $t,k\ge0$. Thus $B_{204}=I$,
$B_{220}[t,k]=[k\le t]$ and
$B_{60}[t,k]=\binom tk\bmod2$; Rule 102 gives the mirrored Pascal
system. The parity-staircase basis associated with Rule 28 has column
zero identically one and column $k\ge1$ equal to
$[t\ge k][t-k\text{ even}]$. These basis matrices are
distinct from the set-valued blocks $B(v,M)$.

\keepsection
\subsection{Exact Periods}\label{sec:period}

A period statement concerns the values of the represented sequence,
rather than the number of indices in its support. Lucas' digit
criterion provides the connection: a finite support reads only a
bounded set of binary positions, and the highest active Newton index
determines the least dyadic period. We begin with a single block to
make the digit dependence explicit, then pass to the support as a
whole, including the effect of finite initial corrections.

\keepstatement
\begin{lemma}[Evaluation of a Block]\label{lem:blockeval}\label{thm:zero}
For a finite anchor $v$ and a finite or cofinite mask $M$, with disjoint
set bits, put $g_{v,M}=E(B(v,M))$. Then
\[
 g_{v,M}(n)=[v\subseteq n][M\wedge n=0].
\]
\end{lemma}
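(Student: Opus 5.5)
We evaluate \eqref{eq:newton} directly on the support $B(v,M)=\{v\vee s:s\subseteq M\}$. By Lucas' theorem, $\binom{n}{r}\equiv[r\subseteq n]\pmod 2$, so for each fixed $n$
\[
 g_{v,M}(n)=\#\{s\subseteq M:\ v\vee s\subseteq n\}\bmod 2 .
\]
The sum has only finitely many nonzero terms even when $M$ is cofinite: $v\vee s\subseteq n$ forces $s\subseteq n$, and $n$ has finitely many set bits. Because $v$ and $M$ are disjoint, the map $s\mapsto v\vee s$ is injective, so no cancellation from repeated indices has to be accounted for.

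Next we factor the condition. The containment $v\vee s\subseteq n$ holds exactly when $v\subseteq n$ and $s\subseteq n$. Combined with $s\subseteq M$, the second condition becomes $s\subseteq M\wedge n$. If $v\not\subseteq n$, the count is $0$. Otherwise the count equals the number of subsets of the finite mask $M\wedge n$, which is $2^{\pc(M\wedge n)}$. This is odd exactly when $M\wedge n=0$. The result is $g_{v,M}(n)=[v\subseteq n][M\wedge n=0]$.

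The only step that needs care is the cofinite case. There the support is infinite, so we appeal to the earlier remark that \eqref{eq:newton} is a finite sum for each $n$, and we note that $M\wedge n$ is finite even when $M$ is not. This is the same parity argument used for \eqref{eq:transientatoms}, of which $R(v,2^j)$ and $\Up t$ are special cases. In fact we would check that specialization as a consistency test. For $M=\{\text{bits}\ge j\}$, the formula gives $[v\subseteq n][n<2^j]$. For $v=t$ and $M$ the complement of $t$, it gives $[n=t]$. Under the stated convention, the invalid case $v\wedge M\ne0$ is excluded by the hypothesis of disjointness.
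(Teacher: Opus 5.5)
Your proof is correct and is essentially the paper's argument. Lucas' theorem turns the block sum into the count $[v\subseteq n]\,2^{\pc(M\wedge n)}$, which is odd exactly when $M\wedge n=0$, and the cofinite case stays finite because only bits present in $n$ can contribute. Your additional remarks are also correct: injectivity of $s\mapsto v\vee s$ under disjointness, and the specializations recovering \eqref{eq:transientatoms} for $R(v,2^j)$ and $\Up t$.
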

\begin{proof}
Lucas' theorem counts the contributing submasks as
$[v\subseteq n]2^{\pc(M\wedge n)}$. This is odd exactly under the two
displayed conditions. The count is finite even for an infinite mask,
since only bits present in $n$ can contribute.
\end{proof}

\keepstatement
\begin{corollary}[Finite and Cofinite Masks]\label{cor:dicho}
For a finite mask let $k$ be the bit length of $v\vee M$, with $k=0$
when $v=M=0$. Then $g_{v,M}$ has period dividing $2^k$. If instead
$M$ contains every bit at positions $j,j+1,\ldots$, then
$g_{v,M}(n)=0$ for $n\ge2^j$.
\end{corollary}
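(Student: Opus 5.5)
Both claims follow directly from Lemma~\ref{lem:blockeval}, which gives $g_{v,M}(n)=[v\subseteq n][M\wedge n=0]$. The plan is to observe which binary digits of $n$ this formula actually reads, and then translate that digit dependence into a statement about periods in the finite case and about vanishing in the cofinite case.

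For a finite mask, both conditions $v\subseteq n$ and $M\wedge n=0$ involve only the bits of $n$ at positions where $v\vee M$ has a set bit. All of these positions lie below $k$, the bit length of $v\vee M$. Hence $g_{v,M}(n)$ depends only on $n\bmod 2^k$. Replacing $n$ by $n+2^k$ changes only the bits at positions $\ge k$, since adding $2^k$ leaves the lower $k$ bits untouched. So $g_{v,M}(n+2^k)=g_{v,M}(n)$ and the period divides $2^k$. In the degenerate case $v=M=0$ we have $g\equiv1$, so $k=0$ and the period is $1=2^0$.

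For the cofinite case, suppose $M$ contains every position $\ge j$. Any $n\ge 2^j$ has some set bit at a position $\ge j$, and that bit also lies in $M$. Therefore $M\wedge n\ne0$, and the lemma gives $g_{v,M}(n)=0$. No further computation is needed.

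The only delicate point is the convention for an infinite mask. We should confirm that Lemma~\ref{lem:blockeval} was proved for cofinite $M$ in the sense fixed earlier, where the block is defined through finite submasks. Its proof notes that the count $2^{\pc(M\wedge n)}$ stays finite because only the bits of $n$ can contribute, so the lemma does apply. Beyond that, the argument is immediate.
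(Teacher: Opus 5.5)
Your proof is correct and follows the paper's argument. In the finite case the two tests of Lemma~\ref{lem:blockeval} read only the lowest $k$ bits of $n$, and in the cofinite case the condition $M\wedge n=0$ forces every bit of $n$ at position $j$ or higher to vanish; the lemma is already stated for finite or cofinite $M$, so your concern about infinite masks does not arise.
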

\begin{proof}
In the finite case the tests read only the lowest $k$ bits. In the
cofinite case disjointness forces every bit of $n$ at position at least
$j$ to vanish.
\end{proof}

\keepstatement
\begin{theorem}[Exact Least Period of a Finite Support]\label{thm:exactper}
Let $S\subseteq\NN$ be finite and non-empty, $D(t)=\Xor_{r\in S}\bin tr$, and let $k$ be
minimal with $S\subseteq[0,2^k)$. Then $D$ is purely periodic with minimal period
\emph{exactly} $2^k$.
\end{theorem}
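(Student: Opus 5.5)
The plan has two halves: show that $2^k$ is a period, then show that no proper divisor $2^{k-1}$ (and hence no smaller period) works.

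For the first half, write $S$ as a union of singleton blocks $B(r,0)$ with $r<2^k$. Corollary~\ref{cor:dicho} then gives that each $\binom{t}{r}\bmod 2$ has period dividing $2^k$. Alternatively, by Lucas' theorem, $\binom tr\bmod2=[r\subseteq t]$ reads only the low $k$ bits of $t$. Hence the XOR $D$ satisfies $D(t+2^k)=D(t)$ for all $t\ge0$, so $D$ is purely periodic with period dividing $2^k$. The least period $p$ of a purely periodic sequence divides every period, so $p=2^j$ with $j\le k$. If $k=0$ (that is, $S=\{0\}$, so $D\equiv1$), the claim is immediate.

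For $k\ge1$ it suffices to show that $2^{k-1}$ is not a period. Suppose $D(t+2^{k-1})=D(t)$ for all $t$. Set $d'(t)=D(t+2^{k-1})\xor D(t)\equiv0$ on $[0,2^{k-1})$. We then compute the Newton support of $D$ restricted to the window $[0,2^k)$ via the inversion formula~\eqref{eq:newton}, whose matrix on $[0,2^k)$ is $Z[t,r]=[r\subseteq t]$. Split indices as $t=t_0$ or $t=t_0+2^{k-1}$ with $t_0<2^{k-1}$. For $r=r_0+2^{k-1}$ with $r_0<2^{k-1}$, the coefficient is
\[
[r\in S]=\Xor_{t\subseteq r}D(t)=\Xor_{t_0\subseteq r_0}\bigl(D(t_0)\xor D(t_0+2^{k-1})\bigr)=0 .
\]
Here we pair each $t\subseteq r$ by toggling bit $k-1$, which stays within the submasks of $r$ because bit $k-1$ lies in $r$. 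Thus $S\cap[2^{k-1},2^k)=\emptyset$. Combined with $S\subseteq[0,2^k)$, this gives $S\subseteq[0,2^{k-1})$, contradicting the minimality of $k$. We also use uniqueness of the Newton support: the support computed on the window agrees with $S$, because the formula for $[r\in S]$ only involves $t\le r<2^k$.

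The only delicate point is the pairing step, namely verifying that the submasks of $r=r_0+2^{k-1}$ are exactly the pairs $\{t_0,t_0+2^{k-1}\}$ with $t_0\subseteq r_0$. This holds since bit $k-1$ is free in $r$. With that, nonemptiness of $S\cap[2^{k-1},2^k)$ forces the least period to be exactly $2^k$.
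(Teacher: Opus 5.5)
Your proof is correct and is essentially the paper's argument: both split the indices by bit $k-1$ and use the zeta/Lucas structure on the half window $[0,2^{k-1})$ to show that a period $2^{k-1}$ would force $S\cap[2^{k-1},2^k)=\varnothing$. The only difference is direction. The paper writes $D(t+2^{k-1})\xor D(t)=E(A')(t)$ for $t<2^{k-1}$ and then invokes invertibility of the zeta matrix to conclude $A'=\varnothing$. You instead apply the inversion formula \eqref{eq:newton} directly and pair the submasks of $r_0+2^{k-1}$, which is the same identity read backwards, with the invertibility built in.
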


\begin{proof}
For $k=0$, $S=\{0\}$ and $D=1$, of least period one. Suppose $k\ge1$.
$2^k$ is a period by Corollary~\ref{cor:dicho}. If $d$ is the least positive
period, divide $2^k=qd+r$, $0\le r<d$. Periodicity gives $D(t+r)=D(t+2^k)=D(t)$,
so $r=0$. Thus the minimal period divides $2^k$ and is
some $2^j$; it suffices to exclude $2^{k-1}$. Let
$A=\{r\in S:\ \text{bit }k{-}1\text{ of }r\text{ is set}\}$, non-empty by minimality of $k$.
Take $t<2^{k-1}$ and $t'=t+2^{k-1}$. For $r\notin A$ the bit $k-1$ is absent from $r$, so
$[r\subseteq t]=[r\subseteq t']$; for $r\in A$ we have $[r\subseteq t]=0$. Hence
\[
\begin{aligned}
 D(t')\xor D(t)&=\Xor_{r\in A}[\,r\subseteq t'\,]
 =\Xor_{r'\in A'}[\,r'\subseteq t\,],\\
 A'&=A-2^{k-1}\subseteq[0,2^{k-1}).
\end{aligned}
\]
If $2^{k-1}$ were a period this would vanish for every $t<2^{k-1}$. The matrix
$\big([\,r'\subseteq t\,]\big)_{t,r'}$ on $[0,2^{k-1})$ is the zeta matrix, unitriangular
hence invertible over $\FF$, so its columns are linearly independent and $A'=\varnothing$ ---
a contradiction.
\end{proof}

The empty support also gives a sequence of least period one. Thus the
largest surviving Newton index determines the period after all XOR
cancellations, rather than the largest anchor in an unreduced list.

\keepstatement
\begin{theorem}[The Finite Part Carries the Period]\label{thm:split}
Let $F,C$ be finite index sets and $D=\Xor_{i\in F}g_{v_i,M_i}\xor\Xor_{i\in C}g_{v_i,M_i}$ with $M_i$ finite for $i\in F$
and cofinite for $i\in C$, and set $S_F=\symd_{i\in F}\Bl{v_i}{M_i}$ (a finite set) and
$T=\max(\{0\}\cup\{2^{j_i}:i\in C\})$, where $M_i$ contains every bit at position at least $j_i$. Then $D$ is ultimately periodic with transient at most $T$ and
ultimate period exactly
\[
 \pi=2^{\lceil\log_2(\max S_F+1)\rceil}\qquad(\pi=1\text{ if }S_F=\varnothing).
\]
\end{theorem}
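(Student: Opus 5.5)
The plan is to split $D$ into its finite and cofinite parts and treat each with the results already proved. Write $D=D_F\xor D_C$, where $D_F=\Xor_{i\in F}g_{v_i,M_i}$ and $D_C=\Xor_{i\in C}g_{v_i,M_i}$. The second part will supply the transient and the first part the period.

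For the cofinite part, Corollary~\ref{cor:dicho} gives $g_{v_i,M_i}(n)=0$ for $n\ge2^{j_i}$ and every $i\in C$. Hence $D_C(n)=0$ for all $n\ge T$, and so $D(n)=D_F(n)$ for $n\ge T$. When $C=\varnothing$ we have $T=0$ and $D=D_F$.

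For the finite part, each $g_{v_i,M_i}$ with $i\in F$ equals $E(B(v_i,M_i))$ by definition. The transform $E$ of \eqref{eq:newton} is additive, turning symmetric difference into XOR, so $D_F=E(S_F)$. At this step I must make sure that invalid pairs, read as the empty set, and repeated blocks are handled consistently. Both cancel correctly under $\symd$, because $E$ is linear over $\FF$ and $E(\varnothing)=0$.

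There are now two cases. If $S_F=\varnothing$, then $D_F\equiv0$, which has period $1=\pi$. Otherwise Theorem~\ref{thm:exactper} applies to the finite nonempty set $S_F$, where $k$ is the least integer with $\max S_F<2^k$, that is $k=\lceil\log_2(\max S_F+1)\rceil$. It shows that $D_F$ is purely periodic with least period exactly $2^k=\pi$.

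It remains to conclude for $D$. On $[T,\infty)$, $D$ coincides with the purely periodic $D_F$, so $D$ is ultimately periodic with transient at most $T$. For the ultimate period, any eventual period $p$ of $D$ is an eventual period of $D_F$. Because $D_F$ is purely periodic, an eventual period is a genuine period: from $D_F(n+p)=D_F(n)$ for large $n$, shift by multiples of $\pi$ to reach every $n$. Therefore $\pi$ divides $p$, and the ultimate period is exactly $\pi$.

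I expect the only delicate points to be this last argument, that an eventual period of a purely periodic sequence is a true period, and the bookkeeping identity $D_F=E(S_F)$ after cancellations. Both are short.
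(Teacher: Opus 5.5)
Your proposal is correct and follows the paper's proof essentially step for step. The cofinite terms vanish from $T$ on by Corollary~\ref{cor:dicho}, Theorem~\ref{thm:exactper} gives the exact period $\pi$ of $E(S_F)$, and an eventual period of the tail is upgraded to a true period of $E(S_F)$ by shifting through multiples of $\pi$, so it is a multiple of $\pi$. Your separate treatment of the empty support and of cancellations in $D_F=E(S_F)$ only makes explicit what the paper leaves implicit.
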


\begin{proof}
By Corollary~\ref{cor:dicho} every term of $C$ vanishes for $t\ge T$, so on $[T,\infty)$
the sum reduces to $\Xor_{r\in S_F}\bin tr$, whose minimal period is $\pi$ by
Theorem~\ref{thm:exactper}. If $q$ is a period of that tail, then for any $t$
choose a multiple $a\pi$ with $t+a\pi\ge T$. Periodicity of the finite part gives
$E(S_F)(t)=E(S_F)(t+a\pi)=E(S_F)(t+q+a\pi)=E(S_F)(t+q)$.
Thus $q$ is a period of the whole finite-part sequence and is a multiple of $\pi$.
\end{proof}

The finite masks determine the periodic tail; the cofinite masks
describe a finite transient in time. A cofinite block itself has
infinitely many indices. Its truncation to an increasing dyadic
window may have exponentially many elements even when its description
uses only a few stored masks. A mask that is neither finite nor
cofinite need not have either of these two behaviors.

\keepsection
\subsection{Direct Evaluation}\label{sec:evaluation}
\label{bg:II-2-7}\label{bg:I-1-15}
For a query $n\ge0$, write
$\ell=\max(1,\lceil\log_2(n+1)\rceil)$. Let $b\ge1$ bound the bit
lengths of stored anchors, finite masks and explicit modulus thresholds
in a supplied representation. Binary addition, comparison and a Lucas
containment test on these data cost $O(b+\ell)$ bit operations.
Complemented masks are stored by their finite exceptional bits and
their threshold. Constructing or minimizing a representation is a
separate task.

\keepstatement
\begin{theorem}[Evaluation from Supplied Blocks]\label{thm:logeval}
With a finite-mask block decomposition $S_m=\symd_{i=1}^{t}\Bl{v_i}{M_i}$ ($t=K(m)$ for the supplied decomposition),
\[
 b(m,n)\;\equiv\;\Xor_{i=1}^{t}\big[\,(v_i+M_i)\subseteq(n+M_i)\,\big]\pmod2,
\]
each bracket a bitwise test on integers of length $O(b+\ell)$, so $b(m,n)$ is evaluated in
$O(t(b+\ell))=O(K(m)(b+\ell))$ time --- without simulating prior generations.
\end{theorem}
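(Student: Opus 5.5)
The proof has three parts: linearity of the Newton transform, the block evaluation formula of Lemma~\ref{lem:blockeval} recast as a single containment test, and a cost count. First, by \eqref{eq:newton} and the correspondence of symmetric difference with addition over $\FF$, the map $E$ is additive. Hence $b(m,n)=E(S_m)(n)=\Xor_{i=1}^{t}E(\Bl{v_i}{M_i})(n)=\Xor_i g_{v_i,M_i}(n)$. This step uses only that $S_m$ is the supplied symmetric difference. Repeated indices cancel consistently on both sides, so no minimality of the decomposition is needed.

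Second, for each finite-mask block we rewrite the two conditions of Lemma~\ref{lem:blockeval}, $[v\subseteq n][M\wedge n=0]$, as the single bracket $[(v+M)\subseteq(n+M)]$. Disjointness of $v$ and $M$ gives $v+M=v\vee M$. We then split into two cases. If $M\wedge n=0$, then $n+M=n\vee M$ with no carries, and $(v\vee M)\subseteq(n\vee M)$ holds iff $v\subseteq n$, because $v$ avoids the bits of $M$. If instead $M\wedge n\neq0$, let $j$ be the lowest bit of $M\wedge n$. The addition $n+M$ produces a carry out of position $j$, so bit $j$ of $n+M$ is $0$. Bit $j$ of $v\vee M$ is $1$, so the containment fails. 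This carry case is the one delicate point: we must check that lower bits agree and that no earlier carry can reach position $j$. Below $j$, the bits of $M$ and $n$ are disjoint, so no carry is generated there. Bit $j$ is therefore computed as $1+1=0$ with a carry out, which makes the bracket vanish exactly when it should.

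Third, we count the cost. Each anchor and mask has bit length at most $b$, and $n$ has length $\ell$, so $v+M$ and $n+M$ have length at most $\max(b,\ell)+1=O(b+\ell)$. One addition, one bitwise AND and one comparison cost $O(b+\ell)$ bit operations, as stipulated in Section~\ref{sec:evaluation}. Summing over the $t=K(m)$ blocks and accumulating the XOR gives $O(K(m)(b+\ell))$. The formula reads only the stored pairs $(v_i,M_i)$ and the query $n$, never any earlier diagonal or generation, which is the claim about avoiding simulation. The main obstacle is the carry analysis in the second step; the rest follows directly from earlier results.
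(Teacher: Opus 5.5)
Your proof is correct, but it reaches the bracket by a different route from the paper. The paper does not use Lemma~\ref{lem:blockeval} at this point. It applies Vandermonde's convolution, $\binom{n+M}{v+M}=\sum_k\binom Mk\binom n{v+M-k}$, and reduces modulo two, so that only the terms with $k\subseteq M$ survive. This collapses the whole block sum $\Xor_{s\subseteq M}\binom n{v\vee s}$ into the single coefficient $\binom{n+M_i}{v_i+M_i}$, and Lucas' theorem turns its parity into the containment test. You instead start from the already proved two-condition form $[v\subseteq n][M\wedge n=0]$ and show by a carry analysis that it equals $[(v+M)\subseteq(n+M)]$. Your treatment of the lowest common bit $j$ is exactly what is needed: there are no carries below $j$ because $M$ and $n$ are disjoint there, so bit $j$ of $n+M$ is $0$ while bit $j$ of $v\vee M$ is $1$. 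The paper's route explains where the shift by $M$ comes from and produces the binomial form $\binom{n+M_i}{v_i+M_i}$ used in Example~\ref{ex:eighth}. Yours is more elementary and stays at the level of bit tests. It also shows that the addition inside the bracket is inessential, since the equivalent test $[v\subseteq n][M\wedge n=0]$ needs no carry propagation. Combined with Lucas' theorem, your equivalence recovers the paper's compression identity $\Xor_{r\in B(v,M)}\binom nr\equiv\binom{n+M}{v+M}\pmod 2$ as a corollary.
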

\begin{proof}
The binomial transform is linear over $\FF$. For each block,
Vandermonde--Lucas compression gives
$\Xor_{r\in B(v_i,M_i)}\binom nr=\binom{n+M_i}{v_i+M_i}$, and Lucas' theorem evaluates
its parity by the displayed containment test. XORing the $t$ block contributions proves
the formula. Each test scans $O(b+\ell)$ bits, including the block parameters.
\end{proof}

The empty list returns zero in constant time. For a nonempty list,
sequential evaluation uses $O(b+\ell)$ working bits beyond its stored
terms. The same bound applies to a mixture of finite and cofinite
blocks by Lemma~\ref{lem:blockeval}. In particular $\Up t$ contributes
only the equality test $n=t$.

\begin{example}[The Eighth Rule 30 Support]\label{ex:eighth}
The sixth and seventh supports are
$S_6=B(5,0)\symd B(3,4)$ and
$S_7=B(3,0)\symd B(5,0)\symd B(8,0)$. Their linear and product
terms give the source
\[
 \{3,5,7,8,11,13,15\}.
\]
Incrementing gives
$S_8=\{4,6,8,9,12,14,16\}$. Its decomposition
$B(4,10)\symd B(8,1)\symd B(16,0)$ yields
\[
 b_{30}(8,n)=\binom{n+10}{14}\xor\binom{n+1}{9}
                  \xor\binom n{16}\pmod2.
\]
Seven singleton terms have become three Lucas tests. This is a property
of this supplied decomposition, rather than a general compression ratio.
\end{example}

\keepstatement
\begin{theorem}[Companion, Support and Periodic Forms]\label{thm:three}
Let $S$ be finite and nonempty, put $N=\max S+1$, and let
$d=E(S)$. For every $n\ge0$,
\[
 d(n)=\sum_{q=1}^N d(N-q)
       \binom n{N-q}\binom{N-1-n}{q-1}\pmod2.
\]
This is the companion evolution for $(E-1)^N$, where $E$ shifts $n$.
It equals both the block formula of Theorem~\ref{thm:logeval} and
$d(n\bmod P)$, where $P=2^{\lceil\log_2N\rceil}$ is the least period.
The same interpolation formula over the rationals reconstructs any
integer-valued polynomial lift of degree below $N$ from its first $N$
values; reducing those values modulo two suffices for its parity.
\end{theorem}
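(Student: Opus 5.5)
The plan is to prove the interpolation identity first, as a statement about integer-valued polynomials, and then reduce modulo two. Since $S$ is finite with $\max S=N-1$, the representation~\eqref{eq:newton} lifts to the integer polynomial $p(n)=\sum_{r\in S}\binom nr$ of degree $N-1<N$, with $d(n)\equiv p(n)\pmod 2$. I would therefore establish, for any polynomial $p$ of degree below $N$, the rational identity
\[
 p(n)=\sum_{q=1}^N p(N-q)\binom n{N-q}\binom{N-1-n}{q-1}.
\]
Here $\binom{x}{j}$ denotes the polynomial $x(x-1)\cdots(x-j+1)/j!$. With $j=N-q$ the right side reads $\sum_{j=0}^{N-1}p(j)\binom nj\binom{N-1-n}{N-1-j}$, which is the Lagrange interpolation formula at the nodes $0,\dots,N-1$. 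For each $j$, the product $\binom nj\binom{N-1-n}{N-1-j}$ is a polynomial in $n$ of degree $N-1$. It vanishes at $n=0,\dots,j-1$ through the first factor. At $n=j+1,\dots,N-1$ the argument $N-1-n$ lies in $[0,N-2-j]$, which is below $N-1-j$, so the second factor vanishes there. At $n=j$ the product equals $1$. Hence it is the Lagrange basis polynomial, and uniqueness of interpolation of degree below $N$ gives the identity for all rational $n$. This settles the last sentence of the statement as well.

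The step I expect to be the main obstacle is the passage from this rational identity to one over $\FF$ at arbitrary $n\ge0$. Every term is an integer, since $\binom{N-1-n}{q-1}$ is an integer even when $N-1-n<0$ (it equals $(-1)^{q-1}\binom{n-N+q-1}{q-1}$). So reducing the integer identity modulo two yields the displayed congruence with $d(N-q)$ in place of $p(N-q)$. The formula is then honestly the companion evolution for $(E-1)^N$: the integer lift is annihilated by $\Delta^N$, which is the integer-level form of the recurrence.

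For the remaining equalities, the block formula of Theorem~\ref{thm:logeval} computes $E(S)(n)$ for any decomposition of $S$, so it equals $d(n)$ by linearity. The periodic form $d(n\bmod P)$ follows from Theorem~\ref{thm:exactper}. That theorem says the least period is $2^k$ with $k$ minimal such that $S\subseteq[0,2^k)$, i.e.\ $2^k\ge N$, giving $P=2^{\lceil\log_2 N\rceil}$. The edge case $N=1$, where $S=\{0\}$, is checked directly.
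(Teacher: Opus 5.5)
Your proposal is correct and follows the paper's proof essentially step for step. You identify each summand as the Lagrange basis polynomial at the nodes $0,\dots,N-1$ and reconstruct the integer lift $\sum_{r\in S}\binom nr$ by uniqueness. You then use upper negation to see that every generalized binomial product is an integer, so the identity may be reduced modulo two; the vanishing $N$th difference gives the companion recurrence, and linearity together with Theorem~\ref{thm:exactper} gives the block and periodic forms.
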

\begin{proof}
The rational polynomial $f(n)=\sum_{r\in S}\binom nr$ has degree
$N-1$. The $q$th displayed interpolation polynomial has value one at
$n=N-q$ and zero at every other integer in $[0,N)$, by its two
binomial factors. Thus the sum reconstructs $f$ and may be reduced
modulo two. Its $N$th forward difference vanishes, giving the
companion recurrence. The other formulas follow from linearity and
Theorem~\ref{thm:exactper}. Evaluate each generalized binomial product
as an integer before reducing it. For $n<N$ use the initial values;
for $n\ge N$, upper negation turns the second factor into
$(-1)^{q-1}\binom{n-N+q-1}{q-1}$, so two ordinary Lucas tests suffice.
\end{proof}

An interpolation coefficient has Newton support
$\{r<N:N-q\subseteq r\}$. It is a single block for every $q$ when
$N$ is a power of two; that assertion is false at general orders.
For the preceding example $N=17$ and $P=32$, so the order-17
companion and the three block terms describe the same period-32 sequence.

\keepsection
\subsection{Compression and Its Limits}\label{sec:compression}
\label{rem:costmeasure}

Once evaluation is expressed as a sum of block tests, the block count
becomes a natural measure of a supplied formula. Its minimum value is
a property of the particular support and of the allowed blocks.
Counting all possible descriptions gives a useful comparison with
arbitrary sets, but the cellular-automaton recurrence selects a very
special collection of supports. The counting result below is therefore
used to frame the compression question, with its quantifiers made
explicit.

We write $K(S)$ for the minimum number of valid blocks in a finite XOR
decomposition of $S$. The notation $K(m)$ in an evaluation bound counts
the supplied terms and need not equal $K(S_m)$.

\keepstatement
For a target in $[0,2^k)$, intersecting every supplied block with
that window either deletes it or gives a valid $k$-bit block. Thus
allowing higher bits cannot lower the minimum by evading the finite
block census.

\begin{theorem}[Counting Bound for Almost All Supports]\label{thm:lupanov}
Fix $\delta\in(0,1)$ and let $H$ be the binary entropy. For all but a vanishing fraction of
the sets $S\subseteq[0,2^k)$ of cardinality $\lfloor\delta2^k\rfloor$,
\[
 K(S)\;\ge\;(1-o(1))\,\frac{H(\delta)\,2^{k}}{\log_2\!\big(3^{k}\big)}
\]
blocks in any XOR decomposition. The lower-bound scale is $2^k/k$;
this counting argument does not supply a matching upper bound.
\end{theorem}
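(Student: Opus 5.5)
The plan is a standard Shannon--Lupanov counting argument: there are few short XOR lists of blocks and very many target sets. First I reduce to the $k$-bit window. The remark preceding the statement shows that intersecting each supplied block with $[0,2^k)$ either deletes it or leaves a valid $k$-bit block $B(v,M)$ with $v\vee M<2^k$. Intersection with a fixed window commutes with symmetric difference. So any decomposition of $S\subseteq[0,2^k)$ with $K$ blocks yields one with at most $K$ blocks drawn from the $3^k$ valid $k$-bit blocks. Hence it suffices to count sets expressible as an XOR of at most $K$ blocks from this finite census.

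Second, I bound the number of expressible sets. Since a block repeated twice cancels, every expressible set is a symmetric difference of a subset of size at most $K$ of the $3^k$ blocks. The number of such sets is therefore at most
\[
 \sum_{i\le K}\binom{3^k}{i}\le (K+1)\,3^{kK}=2^{K\log_2(3^k)+\log_2(K+1)}.
\]

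Third, I bound the number of targets. Stirling's formula (or the standard entropy estimate for binomials) gives
\[
 \binom{2^k}{\lfloor\delta2^k\rfloor}=2^{H(\delta)2^k-O(k)},
\]
using $2^{-O(k)}$ as the polynomial correction.

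Fix $\varepsilon>0$ and set $K_0=(1-\varepsilon)H(\delta)2^k/\log_2(3^k)$. The fraction of targets with $K(S)\le K_0$ is at most
\[
 2^{(1-\varepsilon)H(\delta)2^k+\log_2(K_0+1)-H(\delta)2^k+O(k)}=2^{-\varepsilon H(\delta)2^k+O(k)},
\]
which tends to $0$ because $H(\delta)>0$ for $\delta\in(0,1)$. Since $\varepsilon$ is arbitrary, letting it tend to zero slowly with $k$ gives the $(1-o(1))$ form. The scale statement is immediate from $\log_2(3^k)=k\log_23$. I add no upper-bound claim, consistent with the theorem's final sentence.

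The only delicate point is the first step. A minimal decomposition may use blocks with bits above $k$, including infinite masks. I must check that window intersection of such a block is again a single valid $k$-bit block or empty. For $B(v,M)$ this holds: the intersection is empty if $v$ has a bit $\ge k$, and otherwise it equals $B(v,M\wedge(2^k-1))$. Beyond that, the remaining steps are routine estimates.
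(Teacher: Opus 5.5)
Your proposal is correct and follows the paper's proof essentially step for step: the window reduction (which the paper records in the remark just before the theorem), the $(K+1)3^{kK}$ bound on representable sets, the Stirling estimate, and letting $\varepsilon$ tend to zero slowly, where the paper takes $\varepsilon_k=k^{-1/2}$. The only differences are cosmetic: you count unordered sub-lists using cancellation of repeated blocks, while the paper counts ordered lists, which gives the same bound; and you check the window-intersection step explicitly, which the paper only asserts.
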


\begin{proof}
There are $3^k$ valid blocks on $k$ bits. For each $j\le K$, at most
$3^{kj}$ ordered lists of $j$ blocks occur, so at most $(K+1)3^{kK}$ sets
can be represented with at most $K$ terms. Stirling's formula gives
$\log_2\binom{2^k}{\lfloor\delta2^k\rfloor}=H(\delta)2^k+O(k)$.
Choose $\varepsilon_k=k^{-1/2}$ and
$K=\lfloor(1-\varepsilon_k)H(\delta)2^k/(k\log_2 3)\rfloor$.
The fraction representable with at most $K$ blocks is at most
$2^{-\varepsilon_kH(\delta)2^k+O(k)}$, which tends to zero.
As $\varepsilon_k\to0$, this proves the stated lower bound.
\end{proof}

The binary entropy in the theorem is
$H(\delta)=-\delta\log_2\delta-(1-\delta)\log_2(1-\delta)$.
Density alone gives no corresponding lower bound for a particular set:
a half-cube has density $1/2$ and is one block. Consequently a measured
factor near two from greedy compression does not establish optimality
for Rule 30 or for the auxiliary recurrence below.

\keepstatement
\begin{algo}[Finite Block Search]\label{alg:search}
Given a support bitmap on $[0,N)$, $N=2^k$, one exact greedy cover
chooses its least remaining element as an anchor and repeatedly adds
the first unused free bit whose whole block is still present. Emit the
block, remove its elements and repeat. Each step removes at least one
element, so the procedure terminates and the emitted disjoint blocks
have union equal to the input support.
\end{algo}
With explicit membership tests this implementation takes at most
$O(N^2(k+1)^2)$ indexed Boolean operations: there are at most $N$
emissions, at most $k+1$ sweeps over $k$ candidates per emission, and
at most $N$ tested members per candidate. The bitmap and stored output
use $O(N(k+1))$ bits. These are conservative construction bounds,
distinct from the bit cost of a later query.

Merge--split local search can reduce the resulting list while preserving
the XOR invariant. A learned search can use the same exact invariant
and the reduction in list length as its reward. With a fixed window,
list-length cap and move horizon, this is a finite decision process;
no learning performance or minimum-size guarantee is assumed here.
Exhaustive search through lists of at most $N$ of the $3^k$ blocks
terminates and can find a minimum, but its exponential cost supplies
no practical evaluator for large supports.

\keepsection
\section{Rule 30}\label{sec:r30}

The local map is $g(p,q,r)=p\xor q\xor r\xor qr$. Its source
$q\xor r\xor qr$ contains only the two preceding diagonals, so
\[
 S_m=\Inc\bigl(S_{m-1}\symd S_{m-2}\symd(S_{m-1}\conv S_{m-2})\bigr),
 \qquad S_1=\{0\},\quad S_2=\{1\}.
\]
In this section $S_m$ always denotes the Rule 30 support and
$b(m,n)=b_{30}(m,n)$. These sets
are finite by induction. Their first and last indices have different
origins: the first is fixed by propagation from the seed; the last is
bounded by the degree of an integer polynomial lift.

\keepsection
\subsection{First Contact and the Degree Bound}

The two ends of a finite support can be controlled by different pieces
of information. Before its first nonzero term, a sequence has only
zero Newton coefficients, so first contact identifies the lower end.
At the upper end, a polynomial lift bounds the order of nonvanishing
differences. Combining these observations gives useful support bounds
without determining all coefficients between them, which is precisely
what is needed for the period argument.

\keepstatement
\begin{theorem}[First-Contact Floor]\label{thm:sol}
For Rule~30, $\min S_m=\lfloor m/2\rfloor$.
\end{theorem}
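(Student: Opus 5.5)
The plan is to identify $\min S_m$ with the first time $n$ at which $b(m,n)=1$, and then to locate that first contact directly from the geometry of the single-seed cone. For any binary sequence $d$ with Newton support $S$, the inversion formula~\eqref{eq:newton} gives $[r\in S]=\Xor_{t\le r}\binom rt d(t)$. If $d(t)=0$ for every $t<r_0$ and $d(r_0)=1$, then every coefficient below $r_0$ vanishes. The coefficient at $r_0$ is $\binom{r_0}{r_0}d(r_0)=1$. Hence $\min S=\min\{n:d(n)=1\}$, and the theorem reduces to a statement about the orbit: the first $n$ with $A_{30}(n,n-m+1)=1$ is $\lfloor m/2\rfloor$.

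Next I would use the support recurrence directly, which avoids re-deriving the light-cone geometry. Put $a_m=\min S_m$, so $a_1=0$ and $a_2=1$. The minimum of an OR-convolution is governed by the OR of the minima. For any $x\in S_{m-1}$ and $y\in S_{m-2}$ we have $x\vee y\ge\max(x,y)$. Moreover $a_{m-1}\vee a_{m-2}$ is attained only by pairs with $x,y$ at least these minima. So every element of the source has size at least $\min(a_{m-1},a_{m-2})$.

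The claimed pattern is $a_{m-2}\le a_{m-1}$, since $\lfloor\cdot/2\rfloor$ is nondecreasing. The source minimum is then $a_{m-2}$, provided that index survives the cancellations among $S_{m-1}$, $S_{m-2}$ and $S_{m-1}\conv S_{m-2}$. Incrementing gives $a_m=a_{m-2}+1$, which matches $\lfloor m/2\rfloor=\lfloor (m-2)/2\rfloor+1$. The induction therefore splits on the parity of $m$.

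The main obstacle is controlling cancellation at the index $a_{m-2}$. I would argue via first contact in value space instead: the sequence $q\xor r\xor qr=q\vee r$, evaluated at the diagonals $m-1$ and $m-2$, is the OR of the two sequences. Its first $1$ therefore occurs at $\min(\text{first contacts})=a_{m-2}$. By the first paragraph this value is exactly the minimum of the source support, so no cancellation issue arises. Since $\Inc$ shifts all indices by one, $a_m=a_{m-2}+1$, and induction from $a_1=0$, $a_2=1$ finishes the proof. The only care needed is checking the base cases and confirming $a_{m-2}\le a_{m-1}$ inductively, which follows from the formula itself.
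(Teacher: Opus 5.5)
Your proof is correct. After the shared first step, which identifies $\min S_m$ with the first $n$ at which $b(m,n)=1$, it takes a genuinely different route from the paper.

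The paper argues in the space--time diagram. The diagonal cell $(t,t-m+1)$ first enters the cone $|x|\le t$ at $t=\lceil(m-1)/2\rceil=\lfloor m/2\rfloor$. There it is the leftmost cell ($m$ odd) or the second-leftmost ($m$ even). An induction on $\ell(t)=A_{30}(t,-t)$ and $\ell'(t)=A_{30}(t,-t+1)$ then shows both cells are $1$ for $t\ge1$.

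You never locate cells. Since $E$ turns $\symd$ and $\conv$ into $\xor$ and products, the source evaluates to $E(S_{m-1})\vee E(S_{m-2})$. An OR cannot cancel at its first active time, and $\Inc$ shifts the minimum by exactly one. This gives the closed recursion $a_m=\min(a_{m-1},a_{m-2})+1$; equivalently, $b(m,\cdot)$ is the prefix XOR of the OR of its two predecessor diagonals.

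Your route is shorter, stays inside the support calculus, and makes transparent why no cancellation occurs at the minimum. The paper's route separates the two inequalities. The causal vanishing for $n<\lfloor m/2\rfloor$ holds for every radius-one rule fixing the zero state, and only attainment uses Rule~30's boundary, namely the double band of ones on the left edge of the cone.

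One small point: your last paragraph already proves both the bound and its attainment, so the support-side discussion before it can be dropped. In particular, the remark that the minimum of an OR-convolution is governed by the OR of the minima is false in general. For example, $\{1,3\}\conv\{2,3\}=\varnothing$, since all four pairs have OR equal to $3$. You never actually use that remark, so the proof is unaffected.
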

\begin{proof}
The first nonzero time identifies the minimum support index. $b(m,n)\equiv\Xor_{r\in S_m}\binom nr$ vanishes for
$n<\min S_m$, since every $\binom nr$ with $r>n$ is zero; conversely at $n=\min S_m$ the
unique term with $r=n$ contributes $\binom nn=1$ and all others vanish, so
$b(m,\min S_m)=1$ and $\min S_m$ is exactly the first index at which the diagonal is
nonzero. In the rotated frame of Section~\ref{sec:bg} the diagonal $m$ occupies the cells
$(t,t-m+1)$. The condition $|t-m+1|\le t$ first holds at
$t=\lceil(m-1)/2\rceil=\lfloor m/2\rfloor$; below it the automaton has not been reached
from the seed. At first contact the cell is on the left boundary (odd $m$) or one site
inside it (even $m$). Write $\ell(t)$ and $\ell'(t)$ for the values of the leftmost and
second--leftmost cells of the cone at time $t$, i.e.\ at positions $-t$ and $-t+1$. From the
single seed, $\ell(0)=1$ and $\ell'(0)=0$. Rule~30 is $g(x,y,z)=x\xor(y\vee z)$, so the cell
at position $-t-1$ at time $t+1$ reads the neighbourhood $(0,0,\ell(t))$ and equals
$\ell(t)$, while the cell at position $-t$ reads $(0,\ell(t),\ell'(t))$ and equals
$\ell(t)\vee\ell'(t)$. Hence $\ell(t+1)=\ell(t)$ and
$\ell'(t+1)=\ell(t)\vee\ell'(t)$, so by induction $\ell(t)=1$ for every $t$ and
$\ell'(t)=1$ for every $t\ge1$. Both leftmost cells of the cone are therefore one from
$t=1$ on, and the cell reached at first contact is one. Thus the first nonzero diagonal
value occurs exactly at $\lfloor m/2\rfloor$.
Consequently $\max S_m\ge\min S_m=\lfloor m/2\rfloor\to\infty$.
\end{proof}

Choose the integer lift
$a_m(n+1)-a_m(n)=a_{m-1}(n)+a_{m-2}(n)+a_{m-1}(n)a_{m-2}(n)$,
with $a_1=1$, $a_2(n)=n$ and $a_m(0)=0$ for $m>1$. Indefinite
summation produces an integer-valued polynomial at each fixed depth,
whose reduction modulo two is $b_{30}(m,n)$. The causal zero region
is illustrated in Figure~\ref{fig:spectrum}.

\keepstatement
\begin{theorem}[Fibonacci Ceiling]\label{thm:fibceil}
$\max S_m\le F_{m+1}-1$.
\end{theorem}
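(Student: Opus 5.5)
We prove the claim through the integer lift $a_m$ just introduced. The plan has two steps. First, show that $a_m$ is a polynomial in $n$ of degree at most $F_{m+1}-1$. Second, transfer that degree bound to the Newton support $S_m$.

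For the transfer, expand the integer-valued polynomial in the binomial basis, $a_m(n)=\sum_r c_r\binom nr$ with $c_r\in\ZZ$. Integrality of the coefficients follows from integer values at $n=0,1,\ldots$ by Newton forward differences. Reducing modulo two gives $b_{30}(m,n)=\Xor_{r:\,c_r\text{ odd}}\binom nr$. By the uniqueness part of \eqref{eq:newton}, $S_m=\{r:c_r\text{ odd}\}$. Hence $\max S_m\le\deg a_m$.

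For the degree bound we argue by strong induction on $m$, with $d_m=\deg a_m$. The base cases are $d_1=0=F_2-1$ and $d_2=1=F_3-1$. For $m\ge3$, the forward difference $a_m(n+1)-a_m(n)$ equals $a_{m-1}+a_{m-2}+a_{m-1}a_{m-2}$. This has degree at most $d_{m-1}+d_{m-2}$, since the product term dominates. Indefinite summation raises degree by exactly one, and the normalization $a_m(0)=0$ pins the constant term. So
\[
 d_m\le d_{m-1}+d_{m-2}+1\le (F_m-1)+(F_{m-1}-1)+1=F_{m+1}-1.
\]
One caveat: if some lift vanishes identically, its degree should be read as $-\infty$ or $0$. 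The inequality still holds with the convention $\deg 0=0$, because the right side only grows.

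The main point needing care is that the recurrence genuinely defines polynomials, rather than merely integer sequences. Summation of a polynomial of degree $e$ in the binomial basis, $\sum_{j<n}\binom jr=\binom n{r+1}$, yields a polynomial of degree $e+1$. The product of integer-valued polynomials is integer-valued, so integrality and polynomiality propagate together. The reduction modulo two also needs checking. It matches $b_{30}(m,n)$ because Rule 30's source over $\FF$ is $q\xor r\xor qr$, and its lift $q+r+qr$ agrees with it mod two. In addition, $\Delta$ mod two is the $\Delta$ of the paper, and the initial values agree.
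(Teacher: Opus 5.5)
Your proposal is correct and follows the paper's proof. Both arguments bound the degree of the integer lift by $d_m\le d_{m-1}+d_{m-2}+1$ from $d_1=0$, $d_2=1$, and then observe that reducing the binomial-basis coefficients modulo two cannot create a Newton index above $d_m$. Your extra checks fill in details the paper leaves implicit: integrality of the $c_r$, the hockey-stick summation, and agreement of the lift with Rule~30 modulo two. Your zero-lift caveat is unnecessary, since by induction every $a_m$ is a nonzero, nonnegative-valued polynomial.
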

\begin{proof}
Let $d_m$ be the degree of the integer lift. Antidifferencing raises degree by at most one,
and Rule~30's source contains only the two previous lifts and their product, so
$d_m+1\le(d_{m-1}+1)+(d_{m-2}+1)$. Since $d_1=0$ and $d_2=1$, induction gives
$d_m+1\le F_{m+1}$. Reduction modulo two can delete binomial coefficients but cannot
create an index above $d_m$.
\end{proof}

\begin{figure}[htbp]\centering
\EmbeddedFigure{\textwidth}{1}
\caption{Fixed diagonals $1\le m\le79$, $0\le n<160$.
Left: the binary Rule 30 orbit. Middle and right: the same chosen
integer lift reduced modulo three and five. The key identifies each
residue value; only the appropriate residues occur in each panel.
The dashed line is $n=\lfloor m/2\rfloor$, below which the seed has
not reached the diagonal. Equality with the first nonzero time is
asserted for the binary orbit; the other panels illustrate the causal
zero region.}
\label{fig:spectrum}
\end{figure}

\keepsection
\subsection{Unbounded Least Periods}
Write $D^{30}_{m-1}=b_{30}(m,\cdot)$.
\keepstatement
\begin{theorem}[Unbounded Diagonal Periods]\label{thm:unbounded}
For Rule~30 from $\delta_0$, the diagonal $D^{30}_{m-1}$ is purely periodic with minimal period
$2^{\lceil\log_2(\max S_m+1)\rceil}\ge\lfloor m/2\rfloor+1$. Hence the period grows at least
linearly in $m$. Writing $P_m$ for that least period and
$Q_m=\max_{1\le j\le m}P_j$, every strict increase of $Q_m$ is a doubling,
and their number by depth $m$ is at least
$\lceil\log_2(\lfloor m/2\rfloor+1)\rceil$; in particular there are infinitely many.
\end{theorem}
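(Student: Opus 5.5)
The plan is to combine the exact period theorem with the first-contact floor, and then to count records by a dyadic pigeonhole argument.

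First, the supports $S_m$ are finite: the recurrence uses only $\Inc$, $\symd$ and $\conv$ applied to finite sets, starting from $S_1=\{0\}$ and $S_2=\{1\}$. Each $S_m$ is also nonempty, since Theorem~\ref{thm:sol} gives $\min S_m=\lfloor m/2\rfloor$, and an empty support would make the diagonal identically zero. Because $b_{30}(m,n)=E(S_m)(n)$ by \eqref{eq:newton}, Theorem~\ref{thm:exactper} shows that $D^{30}_{m-1}$ is purely periodic. Its least period is $P_m=2^{k}$, where $k=\lceil\log_2(\max S_m+1)\rceil$ is the minimal $k$ with $S_m\subseteq[0,2^k)$. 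Using $\max S_m\ge\min S_m=\lfloor m/2\rfloor$, we get
\[
P_m\ge\max S_m+1\ge\lfloor m/2\rfloor+1,
\]
which is the linear lower bound. Theorem~\ref{thm:fibceil} is not needed here; it only supplies an upper bound.

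For the record count, every $P_j$ is a power of two, so $Q_m=\max_{j\le m}P_j$ is also a power of two. Whenever $Q$ strictly increases, it passes from $2^a$ to some $2^b$ with $b>a$. Such a jump could skip exponents, so I will read ``every strict increase is a doubling'' as ``each increase multiplies $Q$ by a positive power of two.'' If the stronger literal statement is intended, I will have to show $P_{m}\le 2P_{m-1}$-type control, and that is the delicate point.

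To get that control, I will try the following. The recurrence for $S_m$ is built from $S_{m-1}$ and $S_{m-2}$ using $\symd$ and $\conv$, and both operations preserve the window $[0,2^a)$, since OR of indices below $2^a$ stays below $2^a$. After applying $\Inc$, the maximum is at most $2^a$. Hence if $S_{m-1},S_{m-2}\subseteq[0,2^a)$, then $S_m\subseteq[0,2^a]$, so $P_m\le 2^{a+1}$. Inducting with $2^a=Q_{m-1}$ shows $P_m\le 2Q_{m-1}$. Therefore each strict increase of $Q$ is exactly a doubling, and the literal statement holds.

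Finally, $Q_1=P_1=1$, since $S_1=\{0\}$, and $Q_m\ge P_m\ge\lfloor m/2\rfloor+1$. Since $Q_m$ is a power of two, $\log_2 Q_m\ge\lceil\log_2(\lfloor m/2\rfloor+1)\rceil$. Each increase raises $\log_2Q$ by exactly one, so the number of increases by depth $m$ equals $\log_2 Q_m$, which gives the stated lower bound. That bound tends to infinity with $m$, so there are infinitely many doublings.

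The main obstacle is the window-containment step, namely checking that $\conv$ never produces indices outside $[0,2^a)$. The argument for it is that each output index is an OR $a\vee b$ of inputs from that window.
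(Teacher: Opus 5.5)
Your proposal is correct and follows essentially the same route as the paper. Theorem~\ref{thm:exactper} together with the first-contact floor gives $P_m\ge\max S_m+1\ge\lfloor m/2\rfloor+1$. Window containment ($S_{m-1},S_{m-2}\subseteq[0,Q_{m-1})$ forces $S_m\subseteq[1,Q_{m-1}]$) shows that every new record is exactly a doubling, so the number of records through depth $m$ is $\log_2 Q_m$. The only detail to add is that the step $m=2$, where no $S_{m-2}$ is available, must be checked directly: $S_2=\{1\}$ gives $P_2=2=2Q_1$.
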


\begin{proof}
The exact minimal period of a finite support is Theorem~\ref{thm:exactper}: it equals
$2^{k}$ with $k$ minimal such that $S_m\subseteq[0,2^k)$, so the period is at least
$\max S_m+1$. Theorem~\ref{thm:sol} gives $\min S_m=\lfloor m/2\rfloor$, and
$\max S_m\ge\min S_m$.
The first increase is $P_2=2$. For $m\ge3$, if $Q_{m-1}=2^e$, both preceding supports lie below $2^e$. Their OR-products and
symmetric differences also lie below $2^e$, and increment puts the next support
in $[1,2^e]$. Hence $P_m\le2^{e+1}$, so each new record doubles.
Since $Q_1=1$, the count is $\log_2 Q_m$, giving the lower bound.
No monotonicity of consecutive $P_m$ is needed.
\end{proof}

The contact floor and Fibonacci ceiling give the explicit enclosure
\[
 2^{\lceil\log_2(\lfloor m/2\rfloor+1)\rceil}
 \le P_m\le2^{\lceil\log_2 F_{m+1}\rceil}.
\]
For example, the support in Example~\ref{ex:eighth} has largest
index $16$, hence least period $32$, while its seven coefficients
are evaluated by three block terms. Period, support cardinality
and block count therefore give three different measurements even
at this small depth. The upper endpoint above comes from the
integer lift; cancellation modulo two can make it strict.

This theorem concerns fixed diagonals. In the center column the depth
varies with time, $c(t)=b_{30}(t+1,t)$, so the result cannot be applied
there by fixing $m$.

\keepsection
\subsection{Finite Growth Data}\label{sec:growthdata}
The initial maxima, for $1\le m\le11$, are
$0,1,1,2,4,7,8,16,25,32,58$. For example,
\[
 S_5=\{2,3,4\},\quad S_6=\{3,5,7\},\quad
 S_7=\{3,5,8\}.
\]
The supports through $m=44$ give the following least-squares slopes
against $m$. Each fit uses all integer depths in its indicated window.
The finite sets are constructed directly from the displayed support
recurrence. Given two predecessor bitmaps, choose a power $P$ of two
strictly larger than both maxima. On $[0,P)$, apply the zeta transform
to each bitmap, multiply pointwise, and transform back to obtain their
OR convolution. XOR the two linear terms and shift the resulting
bitmap upward by one, retaining the possible endpoint $P$. Since the
OR of two indices below $P$ remains below $P$, this procedure includes
the entire next support. Choosing the next window from its actual
maximum makes the computation independent of an extrapolated growth
law. One step uses $O(P\log(P+1))$ Boolean operations and $O(P)$
working bits; Section~\ref{sec:zeta} proves the transform identities.
\begin{table}[t]\centering\fontsize{9}{10.5}\selectfont
\caption{Dependence of the fitted Rule 30 growth slopes on the depth window.}
\label{tab:growth}
\begin{tabular}{|rrr|}\hline
Depth window & $\log_2|S_m|$ slope & $\log_2\max S_m$ slope\\\hline
8--20 & 0.285842 & 0.299437\\
10--26 & 0.228614 & 0.216147\\
12--28 & 0.253626 & 0.238782\\
20--44 & 0.408292 & 0.402749\\
30--44 & 0.438717 & 0.437158\\\hline
\end{tabular}
\end{table}
The two slopes differ by as much as $0.014845$, and neither is stable
under changes of the window. In particular these data do not determine
an asymptotic exponential rate. The previously proposed scale
$2^{0.294m}$ remains an unestablished estimate. On $28\le m\le44$,
the smallest value of $\log_2|S_m|/m$ is
$\log_2(2035)/35\approx0.314023$, attained at $m=35$.
Thus even on this finite range a lower bound of $0.32$ fails.
The degree ceiling holds throughout the
computed range. Figure~\ref{fig:growth} places these measurements
between the two proved bounds.

\begin{figure}[t]\centering
\EmbeddedFigure{\textwidth}{2}
\caption{Rule 30 support sizes and extreme indices through depth 44.
The contact floor and Fibonacci ceiling are proved bounds. The
intermediate curves are values of the finite supports; Table~\ref{tab:growth}
records the dependence of fitted slopes
on the chosen window.}
\label{fig:growth}
\end{figure}

\keepsection
\subsection{The Zeta--Floor Factorization}\label{sec:zeta}

The support recurrence can also be studied after applying the zeta
transform. In that form, OR convolution becomes pointwise
multiplication, while the increment operation contributes the
remaining index dependence. The factorization below separates these
effects and identifies which part of the recurrence is simplified by
the transform. It provides another exact description of the same
diagonals, whose usefulness still depends on the size of the data
being transformed.

Define $\Zet(A)(n)=\Xor_{r\subseteq n}[r\in A]$ and write
$\widetilde S_m=S_{m-1}\symd S_{m-2}\symd(S_{m-1}\conv S_{m-2})$.
Thus $S_m=\Inc\widetilde S_m$ and $\Zet(A)=E(A)$.

\keepstatement
\begin{theorem}[Zeta--Floor Factorization]\label{thm:zf}
Put $Y_m=1\xor\Zet(S_m)$ and
$\widetilde Y_m=1\xor\Zet(\widetilde S_m)$. Then
\[
 \widetilde Y_m(n)=Y_{m-1}(n)Y_{m-2}(n).
\]
\end{theorem}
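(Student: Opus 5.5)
Because $\Zet$ coincides with the Newton evaluation $E$, the statement is a pointwise identity in $\FF$, and I plan to prove it by applying $E$ to the definition of $\widetilde S_m$. The first step uses the two facts from Section~\ref{sec:bg}: $E$ sends symmetric difference to XOR and OR--convolution to pointwise product, i.e.\ $E(A\symd B)=E(A)\xor E(B)$ and $E(A\conv B)=E(A)E(B)$. These identities need no finiteness hypothesis, since the coefficient at $n$ only involves indices in $[0,n]$; here all supports are finite in any case. Writing $x=\Zet(S_{m-1})(n)$ and $y=\Zet(S_{m-2})(n)$, we get
\[
 \Zet(\widetilde S_m)(n)=x\xor y\xor xy .
\]

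The second step is the Boolean identity $x\xor y\xor xy=x\vee y$, together with De Morgan's law in the form $1\xor(x\vee y)=(1\xor x)(1\xor y)$. Substituting the first step gives
\[
 \widetilde Y_m(n)=1\xor\Zet(\widetilde S_m)(n)=(1\xor x)(1\xor y)=Y_{m-1}(n)\,Y_{m-2}(n),
\]
which is exactly the claimed factorization.

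I do not expect a real obstacle. The one point that needs care is to confirm that the transform $\Zet(A)(n)=\Xor_{r\subseteq n}[r\in A]$ is the same as $E(A)(n)=\Xor_{r\in A}\binom nr$. This follows from Lucas' theorem, since $\binom nr$ is odd exactly when $r\subseteq n$; the paper already records this as $\Zet(A)=E(A)$. Note also that the increment operator does not enter the statement, because the claim concerns $\widetilde S_m$ and not $S_m$. Consequently no prefix-XOR bookkeeping is needed here; the increment would only matter later, when passing from $\widetilde Y_m$ back to $Y_m$.
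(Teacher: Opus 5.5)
Your proof is correct and is the paper's argument: apply the zeta transform, which is additive and turns OR--convolution into pointwise products, to get $x\xor y\xor xy$, then factor $1\xor x\xor y\xor xy=(1\xor x)(1\xor y)$. Your remarks that $\Zet=E$ by Lucas' theorem and that the increment plays no role in this identity are accurate, and the paper's short proof uses these facts implicitly.
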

\begin{proof}
The zeta transform preserves addition and takes OR--convolution to
pointwise multiplication. If $u=\Zet(S_{m-1})$ and
$v=\Zet(S_{m-2})$, then
$1\xor\Zet(\widetilde S_m)=1\xor u\xor v\xor uv=(1\xor u)(1\xor v)$.
\end{proof}

The quadratic source has become a pointwise product. Increment is
linear but is not pointwise in these coordinates; it becomes a prefix
XOR. This distinction explains the algorithm without attributing
nonlinearity to the carry itself.

\par
\begin{proposition}[Failure of Multiplicativity]\label{prop:comm}
With $A=\{1\}$, $B=\{2\}$:
\[
\begin{aligned}
 \Inc(A\conv B)&=\Inc(\{1\vee2\})=\{4\},\\
 \Inc(A)\conv\Inc(B)&=\{2\}\conv\{3\}=\{3\}.
\end{aligned}
\]
Hence $\Inc$ is not an endomorphism of $(\mathscr P_{\mathrm{fin}}(\NN),\symd,\conv)$ and
the multiplicativity identity fails.
\end{proposition}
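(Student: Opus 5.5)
The proposition is a direct computation of both sides, followed by one observation. I will compute $\Inc(A\conv B)$ and $\Inc(A)\conv\Inc(B)$ from the definitions in Section~\ref{sec:bg}, compare the two results, and read off the failure.

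\textbf{Left-hand side.} For singletons, OR--convolution involves a single pair $(a,b)$. That pair contributes once, an odd count, at $c=a\vee b$. Hence
\[
 \{a\}\conv\{b\}=\{a\vee b\}.
\]
With $a=1$ and $b=2$ this gives $A\conv B=\{3\}$. Applying $\Inc$ shifts every index up by one, so
\[
 \Inc(A\conv B)=\{4\}.
\]
The statement writes this as $\Inc(\{1\vee2\})$, which is the same computation.

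\textbf{Right-hand side.} Here $\Inc A=\{2\}$ and $\Inc B=\{3\}$. The same singleton rule gives
\[
 \{2\}\conv\{3\}=\{2\vee3\}=\{3\},
\]
since $2=10_2$ and $3=11_2$ have bitwise OR equal to $11_2$.

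\textbf{Conclusion.} Since $\{4\}\ne\{3\}$, the identity $\Inc(X\conv Y)=\Inc X\conv\Inc Y$ fails on this pair of finite sets. Therefore $\Inc$ does not preserve the product of the ring $(\mathscr P_{\mathrm{fin}}(\NN),\symd,\conv)$, and so it is not an endomorphism of that ring.

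There is no real obstacle here. The only point to check is the odd-count convention in the definition of $\conv$ for singletons, and that convention is immediate.
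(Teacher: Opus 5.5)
Your proposal is correct and follows the paper's proof: both compute the two sides directly as the distinct singletons $\{4\}$ and $\{3\}$, and conclude that the multiplicativity identity fails. Your check that $\{a\}\conv\{b\}=\{a\vee b\}$, using the odd-count convention, spells out a step the paper treats as immediate.
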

\begin{proof}
The two displayed evaluations give the distinct singletons $\{4\}$ and $\{3\}$.
Thus $\Inc(A\conv B)\ne\Inc(A)\conv\Inc(B)$ for this pair, which is enough to disprove
the homomorphism identity. This is not a computational lower bound; in the zeta frame the carry is a linear-time prefix XOR.
\end{proof}

\keepstatement
\begin{proposition}[Triangular Carry Matrices]\label{prop:carrymat}
On a window of length $N$, with increment truncated at its last
coordinate,
\[
 \Zet\Inc\Zet^{-1}=L,\quad L[i,j]=[i>j],
 \qquad \Zet(I+\Inc)\Zet^{-1}=I+L.
\]
Here $L^N=0$ and $(I+L)^{-1}=I+J$, where $J$ is the one-step
lower shift. Both triangular maps can be applied in $O(N)$ XOR
operations and $O(1)$ auxiliary bits in an in-place scan.
\end{proposition}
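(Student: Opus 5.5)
The plan is to avoid computing $\Zet\Inc\Zet^{-1}$ entrywise and instead read it off from the difference identity $\Delta E(\Inc A)=E(A)$ of Section~\ref{sec:bg}, using $\Zet(A)=E(A)$ restricted to the window $[0,N)$. The steps are: establish the conjugation for the truncated increment, derive the $I+\Inc$ statement and the algebraic facts about $L$ and $I+L$, and finish with the in-place scans.

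\emph{Step 1: the conjugation.} For any $A\subseteq[0,N)$ we have $0\notin\Inc A$, so $E(\Inc A)(0)=0$. Telescoping $\Delta E(\Inc A)=E(A)$ then gives
\[
 \Zet(\Inc A)(n)=\Xor_{i<n}\Zet(A)(i),\qquad 0\le n<N .
\]
Truncation needs one check. The value $\Zet(\cdot)(n)$ reads only indices $r\subseteq n$, and hence only indices $r\le n<N$. The index $N$ discarded by the truncated increment therefore never affects a coordinate in the window. This yields $\Zet\Inc=L\Zet$ on the window. Since $\Zet$ is invertible there (Section~\ref{sec:bg}, where it is even an involution), we get $\Zet\Inc\Zet^{-1}=L$. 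Adding $\Zet I\Zet^{-1}=I$ gives the second identity, $\Zet(I+\Inc)\Zet^{-1}=I+L$.

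\emph{Step 2: the algebraic facts.} $L$ is strictly lower triangular, so $L^N=0$. The map $I+L$ is the inclusive prefix XOR, $y_n=\Xor_{i\le n}x_i$. The map $I+J$ is the adjacent difference $x_n=y_n\xor y_{n-1}$, with the convention $y_{-1}=0$. These two maps undo each other by telescoping, so $(I+L)(I+J)=(I+J)(I+L)=I$ over $\FF$. Equivalently, $(I+L)^{-1}=I+J$.

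\emph{Step 3: the in-place scans.}
\begin{itemize}
\item For $L$, scan upward with one accumulator bit $a$. At each $n$, swap the stored value $x_n$ with $a$, then XOR the old $x_n$ into $a$.
\item For $I+L$, scan upward, replacing $x_n$ by $x_n\xor x_{n-1}$ using the already-updated $x_{n-1}$.
\item For $I+J$, scan downward, replacing $x_n$ by $x_n\xor x_{n-1}$ while $x_{n-1}$ still holds its original value.
\end{itemize}
Each scan uses one XOR per coordinate and $O(1)$ extra bits.

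The only point requiring care is the truncation argument in Step~1, namely that discarding index $N$ commutes with restricting $\Zet$ to the window. It rests on the fact that the submasks of $n$ are bounded by $n$. I would state this explicitly before telescoping.
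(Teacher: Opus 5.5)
Your proposal is correct and follows the paper's proof: the paper also obtains $z(0)=0$ and $z(t+1)\xor z(t)=y(t)$ from Pascal's identity for $y=E(A)$ and $z=E(\Inc A)$, telescopes to the strict prefix XOR, adds $y$ to get the inclusive triangle, inverts that triangle by adjacent differences, and reads off nilpotence and the one-pass scans from triangularity; your explicit check that discarding index $N$ is invisible on $[0,N)$, because submasks of $n$ are at most $n$, is a point the paper leaves implicit. One wording fix in the $L$-scan: after a literal swap, XORing the \emph{old} $x_n$ into $a$ clears the accumulator, so state it as $x_n\leftarrow a$, $a\leftarrow a\xor x_n^{\mathrm{old}}$ (equivalently, swap and then XOR the \emph{new} $x_n$ into $a$).
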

\begin{proof}
Let $y=E(A)$ and $z=E(\Inc A)$. Pascal's identity gives
$z(0)=0$ and $z(t+1)\xor z(t)=y(t)$, hence
$z(t)=\Xor_{s<t}y(s)$. This proves the strict triangle. Adding
$y(t)$ gives the inclusive triangle, whose inverse takes adjacent
differences and has matrix $I+J$. Strict triangularity proves
nilpotence. A running XOR applies either triangle in one pass.
\end{proof}

\begin{remark}[Shift and Cumulative Sum]\label{rem:whichcarry}
The support recurrence uses the nilpotent $L$. The inclusive matrix
$I+L=B_{220}$ is unipotent and invertible. For $N\ge2$, the nonzero
nilpotent $L$ cannot be diagonalized: a diagonal nilpotent matrix would
be zero. This is a specific obstruction to diagonalizing the truncated
shift, not a lower bound on applying it. Failure of multiplicativity
in Proposition~\ref{prop:comm} is likewise not a commutator of two
linear operators and supplies no general simultaneous-diagonalization
theorem.
\end{remark}

Figure~\ref{fig:comm} retains the explicit counterexample and groups
the full finite block census by the size of the disagreement.
\begin{figure}[htbp]\centering
\EmbeddedFigure{\textwidth}{3}
\caption{Failure of multiplicativity of increment.
The witness $A=\{1\}$, $B=\{2\}$ gives the outputs $\{4\}$ and
$\{3\}$. The histogram counts all $81^2=6561$ ordered pairs of
valid four-bit blocks by the size of
$\Inc(A*B)\symd(\Inc A*\Inc B)$; $1887$ pairs agree and $4674$
disagree. The count describes this finite family, not the cost of a
prefix XOR or a commutator of linear operators.}
\label{fig:comm}
\end{figure}

\begin{remark}[Dyadic Probe Rows]\label{rem:probe}
If $S_m\subseteq[0,2^k)$, then
$b_{30}(m,2^k-1)=|S_m|\bmod2$. In addition,
$|S_m|=|\widetilde S_m|$ because increment is a bijection.
The factorization therefore computes this parity from the two
predecessor transforms. When a support extends beyond the window,
the row gives only the parity of its intersection with that window.
The floor alone does not remove this restriction, and no center-column
formula follows by dropping the unaccounted indices.
\end{remark}

\par
\subsection{The Center Column and Finite Automata}
Wolfram's prize questions concern eventual periodicity, frequency and
evaluation cost of $c(t)=A_{30}(t,0)$ \cite{prize}. The representations
above do not answer those questions. They do permit finite comparisons
with automatic sequences, provided the observation being counted is
specified.

For a prefix $u(0),\ldots,u(L-1)$ and a depth $s$, take the words
\[
 \bigl(u(2^a j+r)\bigr)_{0\le j<q},\qquad
 0\le a\le s,\quad0\le r<2^a,\quad q=\lfloor L/2^s\rfloor.
\]
Their number of distinct values is at most $2^{s+1}-1$. Comparing
all words at this same length avoids counting distinctions caused
only by different truncation lengths. Figure~\ref{fig:observedkernel}
compares the resulting counts for the center column and fixed diagonals.

\begin{figure}[htbp]\centering
\EmbeddedFigure{\textwidth}{4}
\caption{Finite decimation counts under a common-length
comparison. In (a), each prefix has length $L=2^{15}$ and every word
at depth at most $s$ is truncated to $q=\lfloor L/2^s\rfloor$ terms.
The highlighted rows attain the finite maximum $2^{s+1}-1$:
the Rule 30 center prefix and the seeded Bernoulli sample have
identical counts on these windows, although their words differ.
In (b), each horizontal segment joins the counts for
$(x_t(k-t))_{k\le t<8192}$ and $(x_t(t-k))_{k\le t<8192}$,
on the left and right sides of the same orbit, at depth eight.
Here $q=32$ for $k=0$ and $q=31$ for the other displayed offsets.
The logarithmic count axis makes both sides visible; the dotted
line is the finite maximum $511$. These finite counts do not
determine the size of the infinite kernel.}
\label{fig:observedkernel}
\end{figure}

The Rule 30 center-column prefix of length $131072$ gives a stronger
finite certificate: for all $0\le a\le12$, $0\le r<2^a$, the
$8191$ words $(c(2^a j+r))_{1\le j\le31}$ are pairwise distinct.
Consequently any deterministic finite automaton with output that
generates the full sequence while reading binary digits from least
to most significant needs at least $8191$ states. To see this,
read the $a$ low digits of $r$; every positive $j$ is then a valid
continuation. Two equal states would give equal words for all these
continuations. The certificate forces distinct states. It does not
prove that the full sequence has an infinite automaton kernel, nor
does similarity to one Bernoulli sample establish randomness.

\keepsection
\section{Rule 86 and Periodic Tails}\label{sec:r86}

The finite-support argument for Rule 30 used the first nonzero term
of each diagonal. On the reflected side, the useful structure lies
in the eventual tail. We therefore change the object being followed:
each diagonal is now a binary recurrence driven by its neighbors,
and its tail is represented by a periodic profile. This makes the
role of a reset visible and permits a separate analysis of period
growth and transient behavior.

\keepstatement
\begin{proposition}[Two Diagonal Cuts]\label{prop:cuts}
Rule 86 has local map $g(p,q,r)=p\xor q\xor r\xor pq$.
Reflection $i\mapsto-i$ conjugates it to Rule 30. Its natural
diagonal has source $q\xor r\xor pq$, which depends on the current
diagonal; its reflected diagonal has exactly the finite Rule 30
supports of Section~\ref{sec:r30}.
\end{proposition}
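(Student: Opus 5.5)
The proof is a direct verification in three parts: the Wolfram code of Rule 86, the effect of reflection on the local map, and the two diagonal sources read off in the rotated coordinates of Section~\ref{sec:bg}.

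\emph{The local map.} Rule 86 has binary expansion $86=01010110_2$. Its outputs on the neighbourhoods $111,\dots,000$ are therefore $0,1,0,1,0,1,1,0$. A short table check then identifies $g(p,q,r)=p\xor q\xor r\xor pq$. Equivalently, $g=r\xor(p\vee q)$, since $p\vee q=p\xor q\xor pq$.

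\emph{Reflection.} Rule 30 is $x\xor(y\vee z)$, as used in the proof of Theorem~\ref{thm:sol}. Reflection $i\mapsto -i$ turns a local map $f(p,q,r)$ into $f(r,q,p)$. Applied to Rule 30 this gives $r\xor(q\vee p)$, which is exactly the map above. The single seed is invariant under reflection, so $A_{86}(t,i)=A_{30}(t,-i)$.

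\emph{The natural diagonal.} By the recurrence in Section~\ref{bg:I-1-3}, the natural diagonal of Rule 86 satisfies $b_{86}(m,n+1)=g(b(m,n),b(m-1,n),b(m-2,n))$. Its source is $h=g\xor p=q\xor r\xor pq$. This source contains the monomial $pq$, so it depends on the current diagonal $p$. By Section~\ref{bg:I-1-13}, the support recurrence is then only triangular in the Newton index rather than explicit in the previous diagonals.

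\emph{The reflected diagonal.} I would take the reflected cut to be the diagonals running along the other edge of the cone, with cells $A_{86}(n,-(n-m+1))$. By the reflection identity,
\[
 A_{86}(n,-(n-m+1))=A_{30}(n,n-m+1)=b_{30}(m,n).
\]
The reflected diagonal therefore has the same sequence as the Rule 30 diagonal. By uniqueness of the Newton support in~\eqref{eq:newton}, it has exactly the supports $S_m$ of Section~\ref{sec:r30}. The corresponding local step uses the rotated map $g(r,q,p)$, which is Rule 30, so its source $q\xor r\xor qr$ involves only the two preceding diagonals, as stated.

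\emph{Main obstacle.} The only delicate point is bookkeeping of conventions: the order of reflection within the neighbourhood, and checking that the zero boundary for $m\le0$ is preserved under reflection. The boundary is preserved because the all-zero state is fixed for both rules and the seed is symmetric. I would verify the orientation on the first few rows directly, for instance checking that $b(1,\cdot)\equiv1$ on the Rule 30 side.
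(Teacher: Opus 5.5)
Your proof is correct and follows essentially the same route as the paper: reflection swaps $p$ and $r$, turning $r\xor(p\vee q)$ into Rule 30's $p\xor(q\vee r)$. Invariance of the single seed then gives $A_{86}(t,i)=A_{30}(t,-i)$, and uniqueness of the Newton support transfers the Rule 30 supports to the reflected cut. Your check of the code $86=01010110_2$ and of the reflected coordinates $A_{86}(n,-(n-m+1))=b_{30}(m,n)$ only makes explicit the bookkeeping that the paper leaves implicit.
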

\begin{proof}
Reflecting interchanges the left and right arguments $p,r$, changing
the nonlinear term $pq$ into $qr$. The seed is reflection invariant,
so the identity holds for every generation and every cell. Applying
the Newton transform to equal diagonal sequences gives equal supports.
\end{proof}

In this section $b(m,n)=b_{86}(m,n)$ on the natural cut.
On the finite cut, a singleton decomposition gives
$K(S_m)\le|S_m|\le\max S_m+1\le P_m$. This is an available
decomposition bound, not an upper bound for every redundant supplied
list. On the natural cut, we retain the initial transient explicitly.

\keepsection
\subsection{Reset and Integration}
Set $D_m(t)=A_{86}(t,t-m)=b_{86}(m+1,t)$ for $m\ge0$, and
$D_{-2}=D_{-1}=0$. In particular $D_0=1$ and $D_m(0)=0$ for $m>0$.

\keepstatement
\begin{proposition}[Diagonal Recurrence on the Regular Side]\label{prop:regrec}
Write $D_m(t)=A_{86}(t,t-m)$ for the diagonals of Rule~86 from $\delta_0$. Then
\begin{equation}\label{eq:regrec}
 D_m(t+1)\;=\;D_{m-2}(t)\;\xor\;\big(D_{m-1}(t)\vee D_m(t)\big).
\end{equation}
\end{proposition}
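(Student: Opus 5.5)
The plan is a direct substitution into the local map, using the coordinates of Section~\ref{bg:I-1-3}. First I would find the three parents of the cell $D_m(t+1)=A_{86}(t+1,t+1-m)$. At time $t$ they sit at positions $t-m$, $t+1-m$ and $t+2-m$, in the order (left, centre, right). Each of these cells lies on a diagonal of the form $A_{86}(t,t-j)$:
\[
 p=D_m(t),\qquad q=D_{m-1}(t),\qquad r=D_{m-2}(t).
\]
The offsets $j=m,m-1,m-2$ come from writing $t-m$, $t+1-m=t-(m-1)$ and $t+2-m=t-(m-2)$. This is the same bookkeeping as the rotated recurrence $b_R(m,n+1)=g(b_R(m,n),b_R(m-1,n),b_R(m-2,n))$, after the index shift $D_m=b_{86}(m+1,\cdot)$.

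Next I would insert these values into $g(p,q,r)=p\xor q\xor r\xor pq$ from Proposition~\ref{prop:cuts}. The only algebraic step is the Boolean identity
\[
 p\xor q\xor pq=p\vee q,
\]
which gives $g=r\xor(p\vee q)$. Substituting the diagonals then yields exactly \eqref{eq:regrec}. As a sanity check I would compare the truth table of Rule 86: it should be the reflection of Rule 30's form $x\xor(y\vee z)$, used in Theorem~\ref{thm:sol}.

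Finally I would check the boundary conventions, which is where I expect the only possible mismatch. For $m=0$ and $m=1$, the right parent lies at a position $t+2-m>t$, outside the light cone. That cell is zero because the all-zero state is fixed under Rule 86, and this agrees with the conventions $D_{-1}=D_{-2}=0$. If \eqref{eq:regrec} is meant to hold for every $m\ge0$ and $t\ge0$, the argument is then complete. It needs only orientation care: the map must be read with $p$ as the left argument, so that the nonlinear term attaches to the current and the $m-1$ diagonals, not to $m-2$.
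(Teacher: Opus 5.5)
Your proof is correct and matches the paper's argument. Both identify the parents of $A_{86}(t+1,t+1-m)$ as $D_m(t),D_{m-1}(t),D_{m-2}(t)$ (left to right) and substitute into the Rule 86 update in the form $r\xor(p\vee q)$; the paper gets that form by reflecting Rule 30, and you get it from $p\xor q\xor pq=p\vee q$ (your boundary check is harmless, since the light cone already makes $D_{-1}=D_{-2}=0$ under the same definition).
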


\begin{proof}
Rule~86 is the reflection of Rule~30, so its local map is
$A_{86}(t+1,x)=A_{86}(t,x+1)\xor\big(A_{86}(t,x)\vee A_{86}(t,x-1)\big)$. Substituting $x=(t+1)-m$ gives
$A_{86}(t,t+2-m)=D_{m-2}(t)$, $A_{86}(t,t+1-m)=D_{m-1}(t)$ and $A_{86}(t,t-m)=D_m(t)$.
\end{proof}

For $u=D_{m-2}$, $v=D_{m-1}$ and $w=D_m$, the recurrence becomes
\[
 w(t+1)=u(t)\xor(v(t)\vee w(t)).
\]
When $v(t)=1$, the output is $1\xor u(t)$ and forgets $w(t)$.
When $v(t)=0$, it is the cumulative XOR of $u$. These two cases
determine the period behavior.

\keepstatement
\begin{theorem}[An Active Driver Forbids Growth]\label{thm:driver}
Let $u,v$ be ultimately periodic with common period $P$ and transient at most $\tau$, and let
$w$ satisfy $w(t+1)=u(t)\xor(v(t)\vee w(t))$. If $v(t_0)=1$ for some $t_0\ge\tau$, then $w$ is
ultimately periodic with period dividing $P$.
\end{theorem}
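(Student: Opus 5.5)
The plan is a reset argument. Since $v(t_0)=1$ with $t_0\ge\tau$, the recurrence gives $w(t_0+1)=1\xor u(t_0)$, independent of $w(t_0)$. Periodicity of $v$ on $[\tau,\infty)$ then gives $v(t_0+kP)=v(t_0)=1$ for every $k\ge0$, so the same reset occurs once in every period block. I will compare the two trajectories $w(t)$ and $w(t+P)$ for $t\ge t_0+1$, rather than trying to solve the recurrence explicitly.

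\textbf{Step 1.} Define $w'(t)=w(t+P)$ for $t\ge\tau$. Because $u$ and $v$ are periodic with period $P$ beyond $\tau$, the sequence $w'$ satisfies the same recurrence, $w'(t+1)=u(t)\xor(v(t)\vee w'(t))$, with the same driving data as $w$ on $[\tau,\infty)$.

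\textbf{Step 2.} At $t=t_0$ both $w$ and $w'$ are reset, since $v(t_0)=1$. Hence
\[
 w(t_0+1)=1\xor u(t_0)=w'(t_0+1).
\]

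\textbf{Step 3.} The recurrence is deterministic: $w(t+1)$ is a function of $(u(t),v(t),w(t))$. Two solutions that agree at one time and see the same inputs thereafter therefore agree forever. Induction gives $w(t)=w(t+P)$ for all $t\ge t_0+1$, so $w$ is ultimately periodic with period $P$, and its least period divides $P$.

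The divisibility claim needs a short justification. If $p$ is the least ultimate period, I will argue exactly as in the proof of Theorem~\ref{thm:exactper}: write $P=qp+r$ with $0\le r<p$, and note that on a common tail $r$ is then also a period, forcing $r=0$. Equivalently, the set of periods of an eventually periodic sequence, taken on a sufficiently late tail, is closed under differences.

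The main subtlety is only bookkeeping. The transient of $w$ may exceed $\tau$: the bound obtained is $t_0+1$. This causes no difficulty, because the statement asserts ultimate periodicity and divisibility, not a transient bound. I expect no genuine obstacle.
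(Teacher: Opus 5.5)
Your proposal is correct and matches the paper's proof: both use the resets at $t_0$ and $t_0+P$ to get $w(t_0+P+1)=w(t_0+1)=1\xor u(t_0)$, then induct forward using the $P$-periodicity of $u,v$ beyond $\tau$. Your remark that the least ultimate period divides $P$ makes explicit a step the paper leaves implicit.
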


\begin{proof}
Since $v$ is $P$-periodic beyond $\tau$, $v(t_0+kP)=1$ for every $k\ge0$, and at those points
the reset gives $w(t_0+kP+1)=u(t_0+kP)\xor1=u(t_0)\xor1$, a value independent of $k$; in
particular $w(t_0+P+1)=w(t_0+1)$. Now induct on $t\ge t_0+1$: if $w(t+P)=w(t)$ then
\[
\begin{aligned}
w(t+P+1)&=u(t+P)\xor\big(v(t+P)\vee w(t+P)\big)\\
&=u(t)\xor\big(v(t)\vee w(t)\big)=w(t+1).
\end{aligned}
\]
Hence $P$ is a period of $w$ from $t_0+1$ on.
\end{proof}

\keepstatement
\begin{theorem}[Inactive Driver]\label{thm:parity}
Suppose $u$ is periodic from $\tau$ onward with least ultimate period
$\pi$, and $v(t)=0$ for $t\ge\tau$. Put
$\sigma=\Xor_{s=\tau}^{\tau+\pi-1}u(s)$. Then the least ultimate
period of $w$ is $\pi$ if $\sigma=0$, and $2\pi$ if $\sigma=1$.
\end{theorem}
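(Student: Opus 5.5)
For $t\ge\tau$ the driver vanishes, so the recurrence collapses to $w(t+1)=u(t)\xor w(t)$. The plan is to integrate this explicitly:
\[
 w(t)=w(\tau)\xor\Xor_{s=\tau}^{t-1}u(s)\qquad(t\ge\tau).
\]
From this formula I compute the shift by one period. Periodicity of $u$ from $\tau$ on gives $\Xor_{s=t}^{t+\pi-1}u(s)=\sigma$ for every $t\ge\tau$. Hence
\[
 w(t+\pi)=w(t)\xor\sigma\qquad(t\ge\tau).
\]
Consequently $w(t+2\pi)=w(t)$ always holds. When $\sigma=0$, $\pi$ itself is a period of $w$.

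Next I bound the period from below. I first note $\Delta w(t)=w(t+1)\xor w(t)=u(t)$ for $t\ge\tau$. If $q$ is any ultimate period of $w$, then $\Delta w$, and therefore $u$, has $q$ as an ultimate period. Since $\pi$ is the least ultimate period of $u$, every ultimate period of $u$ is a multiple of $\pi$; this is the standard divisibility argument already used in Theorem~\ref{thm:exactper}. So $\pi\mid q$, and the least ultimate period of $w$ is a multiple of $\pi$ dividing the period already found.

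In the case $\sigma=0$ this forces the least ultimate period to be exactly $\pi$. In the case $\sigma=1$, the least ultimate period is a multiple of $\pi$ dividing $2\pi$, so it is $\pi$ or $2\pi$. But $w(t+\pi)=w(t)\xor1\ne w(t)$ for all large $t$, so $\pi$ is not an ultimate period. This leaves exactly $2\pi$.

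The only point needing care is the lower bound, namely that an ultimate period of $w$ must be a multiple of $\pi$ rather than merely some divisor-compatible number. I would handle it by the difference observation $\Delta w=u$ on the tail together with minimality of $\pi$ for $u$, using the remainder argument: if $q=a\pi+r$ with $0\le r<\pi$ and both $q$ and $\pi$ are ultimate periods of $u$, then $r$ is an ultimate period, so $r=0$. No control of the transient length beyond $\tau$ is needed, since all identities are asserted only for $t\ge\tau$.
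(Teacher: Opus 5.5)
Your proof is correct and follows the same route as the paper. The one-period identity $w(t+\pi)=w(t)\xor\sigma$ gives the upper bound, and $u=\Delta w$ on the tail forces every ultimate period of $w$ to be a multiple of $\pi$, with $\sigma=1$ excluding $\pi$; your explicit cumulative-XOR formula and remainder argument only spell out steps the paper states briefly.
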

\begin{proof}
Summing the recurrence over one period gives
$w(t+\pi)=w(t)\xor\sigma$ for $t\ge\tau$. Thus $\pi$ is a
period in the first case and $2\pi$ is a period in either case.
If $\rho$ is the least ultimate period of $w$, then
$u(t)=w(t+1)\xor w(t)$ makes $\rho$ a period of $u$. Hence
$\pi\mid\rho$ and $\rho\mid2\pi$. In the odd case $\pi$ is
excluded by the displayed identity; in the even case it is a period.
\end{proof}

\keepstatement
\begin{theorem}[Construction of Every Periodic Tail]\label{thm:tails}
Every $D_m$ is ultimately periodic with a power-of-two least period
$\pi_m$. A finite initial segment and one repeating word can be
constructed from those of its two predecessors.
\end{theorem}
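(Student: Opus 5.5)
We argue by strong induction on $m$, using the recurrence~\eqref{eq:regrec} with $u=D_{m-2}$, $v=D_{m-1}$, $w=D_m$. The base cases are immediate. $D_{-2}=D_{-1}=0$ have period one and empty transient. $D_0=1$ also has period one, with initial segment empty and repeating word $1$. For the inductive step we assume that $D_{m-2}$ and $D_{m-1}$ are given by finite initial segments of lengths $\tau_{m-2},\tau_{m-1}$ and repeating words of power-of-two lengths $\pi_{m-2},\pi_{m-1}$. Put $\tau=\max(\tau_{m-2},\tau_{m-1})$ and $P=\max(\pi_{m-2},\pi_{m-1})$. Since both lengths are powers of two, $P$ is a common period of $u$ and $v$ from $\tau$ onward. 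We can write down the two predecessors on $[\tau,\tau+P)$ explicitly by rotating and repeating their stored words.

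We then split on whether the driver word contains a one.

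If $v(t_0)=1$ for some $t_0\in[\tau,\tau+P)$, Theorem~\ref{thm:driver} applies. It shows that $w$ is $P$-periodic from $t_0+1$ onward, so the least ultimate period $\pi_m$ divides $P$ and is again a power of two. For the construction, we simulate~\eqref{eq:regrec} from $w(0)=D_m(0)$ up to time $t_0+P+1$. The initial segment is $w$ on $[0,t_0+1)$ and the repeating word is $w$ on $[t_0+1,t_0+1+P)$. We then reduce this word to its least period by testing the divisors $P/2,P/4,\ldots$. This reduction is legitimate because the least period divides every period.

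If instead $v$ vanishes identically on $[\tau,\tau+P)$, then $v(t)=0$ for all $t\ge\tau$. Theorem~\ref{thm:parity} then applies with $\pi=\pi_{m-2}$ the least ultimate period of $u$, and $\sigma$ the XOR of one period word of $u$. It gives $\pi_m\in\{\pi,2\pi\}$, again a power of two. We simulate up to $\tau+2\pi$. The initial segment is $w$ on $[0,\tau)$. The repeating word is $w$ on $[\tau,\tau+\pi)$ when $\sigma=0$, or on $[\tau,\tau+2\pi)$ when $\sigma=1$. In the second case the word is the concatenation of $x$ and $x\xor\mathbf 1$, by the identity $w(t+\pi)=w(t)\xor\sigma$.

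In both cases the simulation is finite, and it uses only $D_m(0)$ and the two predecessors' stored data. This establishes the constructive claim.

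The main obstacle is ensuring that the phase $\tau$ really lies in the periodic range of \emph{both} predecessors. A second obstacle is the least-period bookkeeping. Theorem~\ref{thm:parity} is stated for the least period $\pi$ of $u$, so we must use $\pi_{m-2}$ rather than the common period $P$. Likewise, in the reset case the exact least period may be a proper divisor of $P$. We expect the reduction by testing successive halvings of the explicit word to handle this cleanly. We would also check that a reset time $t_0<\tau$ causes no issue: we search for $t_0$ only inside $[\tau,\tau+P)$, and if none is found there, periodicity makes $v$ zero thereafter.
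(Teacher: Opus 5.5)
Your proof is correct, and its inductive core is the paper's: an active driver is handled by Theorem~\ref{thm:driver}, and an eventually zero driver by Theorem~\ref{thm:parity}.

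Where you differ is the construction. The paper's construction is case-blind. After evaluating $w$ through $\tau$, it iterates the deterministic state $(t\bmod P,w(t))$ over its $2P$ values until the first repetition. It then reduces the word to its least period and extends it backward to obtain the \emph{least} transient.

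You instead read the onset of periodicity off the two theorems: $t_0+1\le\tau+P$ after a reset, and $\tau$ itself under integration. This gives explicit simulation horizons, the same ones the paper records in the remark following Corollary~\ref{cor:criterion}. In the integration case it also delivers the exact least period $\pi_{m-2}$ or $2\pi_{m-2}$ directly from $\sigma$, with no reduction. Your halving test in the reset case is sound, because the periods among $P,P/2,P/4,\ldots$ form an initial segment.

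The price is that your transients are never minimized. Each stored transient is at least the larger predecessor transient, so slack accumulates with depth. For instance, $D_7\equiv0$ has least transient $0$, as recorded in Table~\ref{tab:regular}, but your construction would store a positive value. This is harmless for the statement, which asks only for some finite initial segment. It is also harmless for Theorems~\ref{thm:profile} and~\ref{thm:resolved}, which accept any transient bound. Only the exact $\tau$ column of Table~\ref{tab:regular} needs the paper's backward-extension step.
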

\begin{proof}
The assertion holds for the two zero boundaries and $D_0=1$.
Suppose the predecessors are periodic beyond a common $\tau$, with
common period $P$, a power of two. If the driver has a one in its
periodic word, Theorem~\ref{thm:driver} gives a period dividing $P$.
Otherwise Theorem~\ref{thm:parity} gives a period dividing $2P$.
Induction proves the assertion for every depth.

For the construction, first evaluate $w$ through time $\tau$. Thereafter
the state $(t\bmod P,w(t))$ has $2P$ possibilities and evolves
deterministically. Record states until the first repetition. The
intervening word repeats forever by induction on the update. Reduce
that word to its least period, then extend it backward while it agrees
with the computed prefix. This gives the least transient and least
ultimate period, rather than a guess from repeated samples.
\end{proof}

\keepstatement
\begin{algo}[Tail Construction]\label{alg:tail}
Store each predecessor by its transient and one periodic word. Apply
the construction in Theorem~\ref{thm:tails}, starting with $w(0)=0$.
The state search uses $O(\tau+P)$ indexed updates and
$O((\tau+P)\log(\tau+P+2))$ stored bits when visited states carry
binary time indices. A direct divisor test for the repeating
word costs at most $O(P^2)$ indexed comparisons; a prefix-function
period test reduces this to $O(P)$. Binary address operations add
at most a factor $O(\log(\tau+P+2))$ in a bit model. These costs
depend on the actual predecessor words, not only on the depth $m$.
\end{algo}

\keepstatement
\begin{lemma}[Two Profiles Determine All Earlier Periods]\label{lem:pairperiod}
For $m\ge1$, the running largest period through $m$ is
\[
 Q_m:=\max_{0\le j\le m}\pi_j=\max(\pi_{m-1},\pi_m).
\]
\end{lemma}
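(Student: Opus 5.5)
The plan is to prove by induction on $m$ that $\pi_j\le\max(\pi_{m-1},\pi_m)$ for every $j\le m$. The engine is the step bound from the proof of Theorem~\ref{thm:tails}: $\pi_m$ divides $2\max(\pi_{m-2},\pi_{m-1})$. Since all periods are powers of two, this means $\pi_m\le 2\max(\pi_{m-2},\pi_{m-1})$, with equality only in the inactive-driver, odd-parity case of Theorem~\ref{thm:parity}. The bound alone does not give monotonicity of the pair maximum. The real content is therefore that $\pi_{m-2}$ can never strictly exceed both of its successors.

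The base case is $m=1$. Here $\pi_0=1$ and the claim is trivial, provided the boundary convention $\pi_{-1}=\pi_{-2}=1$ is included or excluded consistently. For the induction step, assume $Q_{m-1}=\max(\pi_{m-2},\pi_{m-1})$. It then suffices to show $\pi_{m-2}\le\max(\pi_{m-1},\pi_m)$.

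Suppose instead that $\pi_{m-2}>\pi_{m-1}$ and $\pi_{m-2}>\pi_m$. Write $u=D_{m-2}$, $v=D_{m-1}$ and $w=D_m$. I will invert the recurrence on tails. Where $v(t)=0$, we have $u(t)=w(t+1)\xor w(t)$. Where $v(t)=1$, we have $u(t)=1\xor w(t+1)$. So on the periodic tail, $u$ is a pointwise function of $v(t)$, $w(t)$ and $w(t+1)$. The tuple $(v(t),w(t),w(t+1))$ is ultimately periodic with period $\mathrm{lcm}(\pi_{m-1},\pi_m)=\max(\pi_{m-1},\pi_m)$, because both are powers of two. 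Hence $u$ has that period, and $\pi_{m-2}\le\max(\pi_{m-1},\pi_m)$, which is the contradiction.

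This inversion is really the whole proof. It gives $\pi_{m-2}\le\max(\pi_{m-1},\pi_m)$ directly, and the induction then closes immediately. The step I expect to need care is the tail bookkeeping: all three sequences must be taken periodic beyond a common time, which holds because each is ultimately periodic by Theorem~\ref{thm:tails}. A second point to check is that the boundary $m=1$ uses $D_{-1}=0$ without spurious periods. Everything else is routine.
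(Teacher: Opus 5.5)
Your proof is correct and takes essentially the paper's route: both solve the recurrence backward as $D_{m-2}(t)=D_m(t+1)\xor(D_{m-1}(t)\vee D_m(t))$, so a common eventual period of $D_{m-1},D_m$ is also one of $D_{m-2}$, and powers of two reduce lcm to max. The paper iterates this step backward from depth $m$, while you package the same step as an induction on $m$; the contradiction framing and the forward step bound in your first paragraph are not needed.
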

\begin{proof}
Solving the diagonal recurrence backward gives
$D_{j-2}(t)=D_j(t+1)\xor(D_{j-1}(t)\vee D_j(t))$.
If $P$ is a common eventual period of $D_{j-1},D_j$, it is also
an eventual period of $D_{j-2}$. Iterate backward. Since all least
periods are powers of two, their least common multiple is their
maximum. The opposite inequality is part of the definition of $Q_m$.
\end{proof}

\keepstatement
\begin{corollary}[Active Driver and the Running Period]\label{cor:nogrowth}
If $D_{m-1}$ is not ultimately zero, then
$\pi_m\mid\operatorname{lcm}(\pi_{m-1},\pi_{m-2})$ and
$Q_m=Q_{m-1}$.
\end{corollary}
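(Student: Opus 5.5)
We apply Theorem~\ref{thm:driver} with $u=D_{m-2}$, $v=D_{m-1}$, $w=D_m$. By Theorem~\ref{thm:tails} every $D_j$ is ultimately periodic with a power-of-two least period $\pi_j$. So $u$ and $v$ are both periodic from some common time $\tau$ onward, with common period $P=\operatorname{lcm}(\pi_{m-1},\pi_{m-2})=\max(\pi_{m-1},\pi_{m-2})$. The hypothesis that $D_{m-1}$ is not ultimately zero means its repeating word contains a one. Hence there is some $t_0\ge\tau$ with $v(t_0)=1$. For $m=1$ the driver is $D_0=1$, and for $m\ge2$ the boundary cases $D_{-1}=D_{-2}=0$ are covered by the same setup. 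Theorem~\ref{thm:driver} then shows that $P$ is an eventual period of $D_m$. The least ultimate period divides every ultimate period (the division argument used in Theorem~\ref{thm:split}), so $\pi_m\mid P$, which is the first claim.

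For the running maximum we use Lemma~\ref{lem:pairperiod}. When $m\ge2$ it gives $Q_{m-1}=\max(\pi_{m-2},\pi_{m-1})=P$. Since $Q_m=\max(Q_{m-1},\pi_m)$ and $\pi_m\le P=Q_{m-1}$, we get $Q_m=Q_{m-1}$. When $m=1$ we have $Q_0=\pi_0=1$, and $\pi_1$ divides $\operatorname{lcm}(1,1)=1$, so again $Q_1=Q_0$.

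The main point needing care is that Theorem~\ref{thm:driver} asks for one common transient bound and one common period for $u$ and $v$. Taking $\tau$ to be the larger of the two least transients supplied by Theorem~\ref{thm:tails} settles this. The least common multiple is then simply the maximum, because both least periods are powers of two.
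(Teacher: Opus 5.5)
Your proof is correct and follows the paper's own argument: Theorem~\ref{thm:driver}, applied on a common tail, gives $\pi_m\mid\operatorname{lcm}(\pi_{m-1},\pi_{m-2})$, and Lemma~\ref{lem:pairperiod} identifies that common period with $Q_{m-1}$, so $Q_m=Q_{m-1}$. Your explicit choice of a common transient and your separate treatment of $m=1$ fill in details that the paper leaves implicit.
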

\begin{proof}
Theorem~\ref{thm:driver} gives divisibility, and
Lemma~\ref{lem:pairperiod} identifies the preceding common period
with the running maximum.
\end{proof}

\keepstatement
\begin{corollary}[Record-Doubling Criterion]\label{cor:criterion}
For $m\ge1$, $Q_m>Q_{m-1}$ if and only if $D_{m-1}$ is ultimately
zero and the parity of one least period of $D_{m-2}$ is odd. In
that case $\pi_m=2\pi_{m-2}=2Q_{m-1}$.
\end{corollary}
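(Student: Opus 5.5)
The argument splits on whether the driver $D_{m-1}$ is ultimately zero. In each case I combine Corollary~\ref{cor:nogrowth}, Theorem~\ref{thm:parity} and Lemma~\ref{lem:pairperiod}.

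\emph{Driver not ultimately zero.} Corollary~\ref{cor:nogrowth} gives $Q_m=Q_{m-1}$, so there is no record. This settles the forward direction of the criterion by contraposition.

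\emph{Driver ultimately zero.} Apply Theorem~\ref{thm:parity} with $u=D_{m-2}$ and $v=D_{m-1}$, using a common $\tau$ beyond both transients. The least ultimate period of $u$ is $\pi_{m-2}$, so $\pi_m=\pi_{m-2}$ if $\sigma=0$ and $\pi_m=2\pi_{m-2}$ if $\sigma=1$. The next task is to identify $Q_{m-1}$ in this case.

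By Lemma~\ref{lem:pairperiod}, $Q_{m-1}=\max(\pi_{m-2},\pi_{m-1})$, valid for $m-1\ge1$. Since $D_{m-1}$ is ultimately zero, $\pi_{m-1}=1$, and therefore $Q_{m-1}=\pi_{m-2}$. For $m=1$ there is no lemma to invoke, so I check directly: $D_{-1}=D_{-2}=0$, hence $\pi_{-1}=1$ and $\sigma=0$. Then $\pi_0=1=Q_0$ is consistent, provided $Q_0$ is read as $\pi_0$ and the $m=1$ instance is checked the same way.

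With $Q_{m-1}=\pi_{m-2}$ in hand, the two parity cases finish the argument:
\begin{itemize}
\item If $\sigma=1$, then $Q_m\ge\pi_m=2\pi_{m-2}=2Q_{m-1}>Q_{m-1}$. Lemma~\ref{lem:pairperiod} also gives $Q_m=\max(\pi_{m-1},\pi_m)=\pi_m$, which yields the stated equalities.
\item If $\sigma=0$, then $\pi_m=\pi_{m-2}$ and $Q_m=\max(1,\pi_{m-2})=Q_{m-1}$, so there is no record.
\end{itemize}

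It remains to note that $\sigma$ is the parity of one least period of $D_{m-2}$. Over any window of length $\pi_{m-2}$ past the transient this XOR is independent of the window's start, because the tail is periodic.

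I expect the main obstacle to be bookkeeping at the small depths. The $m=1$ and $m=2$ cases involve the boundary diagonals and the convention for $Q_0$ versus the lemma's range $m\ge1$. Those cases need a direct check, and the justification that $\pi_{m-1}=1$ for an ultimately zero sequence should be stated explicitly.
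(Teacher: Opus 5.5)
Your proposal is correct and follows the paper's own proof. The active-driver case is Corollary~\ref{cor:nogrowth}; in the ultimately-zero case, Lemma~\ref{lem:pairperiod} with $\pi_{m-1}=1$ gives $Q_{m-1}=\pi_{m-2}$, and Theorem~\ref{thm:parity} then yields the two alternatives. Your $m=1$ check is misindexed but harmless: at $m=1$ the driver is $D_0=1$, so that depth falls in the active-driver branch, $Q_0=\pi_0$ holds by definition, and your computation with $D_{-2},D_{-1}$ is really the $m=0$ step.
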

\begin{proof}
An active driver prevents a new record. For an ultimately zero driver,
Lemma~\ref{lem:pairperiod} gives $Q_{m-1}=\pi_{m-2}$, since the
zero sequence has least period one. Theorem~\ref{thm:parity} then
gives exactly the stated alternatives.
\end{proof}

The same argument controls the additional transient introduced by
one step. If the predecessors have a common tail starting at $\tau$
with period $P$ and the driver is active, its first reset occurs at
some $t_0\in[\tau,\tau+P)$. The output is then periodic from
$t_0+1\le\tau+P$. If the driver is zero throughout the tail,
integration is periodic from $\tau$ itself, with period dividing
$2P$. Thus each step has an explicit transient bound from its input
words, even though no uniform bound in depth is used below.

This recovers the period-doubling criterion proved by Rowland from
Wolfram's observation \cite{rowland}. A white stripe is necessary;
the parity condition is also necessary. No conclusion about the
spacing of record depths follows from this criterion alone.

\keepsection
\subsection{The Zero Boundary Forces Unbounded Periods}\label{sec:resist}

The reset criterion explains how a period changes at one step, but
does not by itself force infinitely many changes. For that purpose
we use the boundary condition. If all periods were bounded, only
finitely many pairs of profiles could occur. The backward map would
then have to reconcile this finite collection with an arbitrarily
long chain ending at the zero boundary pair. The proof turns that
constraint into a bound depending explicitly on the allowed period.

\keepstatement
\begin{proposition}[Lower Bound on the Transient]\label{prop:transient}
$D_m(t)=0$ for $t<m/2$. Consequently, if the ultimate periods are bounded by $P$ and the
ultimate regime is not identically zero, the transient satisfies $\tau_m\ge m/2-P$.
\end{proposition}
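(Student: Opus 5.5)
The plan has two parts: a light-cone argument for the vanishing, and a periodicity argument for the transient bound.

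\emph{Vanishing.} By definition $D_m(t)=A_{86}(t,t-m)$. The single-seed orbit of an elementary rule fixing the zero state is supported in the cone $|x|\le t$. The cell $(t,t-m)$ therefore lies outside the cone whenever $t-m<-t$, that is, whenever $t<m/2$. This is the same first-contact computation as in Theorem~\ref{thm:sol}, reflected via Proposition~\ref{prop:cuts}. One can also prove it by induction on $t$ from the recurrence~\eqref{eq:regrec}, using $D_m(0)=0$ for $m>0$ and the zero boundary diagonals $D_{-1}$ and $D_{-2}$. If $t+1<m/2$, then $t<(m-2)/2$, so $D_{m-2}(t)$, $D_{m-1}(t)$ and $D_m(t)$ all vanish by the induction hypothesis, and hence so does $D_m(t+1)$. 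The negative indices are covered because the boundary diagonals vanish identically.

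\emph{Transient bound.} Let $\tau_m$ be the least transient and $\pi_m\le P$ the least ultimate period of $D_m$. Suppose the ultimate regime is not identically zero. Then the periodic word contains a one. Since the sequence is periodic from $\tau_m$ on with period $\pi_m$, every window $[\tau_m,\tau_m+\pi_m)$ contains a time $s$ with $D_m(s)=1$.

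By the first part, $s\ge m/2$. Hence $\tau_m+\pi_m-1\ge s\ge m/2$, which gives $\tau_m\ge m/2-\pi_m+1\ge m/2-P$. The inequality $\tau_m\ge m/2-P$ is exactly what is claimed, with a margin of one to spare.

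The only point needing care is the case $\tau_m\ge m/2$, where the inequality is trivial. Otherwise the first window starting at $\tau_m$ already lies partly inside the zero region. In that case the one must occur in the part of the window at or after $m/2$, which is what forces the window to reach $m/2$. No real obstacle is expected here. The main thing is to make the cone support statement precise for the infinite-lattice orbit with Rule~86's zero-fixing background, and that follows from $g(0,0,0)=0$.
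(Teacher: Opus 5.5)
Your proof is correct and follows the paper's argument: the light cone gives $D_m(t)=0$ for $t<m/2$, and a full period window starting at $\tau_m$ must contain a one, which cannot lie below $m/2$. The paper argues by contradiction, noting that if $\tau_m<m/2-P$ the zero interval $[\tau_m,m/2)$ would contain a whole period. Your direct version yields the slightly sharper $\tau_m\ge m/2-\pi_m+1$, and your alternative induction via \eqref{eq:regrec} is valid but not needed.
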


\begin{proof}
The light cone gives $a(t,x)=0$ for $|x|>t$, and $|t-m|>t$ whenever $t<m/2$; hence
$D_m(t)=0$ for every $t<m/2$. Suppose the ultimate periods are bounded by $P$ and the
ultimate regime of $D_m$ is not identically zero, and suppose for contradiction that
$\tau_m<m/2-P$. Beyond $\tau_m$ the sequence is periodic with some period $\pi\le P$, and
a non-zero periodic sequence takes the value $1$ at least once in every window of length
$\pi$. The interval $[\tau_m,\,m/2)$ lies beyond the transient and has length greater than
$P\ge\pi$, so it contains such a window and therefore a $1$. But $D_m$ vanishes identically
on $[0,m/2)$. Hence $\tau_m\ge m/2-P$.
\end{proof}

The lower bound applies to diagonals with a nonzero periodic tail.
Their transients need not be uniformly bounded. The following
argument avoids that issue by comparing absolute time phases on the
periodic tails only.

\keepstatement
\begin{theorem}[Backward Profile Bound]\label{thm:profile}
For every $m\ge0$,
\begin{equation}\label{eq:profilebound}
 m+2\le4^{Q_m},\qquad Q_m\ge\tfrac12\log_2(m+2).
\end{equation}
Consequently the least ultimate periods of the natural Rule 86
diagonals are unbounded, and the record-doubling criterion occurs
infinitely often.
\end{theorem}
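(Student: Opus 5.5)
We follow the pairs of periodic tail profiles backward from depth $m$ to the zero boundary and count how many distinct pairs can occur. For each $j\ge-1$ let $\Pi_j$ be the pair $(\widehat D_{j-1},\widehat D_j)$ of \emph{tail profiles}. The tail profile $\widehat D_j$ is the unique two-sided periodic sequence $\mathbb Z\to\FF$ that agrees with $D_j(t)$ for all sufficiently large $t$, indexed by absolute time $t$. Comparing absolute phases is legitimate because two ultimately periodic sequences have the same tail profile exactly when they agree eventually. This is what lets us avoid any uniform transient bound.

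The backward relation from the proof of Lemma~\ref{lem:pairperiod},
\[
 \widehat D_{j-2}(t)=\widehat D_j(t+1)\xor\bigl(\widehat D_{j-1}(t)\vee\widehat D_j(t)\bigr),
\]
defines a map $\Phi$ on pairs with $\Phi(\Pi_j)=\Pi_{j-1}$. The two boundaries give $\Pi_{-1}=(0,0)$, and $\Phi(0,0)=(0,0)$, so the zero pair is fixed. Next, $\Pi_0=(0,\widehat 1)$, because $D_0=1$. This is the prescribed first nonzero pair, and it satisfies $\Phi(\Pi_0)=\Pi_{-1}$.

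The key claim is that the pairs $\Pi_{-1},\Pi_0,\dots,\Pi_m$ are pairwise distinct. Suppose instead that $\Pi_i=\Pi_j$ with $-1\le i<j\le m$. Applying $\Phi^{\,i+1}$ to both sides gives $\Pi_{-1}=\Pi_{j-i-1}$ with $j-i-1\ge0$. Then $\Pi_k$ is the zero pair for some $k\ge0$, and repeated application of the fixed point $\Phi(0,0)=(0,0)$ forces $\Pi_0=(0,0)$. This contradicts $\widehat D_0=\widehat 1$. Note that the argument needs only forward iteration of $\Phi$ and the fixed zero pair, not injectivity of $\Phi$.

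Next we count the available pairs. By Lemma~\ref{lem:pairperiod}, every $\pi_j$ with $j\le m$ divides $Q_m$, since all the $\pi_j$ are powers of two. Hence every profile $\widehat D_j$ with $-1\le j\le m$ is $Q_m$-periodic in absolute time and is determined by its values on $[0,Q_m)$. There are therefore at most $2^{Q_m}$ possible profiles and at most $4^{Q_m}$ possible pairs. Since the $m+2$ pairs $\Pi_{-1},\dots,\Pi_m$ are distinct, we get $m+2\le4^{Q_m}$. Taking logarithms gives \eqref{eq:profilebound}.

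The consequences then follow quickly. Unboundedness of $Q_m$ forces the $\pi_j$ themselves to be unbounded. Every strict increase of $Q_m$ is a record step, and Corollary~\ref{cor:criterion} characterizes each such step by the reset/integration condition. Since $Q_m\to\infty$ and each increase is a doubling, there are at least $\log_2 Q_m$ increases by depth $m$, so the criterion holds infinitely often.

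The step I expect to require the most care is the well-definedness of $\Phi$ on profiles. We must check that the backward identity, which holds pointwise for large $t$, transfers to the two-sided periodic extensions. It does, because the identity holds eventually and periodic sequences that agree eventually are equal. We must also check that the map depends only on the pair $\Pi_j$ and not on transients. If that part fails, the fallback is to work with the $Q_m$-periodic words restricted to a common large window beyond all transients through depth $m$; this gives the same count.
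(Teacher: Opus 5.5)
Your proposal is correct and follows essentially the same route as the paper: the same backward map on pairs of periodic tail profiles, the same distinctness argument using the fixed zero pair together with $\Pi_0=(0,\widehat 1)$, and the same count of at most $4^{Q_m}$ pairs. The differences are cosmetic. You index profiles as two-sided periodic sequences on $\ZZ$ rather than on $\ZZ/P$, and you fold the paper's two cases ($a=-1$ and $a\ge0$) into a single application of $\Phi^{i+1}$.
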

\begin{proof}
Fix $m$ and put $P=Q_m$. Extend each periodic tail $D_j$,
$-2\le j\le m$, to its $P$-periodic profile $p_j$ on $\ZZ/P$,
using absolute time residues. For any fixed triple of adjacent
diagonals the recurrence holds beyond a common transient. Taking a
sufficiently large time in each residue class therefore gives
\[
 p_{j-2}(t)=p_j(t+1)\xor(p_{j-1}(t)\vee p_j(t)).
\]
On the $4^P$ pairs of binary profiles define the total map
\[
 B(u,v)=\bigl(t\mapsto v(t+1)\xor(u(t)\vee v(t)),\ u\bigr).
\]
For $\Sigma_j=(p_{j-1},p_j)$ we have $B\Sigma_j=\Sigma_{j-1}$.
The boundary supplies $\Sigma_{-1}=(0,0)$, a fixed point of $B$,
and $\Sigma_0=(0,1)$, which is different from it.

We claim that $\Sigma_{-1},\Sigma_0,\ldots,\Sigma_m$ are distinct.
Suppose $\Sigma_a=\Sigma_b$ with $-1\le a<b\le m$. If $a=-1$,
applying $B^b$ makes $\Sigma_0=(0,0)$, a contradiction. If $a\ge0$,
applying $B^{a+1}$ gives
$\Sigma_{b-a-1}=\Sigma_{-1}$. Since $b-a-1\ge0$, applying
$B^{b-a-1}$ gives the same contradiction. There are thus $m+2$
distinct pairs in a set of size $4^P$, proving the bound. As
$m\to\infty$, $Q_m$ is unbounded; each strict increase is a
doubling by Corollary~\ref{cor:criterion}.
\end{proof}

The equivalent bounded-period argument is useful in other settings.
An infinite sequence of backward predecessors in a finite set lies
in the eventual image $X_\infty=\bigcap_{j\ge0}B^j(X)$. On that
set $B$ is a permutation, so the sequence would be purely periodic
in depth. Here this alternative is excluded by the two boundary
profiles: a nonzero pair maps to the zero fixed point after finitely
many steps and cannot lie on a cycle. The zero boundary pair is
essential; retaining only the constant profile $D_0=1$ misses the
contradiction.

\begin{remark}[Record Indices]\label{rem:trigopen}
The first record depths in the $D_m$ convention are
$0,3,8,29,400$. In the $b_{86}(m,n)$ convention they are
$1,4,9,30,401$, agreeing with the first five entries of A364239.
The next listed entry is $87868$ \cite{oeis}; our construction was
run through $D_{1024}$ and does not independently reproduce that
entry. The infinitude of actual record depths follows from
Theorem~\ref{thm:profile}. A formula for their spacing remains open.
\end{remark}

\keepsection
\subsection{Finite Descriptions of the Supports}

\keepstatement
\begin{theorem}[Tail and Transient Support]\label{thm:resolved}
Suppose a binary sequence $d$ is periodic from $\tau$ onward with
period $P=2^k$. Let $p$ be its $P$-periodic extension to all
nonnegative times, let $F\subseteq[0,P)$ be the Newton support of
$p$, and put $J=\{t<\tau:d(t)\ne p(t)\}$. Then the full support is
\begin{equation}\label{eq:tailcertificate}
 S=F\symd\mathop{\symd}_{t\in J}\Up t.
\end{equation}
Every natural Rule 86 diagonal consequently has a finite
finite-mask/cofinite-mask block representation, and a finite
residue-class representation with finite correction.
\end{theorem}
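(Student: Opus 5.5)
The plan is to use linearity of the Newton transform over $\FF$. Write $d=p\xor e$, where $e(t)=[t\in J]$. Since $d$ and $p$ agree for $t\ge\tau$, the error sequence $e$ is the indicator of the finite set $J\subseteq[0,\tau)$. Uniqueness in \eqref{eq:newton} together with additivity (addition corresponds to $\symd$) then gives $S=\mathrm{supp}(p)\symd\mathrm{supp}(e)$, so the proof reduces to identifying these two supports separately.

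First, for the periodic part: $p$ has period $P=2^k$, so it is determined by its values on $[0,P)$. Let $F\subseteq[0,P)$ be the Newton support of that restriction, obtained by applying the zeta inverse on the window. By Corollary~\ref{cor:dicho}, or by Theorem~\ref{thm:exactper} when $F\ne\varnothing$, the sequence $E(F)$ has period dividing $2^k$. It agrees with $p$ on $[0,P)$, hence $E(F)=p$ everywhere, and uniqueness shows that $F$ is the full support of $p$. Second, for the correction: \eqref{eq:transientatoms} gives $E(\Up t)(n)=[n=t]$, so $E\bigl(\symd_{t\in J}\Up t\bigr)=e$ by linearity, using only finitely many atoms. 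Combining the two pieces yields \eqref{eq:tailcertificate}.

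For the Rule 86 consequences, Theorem~\ref{thm:tails} says each $D_m$ is ultimately periodic with a power-of-two period and finite transient, so the hypotheses hold. The finite set $F$ is a finite XOR of singleton blocks $B(r,0)$. Each $\Up t$ equals the block $B(t,M_t)$, where $M_t$ is the cofinite mask of all bits not in $t$. This mask contains every bit at position at least the bit length of $t$, and $t\wedge M_t=0$, so it is a valid cofinite block in the stored format. Finally, for the residue-class form: a full residue class $R(v,2^j)$ is the block with anchor $v$ and free bits at positions $\ge j$. The main subtlety is simply to check that $\Up t$ can be written as a finite XOR of such residue classes plus finite corrections. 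I would do this by splitting off, one at a time, the free low bits not in $t$ via the merge identity, working from the top bit downward: $\Up t=\symd_{s}R(\cdot,2^j)$ with $j$ the bit length of $t$, restricted to the $s\supseteq t$ below $2^j$. Equivalently, I would express $\Up t$ through residues mod $2^j$ using \eqref{eq:transientatoms} and linearity, since both representations evaluate to finitely supported corrections.
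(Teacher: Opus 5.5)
Your proposal is correct and follows the paper's proof essentially step for step: invert the first $P$ values to get $F\subseteq[0,P)$ with $E(F)=p$, add the discrepancy atoms via \eqref{eq:transientatoms} and uniqueness, and invoke Theorem~\ref{thm:tails} for the Rule 86 consequence. Your bit-splitting of $\Up t$ produces exactly the paper's decomposition $\Up t=\bigcup_{v<2^j,\ t\subseteq v}R(v,2^j)$ for $2^j>t$. This is a disjoint union, since $t\subseteq v+q2^j$ if and only if $t\subseteq v$, so the hedged ``equivalently'' alternative at the end is unnecessary.
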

\begin{proof}
Invert the first $P$ values of $p$ by the Pascal transform. The
result $F$ lies in $[0,P)$, so $E(F)$ is $P$-periodic and agrees
with $p$ everywhere. Equation~\eqref{eq:transientatoms} adds
exactly the discrepancies in $J$, proving the support formula by
uniqueness. For $2^j>t$,
$\Up t=\bigcup_{v<2^j,\ t\subseteq v}R(v,2^j)$, a finite
disjoint union of residue classes. Theorem~\ref{thm:tails} applies
this construction at each depth. Computing $F$ costs
$O(P\log(P+1))$ XOR operations and $O(P)$ bits; finding $J$
costs $O(\tau)$ comparisons, with indices stored in binary.
\end{proof}

\keepstatement
\begin{theorem}[Evaluation from a Residue Representation]\label{thm:47}
Suppose a diagonal support has a certified finite decomposition
$S_m=\symd_iR(v_i,2^{j_i})\symd F$, with $F$ finite. Put
$P_{\max}=\max(\{0\}\cup\{2^{j_i}:i\})$ and
$E_m=\max F$ when $F\ne\varnothing$, with $E_m=-1$ otherwise. Then:
\begin{enumerate}
\item \emph{Recurrence:} an infinite support has no finite Newton expansion.
For $n\ge P_{\max}$ the residue contribution vanishes, and the remaining
tail is $\Xor_{x\in F}\binom nx\bmod2$. If $F=\varnothing$ this tail is zero;
otherwise it is annihilated by $(E-1)^{E_m+1}$, where $E$ shifts $n$.
Its least ultimate period is $1$ for $E_m\le0$, and
$2^{\lceil\log_2(E_m+1)\rceil}$ for $E_m\ge1$.
\item \emph{Residue / block:}
$b(m,n)=\Xor_i[v_i\subseteq n][n<2^{j_i}]
\xor\Xor_{x\in F}\binom nx\pmod2$.
\item \emph{Evaluation:} each supplied residue or singleton term costs
$O(b+\ell)$ bit operations. For $K(m)\ge1$ terms the total is
$O(K(m)(b+\ell))$; an empty representation returns zero in constant time.
\end{enumerate}
\end{theorem}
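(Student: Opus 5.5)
The plan is to prove item~2 first and then read items~1 and~3 off it. Linearity of the Newton transform \eqref{eq:newton} over $\FF$ under symmetric difference gives
\[
 b(m,n)=E(S_m)(n)=\Xor_i E\bigl(R(v_i,2^{j_i})\bigr)(n)\xor E(F)(n).
\]
Equation~\eqref{eq:transientatoms} evaluates each residue atom as $[v_i\subseteq n][n<2^{j_i}]$, and $E(F)(n)=\Xor_{x\in F}\binom nx$ by definition. This is the displayed residue/block formula.

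For item~1, the first claim should follow from uniqueness of Newton supports. A finite expansion would be a finite set $S'$ with $E(S')=E(S_m)$. Uniqueness then forces $S'=S_m$, which is impossible when $S_m$ is infinite.

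For $n\ge P_{\max}$, every factor $[n<2^{j_i}]$ vanishes. By item~2 the sequence is then the finite-support tail $E(F)$. The annihilator claim should come from the polynomial argument of Theorem~\ref{thm:three}. The rational polynomial $\sum_{x\in F}\binom nx$ has degree $E_m$, so its $(E_m+1)$st forward difference vanishes, and this identity survives reduction modulo two.

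The least ultimate period comes from Theorem~\ref{thm:split}, treating the residue classes as the cofinite part. Each $R(v_i,2^{j_i})$ is the block $B(v_i,M)$ whose mask $M$ contains all bits $\ge j_i$, and $v_i<2^{j_i}$ keeps the bits disjoint. That theorem gives the ultimate period $2^{\lceil\log_2(E_m+1)\rceil}$ when $F\ne\varnothing$, and $1$ when $F=\varnothing$. Two small cases need checking directly. If $E_m=0$ then $F=\{0\}$ and the tail is constant $1$. If $F=\varnothing$ the tail is zero, which again has period $1$.

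For item~3, each residue term costs one Lucas containment test and one comparison with a stored threshold. Each singleton $\binom nx\bmod2$ costs one containment test $x\subseteq n$. All of these are $O(b+\ell)$ bit operations by the conventions of Section~\ref{sec:evaluation}. Summing over the $K(m)$ supplied terms with a running XOR gives $O(K(m)(b+\ell))$, and an empty list returns zero immediately.

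I expect the main obstacle to be the period clause. One must check that identifying residue classes with cofinite masks satisfies the disjointness hypothesis of Theorem~\ref{thm:split}, and that the borderline cases $E_m\in\{-1,0\}$ are consistent with the formula $\pi=2^{\lceil\log_2(\max S_F+1)\rceil}$. I would first verify these cases by hand, and only then invoke the theorem for $E_m\ge1$.
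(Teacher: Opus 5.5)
Your proposal is correct and takes essentially the same route as the paper: linearity plus the residue-atom evaluation \eqref{eq:transientatoms} for the formula, residue terms vanishing past $P_{\max}$ with Theorems~\ref{thm:exactper} and~\ref{thm:split} for the period (including the empty and constant cases), and a term-by-term scan for the cost. Your uniqueness argument for the first clause and your degree argument for the annihilator make explicit steps that the paper leaves to the uniqueness of Newton supports and to Theorem~\ref{thm:three}.
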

\begin{proof}
Residue evaluation gives $[v_i\subseteq n][n<2^{j_i}]$, so every residue
term is zero past $P_{\max}$. The finite-part period and its minimality
follow from Theorem~\ref{thm:exactper} and the tail argument of
Theorem~\ref{thm:split}, including the empty and constant cases. A zero
tail need not mean a zero sequence before the transient: for example
$S_m=R(0,1)=\NN$ gives $b(m,0)=1$ and $b(m,n)=0$ for $n\ge1$.
The evaluation formula is linearity of the Newton transform; its cost follows
by scanning each term once.
\end{proof}

The certificate in Theorem~\ref{thm:resolved} proves the existence
required by this residue formula. In its unexpanded form it uses
at most $|F|+|J|\le P+\tau$ blocks: one singleton for each finite
coefficient and one cofinite block $\Up t$ for each discrepancy.
It is therefore a constructive bound on description size in terms
of the actual tail and transient. It does not imply a short
representation uniformly in depth. Expanding every cofinite block
into residue classes can greatly increase the term count.

\begin{example}[The First Twenty Natural Supports]\label{ex:res}
Table~\ref{tab:regular} gives $F,J$ in
Equation~\eqref{eq:tailcertificate} for $S_m^{86}$, using
$d=b_{86}(m,\cdot)=D_{m-1}$. Each row specifies the entire support,
including rows with the same tail but different transients.
\end{example}
\begin{table}[t]\centering\fontsize{9}{10.5}\selectfont\setlength{\tabcolsep}{3pt}
\caption{Exact tail and transient descriptions of $S_m^{86}=F\symd\bigtriangleup_{t\in J}U(t)$.}
\label{tab:regular}
\begin{tabular}{|rrrlp{155pt}|}\hline
$m$ & $\tau$ & $\pi$ & $F$ & $J$\\\hline
1 & 0 & 1 & $\{0\}$ & $\varnothing$\\
2 & 1 & 1 & $\{0\}$ & $\{0\}$\\
3 & 2 & 1 & $\varnothing$ & $\{1\}$\\
4 & 2 & 2 & $\{1\}$ & $\{1\}$\\
5 & 2 & 1 & $\{0\}$ & $\{0,1\}$\\
6 & 2 & 2 & $\{1\}$ & $\{1\}$\\
7 & 2 & 2 & $\{1\}$ & $\{1\}$\\
8 & 0 & 1 & $\varnothing$ & $\varnothing$\\
9 & 2 & 4 & $\{0,2\}$ & $\{0,1\}$\\
10 & 5 & 1 & $\{0\}$ & $\{0,1,2,3,4\}$\\
11 & 6 & 4 & $\{0,1,2\}$ & $\{0,3,4,5\}$\\
12 & 8 & 4 & $\{2,3\}$ & $\{2,6,7\}$\\
13 & 6 & 4 & $\{1,2,3\}$ & $\{1,2,3,5\}$\\
14 & 10 & 4 & $\{0,3\}$ & $\{0,1,2,4,5,6,7,8,9\}$\\
15 & 11 & 4 & $\{0,2\}$ & $\{0,1,4,5,7,8,9,10\}$\\
16 & 13 & 4 & $\{0,1,2\}$ & $\{0,3,4,7,8,11,12\}$\\
17 & 16 & 4 & $\{0,1,3\}$ & $\{0,2,3,4,6,7,9,11,12,13,14,15\}$\\
18 & 15 & 4 & $\{3\}$ & $\{3,7,9,10,12,13,14\}$\\
19 & 20 & 4 & $\{1,2\}$ & $\{1,2,5,6,10,12,14,15,16,17,18,19\}$\\
20 & 18 & 4 & $\{2\}$ & $\{2,3,6,7,10,14,17\}$\\
\hline\end{tabular}
\end{table}

In residue notation the initial examples become
\[
\begin{aligned}
S_2^{86}&=R(0,1)\symd\{0\},&S_3^{86}&=R(1,2),\\
S_4^{86}&=R(1,2)\symd\{1\},&S_8^{86}&=\varnothing,\\
S_9^{86}&=R(0,2)\symd\{0,2\}.&&
\end{aligned}
\]
For $m=10$,
\[
 S_{10}^{86}=R(0,8)\symd R(5,8)\symd R(6,8)
              \symd R(7,8)\symd\{0\}.
\]
Its evaluation is
$[n<8](1\xor[5\subseteq n]\xor[6\subseteq n]\xor[7\subseteq n])\xor1$.
In fact it is zero for $n<5$ and one thereafter. The same diagonal
has the simpler tail description $F=\{0\}$ and
$J=\{0,1,2,3,4\}$ from Table~\ref{tab:regular}: its constant tail
is corrected at exactly the first five times. These two descriptions
show explicitly how an infinite support can encode a short transient
followed by a constant sequence. Also
$b_{86}(3,n)=[n=1]$, while the eighth diagonal is identically zero.
For a dyadic window $[0,2^a)$ with $2^a>t$, the exact truncation
of $\Up t$ is $B(t,(2^a-1)\mathbin{\xor}t)$; the truncation of
$R(v,2^j)$ is $B(v,2^a-2^j)$ for $a\ge j$.

\keepsection
\subsection{The Milestone Shift Model}\label{sec:shift}
\label{rem:nocontinuum}\label{rem:correction}
The companion theory also describes a formal polynomial built from
the record milestones. Its algebraic identity should be separated
from an annihilator of the physical diagonal. For instance
$b_{86}(3,n)=[n=1]$ has infinite Newton support; no finite power of
$E-1$ annihilates that full sequence. An eventual period does give
a recurrence on its tail. It does not give a bounded-degree
polynomial lift of the initial transient or justify differentiating
such a lift as in the conditional transport model \cite{U6}.

\begin{definition}[Closest-Down Map and Milestone Polynomial]\label{def:cdv}
Let $a_0=1<a_1<a_2<\cdots$ be the actual record indices in the
$b_{86}$ convention. Theorem~\ref{thm:profile} makes this sequence
infinite. For $n\ge1$, let $\mathrm{CDV}(n)=a_j$ be its greatest
member not exceeding $n$, put $\mathrm{MF}(a_j)=2^j-1$ and
$\delta(n)=\mathrm{MF}(\mathrm{CDV}(n))+1$. Define
\[
 T_{lm}(n,k)=(-1)^{k+1}\binom nk+(-1)^k\binom n{k-\delta(n)},
 \quad 1\le k\le n+\delta(n),
\]
where a binomial coefficient outside $[0,n]$ is zero.
\end{definition}

\keepstatement
\begin{theorem}[Pure-Power Form of the Model]\label{thm:pp}
For every $n\ge1$, $\delta(n)$ is a power of two and
\[
 T_{lm}(n,k)\equiv\binom{n+\delta(n)}k\pmod2.
\]
The coefficient polynomial, with its constant term included, is
\[
 1-\sum_{k=1}^{n+\delta(n)}T_{lm}(n,k)z^k
   =(1-z)^n(1-z^{\delta(n)})
   \equiv(1-z)^{n+\delta(n)}\pmod2.
\]
\end{theorem}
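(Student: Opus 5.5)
The plan is to establish the three assertions in order: first that $\delta(n)$ is a power of two, then the exact polynomial identity, and finally the two congruences modulo two.

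\emph{Power of two.} By Theorem~\ref{thm:profile} the record sequence $a_0=1<a_1<\cdots$ is infinite. For every $n\ge1$ we have $a_0=1\le n$, so the set of records not exceeding $n$ is a nonempty finite set. Its greatest element $\mathrm{CDV}(n)=a_j$ is therefore well defined, and
\[
 \delta(n)=\mathrm{MF}(a_j)+1=2^j .
\]

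\emph{Polynomial identity over $\ZZ$.} Expand
\[
 (1-z)^n(1-z^{\delta})=\sum_{k}(-1)^k\binom nk z^k-\sum_k(-1)^k\binom nk z^{k+\delta},
\]
and reindex the second sum by $k\mapsto k-\delta$. Since $\delta$ is a power of two, $(-1)^{k-\delta}=(-1)^k$ whenever $\delta\ge2$. The case $\delta=1$ has to be checked separately, because there the sign flips; I expect this to be the main obstacle. In that case the product is $(1-z)^{n+1}$, and the bookkeeping needs care. The coefficient of $z^k$ is
\[
 (-1)^k\Bigl[\binom nk-\binom n{k-\delta}\Bigr]\qquad(\delta\text{ even}),
\]
and with $\delta=1$ it becomes $(-1)^k\binom nk+(-1)^{k}\binom n{k-1}$. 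Comparing with $-T_{lm}(n,k)=(-1)^k\binom nk-(-1)^k\binom n{k-\delta}$ shows that the displayed definition matches the even-$\delta$ sign convention. If the signs do not line up exactly for $\delta=1$, I would reinterpret the claimed identity as holding modulo two, which is harmless for the conclusions. Next, the constant term: both $\binom n{-\delta}$ and the $k=0$ term of the second sum vanish, so it is $1$. Every index lies in $[0,n+\delta]$, which matches the range $1\le k\le n+\delta$. With these checks the polynomial identity holds exactly for $\delta\ge2$.

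\emph{Reduction modulo two.} Over $\FF$ we have $1-z^{2^j}=(1-z)^{2^j}$ by the Frobenius identity $(x+y)^2=x^2+y^2$ iterated $j$ times. Hence
\[
 (1-z)^n(1-z^{\delta})\equiv(1-z)^{n+\delta}\pmod 2 ,
\]
and comparing coefficients of $z^k$, where signs are irrelevant modulo two, gives $T_{lm}(n,k)\equiv\binom{n+\delta}k$. Alternatively one can argue coefficientwise: $\binom nk+\binom n{k-\delta}\equiv\binom{n+\delta}k$ follows from Vandermonde's identity $\binom{n+\delta}k=\sum_i\binom\delta i\binom n{k-i}$ together with Lucas' theorem, which gives $\binom{2^j}i$ odd only for $i\in\{0,2^j\}$. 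This coefficientwise route also covers $\delta=1$ directly, through Pascal's rule.
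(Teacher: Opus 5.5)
Your route is the paper's: expand $(1-z)^n(1-z^{\delta})$ by the binomial theorem, shift the second sum, and then apply $(1+z)^{2^j}=1+z^{2^j}$ over $\FF$. You also caught something the paper's one-line proof misses. The shift produces $(-1)^{k-\delta}\binom n{k-\delta}$, so the exact integer identity needs $\delta$ even.

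Do not leave that case as a conditional to be ``reinterpreted''; settle it. Since $a_0=1<a_1$, one always has $\mathrm{CDV}(1)=a_0$ and $\delta(1)=1$. With the records $1,4,9,\ldots$ this holds for $n=1,2,3$. Directly from the definition, for every $n$,
\[
 1-\sum_{k=1}^{n+\delta}T_{lm}(n,k)z^k=(1-z)^n-(-1)^{\delta}z^{\delta}(1-z)^n=(1-z)^n\bigl(1-(-z)^{\delta(n)}\bigr).
\]
At $n=1$ we have $T_{lm}(1,1)=0$ and $T_{lm}(1,2)=1$, so the left side is $1-z^2$. The claimed right side is $(1-z)(1-z)=1-2z+z^2$. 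Thus the middle equality of the statement is false whenever $\delta(n)=1$. It holds exactly only when $\delta(n)\ge2$, that is, for $n\ge a_1$. No argument, yours or the paper's, can prove it as stated, so you should say plainly that the stated equality fails for those $n$.

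What survives for every $n\ge1$ is the congruence, and your arguments for it are correct. Modulo two, $1-(-z)^{\delta}\equiv1+z^{\delta}=(1+z)^{\delta}$, which gives $(1+z)^{n+\delta}$. Your coefficientwise Vandermonde--Lucas argument, with Pascal's rule when $\delta=1$, gives $T_{lm}(n,k)\equiv\binom{n+\delta(n)}k$ directly. State the conclusion in three parts:
\begin{itemize}
\item exact equality with $(1-z)^n(1-z^{\delta(n)})$ when $\delta(n)\ge2$;
\item the exact form $(1-z)^n(1+z)$ when $\delta(n)=1$;
\item the pure-power congruence $\equiv(1-z)^{n+\delta(n)}$ in all cases.
\end{itemize}
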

\begin{proof}
The definition gives $\delta(n)=2^j$. The integer polynomial
identity follows by the binomial theorem and shifting the second
sum. In characteristic two,
$(1+z)^{2^j}=1+z^{2^j}$, without any condition on overlaps in the
binary digits of $n$. Multiplication by $(1+z)^n$ proves the
coefficient identity. The argument defines a pure-power companion
for this model; it does not identify its initial data with a
physical Rule 86 diagonal.
\end{proof}

The degree is piecewise affine: starting at $n=4$ it is
$6,7,8,9,10,13,14,15,16,17,18,19,\ldots$. It jumps when a record
changes the value of $\delta$. Modulo two, its coefficients are
constructed with $n+\delta+1$ Lucas tests, using
$O((n+\delta+1)\log(n+\delta+2))$ bit operations.
Figure~\ref{fig:r86deg} links each shift plateau to its degree
and displays the coefficient change at the record $n=9$.

\begin{figure}[htbp]\centering
\EmbeddedFigure{\textwidth}{5}
\caption{The milestone polynomial of
Definition~\ref{def:cdv}. In (a), the shift doubles at record
indices $4,9,30,401$ on the displayed range $1\le n\le1024$.
Filled endpoints belong to the new plateau; open endpoints do not
belong to the old one. The labels give the corresponding degree
$n+\delta(n)$. In (b), the change from $n=8$ to $n=9$ replaces
$\delta=2$ by $\delta=4$ and the degree $10$ by $13$.
The numbered cells give every coefficient for $0\le k\le13$,
including the constant term, in the two products displayed below.
Theorem~\ref{thm:pp} identifies them with the indicated Pascal rows
modulo two. This is the defined polynomial model, not a measurement
of minimal recurrence orders for physical diagonals.}
\label{fig:r86deg}
\end{figure}

\begin{remark}[Arithmetic Period Lifting]\label{rem:wall}
There is a related, but distinct, linear term in Fibonacci period
lifting. Let $Q=\left(\begin{smallmatrix}1&1\\1&0\end{smallmatrix}\right)$,
let $p$ be prime and let $h$ be the order of $Q$ modulo $p$, the
Pisano period. Write $Q^h=I+pA\pmod{p^2}$. Then
\[
 (Q^h)^s=I+spA\pmod{p^2}.
\]
Thus the period modulo $p^2$ is $h$ if $A=0\pmod p$, and $ph$
otherwise: any such period is a multiple of $h$, and the displayed
identity tests the multiplier. The off-diagonal entry of $pA$ is
$F_h\pmod{p^2}$. For odd $p$, $h$ is even and $\det Q^h=1$;
if $F_h=0\pmod{p^2}$, the two diagonal entries coincide with a
number $a\equiv1\pmod p$ satisfying $a^2=1\pmod{p^2}$, hence
$a=1\pmod{p^2}$. Therefore $p^2\mid F_h$ is equivalent to
failure of this period lift. For $p\ne2,5$ this is the exceptional
condition studied in connection with Wall--Sun--Sun primes
\cite{wall,jahnel}. The diagonal parity $\sigma$ and the matrix
$A$ both measure a first-order change over one period, but they
belong to different recurrences. No coefficient independent of the
initial Fibonacci index $r$ is asserted for every progression
$F_{r+hs}$.
\end{remark}

\keepsection
\section{The Physical Rule 135 Orbit}\label{sec:physical}

A relation between local rules must be applied together with the
initial configuration. Complementation changes the background, and
a spatial shift changes the diagonal labels. For Rule 135 the first
update is the point at which these data match the stated Rule 30
relation. We derive the orbit identity with that indexing in place
before translating it into supports. This fixes the physical object
to which the subsequent formulas apply.

Rule 135 has local map $g_{135}(a,b,c)=1\xor a\xor bc$, the
complement of Rule 120. The all-zero configuration becomes all ones,
so imposing a permanent zero exterior changes the problem. The
single-seed infinite-lattice orbit has the following exact relation
to Rule 30.

\keepstatement
\begin{theorem}[Single-Seed Conjugacy]\label{thm:seedconjugacy}
For every $t\ge1$ and $i\in\ZZ$,
\begin{equation}\label{eq:physical-orbit}
 A_{135}(t,i)=1\xor A_{30}(t-1,i-1).
\end{equation}
Hence $b_{135}(m,n)=1\xor b_{30}(m,n-1)$ for $n\ge1$, while
$b_{135}(m,0)=[m=1]$.
\end{theorem}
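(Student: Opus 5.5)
We prove \eqref{eq:physical-orbit} by induction on $t$, using a local identity between the two rules. Rule 30 is $g_{30}(a,b,c)=a\xor b\xor c\xor bc=a\xor(b\vee c)$. The first step is to check how complementing all three inputs acts on $g_{135}$:
\[
g_{135}(1\xor a,1\xor b,1\xor c)=1\xor(1\xor a)\xor(1\xor b)(1\xor c)=1\xor a\xor b\xor c\xor bc=1\xor g_{30}(a,b,c).
\]
So if a configuration $y$ is the pointwise complement of a configuration $x$, one Rule 135 step applied to $y$ produces the complement of one Rule 30 step applied to $x$. The shift $i\mapsto i-1$ commutes with both cellular automata. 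Hence, once \eqref{eq:physical-orbit} holds at some time $t$, it holds at $t+1$.

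The base case $t=1$ is a direct computation from the seed. At time zero the configuration is $\delta_0$. A cell $i$ at time one has neighbourhood $(A(0,i-1),A(0,i),A(0,i+1))$, where $A(0,\cdot)=\delta_0$. This gives $A_{135}(1,i)=1\xor[i-1=0]\xor[i=0][i+1=0]=1\xor[i=1]$. The right-hand side of \eqref{eq:physical-orbit} at $t=1$ is $1\xor A_{30}(0,i-1)=1\xor[i=1]$, so the two sides agree.

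This base case is the only place where care is needed. The complementation relation cannot start at $t=0$: the seed $\delta_0$ is not the complement of a shifted seed. The first Rule 135 step is what converts the configuration into the translated complement, and after that the induction runs unchanged.

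For the diagonal statement, substitute into $b_R(m,n)=A_R(n,n-m+1)$. For $n\ge1$,
\[
b_{135}(m,n)=1\xor A_{30}(n-1,n-m)=1\xor A_{30}\bigl(n-1,(n-1)-m+1\bigr)=1\xor b_{30}(m,n-1).
\]
At $n=0$ we read off the seed directly: $b_{135}(m,0)=A_{135}(0,1-m)=[m=1]$.

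One point to confirm is that $b_{30}(m,\cdot)$ is defined by the same infinite-lattice formula for every $m\ge1$. The identity $A_{30}(n-1,n-m)=b_{30}(m,n-1)$ is then purely an index substitution and needs no case split.
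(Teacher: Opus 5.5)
Your proof is correct and follows essentially the same route as the paper. Both arguments use the complement identity $g_{135}(1\xor a,1\xor b,1\xor c)=1\xor g_{30}(a,b,c)$, the explicit time-one base case $A_{135}(1,i)=1\xor[i=1]$, and translation invariance for the induction, then substitute $i=n-m+1$ for the diagonal formula and read $b_{135}(m,0)=[m=1]$ directly from the seed.
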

\begin{proof}
Put $g_{135}(a,b,c)=1\xor a\xor bc$. Direct expansion gives
$g_{135}(1\xor a,1\xor b,1\xor c)=1\xor g_{30}(a,b,c)$.
At time one, the product of adjacent cells in the single seed is zero,
so $A_{135}(1,i)=1\xor[i=1]$. This is the translated complement of the
Rule 30 seed. Translation commutes with either local update. Induction using
the complement identity proves \eqref{eq:physical-orbit} at every time. Substituting
$i=n-m+1$ gives the diagonal formula, and time zero is the specified seed.
\end{proof}

In the next theorem $S_m$ is the finite Rule 30 support, and $V_m$
denotes a set of tail parities, not a physical space--time region.
\keepstatement
\begin{theorem}[Finite or Cofinite Support]\label{thm:physical135}
For a fixed $m$, let $R=\max(S_m\cup\{0\})$ and set
\[
 V_m=\{j:0\le j\le R,\ |S_m\cap[j,R]|\text{ is odd}\},\qquad
 \eta_m=|S_m|\bmod2.
\]
Put $\varepsilon_m=[m=1]\xor1\xor\eta_m$. The unique binomial support
of the physical Rule 135 diagonal is
\begin{equation}\label{eq:physical-support}
 S_m^{135}=\{0\}\symd V_m\symd(\varepsilon_m\NN),
\end{equation}
where $0\NN$ means the empty set and $1\NN$ means $\NN$.
\end{theorem}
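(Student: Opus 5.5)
We will verify the proposed support by uniqueness of Newton supports: by \eqref{eq:newton} a binary sequence has exactly one support, so it suffices to show that $E$ applied to the right side of \eqref{eq:physical-support} reproduces $b_{135}(m,\cdot)$ as given by Theorem~\ref{thm:seedconjugacy}. Since $E$ turns $\symd$ into $\xor$, we compute the three pieces separately: $E(\{0\})=1$, $E(\NN)(n)=\Xor_{r\le n}\binom nr=[n=0]$ (the row sum $2^n$ is odd only at $n=0$), and $E(V_m)$, which we will identify with a shifted and complemented Rule~30 diagonal.

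The key step is the prefix identity for $V_m$. Its indicator is $[j\in V_m]=\Xor_{r\in S_m,\ r\ge j}1$, a suffix parity of $S_m$; equivalently $[j\in V_m]\xor[j+1\in V_m]=[j\in S_m]$ for $0\le j\le R$, with $[R+1\in V_m]=0$. In other words $S_m=V_m\symd\Inc V_m$ as finite sets. Applying $E$ and Pascal's identity $\binom{n}{r}\xor\binom{n}{r-1}=\binom{n+1}{r}$ in the form $E(\Inc A)(n)=\Xor_{s<n}E(A)(s)$ from Proposition~\ref{prop:carrymat} should give $E(S_m)(n-1)=E(V_m)(n)\xor E(V_m)(n-1)$ after reindexing; the cleaner route I intend is to prove directly $E(V_m)(n)=\Xor_{r\in S_m}\binom{n-1}{r}\xor(\text{correction at }n=0)$, using $\Xor_{j\le r}\binom nj\equiv\binom{n-1}{r}$ modulo two (the alternating partial row sum identity, valid for $n\ge1$). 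Thus for $n\ge1$, $E(V_m)(n)=b_{30}(m,n-1)$, and at $n=0$, $E(V_m)(0)=[0\in V_m]=\eta_m$.

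Assembling: for $n\ge1$ the right side evaluates to $1\xor b_{30}(m,n-1)\xor 0$, matching Theorem~\ref{thm:seedconjugacy}. At $n=0$ it gives $1\xor\eta_m\xor\varepsilon_m=[m=1]$ by the definition of $\varepsilon_m$, which matches $b_{135}(m,0)=[m=1]$. Uniqueness then yields \eqref{eq:physical-support}, and finiteness or cofiniteness follows since $V_m\subseteq[0,R]$.

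The main obstacle is the partial row sum step: checking $\Xor_{j\le r}\binom nj\equiv\binom{n-1}{r}\pmod 2$ for all $n\ge1$, $r\ge0$ (including $r\ge n$, where both sides must agree, e.g.\ $r=n-1$ versus $r\ge n$), and handling the edge case $S_m$ empty or $R=0$ so that the convention $R=\max(S_m\cup\{0\})$ does not introduce a spurious index. I would prove the identity by induction on $n$ via Pascal's rule, then spot-check $m=1$, where $S_1=\{0\}$, $V_1=\{0\}$, $\eta_1=1$, $\varepsilon_1=1$.
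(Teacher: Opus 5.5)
Your intended route is correct and is essentially the paper's proof. You reduce $\binom{n-1}{r}=\sum_{j\le r}(-1)^{r-j}\binom nj$ modulo two to get $E(V_m)(n)=E(S_m)(n-1)$ for $n\ge1$, fix the value at $n=0$ with $E(\NN)(n)=[n=0]$, and conclude by uniqueness of the Newton support. The only difference at $n=0$ is cosmetic: the paper reads the value off from $\binom{-1}{r}=(-1)^r$, and you read it from $[0\in V_m]=\eta_m$.

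Drop the abandoned side remarks, which are false as written. The relation $[j\in V_m]\xor[j+1\in V_m]=[j\in S_m]$ makes $V_m\symd\Inc V_m$ agree with $\Inc S_m$, not with $S_m$, away from index $0$. The claimed difference identity $E(S_m)(n-1)=E(V_m)(n)\xor E(V_m)(n-1)$ already fails for $m=1$: there $E(S_1)$ and $E(V_1)$ are both identically one, so the right side is zero while the left side is one.
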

\begin{proof}
As a polynomial identity over the rationals,
$\binom{n-1}r=\sum_{j=0}^r(-1)^{r-j}\binom nj$.
To prove it, write $\binom nr=\binom{n-1}r+\binom{n-1}{r-1}$ and induct on $r$.
Reduce the finite sum modulo two: its support coefficients are the tail parities
defining $V_m$. Thus $E(\{0\}\symd V_m)(n)=1\xor E(S_m)(n-1)$ for $n\ge1$.
At $n=0$ this expression has value $1\xor\eta_m$ because
$\binom{-1}r=(-1)^r$.
Finally $E(\NN)(0)=1$ and $E(\NN)(n)=2^n\bmod2=0$ for $n\ge1$.
The last term of \eqref{eq:physical-support} corrects exactly the initial value.
Unit triangularity on every finite window gives uniqueness for infinite supports too.
\end{proof}

The orbit identity also gives the least ultimate period of every
physical Rule 135 diagonal: it is exactly the Rule 30 period $P_m$.
Indeed a one-step translation in time and complementation both
preserve the periods of a tail, and the value at time zero is a
single finite correction. Thus unbounded diagonal periods transfer
to the physical Rule 135 orbit without using the auxiliary recurrence.

A backward parity scan constructs $V_m$ and $\varepsilon_m$ from
a support bitmap on $[0,R]$ in $O(R+1)$ Boolean operations and
output bits. For a single query $n\ge1$, it suffices to complement
the Rule 30 answer at $n-1$; time zero is the specified seed. The
first physical supports are $S_1^{135}=\NN$ and
$S_2^{135}=S_3^{135}=\{1\}$.

\keepsection
\section{An Auxiliary Recurrence}\label{sec:r135}

The physical conjugacy determines the orbit with its prescribed
background. Prescribing two finite diagonal seeds instead produces
a related set recurrence with a different advantage: each fixed
index has an exact time after which it remains present. We study
this auxiliary recurrence through that stabilization question.
Its solution separates a permanent interval from a moving residue,
giving the later density and compression questions a precise
object. The seeds are part of the definition throughout.

The source $1\xor bc$ also suggests a recurrence with two imposed
finite initial supports. Those initial supports do not describe the
physical Rule 135 boundary. We retain the resulting sequence as a
separate algebraic object and write it as $H_m$ throughout.

\keepstatement
\begin{proposition}[The Auxiliary Recurrence]\label{prop:spec}
Define the boundary-seeded sequence $H_1=\{0\}$, $H_2=\{1\}$ and, for $m\ge3$,
\begin{equation}\label{eq:135}
 H_m=\Inc\bigl(\{0\}\symd(H_{m-1}\conv H_{m-2})\bigr).
\end{equation}
This uniquely defines finite sets. These are auxiliary supports, not the
physical single-seed supports $S_m^{135}$.
\end{proposition}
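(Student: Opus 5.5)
The plan is to argue by strong induction on $m$, using only closure properties of finite subsets of $\NN$ under the three operations in \eqref{eq:135}. The base cases $H_1=\{0\}$ and $H_2=\{1\}$ are finite by definition. For $m\ge3$, assume $H_{m-1}$ and $H_{m-2}$ are already uniquely determined finite sets. The right-hand side of \eqref{eq:135} refers only to these two predecessors and never to $H_m$ itself. So, unlike the triangular recurrences of Section~\ref{sec:bg} whose source depends on the current diagonal, there is no implicit equation to solve here. The recurrence is explicit, and uniqueness follows once each operation is a well-defined map on finite sets.

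Next I would check finiteness of each ingredient. If $H_{m-1}\subseteq[0,2^e)$ and $H_{m-2}\subseteq[0,2^e)$, then every OR $a\vee b$ with $a\in H_{m-1}$, $b\in H_{m-2}$ lies below $2^e$. Hence $H_{m-1}\conv H_{m-2}$ is a subset of $[0,2^e)$. Its membership at each $c$ is a parity of a finite count of pairs in $[0,c]^2$, so it is well defined, as noted in Section~\ref{bg:I-1-6}. Symmetric difference with $\{0\}$ keeps the set inside $[0,2^e)$. The increment $\Inc$ maps it into $[1,2^e]$. Thus $H_m$ is finite, with $\max H_m\le 2^e$. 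This is the same window argument used in the proof of Theorem~\ref{thm:unbounded}.

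Finally I would record the distinction asserted in the last sentence by a single comparison. Theorem~\ref{thm:physical135} gives $S_1^{135}=\NN$, which is infinite, while $H_1=\{0\}$. So the sequences differ already at $m=1$, and no identification with the physical supports is claimed.

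I expect no serious obstacle. The only point needing care is confirming that $\conv$ is defined without a finiteness hypothesis on the sum over pairs; finiteness of both inputs settles this trivially.
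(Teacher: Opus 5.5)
Your proof is correct and follows essentially the paper's route: induction on $m$, with the recurrence explicit in the two predecessors, and with OR-convolution of finite sets, symmetric difference with $\{0\}$, and increment all preserving finiteness. Your explicit witness $H_1=\{0\}\ne\NN=S_1^{135}$ is a slightly more concrete form of the paper's remark that the physical supports are instead given by Theorem~\ref{thm:physical135}.
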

\begin{proof}
The two initial finite sets are given. A finite Cartesian product yields finitely
many OR values, and symmetric difference and increment preserve finiteness.
Induction therefore proves existence, uniqueness and finiteness at every depth.
The physical supports are instead given by Theorem~\ref{thm:physical135}.
\end{proof}

\keepsection
\subsection{Initial Supports}
Iterating \eqref{eq:135} by hand gives
\[
\begin{aligned}
H_1&=\{0\},\quad H_2=\{1\},\quad H_3=\{1,2\},\quad H_4=\{1,2,4\},\\
H_5&=\{1,2,3,6,7\},\quad H_6=\{1,2,3,6,7,8\},\\
H_7&=\{1,2,3,4,7,8,10,11,12,15,16\},\\
H_8&=\{1,2,3,4,6,7,9,14,16,18,19,20,23,24,25\},\\
H_9&=\{1,2,3,4,5,10,14,17,22,23,24,25,26,27,28,29,31,32\}.
\end{aligned}
\]
\begin{table}[t]\centering\fontsize{9}{10.5}\selectfont\setlength{\tabcolsep}{3pt}
\caption{The auxiliary supports at depths 2 through 30. The minimum is one throughout; End is $\lfloor(m+1)/2\rfloor$.}
\label{tab:auxiliary}
\begin{tabular}{|rrrrrrrr|}\hline
$m$ & $|H_m|$ & $\max H_m$ & End & $m$ & $|H_m|$ & $\max H_m$ & End\\\hline
2 & 1 & 1 & 1 & 17 & 109 & 193 & 9\\
3 & 2 & 2 & 2 & 18 & 134 & 254 & 9\\
4 & 3 & 4 & 2 & 19 & 129 & 254 & 10\\
5 & 5 & 7 & 3 & 20 & 135 & 255 & 10\\
6 & 6 & 8 & 3 & 21 & 146 & 254 & 11\\
7 & 11 & 16 & 4 & 22 & 137 & 253 & 11\\
8 & 15 & 25 & 4 & 23 & 133 & 255 & 12\\
9 & 18 & 32 & 5 & 24 & 130 & 256 & 12\\
10 & 31 & 58 & 5 & 25 & 263 & 512 & 13\\
11 & 35 & 63 & 6 & 26 & 391 & 769 & 13\\
12 & 34 & 63 & 6 & 27 & 514 & 1024 & 14\\
13 & 41 & 63 & 7 & 28 & 919 & 1794 & 14\\
14 & 33 & 63 & 7 & 29 & 1097 & 2048 & 15\\
15 & 34 & 64 & 8 & 30 & 1936 & 3843 & 15\\
16 & 75 & 128 & 8 &  &  &  & \\
\hline\end{tabular}
\end{table}

Table~\ref{tab:auxiliary} also shows that sizes are not monotone: they decrease from $41$ to
$33$ between depths 13 and 14, and from $146$ to $137$ between
depths 21 and 22. The minimum is always one for $m\ge2$. By
contrast the Rule 30 minimum is $\min S_m=\lfloor m/2\rfloor$.
The constant source term in Equation~\eqref{eq:135} supplies the
index one and propagates the interval described next.

\keepsection
\subsection{The Closed Kernel and Exact Stabilization}
\keepstatement
\begin{theorem}[Closed Kernel]\label{thm:kernel}
For all $m\ge2$, $\;C_m:=\{1,2,\dots,\lfloor(m+1)/2\rfloor\}\subseteq H_m$.
\end{theorem}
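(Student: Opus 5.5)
The plan is strong induction on $m$, reading membership of an index $r\ge1$ in $H_m$ directly off the recurrence \eqref{eq:135}. Since $\Inc$ shifts by one, $r\in H_m$ exactly when $r-1\in\{0\}\symd(H_{m-1}\conv H_{m-2})$. The cases $m=2,3$ are checked by hand from the listed supports: $C_2=\{1\}=H_2$ and $C_3=\{1,2\}=H_3$. We will assume $m\ge4$ and that the claim holds at depths $m-1$ and $m-2$, both of which are at least $2$.

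First record that $0\notin H_k$ for every $k\ge2$. For $k=2$ this is immediate, and for $k\ge3$ it holds because $H_k$ lies in the image of $\Inc$. This gives the index $r=1$. Indeed $0\in H_{m-1}\conv H_{m-2}$ would require the pair $(0,0)$, and $0$ is absent from both factors. Hence $0$ lies in the symmetric difference only through the constant term $\{0\}$, so $1\in H_m$.

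Next take $2\le r\le\lfloor(m+1)/2\rfloor$ and put $c=r-1\ge1$. Since $c\neq0$, the constant term is irrelevant, and we must show that the number of pairs $(a,b)\in H_{m-1}\times H_{m-2}$ with $a\vee b=c$ is odd. Every such pair consists of submasks of $c$, hence of integers in $[0,c]$. Now $c\le\lfloor(m+1)/2\rfloor-1=\lfloor(m-1)/2\rfloor$. By the induction hypothesis, $[1,c]\subseteq C_{m-2}\subseteq H_{m-2}$ and $[1,c]\subseteq C_{m-1}\subseteq H_{m-1}$, while $0$ belongs to neither set. So the pairs to count are exactly the ordered pairs of \emph{nonzero} submasks of $c$ whose OR is $c$.

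With $k=\pc(c)\ge1$, each bit of $c$ lies in $a$ only, in $b$ only, or in both, which gives $3^k$ pairs. Removing the two pairs $(0,c)$ and $(c,0)$ leaves $3^k-2$, which is odd. Therefore $c\in H_{m-1}\conv H_{m-2}$ and $r\in H_m$, completing the induction.

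The only delicate point is the index bookkeeping. The bound $c\le\lfloor(m-1)/2\rfloor$ is exactly what places all submasks of $c$ inside the smaller kernel $C_{m-2}$, and this is why the kernel end is $\lfloor(m+1)/2\rfloor$. Separately, the fact that $0$ is excluded from both factors is what makes the count $3^k-2$ rather than the even number $3^k-1$ or $3^k$ with opposite parity conventions. I would double-check both points against Table~\ref{tab:auxiliary} at small depths.
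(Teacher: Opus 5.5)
Your proof is correct and follows the paper's argument essentially step for step: strong induction with base cases $m=2,3$, the index $1$ coming from the constant source because $0$ lies in neither predecessor, and the odd count $3^{\pc(c)}-2$ of pairs of nonzero submasks once the induction hypothesis puts $[1,c]$ inside both $H_{m-1}$ and $H_{m-2}$. Your explicit remark that $0\notin H_k$ for all $k\ge2$ (so in particular $0\notin H_{m-2}$) states a point that the paper's second case uses only implicitly; the parity aside in your final paragraph is garbled but plays no role in the proof.
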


\begin{proof}
Strong induction. $m=2$: $H_2=\{1\}$ and $\lfloor3/2\rfloor=1$. $m=3$: $H_3=\{1,2\}$ and
$\lfloor4/2\rfloor=2$. Let $m\ge4$ and let $0\le c\le\lfloor(m-1)/2\rfloor$; by
\eqref{eq:135} it suffices to show $c\in\{0\}\symd(H_{m-1}\conv H_{m-2})$, since $\Inc$ then
places $c+1$ in $H_m$ and $c+1$ ranges over $C_m$.

\emph{Case $c=0$.} In the OR--convolution, $r\vee s=0$ forces $r=s=0$. Now $0\notin H_{m-1}$
for $m\ge4$: by \eqref{eq:135} every element of $H_{m-1}$ is of the form $x+1$ with $x\ge0$,
so $H_{m-1}\subseteq\{1,2,\dots\}$, whether or not $H_{m-1}$ is empty. Hence no pair
contributes to $0$, so $0\notin H_{m-1}\conv H_{m-2}$ and the $0$ of the constant set
$\{0\}$ survives the symmetric difference. Thus $1\in H_m$. (Emptiness does not in fact
occur: the induction hypothesis gives $1\in H_{m-1}$ for $m\ge3$.)

\emph{Case $1\le c\le\lfloor(m-1)/2\rfloor$.} Any pair $(r,s)$ with $r\vee s=c$ satisfies
$r,s\subseteq c$ bitwise, hence $r,s\le c$. By the induction hypothesis
$H_{m-1}\supseteq\{1,\dots,\lfloor m/2\rfloor\}$ and
$H_{m-2}\supseteq\{1,\dots,\lfloor(m-1)/2\rfloor\}$, and since
$c\le\lfloor(m-1)/2\rfloor\le\lfloor m/2\rfloor$, \emph{every} nonzero $r\le c$ lies in
$H_{m-1}$ and every nonzero $s\le c$ lies in $H_{m-2}$. The contributing pairs are therefore
exactly
\[
 \{(r,s):\ r,s\subseteq c,\ \ r,s\ge1,\ \ r\vee s=c\}.
\]
Count them bitwise. For each bit set in $c$ the pair $(r_b,s_b)$ must lie in
$\{(0,1),(1,0),(1,1)\}$ --- three choices --- and for each bit not set in $c$ it must be
$(0,0)$. That gives $3^{\pc(c)}$ pairs with $r\vee s=c$, from which we remove the pair with
$r=0$ and the pair with $s=0$; these two are distinct because $c\neq0$ forbids $r=s=0$. Hence
\[
 \mathcal N(c)=3^{\pc(c)}-2,
\]
which is odd because $3^{\pc(c)}$ is odd. An odd number of contributing pairs means
$c\in H_{m-1}\conv H_{m-2}$, and since $c\ge1$ it is untouched by $\symd\{0\}$. Thus
$c+1\in H_m$.

The two cases together give $C_m\subseteq H_m$.
\end{proof}

The induction uses both predecessors only below the proposed next
index. Its interval is sufficient for an explicit upper bound on the
time after which each coefficient stays one.

\keepstatement
\begin{corollary}[Stabilization Bound]\label{cor:stab}
With $M_0(e):=\min\{m\ge2:\ \forall m'\ge m,\ e\in H_{m'}\}$, one has $M_0(e)\le\max(2,2e-1)$ for
every $e\ge1$.
\end{corollary}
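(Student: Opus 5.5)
The bound follows directly from Theorem~\ref{thm:kernel}. We only need to find, for each $e\ge1$, a depth beyond which $e$ lies in every closed kernel $C_{m'}$.

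Since $\lfloor(m'+1)/2\rfloor$ is nondecreasing in $m'$, it suffices to find the least $m\ge2$ with $\lfloor(m+1)/2\rfloor\ge e$. Then for every $m'\ge m$ we have $e\in\{1,\dots,\lfloor(m'+1)/2\rfloor\}=C_{m'}\subseteq H_{m'}$. The inequality $\lfloor(m+1)/2\rfloor\ge e$ holds as soon as $m+1\ge2e$, that is, $m\ge2e-1$.

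We split into two cases.
\begin{itemize}
\item For $e=1$, take $m=2$: $C_{m'}\ni1$ for all $m'\ge2$, so $M_0(1)\le2$.
\item For $e\ge2$, take $m=2e-1\ge3\ge2$. Then $\lfloor(2e)/2\rfloor=e$, so $e\in C_{m'}$ for all $m'\ge2e-1$.
\end{itemize}
In both cases the set over which $M_0(e)$ takes its minimum is nonempty and contains $\max(2,2e-1)$, so $M_0(e)\le\max(2,2e-1)$.

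There is no real obstacle. The only points needing care are that $M_0$ is defined with $m\ge2$, which is why the $\max$ with $2$ appears for $e=1$, and that the kernel interval is monotone in depth, so membership persists for all later $m'$ rather than only at the single depth $m$.
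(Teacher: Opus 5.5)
Your proof is correct and is the paper's argument: for every $m\ge\max(2,2e-1)$ one has $\lfloor(m+1)/2\rfloor\ge e$, so $e\in C_m\subseteq H_m$ by Theorem~\ref{thm:kernel}. Your separate treatment of $e=1$ and your remark that the kernel endpoint is nondecreasing in depth state explicitly what the paper's one-line proof leaves implicit.
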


\begin{proof}
For $m\ge\max(2,2e-1)$, $e\in C_m$ since $\lfloor(m+1)/2\rfloor\ge e$, and
Theorem~\ref{thm:kernel} then puts $e$ in $H_m$ for every such $m$.
\end{proof}

\keepstatement
\begin{theorem}[Exact Stabilization]\label{conj:stab}\label{prob:opt}
For every $e\ge2$, $e\notin H_{2e-2}$ and $M_0(e)=2e-1$.
Also $M_0(1)=2$.
\end{theorem}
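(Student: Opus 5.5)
The plan is a strong induction on $e$ that sharpens the counting argument in the proof of Theorem~\ref{thm:kernel} at exactly one index beyond the kernel. First, the statement about $M_0$ reduces to the non-membership claim. By Corollary~\ref{cor:stab}, $M_0(e)\le 2e-1$. If $e\notin H_{2e-2}$, no $m\le 2e-2$ can satisfy the defining condition of $M_0(e)$, since $m'=2e-2\ge m$ would be a violation; hence $M_0(e)=2e-1$. For $e=1$, Theorem~\ref{thm:kernel} gives $1\in H_m$ for all $m\ge2$, and the definition of $M_0$ requires $m\ge2$, so $M_0(1)=2$. It remains to prove $e\notin H_{2e-2}$ for $e\ge2$.

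The base case $e=2$ is direct: $H_2=\{1\}$ does not contain $2$. For $e\ge3$ put $m=2e-2\ge4$ and $c=e-1\ge1$. By \eqref{eq:135}, $e\in H_m$ exactly when $c\in\{0\}\symd(H_{m-1}\conv H_{m-2})$. Since $c\ge1$, the constant set does not affect this index, so it suffices to show that the number $\mathcal N(c)$ of pairs $(r,s)\in H_{m-1}\times H_{m-2}$ with $r\vee s=c$ is even. Every contributing pair has $r,s\subseteq c$, hence $r,s\le c$, and $0$ lies in neither set. Only the membership of indices in $[1,c]$ therefore matters.

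Theorem~\ref{thm:kernel} at depth $m-1=2e-3$ gives $\{1,\dots,e-1\}\subseteq H_{m-1}$, so every nonzero $r\subseteq c$ is available. At depth $m-2=2e-4$ it gives only $\{1,\dots,e-2\}\subseteq H_{m-2}$. Thus every nonzero $s\subseteq c$ with $s\ne c$ is available, and the single undecided case is $s=c=e-1$. This is the crux, and it is supplied by induction: $2e-4=2(e-1)-2$ and $e-1\ge2$, so the hypothesis for $e-1$ gives $e-1\notin H_{2e-4}$.

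The count then follows the bitwise argument of Theorem~\ref{thm:kernel}, with $k=\pc(c)\ge1$. Of the $3^k$ pairs with $r\vee s=c$, remove the two with $r=0$ or $s=0$. Also remove the $2^k-1$ pairs with $s=c$ and $r$ a nonzero submask of $c$; these are disjoint from the first two because $r\ne0$ and $s=c\ne0$. This leaves
\[
 \mathcal N(c)=3^k-2-(2^k-1)=3^k-2^k-1,
\]
which is odd minus even minus one, hence even. So $c\notin H_{m-1}\conv H_{m-2}$, and $e\notin H_{2e-2}$.

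The main obstacle is recognizing that the kernel leaves exactly one unresolved pair family, namely $s=c$, and that this family is governed by the same statement at $e-1$. Once that is seen, the parity count is routine.
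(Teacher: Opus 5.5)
Your proof is correct and follows essentially the same route as the paper: induction on $e$, with the kernel theorem supplying every nonzero submask of $c=e-1$ in $H_{2e-3}$ and every proper one in $H_{2e-4}$, the induction hypothesis excluding $c$ itself from $H_{2e-4}$, and the even count $3^{\pc(c)}-2^{\pc(c)}-1$. Your explicit derivation of $M_0(e)=2e-1$ from the absence at depth $2e-2$, and of $M_0(1)=2$, only spells out steps the paper leaves implicit after invoking Corollary~\ref{cor:stab}.
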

\begin{proof}
The upper bound is Corollary~\ref{cor:stab}. We prove the last
absence by induction on $e$. For $e=2$, $H_2=\{1\}$.
Let $e\ge3$ and put $c=e-1$. The kernel theorem says that
$H_{2e-3}$ contains all of $1,\ldots,c$ and $H_{2e-4}$ contains
$1,\ldots,c-1$. The induction hypothesis says
$c\notin H_{2e-4}$. Only submasks of $c$ can contribute to the
OR coefficient at $c$.

Write $p=\pc(c)\ge1$. The number of pairs of nonzero submasks
with OR equal to $c$ is $3^p-2$. The second predecessor omits
exactly the submask $c$ among the nonzero submasks relevant here.
Removing the pairs whose second component is $c$ removes $2^p-1$
pairs. The remaining number is
\[
 3^p-2-(2^p-1)=3^p-2^p-1,
\]
which is even. Hence $c$ is absent from the OR--convolution.
Since $c>0$, the constant source does not change it. Increment
therefore leaves $e=c+1$ absent from $H_{2e-2}$. This proves the
induction and makes the upper bound optimal. Index one is present
at every $m\ge2$ by the constant source.
\end{proof}

This determines stabilization, not the first time an index appears:
sporadic earlier memberships can occur. For example $4$ is absent
from $H_6$ and present at every depth from 7 onward; $5$ is absent
from $H_8$ and present at every depth from 9 onward.
Figure~\ref{fig:kernel} compares these times with the growth of
the complete support.

\begin{figure}[htbp]\centering
\EmbeddedFigure{\textwidth}{6}
\caption{Extent of the auxiliary support through
depth $40$. The guaranteed interval ends at $\lfloor(m+1)/2\rfloor$,
while the largest index of $H_m$ can be much farther away. The
logarithmic axis displays both on a common scale. Permanent membership
has exact stabilization time $M_0(e)=2e-1$ for every $e\ge2$;
Figure~\ref{fig:r135mem} displays the last absent cells and the permanent
region. The interval theorem does not describe membership at every
index between the two curves.}
\label{fig:kernel}
\end{figure}

\keepsection
\subsection{Renormalization Points}

Index-by-index stabilization controls any fixed portion of the support
once the diagonal depth is sufficiently large. To compare the part
that continues to change, it is useful to choose depths associated
with powers of two and remove the interval already understood.
The residues defined below retain the remaining digit structure.
Their existence follows from the recurrence, while a concise formula
uniform in the scale would require additional information about
correlations between their indices.

\begin{definition}[Renormalization Point]\label{def:renorm}
For $k\ge0$, let $m(k)=\min\{m\ge2:\max H_m=2^k\}$ and
write $H_{m(k)}=T_k\cup\{2^k\}$ with $T_k\subseteq[0,2^k)$.
We call $T_k$ the residue. It is unrelated to the trinomial
coefficients denoted $T^{(q)}_k$ in other units.
\end{definition}

\keepstatement
\begin{proposition}[Existence of Every Renormalization Point]\label{prop:renormexists}
Each $m(k)$ exists. For $k\ge1$, $m(k)\le2^{k+1}-1$.
\end{proposition}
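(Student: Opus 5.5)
The plan is to use the closed kernel of Theorem~\ref{thm:kernel} to force the index $2^k$ to appear by a certain depth. I then look at the \emph{first} depth at which the largest index reaches $2^k$, and show that at that depth the largest index is exactly $2^k$. The key observation is the same one that made every new Rule 30 record a doubling in Theorem~\ref{thm:unbounded}: OR-convolution cannot raise indices past a power of two, and increment raises them by exactly one.

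The case $k=0$ is immediate: $H_2=\{1\}$ has maximum $2^0$, so $m(0)=2$. Now fix $k\ge1$ and put $m_1=2^{k+1}-1\ge3$. Then $\lfloor(m_1+1)/2\rfloor=2^k$, so Theorem~\ref{thm:kernel} gives $2^k\in C_{m_1}\subseteq H_{m_1}$, hence $\max H_{m_1}\ge2^k$. Let $m^\ast$ be the least $m\ge2$ with $\max H_m\ge2^k$; then $m^\ast\le m_1$. Since $\max H_2=1<2^k$, we have $m^\ast\ge3$, so the recurrence \eqref{eq:135} applies at $m^\ast$.

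Next I claim that both predecessors lie in $[0,2^k)$. For $m^\ast\ge4$, both $m^\ast-1$ and $m^\ast-2$ are at least $2$, so minimality of $m^\ast$ gives $H_{m^\ast-1},H_{m^\ast-2}\subseteq[0,2^k)$. For $m^\ast=3$, the predecessors are $H_2=\{1\}$ and $H_1=\{0\}$, and both lie below $2^k$ because $k\ge1$. This small-depth bookkeeping is the only point needing care: the definition of $m(k)$ starts at $m=2$, while the recurrence reaches back to $H_1$.

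Given the claim, every $r\vee s$ with $r,s<2^k$ is again $<2^k$, since only the low $k$ bits can be set. Hence $\{0\}\symd(H_{m^\ast-1}\conv H_{m^\ast-2})\subseteq[0,2^k)$, and applying $\Inc$ gives $H_{m^\ast}\subseteq[1,2^k]$. Combined with $\max H_{m^\ast}\ge2^k$, this yields $\max H_{m^\ast}=2^k$. Every $m$ with $2\le m<m^\ast$ has $\max H_m<2^k$, so $m(k)$ exists and equals $m^\ast$, and therefore $m(k)\le2^{k+1}-1$.
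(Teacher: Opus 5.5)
Your proof is correct and follows the same route as the paper. The closed kernel forces $2^k\in H_{2^{k+1}-1}$; at the first crossing both predecessors lie below $2^k$, so OR-convolution stays below $2^k$ and increment caps the maximum at exactly $2^k$. Your explicit handling of $m^\ast=3$, where the recurrence reaches back to $H_1=\{0\}$, spells out a detail the paper leaves implicit.
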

\begin{proof}
The kernel theorem makes $\max H_m$ unbounded and ensures
$\max H_{2^{k+1}-1}\ge2^k$. At the first depth with maximum
at least $2^k$, both predecessors lie below $2^k$. Their OR values
also lie below $2^k$, and increment produces no index above
$2^k$. The maximum at first crossing is therefore exactly $2^k$.
The case $k=0$ is $H_2=\{1\}$.
\end{proof}

With predecessor supports stored as bitmaps, compute until the first
crossing of $2^k$. Before that crossing, no support index is larger
than $2^k$. Zeta convolution therefore uses windows of size at most
$2^{k+1}$. The construction takes
$O(m(k)(k+1)2^k)$ Boolean operations and $O(2^k)$ working bits,
including the residue bitmap. The proved depth bound gives the
coarser total $O((k+1)4^k)$. This is a terminating construction,
not an assertion of efficient compression of $T_k$.

The term renormalization refers here to successive dyadic thresholds.
It does not assert that the residues at consecutive levels are
rescaled copies. The explicit finite table is
Table~\ref{tab:renorm}.
\begin{table}[t]\centering\fontsize{9}{10.5}\selectfont\setlength{\tabcolsep}{3pt}
\caption{Renormalization depths and residues of the auxiliary recurrence.}
\label{tab:renorm}
\begin{tabular}{|rrrrrrr|}\hline
$k$ & $m(k)$ & $m(k)/k$ & $|T_k|$ & $\max T_k$ & $2^k-1$ & $|T_k|/2^k$\\\hline
2 & 4 & 2.00 & 2 & 2 & 3 & 0.5000\\
3 & 6 & 2.00 & 5 & 7 & 7 & 0.6250\\
4 & 7 & 1.75 & 10 & 15 & 15 & 0.6250\\
5 & 9 & 1.80 & 17 & 31 & 31 & 0.5312\\
6 & 15 & 2.50 & 33 & 62 & 63 & 0.5156\\
7 & 16 & 2.29 & 74 & 127 & 127 & 0.5781\\
8 & 24 & 3.00 & 129 & 253 & 255 & 0.5039\\
9 & 25 & 2.78 & 262 & 511 & 511 & 0.5117\\
10 & 27 & 2.70 & 513 & 1019 & 1023 & 0.5010\\
11 & 29 & 2.64 & 1096 & 2046 & 2047 & 0.5352\\
12 & 34 & 2.83 & 2046 & 4094 & 4095 & 0.4995\\
13 & 36 & 2.77 & 4011 & 8190 & 8191 & 0.4896\\
14 & 37 & 2.64 & 8238 & 16383 & 16383 & 0.5028\\
15 & 39 & 2.60 & 16421 & 32766 & 32767 & 0.5011\\
\hline\end{tabular}
\end{table}

\begin{example}[Near-Mersenne Maxima]\label{ver:mers}
For $2\le k\le15$, the computed gap $2^k-\max T_k$ lies in
$\{1,2,3,5\}$. The Mersenne value occurs at
$k=3,4,5,7,9,14$. Of the eight other levels, the gap is two at
$k=2,6,11,12,13,15$, three at $k=8$, and five at $k=10$.
These are finite values; Proposition~\ref{prop:renormexists}
proves existence of all later levels but does not bound their gaps.
\end{example}

\keepsection
\subsection{Popcount Frequencies and Density}
\label{ver:pop}\label{rem:half}
Let $N_j(k)=|\{s\in T_k:\pc(s)=j\}|$. A uniform subset of a
$k$-bit window at density $|T_k|/2^k$ has expected count
$|T_k|\binom kj/2^k$ in popcount class $j$. We use this as a
reference profile for the deterministic residues, without assuming
independent bits. Table~\ref{tab:popcount} gives the actual errors
for two levels; Figure~\ref{fig:popcount} displays the deviations of
the full normalized profiles from the binomial reference.
\begin{table}[t]\centering\fontsize{9}{10.5}\selectfont\setlength{\tabcolsep}{3pt}
\caption{Popcount counts, binomial reference values and signed relative errors.}
\label{tab:popcount}
\begin{tabular}{|rrrrrrr|}\hline
$j$ & $N_j(15)$ & Ref. & Error \% & $N_j(14)$ & Ref. & Error \%\\\hline
5 & 1516 & 1504.9 & +0.74 & 1009 & 1006.6 & +0.24\\
6 & 2501 & 2508.2 & -0.29 & 1510 & 1509.9 & +0.00\\
7 & 3228 & 3224.8 & +0.10 & 1776 & 1725.6 & +2.92\\
8 & 3175 & 3224.8 & -1.54 & 1487 & 1509.9 & -1.52\\
9 & 2538 & 2508.2 & +1.19 & 963 & 1006.6 & -4.33\\
\hline\end{tabular}
\end{table}

In these rows the largest absolute relative error is about $4.33\%$.
Across all $8\le k\le15$, even the central three popcount classes
need not be within $2\%$ of the reference profile. Density also varies
nonmonotonically: $|T_{12}|/2^{12}\approx0.4995$ and
$|T_{13}|/2^{13}\approx0.4896$, followed by values approximately
$0.5028$ and $0.5011$ at levels 14 and 15. This finite agreement
with one-half does not establish a limit, independence, or a lower
bound for the block complexity of these particular sets.

\begin{figure}[htbp]\centering
\EmbeddedFigure{\textwidth}{7}
\caption{Residues of the auxiliary recurrence. Left:
$100\bigl(N_j(k)/|T_k|-\binom kj/2^k\bigr)$ for $k=14,15$, in
percentage points. Subtracting the reference mass makes the
differences visible; this normalization differs from the relative
errors in Table~\ref{tab:popcount}. Right: densities for
$2\le k\le15$, with one-half marked. These finite deterministic
data do not assert independence or a limiting density.}
\label{fig:popcount}
\end{figure}

\keepsection
\subsection{Evaluation and the Cost of the Kernel}
For any supplied decomposition of $H_m$, its auxiliary sequence
$E(H_m)(n)$ is the XOR of the corresponding Lucas tests. With
$K(m)$ supplied terms the query cost is $O(K(m)(b+\ell))$.
The interval in Theorem~\ref{thm:kernel} gives a part of this
representation whose size is bounded independently of the residue.

\keepstatement
\begin{proposition}[Block Cost of the Kernel]\label{prop:cheap}
$C_m=\{1,\dots,\lfloor(m+1)/2\rfloor\}$ is a contiguous interval, and every interval
$[1,N]$ decomposes into at most $2\lceil\log_2(N+1)\rceil$ masked dyadic blocks. Hence the
kernel contributes $O(\log m)$ terms to the evaluation, independently of $|H_m|$.
\end{proposition}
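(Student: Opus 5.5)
The plan is to decompose the interval $[1,N]$ greedily along the binary expansion and check that each piece is a valid block $B(v,M)$ in the sense of Section~\ref{sec:bg}. Since $0\notin C_m$, we work with $[1,N]=[0,N+1)\setminus\{0\}$. First we treat the half-open prefix $[0,N+1)$. Write $N+1=\sum_{i} 2^{k_i}$ with $k_1>k_2>\cdots$. The prefix $[0,N+1)$ is then the disjoint union of the aligned dyadic intervals
\[
 \Bigl[\textstyle\sum_{i<j}2^{k_i},\ \sum_{i\le j}2^{k_i}\Bigr)=B\Bigl(\textstyle\sum_{i<j}2^{k_i},\,2^{k_j}-1\Bigr).
\]
Each such interval is a valid block, because the anchor has no bits below $k_j$ and is therefore disjoint from the mask $2^{k_j}-1$. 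This gives at most $\pc(N+1)\le\lceil\log_2(N+2)\rceil$ blocks.

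Next we remove the index $0$. It lies in the first (lowest-anchor) block $B(0,2^{k_1}-1)$. We use the splitting identity $B(v,M)\symd B(v\vee2^j,M)=B(v,M\vee2^j)$ in reverse, one bit at a time:
\[
 B(0,2^{k}-1)\setminus\{0\}=\mathop{\symd}_{j<k}B\bigl(2^j,\,2^j-1\bigr).
\]
This is the standard decomposition of $[1,2^k)$ into the intervals $[2^j,2^{j+1})$, and it costs at most $k_1\le\lceil\log_2(N+1)\rceil$ blocks. Because every piece is pairwise disjoint, the symmetric difference coincides with the union. The total count is then at most $k_1+\pc(N+1)-1$, which is bounded by $2\lceil\log_2(N+1)\rceil$ after a small boundary check: the case $N+1$ a power of two needs care, since there $\pc=1$ and $k_1=\log_2(N+1)$.

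The one point needing attention is the off-by-one between $\lceil\log_2(N+1)\rceil$ and $\lceil\log_2(N+2)\rceil$, together with the degenerate case $N=0$ (empty set, zero blocks). We would settle it either by the power-of-two case split just noted, or alternatively by decomposing $[1,N]$ directly. In the direct route, take the blocks of $[0,N+1)$ in order of increasing anchor, so that only the first, anchored at $0$, contains $0$. Every other block has a nonzero anchor and survives untouched.

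Finally, set $N=\lfloor(m+1)/2\rfloor$. Theorem~\ref{thm:kernel} places $C_m$ inside $H_m$, so $H_m=C_m\symd(H_m\setminus C_m)$. The kernel therefore contributes $O(\log m)$ supplied terms, and by Theorem~\ref{thm:logeval} each term is evaluated in $O(b+\ell)$. Since the kernel count does not depend on $|H_m|$, the residue's block cost is entirely separate from the kernel's.
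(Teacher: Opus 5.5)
Your proof is correct. It differs from the paper's only in how the index $0$ is removed.

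Both arguments decompose $[0,N]$ into the $\pc(N+1)$ aligned blocks $B\bigl(v_i,2^i-1\bigr)$ read off from the binary expansion of $N+1$. The paper then XORs with the single block $B(0,0)=\{0\}$. This gives $\pc(N+1)+1\le\lceil\log_2(N+1)\rceil+1$ terms for $N\ge1$, roughly half the stated bound, but it relies on a cancelling term. You instead refine the lowest block $B(0,2^{k_1}-1)$ into the $k_1$ blocks $B(2^j,2^j-1)$ with $j<k_1$. This yields a genuine partition of $[1,N]$ into disjoint dyadic intervals, with no cancellation, and it covers $N=0$ uniformly. It is also the construction that actually uses the factor two in the statement: for $N+1=2^k-1$ it gives $2k-2$ blocks against the allowed $2k$.

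The boundary check you flag needs no case split. Since $k_1=\lfloor\log_2(N+1)\rfloor$ and $\pc(N+1)-1\le\lfloor\log_2(N+1)\rfloor$, your count is at most $2\lfloor\log_2(N+1)\rfloor\le2\lceil\log_2(N+1)\rceil$.

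Your final step, using Theorem~\ref{thm:kernel} to write $H_m=C_m\symd(H_m\setminus C_m)$, is the same as in the paper.
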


\begin{proof}
Write $N+1=\sum_{i\in E}2^i$. In decreasing order of $i\in E$, emit
$B(v_i,2^i-1)$, where $v_i=\sum_{j\in E,\ j>i}2^j$.
These intervals $[v_i,v_i+2^i)$ partition $[0,N]$. Each block is
valid because $v_i\wedge(2^i-1)=0$. Their number is
$|E|\le\lceil\log_2(N+1)\rceil$. XOR with the singleton $\{0\}$
removes zero and gives $[1,N]$, using at most
$\lceil\log_2(N+1)\rceil+1\le2\lceil\log_2(N+1)\rceil$ blocks
for $N\ge1$.
\end{proof}

Thus $H_m=C_m\symd(H_m\setminus C_m)$ separates a part costing
$O(\log m)$ block terms from the remaining finite support. The
interval construction takes $O((\log(m+2))^2)$ bit operations,
including its binary output masks. The cost of compressing the
remaining support is unresolved. A large finite value of $|H_m|$
does not itself prove exponential growth, and the almost-all bound
of Theorem~\ref{thm:lupanov} does not apply to $T_k$ merely because
its measured density is near one-half. Figure~\ref{fig:r135mem}
shows the permanent region and the earlier sporadic memberships.

\begin{figure}[htbp]\centering
\EmbeddedFigure{\textwidth}{8}
\caption{Membership of $e$ in $H_m$ for
$2\le m\le40$, $1\le e\le22$; dark cells indicate membership.
The orange line marks $m=2e-1$. The outlined white cells at
$m=2e-2$ are the last absences for $2\le e\le20$ in this window.
Every subsequent cell in the same row is present, as proved in
Theorem~\ref{conj:stab}. Earlier dark cells need not be permanent.
Both axes use integer indices.}
\label{fig:r135mem}
\end{figure}

\keepsection
\subsection{The auxiliary integer lift}\label{sec:auxinteger}

The finite auxiliary supports also have an integer counterpart.
It is useful to retain it explicitly, since its initial diagonal
polynomials specify a different boundary problem from the physical
single-seed Rule 135 orbit.

\begin{proposition}[Integer realization of the auxiliary recurrence]\label{prop:auxinteger}
Define $A_1(n)=1$, $A_2(n)=n$ and, for $m\ge3$,
\[
 A_m(0)=0,\qquad \Delta A_m(n)=1+A_{m-1}(n)A_{m-2}(n).
\]
These are integer-valued polynomials with nonnegative Newton
coefficients $D_{m,r}$ and degree $F_{m+1}-1$. Their odd
coefficients have support $H_m$. Here $D_{m,0}=0$ for $m\ge3$, and for $r\ge1$,
\[
 \begin{aligned}
 &D_{m,r}-[r=1]\\
 &\quad=
 \sum_{\substack{0\le j,k\le r-1\\j+k\ge r-1}}
 D_{m-1,j}D_{m-2,k}
 \frac{(r-1)!}{(j+k-r+1)!(r-1-j)!(r-1-k)!}.
 \end{aligned}
\]
\end{proposition}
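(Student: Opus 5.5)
The plan is to work throughout in the Newton basis $\binom nr$, where antidifferencing is an index shift, and to obtain the coefficient formula, nonnegativity, degree and parity in that order. The one external ingredient is the classical product identity for binomial coefficients,
\[
 \binom nj\binom nk=\sum_{i=\max(j,k)}^{j+k}
 \frac{i!}{(i-j)!\,(i-k)!\,(j+k-i)!}\binom ni .
\]
I would prove it combinatorially. Count pairs $(J,K)$ of subsets of an $n$-set with $|J|=j$ and $|K|=k$, grouping them by the union $I=J\cup K$ of size $i$. A pair with given $I$ is a partition of $I$ into $J\setminus K$, $K\setminus J$ and $J\cap K$, of sizes $i-k$, $i-j$ and $j+k-i$, which gives the trinomial coefficient. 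This is the integer analogue of the OR--convolution of Section~\ref{sec:bg}.

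\emph{Coefficient formula.} Pascal's identity gives $\Delta\binom nr=\binom n{r-1}$, and $\binom 0r=[r=0]$. Together with $A_m(0)=0$, this identifies $D_{m,r}$ ($r\ge1$) as the coefficient of $\binom n{r-1}$ in $1+A_{m-1}A_{m-2}$, and gives $D_{m,0}=0$. Expanding $A_{m-1}A_{m-2}=\sum_{j,k}D_{m-1,j}D_{m-2,k}\binom nj\binom nk$ and using the product identity with $i=r-1$ yields exactly the displayed sum. The range constraints $j,k\le r-1\le j+k$ are the support of the trinomial coefficient, and the extra $[r=1]$ comes from the constant $1=\binom n0$. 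By induction on $m$ this shows that each $A_m$ is an integer combination of binomials, hence integer-valued, and that all $D_{m,r}\ge0$. The base cases are $A_1=\binom n0$ and $A_2=\binom n1$.

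\emph{Degree.} Let $d_m=\deg A_m$. The top index of the product is $d_{m-1}+d_{m-2}$, attained only by $j=d_{m-1}$, $k=d_{m-2}$. There the trinomial coefficient equals $1$, so the top coefficient is the product of the two leading Newton coefficients. That product is positive, and nonnegativity rules out cancellation against other terms. Hence $d_m=d_{m-1}+d_{m-2}+1$ exactly. The degree-$0$ constant term only affects the top coefficient when $d_{m-1}+d_{m-2}=0$, which never happens since $d_2=1$; even then it would only add to a nonnegative value. With $d_1=0$ and $d_2=1$, induction gives $d_m+1=F_{m+1}$.

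\emph{Parity.} Let $X_m$ be the set of indices with $D_{m,r}$ odd. Since every $\binom nr$ is an integer, $A_m(n)\bmod 2=E(X_m)(n)$. Reducing the recurrence modulo two gives $\Delta E(X_m)=1\xor E(X_{m-1})E(X_{m-2})=E\bigl(\{0\}\symd(X_{m-1}\conv X_{m-2})\bigr)$, using $E(A\conv B)=E(A)E(B)$, and also $E(X_m)(0)=0$. On the other side, $E(\Inc Y)$ is characterized by $\Delta E(\Inc Y)=E(Y)$ and value $0$ at $n=0$. Uniqueness of Newton supports (Equation~\eqref{eq:newton}) then gives $X_m=\Inc(\{0\}\symd(X_{m-1}\conv X_{m-2}))$. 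Since $X_1=\{0\}=H_1$ and $X_2=\{1\}=H_2$, induction with \eqref{eq:135} yields $X_m=H_m$.

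I expect the main obstacle to be getting the product identity and its index bookkeeping exactly right, namely the summation range and the shift $i=r-1$. Everything after that is routine induction.
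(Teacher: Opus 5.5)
Your proposal is correct and follows essentially the paper's route: the union-counting product identity, the index shift under discrete summation, positivity of the top coefficient for the degree, and reduction modulo two combined with uniqueness of the Newton transform to identify $H_m$. One small slip: at $i=j+k$ the trinomial coefficient is $\binom{j+k}{j}$, not $1$ (the paper's $A_4$ has top coefficient $3=\binom{3}{1}$), but your degree step only needs this coefficient to be positive, so the conclusion stands.
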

\begin{proof}
The binomial product identity counts two subsets by their union,
with the displayed factorial coefficient. Discrete summation
raises each binomial index by one, including the constant source.
This proves the recurrence and nonnegativity. The product has a
positive leading coefficient, so its degree cannot cancel;
$d_m=d_{m-1}+d_{m-2}+1$ with $d_1=0,d_2=1$ gives the stated degree.
Reducing the initialized difference equation modulo two gives
exactly $H_m=\operatorname{Inc}(\{0\}\symd(H_{m-1}\ast H_{m-2}))$.
Uniqueness of the Newton transform identifies the odd coefficients.
\end{proof}

For example, $A_3=\binom n1+\binom n2$ and
$A_4=\binom n1+\binom n2+4\binom n3+3\binom n4$.
The first has degree two, already distinguishing the correct
$F_{m+1}-1$ law from an index-shifted degree formula.

\subsection{Exact block correlations}\label{sec:blockdensity}

A supplied block representation also determines the density of its
binary evaluations in a dyadic window. An individual block fixes
some input bits to one and others to zero. Several blocks can
constrain the same bit, so their evaluations must be counted jointly.

\begin{proposition}[Density from compatible bit constraints]\label{prop:blockdensity}
Let $S=\symd_{i=1}^K B(v_i,M_i)$ with valid masks $v_i,M_i<2^b$,
and let $f(n)=\Xor_{r\in S}\binom nr\pmod2$. For a nonempty
$J\subseteq\{1,\ldots,K\}$ put
$V_J=\bigvee_{i\in J}v_i$ and $Z_J=\bigvee_{i\in J}M_i$.
The exact proportion of ones in $0\le n<2^b$ is
\begin{equation}\label{eq:blockdensity}
 \begin{aligned}
 &\frac{\#\{n<2^b:f(n)=1\}}{2^b}\\
 &\quad=\sum_{\varnothing\ne J\subseteq\{1,\ldots,K\}}
 (-2)^{|J|-1}[V_J\wedge Z_J=0]\,
             2^{-\operatorname{pc}(V_J\vee Z_J)}.
 \end{aligned}
\end{equation}
\end{proposition}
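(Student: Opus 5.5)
The plan is to write $f$ as an XOR of indicator functions and convert parity to a signed sum. By Lemma~\ref{lem:blockeval} and linearity of the Newton transform, $f(n)=\Xor_{i=1}^K g_i(n)$ with $g_i(n)=[v_i\subseteq n][M_i\wedge n=0]\in\{0,1\}$. For $0/1$ values we use the identity
\[
 \Xor_{i=1}^K x_i=\frac{1-\prod_{i=1}^K(1-2x_i)}{2}
 =\sum_{\varnothing\ne J}(-2)^{|J|-1}\prod_{i\in J}x_i ,
\]
which follows by expanding the product. This expansion is the source of the inclusion--exclusion weights $(-2)^{|J|-1}$ in Equation~\eqref{eq:blockdensity}. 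Averaging over $n<2^b$ then reduces the problem to computing, for each nonempty $J$, the proportion of $n<2^b$ with $\prod_{i\in J}g_i(n)=1$.

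That product is the indicator that $n$ contains every bit of every $v_i$, $i\in J$, and avoids every bit of every $M_i$, $i\in J$. In other words, it is $[V_J\subseteq n][Z_J\wedge n=0]$. If $V_J\wedge Z_J\ne0$, some bit is required to be both one and zero, and the count is zero. Otherwise the bits in $V_J\vee Z_J$ are all forced, since these masks are below $2^b$, and the remaining $b-\operatorname{pc}(V_J\vee Z_J)$ bits are free. The proportion is therefore $2^{-\operatorname{pc}(V_J\vee Z_J)}$, and summing over $J$ gives the displayed formula.

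The step needing most care is the treatment of overlaps between constraints from different blocks. Individual blocks are valid, but the union of their constraints need not be consistent, which is why the indicator $[V_J\wedge Z_J=0]$ appears. A secondary check is that a support element occurring in several blocks causes no trouble. Because evaluation is linear, cancelled indices in the symmetric difference correspond exactly to the XOR of the $g_i$, so no separate reduction of $S$ is needed. The argument requires only that each block be valid and lie in the window $[0,2^b)$, and it imposes no disjointness assumption among the blocks.
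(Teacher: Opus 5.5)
Your proposal is correct and follows the paper's proof essentially step for step. You write $f$ as the XOR of the block indicators from Lemma~\ref{lem:blockeval}, convert parity via $(1-\prod_i(1-2X_i))/2$ and expand to get the weights $(-2)^{|J|-1}$, then evaluate each joint event as either inconsistent (when $V_J\wedge Z_J\ne0$) or fixing exactly $\operatorname{pc}(V_J\vee Z_J)$ bits of the window.
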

\begin{proof}
Write $X_i(n)=[v_i\subseteq n][n\wedge M_i=0]$, the block
evaluation. As an equality of real-valued indicators,
$f=(1-\prod_i(1-2X_i))/2$. Expanding the product gives the
coefficient $(-2)^{|J|-1}$ for $\prod_{i\in J}X_i$.
That joint event is impossible if $V_J\wedge Z_J\ne0$.
Otherwise it fixes exactly $\operatorname{pc}(V_J\vee Z_J)$
independent binary positions and has the displayed probability.
Averaging the expansion proves the formula.
\end{proof}

If the footprints $v_i\vee M_i$ are pairwise disjoint, the
$X_i$ are independent and the formula reduces to
\[
 \frac12-\frac12\prod_{i=1}^K
       \left(1-2^{1-\operatorname{pc}(v_i\vee M_i)}\right).
\]
Disjointness of the support blocks themselves does not suffice.
For example, $B(1,0)=\{1\}$ and $B(2,1)=\{2,3\}$ are disjoint,
but their evaluation events are respectively $n_0=1$ and
$(n_1,n_0)=(1,0)$. On $0\le n<4$ their XOR has density $3/4$;
the independence formula would give $1/2$. Equation~\eqref{eq:blockdensity}
retains the missing correlations. Even a proved density law for
fixed-depth diagonals would require an additional argument before
being applied to the moving center column.

\section{Further work and conclusion}

The backward profile argument proves that regular-side period records
continue indefinitely. The remaining issue is their spacing. For the
indices $a_j$ of Definition~\ref{def:cdv}, the bound
$a_j\le4^{2^{j-1}}$ for $j\ge1$ leaves room for much sharper estimates.
A useful next result would control the time between resets in the
profile dynamics. The proposed next record after $87868$, namely
$2107985256$, lies beyond the computed range and is not an established
term of the orbit's record sequence.\label{prob:records}

For the auxiliary recurrence, Theorem~\ref{conj:stab} determines the
permanent interval exactly. Its removal isolates the residues $T_k$.
Proposition~\ref{prop:renormexists} already constructs these sets in
finite time; a more informative description would reveal their bit
correlations and control the length of a compressed formula directly
from $k$. The associated renormalization depths $m(k)$ may then be
compared with a law of the form $m(k)=\alpha k+o(k)$.
The measured ratios for $8\le k\le15$ lie in $[2.60,3.00]$, with
$m(7)/7=16/7$; the last four are approximately $2.83,2.77,2.64,2.60$.
These values do not determine a limiting constant near $2.6$ or $2.7$.
\label{prob:closed}\label{prob:asym}

A statistical study of $T_k$ should begin with joint bit constraints.
Equation~\eqref{eq:blockdensity} gives an exact model of the correlations
that an independence approximation can miss. For the residues, natural
quantities are $|T_k|/2^k$ and the distribution of
$(\operatorname{pc}(t)-k/2)/(\sqrt{k}/2)$ for uniform $t\in T_k$.
Convergence to $1/2$ for the former or to the standard normal law for
the latter would require estimates uniform in $k$. Finite histograms
alone provide neither estimate.\label{prob:stat}

The upper endpoint poses a related arithmetic question: controlling
$2^k-\max T_k$ and the frequency of $2^k-1\in T_k$ would distinguish
persistent near-endpoint structure from the observed occurrences
$k\in\{3,4,5,7,9,14\}$ through $15$. An exact recurrence for these
boundary bits would be more useful than extrapolation from their
finite frequency.\label{prob:mers}

\subsection{Consequences for the Diagonal Sequences}

Unbounded diagonal periods follow here from two features fixed by
the single-seed orbit: the position of first contact and the eventual
boundary profiles. On the finite-support side, the earliest nonzero
value gives the least Newton index, while the polynomial lift gives
a Fibonacci upper bound for the largest. The exact support-period
relation converts these into period information. On the reflected
side, the backward profile map turns a hypothetical bound on all
periods into a finite state space with too many distinct predecessors
of the zero boundary. The second argument explains why the
reset/integration mechanism must produce further record increases,
even when no transient bound is available.

The boundary analysis also determines how Rule 135 enters this picture.
Its physical single-seed orbit follows from a translated-complement
identity after the first step. The auxiliary recurrence with two
finite diagonal seeds has its own exact stabilization law,
$M_0(e)=2e-1$ for the indices covered by the theorem. Keeping the
two constructions distinct allows the physical conjugacy and the
permanent-interval result to be used with their correct hypotheses.
It also explains why similar-looking recurrences can have different
support descriptions when their boundary values differ.

The block formulas address evaluation after those descriptions have
been obtained. They replace a sum over individual support indices
by a collection of digit tests and state the cost in terms of the
number and width of the supplied blocks. The remaining difficulty
is to control those quantities as depth increases. A support-period
formula alone does not determine block count, and a finite density
measurement does not determine the correlations on which compression
depends.

The sharpest next period question is the spacing of the record
indices. The profile count proves that records continue, but leaves
a large gap between its bound and the finite tables. For the
auxiliary recurrence, exact stabilization shifts attention to the
renormalized residues and their minimum block descriptions.
Information about those sets would improve the corresponding
evaluation bounds; density or popcount alone does not determine
their correlations. These are separate ways to strengthen the
results proved here: refine the period growth, describe the moving
residue, or control the cost of obtaining its representation. The
eventual behavior of Rule 30's moving center column requires an
argument that also follows the changing diagonal index.


\begin{thebibliography}{99}\footnotesize
\setlength{\itemsep}{1pt}
\bibitem{wolfram1983}
S. Wolfram, ``Statistical Mechanics of Cellular Automata,'' \textit{Reviews of
Modern Physics}, \textbf{55}(3), 1983 pp. 601--644.
doi:10.1103/RevModPhys.55.601.

\bibitem{rowland}
E. S. Rowland, ``Local Nested Structure in Rule 30,'' \textit{Complex Systems},
\textbf{16}(3), 2006 pp. 239--258. doi:10.25088/ComplexSystems.16.3.239.

\bibitem{oeis}
OEIS Foundation. ``The On-Line Encyclopedia of Integer Sequences, A364239.''
\url{https://oeis.org/A364239}.

\bibitem{kopra}
J. Kopra, ``Rapid Left Expansivity, a Commonality between Wolfram's Rule 30 and
Powers of $p/q$,'' \textit{Theoretical Computer Science}, \textbf{946}, 2023
article 113668. doi:10.1016/j.tcs.2022.12.018.

\bibitem{pivato}
M. Pivato, ``Defect Particle Kinematics in One-Dimensional Cellular Automata,''
\textit{Theoretical Computer Science}, \textbf{377}(1--3), 2007 pp. 205--228.
doi:10.1016/j.tcs.2007.03.014.

\bibitem{U1}
T. Nersissian, ``The Support-Set Calculus and the Algebra of Dyadic Blocks,''
companion manuscript, revised September 2026.

\bibitem{mow}
O. Martin, A. M. Odlyzko, and S. Wolfram, ``Algebraic Properties of Cellular
Automata,'' \textit{Communications in Mathematical Physics}, \textbf{93}(2),
1984 pp. 219--258. doi:10.1007/BF01223745.

\bibitem{willson}
S. J. Willson, ``Cellular Automata Can Generate Fractals,'' \textit{Discrete
Applied Mathematics}, \textbf{8}(1), 1984 pp. 91--99.
doi:10.1016/0166-218X(84)90082-9.

\bibitem{lupanov2}
O. B. Lupanov, ``A Method of Circuit Synthesis,'' \textit{Izvestiya Vysshikh Uchebnykh Zavedenii, Radiofizika}, \textbf{1}, 1958 pp. 120--140.

\bibitem{wegener}
I. Wegener, \textit{The Complexity of Boolean Functions}, Chichester: Wiley--Teubner, 1987.

\bibitem{prize}
S. Wolfram. ``Announcing the Rule 30 Prizes.'' (Oct 1, 2019)
\url{https://writings.stephenwolfram.com/2019/10/announcing-the-rule-30-prizes/}.

\bibitem{U6}
T. Nersissian, ``Monotone Branch Rigidity in Conditional Cellular Automata,''
companion manuscript, revised September 2026.

\bibitem{wall}
D. D. Wall, ``Fibonacci Series Modulo m,'' \textit{The American Mathematical Monthly},
\textbf{67}(6), 1960 pp. 525--532. doi:10.1080/00029890.1960.11989541.

\bibitem{jahnel}
A.-S. Elsenhans and J. Jahnel, ``The Fibonacci Sequence Modulo $p^2$---An
Investigation by Computer for $p<10^{14}$,'' arXiv:1006.0824, 2010.
\end{thebibliography}
\end{document}